\renewcommand{\theequation}{\thesection.\arabic{equation}}
\newcommand\encadremath[1]{\vbox{\hrule\hbox{\vrule\kern8pt
\vbox{\kern8pt \hbox{$\displaystyle #1$}\kern8pt}
\kern8pt\vrule}\hrule}}
\def\enca#1{\vbox{\hrule\hbox{
\vrule\kern8pt\vbox{\kern8pt \hbox{$\displaystyle #1$}
\kern8pt} \kern8pt\vrule}\hrule}}
\newcommand\framefig[1]{
\begin{figure}[bth]
\hrule\hbox{\vrule\kern8pt
\vbox{\kern8pt \vbox{
\begin{center}
{#1}
\end{center}
}\kern8pt}
\kern8pt\vrule}\hrule
\end{figure}
}
\newcommand\figureframex[3]{
\begin{figure}[bth]
\hrule\hbox{\vrule\kern8pt
\vbox{\kern8pt \vbox{
\begin{center}
{\mbox{\epsfxsize=#1.truecm\epsfbox{#2}}}
\end{center}
\caption{#3}
}\kern8pt}
\kern8pt\vrule}\hrule
\end{figure}
}
\newcommand\figureframey[3]{
\begin{figure}[bth]
\hrule\hbox{\vrule\kern8pt
\vbox{\kern8pt \vbox{
\begin{center}
{\mbox{\epsfysize=#1.truecm\epsfbox{#2}}}
\end{center}
\caption{#3}
}\kern8pt}
\kern8pt\vrule}\hrule
\end{figure}
}
\renewcommand{\thesection}{\arabic{section}}
\renewcommand{\theequation}{\arabic{section}-\arabic{equation}}
\newtheorem{theorem}{Theorem}[section]
\newtheorem{proposition}{Proposition}[section]
\newtheorem{lemma}{Lemma}[section]
\newtheorem{corollary}{Corollary}[section]
\theoremstyle{definition}
\newtheorem{remark}{Remark}[section]
\newtheorem{definition}{Definition}[section]
\def\br{\begin{remark}\rm\small}
\def\er{\end{remark}}
\def\bt{\begin{theorem}}
\def\et{\end{theorem}}
\def\bd{\begin{definition}}
\def\ed{\end{definition}}
\def\bp{\begin{proposition}}
\def\ep{\end{proposition}}
\def\bl{\begin{lemma}}
\def\el{\end{lemma}}
\def\bc{\begin{corollary}}
\def\ec{\end{corollary}}
\def\beaq{\begin{eqnarray}}
\def\eeaq{\end{eqnarray}}
\theoremstyle{definition}
\newcommand{\be}{\begin{equation}}
\newcommand{\ee}{\end{equation}}
\newcommand{\beq}{\begin{equation}}
\newcommand{\eeq}{\end{equation}}
\newcommand{\bea}{\begin{eqnarray}}
\newcommand{\eea}{\end{eqnarray}}
\newcommand{\beqq}{\begin{equation*}}
\newcommand{\eeqq}{\end{equation*}}
\newcommand{\beaa}{\begin{eqnarray*}}
\newcommand{\eeaa}{\end{eqnarray*}}
\newcommand{\diag}{{\operatorname{diag}}}
\newcommand{\td}{\tilde}
\newcommand\blfootnote[1]{%
  \begingroup
  \renewcommand\thefootnote{}\footnote{#1}%
  \addtocounter{footnote}{-1}%
  \endgroup
}
\newcommand{\Res}{\mathop{\,\rm Res\,}}
\title{\bf{Isomonodromic and isospectral deformations of meromorphic connections: the $\mathfrak{sl}_2(\mathbb{C})$ case}}
\date{\vspace{-5ex}}
\author{$_{1}$Olivier Marchal\footnote{Universit\'{e} Jean Monnet Saint-\'{E}tienne, CNRS, Institut Camille Jordan UMR 5208, Institut Universitaire de France, F-42023, Saint-\'{E}tienne, France}\,\,,
$_{2}$Mohamad Alameddine\footnote{Universit\'{e} Jean Monnet Saint-\'{E}tienne, CNRS, Institut Camille Jordan UMR 5208, F-42023, Saint-\'{E}tienne, France.}
}
\begin{document}

\maketitle

\vspace{1.0cm}

\begin{abstract}
We consider non-twisted meromorphic connections in $\mathfrak{sl}_2(\mathbb{C})$ and the associated symplectic Hamiltonian structure. In particular, we provide explicit expressions of the Lax pair in the geometric gauge supplementing results of \cite{MarchalOrantinAlameddine2022} where explicit formulas have been obtained in the oper gauge. Expressing the geometric Lax matrices requires the introduction of specific Darboux coordinates for which we provide the explicit Hamiltonian evolutions. These expressions allow to build bridges between the isomonodromic deformations and the isospectral ones. More specifically, we propose an explicit change of Darboux coordinates to obtain isospectral coordinates for which Hamiltonians match the spectral invariants. This result solves the issue left opened in \cite{BertolaHarnadHurtubise2022} in the case of $\mathfrak{sl}_2(\mathbb{C})$.

\blfootnote{\textit{Email addresses:}$_{1}$\textsf{olivier.marchal@univ-st-etienne.fr}, $_{2}$\textsf{mohamad.alameddine@univ-st-etienne.fr}}
\end{abstract}

%\section*{Statements and Declarations}
%The authors declare that they have no conflict of interest.\\
%Data sharing not applicable to this article as no datasets were generated or analysed during the current study.

%\newpage

\tableofcontents

\newpage

\section{Introduction}
Isomonodromic deformations of meromorphic connections have been studied for a long time \cite{Picard,Garnier,Painleve,schlesinger1912klasse,JimboMiwa,JimboMiwaUeno} in relation with the Painlev\'{e} property. The main feature of isomonodromic deformations is that they contain an underlying Hamiltonian symplectic structure associated to the Darboux coordinates describing the Lax matrices. This property has been extensively described and proved for a very large class of meromorphic connections \cite{Darboux_coord93,HarnadHurtubise1997,HURTUBISE20081394,BertolaHarnadHurtubise2022,SpectralCurves_BenZviFrenkel_2001} including those studied in the present article. In particular, the case of Fuchsian singularities (i.e. simple poles) dates back to the works of R. Fuchs \cite{Fuchs} and B. Gambier \cite{Gambier} giving rise to the Lax formulation of the Painlev\'{e} $6$ equation. This point of view was pursued by R. Garnier and K. Okamoto \cite{Okamoto1986Iso,Okamoto1986} for arbitrary Fuchsian systems in the rank-$2$ setting i.e. to Garnier systems in their scalar version. The case of connections with Fuchsian singularities and their underlying Hamiltonian systems are now documented and thoroughly understood (see Chap. $3$ of \cite{FromGaussToPainleve} for a review on the subject). 
%Another significant result obtained by R. Garnier is the generalization to all other five Painlev\'{e} equations proving that these equations may be obtained by complete integrability conditions. Ultimately, it was J. Malmquist \cite{Malmquist1922} who uncovered the Hamiltonian systems underlying all the Painlev\'{e} equations, while the relations with isomonodromic deformations of linear ordinary differential equations with irregular singularities was given in \cite{Okamoto1980}.  
On the contrary, the case of irregular singularities (i.e. meromorphic connections with poles of arbitrary orders) is much more difficult to handle and many issues remain opened and is still generating active research, dating back from the works of the Japenese school of Jimbo--Miwa--Ueno \cite{JimboMiwaUeno,JimboMiwa}, to the most recent generalizations of isomonodromy equations as extensions of the encoded symplectic structure \cite{Boalch2001,BertolaKorotkin2021,Chiba,PainleveMonodromyManifolds,SpectralCurves_BenZviFrenkel_2001}. There are several complications in the case of irregular singularities. Indeed, the underlying geometry of the moduli space is much more complex to describe since one needs to consider generalized monodromy data (who are proved to form Poisson local systems \cite{boalch2012geometry,Douccot2022TopologyOI,Douccot2022LocalWM}) in addition to the location of poles. Nevertheless, there have been recent developments in the understanding of the natural geometric parameters \cite{Boalch2001,Boalch2012}. Nowadays, the existence of an underlying Hamiltonian symplectic structure is well understood in the case of irregular singularities, however, getting some general formulas to express the Hamiltonian remains a difficult question.

\medskip

There are several strategies to tackle irregular singularities, a first option is to start with the well-known case of simple poles and make confluences of these simple poles to generate poles of higher orders. This strategy has been used in many papers \cite{Babujian:1997xg,mazzocco2007hamiltonian,Chernyakov:2003gy} with very interesting results developed in \cite{MartaPaper2022}. A second strategy is simply to write down the compatibility equations of the Lax system (Cf. \eqref{LaxCompat}), which corresponds to a zero-curvature equation for connections and try to solve it to get the complete expression for the Lax matrices and the Hamiltonian system. This strategy was partly solved in \cite{MarchalOrantinAlameddine2022} for meromorphic connections in $\mathfrak{sl}_2(\mathbb{C})$ on the Riemann sphere and the authors discussed the relation with the confluence approach developed in \cite{MartaPaper2022}. In particular, in this article, the authors provided explicit formulas for the Lax pairs and the Hamiltonian in the oper gauge (i.e. the gauge where the Lax matrix is companion-like) for non-twisted meromorphic connections with poles of arbitrary orders. The main limitation of these results is that they express the Lax pair in the oper gauge and obtain the Hamiltonian structure relatively to coordinates adapted to this gauge. However, the modern approach of isomonodromic deformations using meromorphic connections with fixed pole structure does not make the oper gauge the most natural geometric gauge because of the presence of apparent singularities in the Lax matrices. Solving this issue by defining appropriate Darboux coordinates and obtaining explicit expressions in the initial geometric gauge was one of the first motivations for this article that we achieved in Theorems \ref{GeoLaxMatrices}, \ref{TheoHamNew} and \ref{GeoLaxMatricesQR}.

The third historical strategy to tackle isomonodromic deformations by considering first isospectral deformations (i.e. deformations that preserve the spectrum of the Lax matrix) and then choose some specific Darboux coordinates so that an additional ``isospectral condition''
\beq \label{Condition} \delta_{\mathbf{t}} \td{L}(\lambda)=\partial_\lambda \td{A}(\lambda)\eeq
is realized \cite{Chiba,BertolaHarnadHurtubise2022,Yamakawa2017TauFA,Yamakawa2019FundamentalTwoForms}. The main feature of the additional isospectral condition is that when it is realized the Hamiltonian relatively to the Darboux coordinates identify with the spectral invariants of the Lax matrix (i.e. the coefficients of the expansion of the eigenvalues of the Lax matrix at each pole). This isospectral strategy has been used successfully for many cases such as Fuchsian singularities and all Painlev\'{e} cases and the general theory has been set up in \cite{BertolaHarnadHurtubise2022} by the Montr\'{e}al school. Let us mention here that the isospectral condition puts strong constraints on the explicit dependence of the Darboux coordinates relatively to the deformation parameters. For Fuchsian singularities, it turns out that the additional isospectral condition is immediate to solve and that the explicit dependence of the Darboux coordinates relatively to the deformation parameters is trivial. However, this observation does not hold for non-Fuchsian singularities and if one can find the explicit dependence of the Darboux coordinates relatively to deformation parameters on a case by case basis for meromorphic connections with singular poles of low orders, the authors of \cite{BertolaHarnadHurtubise2022} left the issue of finding Darboux coordinates solving the isospectral condition in a general setting opened. The second motivation of the present article was thus to give an explicit description of the Darboux coordinates solving the isospectral condition in the case of meromorphic connections in $\mathfrak{sl}_2(\mathbb{C})$ leading to Theorem \ref{MainTheoIsospectral}. As expected, the explicit dependence of the Darboux coordinates relatively to deformation parameters is more complicated for non-Fuchsian singularities. The example of the second element of the Painlev\'{e} $2$ hierarchy (studied also in Appendix A of \cite{BertolaHarnadHurtubise2022}) is done in details in Appendix \ref{AppendixP2}.

\medskip

The present work as well as \cite{MarchalOrantinAlameddine2022,BertolaHarnadHurtubise2022} are in close relations with the works of D. Yamakawa \cite{Yamakawa2017TauFA,Yamakawa2019FundamentalTwoForms}. Indeed, Yamakawa defined the isospectral Hamiltonians using the standard intrinsic residue-trace formulas at each pole. He then imposes some conditions on the exterior derivatives relatively to irregular times to implicitly select Darboux coordinates for which the isospectral Hamiltonians match the isomonodromic ones. This strategy has also been used in \cite{Darboux_coord93,Chiba} and is very efficient to obtain global properties of the symplectic Ehresmann connection such as flatness or completeness. However, it does not provide explicit Darboux coordinates so that one could not obtain trivialization of the symplectic bundle apart from simple examples. Nevertheless, our final conclusion is that solving directly the isomonodromic compatibility equations or solving the isospectral condition are equivalent problems that share the same level of difficulty for $\mathfrak{sl_2}(\mathbb{C})$. This can be seen directly on the fundamental two-form $\Omega$ defined by Yamakawa underlying the symplectic structure. Indeed, this fundamental two-form is defined as
\beqq \Omega=\sum_j dr_j\wedge ds_j -\sum_{k} dH_k \wedge dt_k\eeqq
where $\mathbf{t}:=(t_k)_{k\geq 1}$ are the times, $\mathbf{H}:=(H_k(\mathbf{r},\mathbf{s},\mathbf{t}))_{k\geq 1}$ are the Hamiltonians associated to any set of Darboux coordinates $(\mathbf{r},\mathbf{s})$ that is chosen to parametrize the Lax matrix. Hamiltonians $\mathbf{H}(\mathbf{r},\mathbf{s},\mathbf{t})$ are determined by solving the isomonodromy equations (compatibility equations of the Lax pairs). The strategy used in the isospectral approach is to fix the Hamiltonians to the standard isospectral invariants, that satisfy only the isospectral equations and then to adapt the choice of Darboux coordinates so that the isospectral equations identify with the isomonodromic ones. On the contrary, the strategy of \cite{MarchalOrantinAlameddine2022} is to set the Darboux coordinates to the apparent singularities and their dual on the spectral curve and then to solve the isomonodromy equations to obtain directly the Hamiltonians.
Both strategies have interests and technical difficulties which we shall investigate by building an explicit bi-rational change of coordinates that relates both sides. Up to this change of coordinates, the non-autonomous Hamiltonians and the fundamental symplectic two form (Theorems 6.3 and 7.1 of \cite{MarchalOrantinAlameddine2022}) computed identify with those of Yamakawa \cite{Yamakawa2017TauFA,Yamakawa2019FundamentalTwoForms}.

Note that in this geometric perspective, there are consequences on the symplectic Ehresmann connections on the fibration $\hat{F}_{\mathcal{R},r}\to B$ from the space of Lax matrices ($\hat{F}_{\mathcal{R},r}$ of Definition \ref{DefFR}) to the base $B$ of times (i.e. irregular times and location of poles). Indeed, recall that the intrinsic perspective on non-autonomous Hamiltonians is that given any two symplectic Ehresmann connections on $\hat{F}_{\mathcal{R},r}$, their difference is governed by an intrinsic horizontal one-form $\omega$ on $\hat{F}_{\mathcal{R},r}$. Choosing a system of coordinates $\mathbf{t}$ on $B$, we have $\omega= \sum_k H_k dt_k$. Therefore any symplectic and time-independent change of Darboux coordinates shall leave both the fundamental two form $\Omega$ but also $\omega$ invariant corresponding to the same symplectic Ehresmann connection. On the contrary, if the change of Darboux coordinates is time dependent, then $\Omega$ may be invariant but $\omega$ is generally not, implying that the two symplectic Ehresmann connections differ. In our case, the one-to-one map from Darboux coordinates given by apparent singularities and the Darboux coordinates corresponding to the isospectral setting are related by a non-trivial time-dependent map indicating that both symplectic Ehresmann connections differ. This point shall be discussed in more details as a conclusion.

\section{Meromorphic connections, gauges and Darboux coordinates}\label{SectionMero}

\subsection{Meromorphic connections and irregular times}

One of the main results of \cite{MarchalOrantinAlameddine2022}, which is also obvious from the underlying Poisson structure, is to show that the Hamiltonian structure is the same for meromorphic connections in $\mathfrak{gl}_2(\mathbb{C})$ and in $\mathfrak{sl}_2(\mathbb{C})$. Thus, we shall restrict, without loss of generality, to meromorphic connections in $\mathfrak{sl}_2(\mathbb{C})$ in the rest of the article.

The space of $\mathfrak{sl}_2(\mathbb{C})$ meromorphic connections has been studied from many different perspectives. In the present article, we shall mainly follow the point of view of the Montr\'{e}al group \cite{Darboux_coord93,HarnadHurtubise1997} together with some insights from the work of P. Boalch \cite{Boalch2001}. Let us first define the space we shall study.

\begin{definition}[Space of rational connections]\label{DefFR}
Let $n \in \mathbb{N}$ be a non-negative integer and $\{X_i\}_{i=1}^n$ be $n$ distinct points in the complex plane (we shall implicitly identify a point in the complex plane with its corresponding complex number). Let us denote $\mathcal{R} := \{\infty,X_1,\dots,X_n\}$ and $\mathcal{R}_0 := \{X_1,\dots,X_n\}$. For any $\mathbf{r}:=(r_\infty,r_1,\dots,r_n)\in \left(\mathbb{N}\setminus\{0\}\right)^{n+1}$, let us define
\beq
F_{\mathcal{R},\mathbf{r}} := \left\{\hat{L}(\lambda) = \sum_{k=1}^{r_\infty-1} \hat{L}^{[\infty,k]} \lambda^{k-1} + \sum_{s=1}^n \sum_{k=0}^{r_s-1} \frac{\hat{L}^{[X_s,k]}}{(\lambda-X_s)^{k+1}}
\,\, \text{ with }\,\, \{\hat{L}^{[p,k]}\} \in \left(\mathfrak{sl}_2(\mathbb{C})\right)^{r-1}\right\}/\text{SL}_2(\mathbb{C}) 
\eeq
where $r = r_\infty + \underset{s=1}{\overset{n}{\sum}} r_s$ and $\text{SL}_2(\mathbb{C})$ acts simultaneously by conjugation on all coefficients $\{\hat{L}^{[p,k]}\}_{p,k}$.

\sloppy{Let us denote $\hat{F}_{\mathcal{R},\mathbf{r}}$  the subset of $F_{\mathcal{R},\mathbf{r}}$ corresponding to generic non-resonant rational connections. It corresponds to the subset  of $F_{\mathcal{R},\mathbf{r}}$ composed of elements with coefficients $\{\hat{L}^{[\infty,k]}\}_{1\leq k\leq r_\infty-1}\cup\{\hat{L}^{[X_s,k]}\}_{1\leq s\leq n, 1\leq k\leq r_s-1}$ having distinct eigenvalues or with eigenvalues not differing by integers in the case of Fuchsian singularities (i.e. when $r_s=1$, the difference of two eigenvalues (monodromies) of $\hat{L}^{[X_s,0]}$ should not be an integer).\footnote{In fact, all formulas obtained in this article remain valid under the weaker assumption that the leading order at each singular pole has distinct eigenvalues (or that the eigenvalues do not differ by some integers for Fuchsian singularities) provided that one should still consider the deformation space relatively to all eigenvalues. %In the literature, this case is often referred to as the untwisted case.
}}
\end{definition}

\begin{remark}We warn the reader that the terminology twisted/untwisted, ramified/unramified, generic/non-generic and resonant/non-resonant varies across the literature and we refer to Definition \ref{DefFR} for the precise setup at stake in this article.
\end{remark}

\begin{remark}\normalsize{In} the present article, we shall assume that $\infty$ is always a pole following the standard convention. Of course, one may always use a change of coordinates in order to remove such assumption.   
\end{remark}

\begin{remark}In general, the ``non-resonant" condition for irregular singularities is presented in the assumption that leading order matrices at each pole have distinct eigenvalues. This condition ensures that each eigenspace is of dimension one and hence that the leading order is diagonalizable. As mentioned in \cite{MarchalAlameddineP1Hierarchy2023} and \cite{MarchalOrantinAlameddine2022} for $\mathfrak{gl}_2$ meromorphic connections, most of the algebraic formulations extend to the limit where some eigenvalues are the same up to some truncations of lower-triangular Toeplitz matrices. The situation is more complicated at the level of deformations because there are two options: we can either just consider the full deformation space and just look at a point where two times would be equal, or completely erase one direction by imposing it to belong to the submanifold on which two eigenvalues are equal. But in both cases, the situation is manageable on the isomonodromic side and formulas remain essentially of the same form. On the isospectral side, the situation is similar and Theorem \ref{MontrealResults} shows that in the $\mathfrak{sl}_2$ setting the isospectral invariants remain the same in the resonant case. Thus, we believe that the distinct eigenvalues condition for leading orders is too strong but that only the assumption that the leading order is diagonalizable is sufficient for the computation of Lax matrices and Hamiltonians in both setups and we let this issue for future works.
\end{remark}

$F_{\mathcal{R},\mathbf{r}}$ can be given a Poisson structure  inherited from the Poisson structure of a corresponding loop algebra \cite{Harnad:1993hw,HarnadHurtubise,woodhouse2007duality}. It is a Poisson space of dimension 
\beq
\dim F_{\mathcal{R},\mathbf{r}} = 3r-6.
\eeq	

The space $F_{\mathcal{R},\mathbf{r}}$ has been intensively studied from the point of view of isospectral and isomonodromic deformations. Following P. Boalch's works \cite{Boalch2001,Boalch2012,Boalch2022} and \cite{Douccot2022TopologyOI,Douccot2022LocalWM} one can use the Poisson structure on $F_{\mathcal{R},\mathbf{r}}$ in order to describe it as a bundle whose fibers are symplectic leaves obtained by fixing the irregular type and monodromies of $\hat{L}(\lambda)$. Let us briefly review this perspective and use it to define local coordinates on $F_{\mathcal{R},\mathbf{r}}$ trivializing the fibration.

In this article, \textbf{we shall restrict to $\hat{F}_{\mathcal{R},\mathbf{r}}$ to simplify the presentation} but the present setup may surely be extended to $F_{\mathcal{R},\mathbf{r}}$.

\medskip

For any pole $p \in \mathcal{R}$, let us define a local coordinate
\beq
\forall\, p \in \mathcal{R} \, , \; 
z_p(\lambda):= \left\{ \begin{array}{l}
(\lambda-p) \qquad \text{if} \qquad p \in \{X_1,\dots,X_n\} \cr
\lambda^{-1} \qquad \text{if} \qquad p = \infty \cr
\end{array}
\right.
\eeq

Given $\hat{L}(\lambda)$ in an orbit of $\hat{F}_{\mathcal{R},\mathbf{r}}$ and a pole $p \in \mathcal{R}$, there exists a gauge matrix $G_p \in \text{GL}_2[[z_p(\lambda)]]$ holomorphic at $\lambda = p$ such that the gauge transformation $\Psi_p=G_p \hat{\Psi}$ provides
\footnotesize{\beq \label{Birkhoff} \Psi_p(\lambda)=\Psi_p^{(\text{reg})}(\lambda)\, \diag\left(\exp\left(- \sum_{k=1}^{r_p-1}\frac{t_{p,k}}{k z_p(\lambda)^{\,k}}  +t_{p,0}\ln z_p(\lambda)\right) , \exp\left( \sum_{k=1}^{r_p-1}\frac{t_{p,k}}{k z_p(\lambda)^{\,k}}  -t_{p,0}\ln z_p(\lambda)\right) \right) 
\eeq} 
\normalsize{where} $\Psi_p^{(\text{reg})}(\lambda)$ is regular at $\lambda=p$. It corresponds to a Lax matrix $L_p=G_p\hat{L}(\lambda) G_p^{-1} + (\partial_\lambda G_p)G_p^{-1}$ satisfying
\beq \label{DiagoCondition}
G_p \,\hat{L}(\lambda) d\lambda\, G_p^{-1} + (\partial_\lambda G_p)G_p^{-1}d\lambda = dD_p(z_p) - \Lambda_p \frac{dz_p}{z_p} %+ O\left(dz_p\right)
 \,\,\text{where}\,\, D_p(z_p) = \sum_{k=1}^{r_p-1} \frac{Q_{p,k}}{z_p^{\, k}}
\eeq 
with 
\beq 
D_{p,k} =  \diag \left(- \frac{t_{p,k}}{k}, \frac{t_{p,k}}{k}\right) \,\,\text{ and }\, %\Lambda_\infty = \diag(t_{\infty^{(1)},0},t_{\infty^{(2)},0})\,  
\Lambda_p = \diag(-t_{p,0},t_{p,0})\,,\, \forall \, p\in \mathcal{R} 
\eeq
for some complex numbers  $\left(t_{p,k}\right)_{p\in \mathcal{R},0\leq k\leq r_p-1}$. $D_p(z_p)$ is called the irregular type of $\hat{L}$ at $p$ and $ \Lambda_p$ its residue (also called formal monodromy). Equation \eqref{Birkhoff} is known as the Birkhoff factorization or formal normal solution or Turritin-Levelt fundamental form. We shall denote $\mathbf{t}=\left(t_{p,k}\right)_{p\in \mathcal{R},1\leq k\leq r_p-1}$ the irregular times while the residues $\mathbf{t}_0:=\left(t_{p,0}\right)_{p\in \mathcal{R}}$ will be referred to as monodromies by abuse of language.

\begin{remark}\normalsize{In} the literature, the set of irregular times $\mathbf{t}=\left(t_{p,k}\right)_{p\in \mathcal{R},1\leq k\leq r_p-1}$ is referred to as ``spectral times'' or ``KP times''. This terminology originates from the study of isospectral systems and do not include the monodromy parameters $\left(t_{p,0}\right)_{p\in \mathcal{R}}$. 
\end{remark}

\subsection{Representative normalized at infinity \label{SectionNormalizationInfinity}}
Fixing the irregular times $\mathbf{t}$ and monodromies $\mathbf{t}_0$ of $\hat{L}(\lambda)$ does not fix it uniquely. In each orbit in $\hat{F}_{\mathcal{R},\mathbf{r}}$, there exists a unique element such that $\hat{L}^{[\infty,r_\infty-1]}$ is diagonal and such that $\hat{L}^{[\infty,r_\infty-2]}$ takes the form 
\beq
\Res_{\lambda\to \infty} \hat{L}(\lambda) \lambda^{-(r_\infty-2)}=-\begin{pmatrix} \beta_{r_\infty-2} & \omega\\ \delta_{r_\infty-2} & -\beta_{r_\infty-2} \end{pmatrix} .
\eeq
where $\omega$ is a given non-zero constant.\footnote{In \cite{MarchalOrantinAlameddine2022}, $\omega$ was set to $1$ for simplicity since the Hamiltonian system does not depend on the choice of $\omega$. However, as we will see it in Section \ref{SectionConnectionIso}, imposing the isospectral condition of \cite{BertolaHarnadHurtubise2022} requires that $\omega$ must depend on the irregular times when $r_\infty=1$ so that in this context, one cannot simply set it to $1$.}
 
In particular, one may identify $\hat{F}_{\mathcal{R},\mathbf{r}}$ with the space of such representatives:
\begin{itemize} \item If $r_\infty\geq 3$:
\small{\beq\label{NormalizationInfty}
\begin{array}{ll}
\hat{F}_{\mathcal{R},\mathbf{r}} := &\left\{  \tilde{L}(\lambda) = {\displaystyle \sum_{k=1}^{r_\infty-1}} \tilde{L}^{[\infty,k]} \lambda^{k-1} + {\displaystyle \sum_{s=1}^n \sum_{k=0}^{r_s-1}} \frac{\tilde{L}^{[X_s,k]}}{(\lambda-X_s)^{k+1}}
\,\text{ such that }\,  \right.\cr
& \,\,\, \{\tilde{L}^{[\infty,k]}\}_{1\leq k\leq r_\infty-1}\cup\{\tilde{L}^{[X_s,k]}\}_{1\leq s\leq n, 0\leq k\leq r_s-1} \in \left(\mathfrak{sl}_2(\mathbb{C})\right)^{r-1} \, \text{have distinct eigenvalues,} \cr
%& \,\,\, \{\tilde{L}^{[p,r_p-1]}\}_{p\in \mathcal{R}} \in \left(\mathfrak{gl}_2(\mathbb{C})\right)^{n+1} \, \text{have distinct eigenvalues and} \cr
& \left.  \, \tilde{L}^{[\infty,r_\infty-1]}=\text{diag}(-t_{\infty,r_\infty-1},t_{\infty,r_\infty-1})  \text{ and }\right.\cr
&\left. \, \tilde{L}^{[\infty,r_\infty-2]} =\begin{pmatrix} \beta_{r_\infty-2} & \omega \\ \delta_{r_\infty-2} & -\beta_{r_\infty-2} \end{pmatrix}, \, (\beta_{r_\infty-2},\delta_{r_\infty-2}) \in \mathbb{C}^2
\right\}
\end{array}
\eeq}
\item If $r_\infty=2$:
\small{\beq\begin{array}{ll}
\hat{F}_{\mathcal{R},\mathbf{r}} := &\left\{  \tilde{L}(\lambda) = {\displaystyle \tilde{L}^{[\infty,1]}} + {\displaystyle \sum_{s=1}^n \sum_{k=0}^{r_s-1}} \frac{\tilde{L}^{[X_s,k]}}{(\lambda-X_s)^{k+1}}
\,\text{ such that }\,  \right.\cr
& \,\,\, \{\tilde{L}^{[\infty,1]}\}\cup\{\tilde{L}^{[X_s,k]}\}_{1\leq s\leq n, 0\leq k\leq r_s-1} \in \left(\mathfrak{sl}_2(\mathbb{C})\right)^{r-1} \, \text{have distinct eigenvalues,} \cr
%& \,\,\, \{\tilde{L}^{[p,r_p-1]}\}_{p\in \mathcal{R}} \in \left(\mathfrak{gl}_2(\mathbb{C})\right)^{n+1} \, \text{have distinct eigenvalues and} \cr
& \left.  \, \tilde{L}^{[\infty,1]}=\text{diag}(-t_{\infty,1},t_{\infty,1})  \text{ and } {\displaystyle\sum_{s=1}^n}\tilde{L}^{[X_s,0]} =\begin{pmatrix} \beta_0 & \omega \\ \delta_0 & -\beta_0 \end{pmatrix}, \, (\beta_0,\delta_0) \in \mathbb{C}^2
\right\}
\end{array}
\eeq}
\item If $r_\infty=1$:
\small{\beq\begin{array}{ll}
\hat{F}_{\mathcal{R},\mathbf{r}} := &\left\{  \tilde{L}(\lambda) =  {\displaystyle \sum_{s=1}^n \sum_{k=0}^{r_s-1}} \frac{\tilde{L}^{[X_s,k]}}{(\lambda-X_s)^{k+1}}
\,\text{ such that }\,  \right.\cr
& \,\,\, \{\tilde{L}^{[X_s,k]}\}_{1\leq s\leq n, 0\leq k\leq r_s-1} \in \left(\mathfrak{sl}_2(\mathbb{C})\right)^{r-1} \, \text{have distinct eigenvalues,} \cr
%& \,\,\, \{\tilde{L}^{[p,r_p-1]}\}_{p\in \mathcal{R}} \in \left(\mathfrak{gl}_2(\mathbb{C})\right)^{n+1} \, \text{have distinct eigenvalues and} \cr
& \left.  \, {\displaystyle\sum_{s=1}^n\tilde{L}^{[X_s,0]}}=\text{diag}(-t_{\infty,0},t_{\infty,0})  \text{ and }\right.\cr
&\left. \, {\displaystyle\sum_{s=1}^n}\tilde{L}^{[X_s,1]}+{\displaystyle\sum_{s=1}^n} X_s\tilde{L}^{[X_s,0]} =\begin{pmatrix} \beta_{-1} & \omega \\ \delta_{-1} & -\beta_{-1} \end{pmatrix}, \, (\beta_{-1},\delta_{-1}) \in \mathbb{C}^2
\right\}
\end{array}
\eeq}
\end{itemize}

\normalsize{In} the following, we shall use the notation $\tilde{L}(\lambda)$ whenever we consider such a representative and we shall call it a representative ``normalized at infinity'' to stress that the $\text{SL}_2(\mathbb{C})$ global action has been used to select a specific form at infinity. Note that the choice of a representative is necessary to have uniquely defined Lax matrices. Of course, one may always perform a gauge transformation to select another type of normalization.

\begin{remark}\normalsize{The} Hamiltonian system is independent of the choice of representative. In the isospectral approach, the part of the $\text{SL}_2(\mathbb{C})$ action corresponding to the action of diagonal matrices (that has been used here to fix the subleading order at infinity with a $\omega$ in the upper-right entry) may be translated at the level of coordinates. In \cite{BertolaHarnadHurtubise2022}, this corresponds to the reduction of the $(\mathbf{x},\mathbf{y})$ variables to the $(\mathbf{u},\mathbf{v})$ variables.
\end{remark}

\begin{remark}\label{RemarkCoeff}\normalsize{The} choice of normalization at infinity, implies that coefficient $\beta_{r_\infty-2}$ identifies with $-t_{\infty,r_\infty-2}$ for $r_\infty\geq 2$. Indeed, the diagonalization of the singular part given by \eqref{DiagoCondition} implies that
\bea\label{SubLeadingCoeff} \det (\td{L}(\lambda)+G_\infty^{-1} \partial_\lambda G_\infty)&\overset{\lambda\to \infty}{=}&-\left(\sum_{k=0}^{r_\infty-1} t_{\infty,k}\lambda^{k-1}+ O(\lambda^{-2})\right)^2\cr
&=&-t_{\infty,r_\infty-1}^2\lambda^{2r_\infty-4}- 2t_{\infty,r_\infty-1}t_{\infty,r_\infty-2}\lambda^{2r_\infty-5}+O(\lambda^{2r_\infty-6})\cr
&&\eea
The l.h.s. is of the form:
\begin{itemize}\item If $r_\infty\geq 3$:
\footnotesize{\bea \label{Idrinftygeq3}&& \det (\td{L}(\lambda)+G_\infty^{-1} \partial_\lambda G_\infty)\overset{\lambda\to \infty}{=}\cr
&&\begin{vmatrix} -t_{\infty,r_\infty-1}\lambda^{r_\infty-2}+\beta_{r_\infty-2}\lambda^{r_\infty-3}+O(\lambda^{r_\infty-4})&\omega\lambda^{r_\infty-3}+O(\lambda^{r_\infty-4})\\
\delta_{r_\infty-2} \lambda^{r_\infty-3}+O(\lambda^{r_\infty-4})& t_{\infty,r_\infty-1}\lambda^{r_\infty-2}-\beta_{r_\infty-2}\lambda^{r_\infty-3}+O(\lambda^{r_\infty-4})\end{vmatrix}\cr
&&=-t_{\infty,r_\infty-1}^2\lambda^{2r_\infty-4}+2\beta_{r_\infty-2} t_{\infty,r_\infty-1}\lambda^{2r_\infty-5}+O(\lambda^{2r_\infty-6})
\eea}
\normalsize{Identifying} with the r.h.s. of \eqref{SubLeadingCoeff} we get $\beta_{r_\infty-2}=-t_{\infty,r_\infty-2}$.
\item If $r_\infty=2$, we observe that the matrix $G_\infty(\lambda)$ is of the form $G_{\infty}(\lambda)=G_0+G_1\lambda^{-1}+O(\lambda^{-2})$ with $G_0$ diagonal (in order to preserve the fact that the  leading order is already diagonal). Consequently $G_{\infty}^{-1}\partial_\lambda G_\infty=O(\lambda^{-2})$. The diagonalization \eqref{DiagoCondition} implies:
\beq\label{SubLeadingCoeff2} \det (\td{L}(\lambda)+G_\infty^{-1} \partial_\lambda G_\infty)\overset{\lambda\to \infty}{=}-t_{\infty,1}^2- 2t_{\infty,1}t_{\infty,0}\lambda^{-1}+O(\lambda^{-2})\eeq
Since $G_{\infty}^{-1}\partial_\lambda G_\infty=O(\lambda^{-2})$, the l.h.s. is of the form
\bea \label{Idrinftyequal2} \det (\td{L}(\lambda)+G_\infty^{-1} \partial_\lambda G_\infty)&\overset{\lambda\to \infty}{=}&\begin{vmatrix} -t_{\infty,1}+\beta_0\lambda^{-1}+O(\lambda^{-2})&\omega\lambda^{-1}+O(\lambda^{-2})\\
\delta_0 \lambda^{-1}+O(\lambda^{-2})& t_{\infty,1}-\beta_0\lambda^{-1}+O(\lambda^{-2})\end{vmatrix}\cr
&=&-t_{\infty,1}^2+2\beta_0 t_{\infty,1}\lambda^{-1}+O(\lambda^{-2})
\eea
Identifying with the r.h.s. of \eqref{SubLeadingCoeff2} we get $\beta_0=-t_{\infty,0}$.
\item If $r_\infty=1$ we simply have
\beq \beta_{-1}=\sum_{s=1}^n \left[\td{L}^{[X_s,1]}+X_s\td{L}^{[X_s,0]}\right]_{1,1}\eeq
\end{itemize}
\end{remark}

\subsection{General isomonodromic deformations}
The irregular times $\mathbf{t}$ and monodromies $\mathbf{t}_0$ provide a natural set of parameters parametrizing $\hat{F}_{\mathcal{R},\mathbf{r}}$. As the terminology suggests, a general isomonodromic deformation corresponds to a differential operator of the form.
\begin{definition}\label{DefGeneralDeformationsDefinition} We define the following general deformation operators.
\beq \label{GeneralDeformationsDefinition}\mathcal{L}_{\boldsymbol{\alpha}}=\sum_{k=1}^{r_\infty-1} \alpha_{\infty,k} \partial_{t_{\infty,k}}+ \sum_{s=1}^n\sum_{k=1}^{r_s-1} \alpha_{X_s,k} \partial_{t_{X_s,k}}+ \sum_{s=1}^n \alpha_{X_s} \partial_{X_s}\eeq
where we define the vector $\boldsymbol{\alpha}\in \mathbb{C}^{g+2-n}$ by
\beq \boldsymbol{\alpha}= \sum_{k=1}^{r_\infty-1} \alpha_{\infty,k}\mathbf{e}_{\infty,k}+\sum_{s=1}^n\sum_{k=1}^{r_s-1} \alpha_{X_s,k} \mathbf{e}_{X_s,k}+\sum_{s=1}^n \alpha_{X_s}\mathbf{e}_{X_s}.\eeq
\end{definition}

Note that the vector $\boldsymbol{\alpha}$ may depend on both the irregular times and the monodromies. Associated to a general isomonodromic deformation is a matrix $\td{A}_{\boldsymbol{\alpha}}(\lambda,\mathbf{t},\mathbf{t_0})$ such that
\beq \mathcal{L}_{\boldsymbol{\alpha}}[\td{\Psi}(\lambda,\mathbf{t},\mathbf{t_0})]=\td{A}_{\boldsymbol{\alpha}}(\lambda,\mathbf{t},\mathbf{t_0})\td{\Psi}(\lambda,\mathbf{t},\mathbf{t_0})\eeq
The important point is that $\td{A}_{\boldsymbol{\alpha}}(\lambda,\mathbf{t},\mathbf{t_0})$ is meromorphic in $\lambda$ with a pole structure dominated by those ot $\td{L}$.
The compatibility of 
\bea \partial_\lambda \td{\Psi}(\lambda,\mathbf{t},\mathbf{t_0})&=&\td{L}(\lambda,\mathbf{t},\mathbf{t_0}) \td{\Psi}(\lambda,\mathbf{t},\mathbf{t_0})\cr
\mathcal{L}_{\boldsymbol{\alpha}}[\td{\Psi}(\lambda,\mathbf{t},\mathbf{t_0})]&=&\td{A}_{\boldsymbol{\alpha}}(\lambda,\mathbf{t},\mathbf{t_0})\td{\Psi}(\lambda,\mathbf{t},\mathbf{t_0})
\eea
is equivalent to the isomonodromic compatibility equations
\beq\label{LaxCompat} \mathcal{L}_{\boldsymbol{\alpha}}[\td{L}(\lambda,\mathbf{t},\mathbf{t_0})]-\partial_\lambda \td{A}_{\boldsymbol{\alpha}}(\lambda,\mathbf{t},\mathbf{t_0})+\left[\td{L}(\lambda,\mathbf{t},\mathbf{t_0}),\td{A}_{\boldsymbol{\alpha}}(\lambda,\mathbf{t},\mathbf{t_0})\right]=0\eeq
which corresponds to a zero curvature equation for connections.

\medskip

In \cite{MarchalOrantinAlameddine2022}, the authors showed that one may define $g$ (the genus of the associated spectral curve) isomonodromic times $\left(\tau_i\right)_{1 \leq i \leq g}$ from the irregular times along with some additional trivial times $T_1$ and $T_2$ . It is then proven that the Hamiltonian structure contains no dependence on the trivial times so that the Darboux coordinates $\left(q_i,p_i\right)_{1 \leq i \leq g}$ depend uniquely on the isomonodromic times. The existence of two trivial times in the $\mathfrak{sl}_2(\mathbb{C})$ case is directly related to the action of the M\"{o}bius transformations on the Lax matrix. Indeed, it is known that the Hamiltonian system is invariant under the M\"{o}bius transformation while the Lax matrix remains dependent on this choice. Since we have assumed that infinity is always a pole, the M\"{o}bius transformations reduce to only translations and dilations and hence has two independent generators giving rise to two trivial times. The normalization of the Lax matrix at infinity suggests some specific values for the trivial times hence implying some specific values for the irregular times in order to obtain compact results. In \cite{MarchalOrantinAlameddine2022}, such convenient choices were discussed and depend on the degree of the pole at infinity. We shall make the same choice in the rest of the paper that we list here.

%One of the main result of \cite{MarchalOrantinAlameddine2022} is to define $g$ isomonodromic times $\left(\tau_i\right)_{1\leq i\leq g}$ from the irregular times $\mathbf{t}$ and some additional trivial times $T_1$ and $T_2$. In particular, it is proved that the Hamiltonian evolutions are independent of the trivial times so that the Darboux coordinates $(q_i,p_i)_{1\leq i\leq g}$ only depend on the isomonodromic times. The existence of two trivial times in the $\mathfrak{sl}_2(\mathbb{C})$ case is directly related to the action of the Mobius transformations on the Lax pair. Indeed, the Hamiltonian system is known to be invariant under the Mobius transformations action on the spectral parameter (but the Lax matrices do depend on this choice). Since we have assumed that infinity is always a pole, the Mobius transformations reduces only to translations and dilatations and hence has two independent generators giving rise to two trivial times. 
%Although the Hamiltonian system is completely independent of this action, the normalization of the Lax matrix at infinity implies that some specific values of the trivial times, and hence of the some of the irregular times are more convenient to get compact formulas. In \cite{MarchalOrantinAlameddine2022}, such convenient choices were discussed and depend on the degree of the pole at infinity. We shall make the same choice in the rest of the paper that we list here:

\begin{proposition}[Definition of trivial and isomonodromic times \cite{MarchalOrantinAlameddine2022}]\label{PropTrivialTimes}We have:
\begin{itemize}\item If $r_\infty\geq 3$, then we shall take $t_{\infty,r_\infty-1}=1$ and $t_{\infty,r_\infty-2}=0$. The isomonodromic times are defined by:
\bea \tau_{\infty,k}&=&t_{\infty,k}\,\,,\,\, \forall\, k\in \llbracket 1,r_\infty-3\rrbracket\cr
\tau_{X_s,k}&=&t_{X_s,k}\,\,,\,\, \forall\, (s,k)\in \llbracket 1,n\rrbracket\times\llbracket 1,r_s-1\rrbracket\cr
\td{X}_s&=&X_s \,\,,\,\, \forall\, s\in \llbracket 1,n\rrbracket
\eea
\item If $r_\infty=2$, then we shall take $t_{\infty,1}=1$ and $X_1=0$. The isomonodromic times are defined by:
\bea \tau_{X_s,k}&=&t_{X_s,k}\,\,,\,\, \forall\, (s,k)\in \llbracket 1,n\rrbracket\times\llbracket 1,r_s-1\rrbracket\cr
\td{X}_s&=&X_s \,\,,\,\, \forall\, s\in \llbracket 2,n\rrbracket
\eea
\item If $r_\infty=1$ and $n\geq 2$, then we shall take $X_1=0$ and $X_2=1$. The isomonodromic times are defined by:
\bea \tau_{X_s,k}&=&t_{X_s,k}\,\,,\,\, \forall\, (s,k)\in \llbracket 1,n\rrbracket\times\llbracket 1,r_s-1\rrbracket\cr
\td{X}_s&=&X_s \,\,,\,\, \forall\, s\in \llbracket 2,n\rrbracket
\eea
\item If $r_\infty=1$ and $n=1$, then we shall take $X_1=0$ and $t_{X_1,r_1-1}=1$. The isomonodromic times are defined by:
\beq \tau_{X_1,k}=t_{X_1,k}\,\,,\,\, \forall\, k\in \llbracket 1,r_1-2\rrbracket\eeq
\end{itemize}
\end{proposition}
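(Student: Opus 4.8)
The plan is to read Proposition \ref{PropTrivialTimes} not as an independent computation but as a \emph{gauge-fixing} of the residual symmetry of the problem. Since $\infty$ is assumed to be a pole, the M\"obius group preserving the setup degenerates to the affine group $\operatorname{Aff}(\CC)=\{\lambda\mapsto c\lambda+a\,:\,c\in\CC^*,\,a\in\CC\}$, whose two generators (dilations and translations) produce exactly the two trivial times $T_1,T_2$ invoked in the text. I would therefore prove the statement by first exhibiting the action of $\operatorname{Aff}(\CC)$ on $\hat F_{\mathcal R,\mathbf r}$ and on the irregular times $\mathbf t$ and pole positions $(X_s)$, and then checking that in each of the four cases the two prescribed normalizations can be simultaneously and uniquely achieved, so that the remaining free parameters are precisely the genuine isomonodromic moduli relabelled $\tau$.

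For the group action, I would pull back a representative $\td L$ along $\lambda=c\td\lambda+a$. Because $L\,d\lambda$ is a connection one-form, the transformed Lax matrix is $c\,\td L(c\td\lambda+a)$; this keeps all pole orders $\mathbf r$ fixed, moves the finite poles to $\td X_s=(X_s-a)/c$, and acts on the formal exponents at infinity. Inserting $\lambda=c\td\lambda+a$ into the exponent $\theta_\infty(\lambda)=\sum_{k=1}^{r_\infty-1}\tfrac{t_{\infty,k}}{k}\lambda^{k}+t_{\infty,0}\ln\lambda$ of \eqref{Birkhoff} and re-expanding in $\td\lambda$ gives the transformation law for the times. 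The two relations I really need are the leading one $\td t_{\infty,r_\infty-1}=c^{\,r_\infty-1}t_{\infty,r_\infty-1}$ and the subleading one $\td t_{\infty,r_\infty-2}=c^{\,r_\infty-2}\bigl(t_{\infty,r_\infty-2}+(r_\infty-2)\,a\,t_{\infty,r_\infty-1}\bigr)$, together with the affine action on the $X_s$. Crucially, the distinct-eigenvalue condition defining $\hat F_{\mathcal R,\mathbf r}$ forces $t_{\infty,r_\infty-1}\neq 0$ for $r_\infty\geq 2$ (the leading diagonal entry being $\pm t_{\infty,r_\infty-1}$, consistent with Remark \ref{RemarkCoeff}), which is what makes the normalizations nondegenerate.

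The case analysis then fixes $(c,a)$: for $r_\infty\ge 3$, choose $c$ so that $\td t_{\infty,r_\infty-1}=1$ (possible since $t_{\infty,r_\infty-1}\neq 0$) and then $a$ so that $\td t_{\infty,r_\infty-2}=0$ (possible since the coefficient $(r_\infty-2)\,t_{\infty,r_\infty-1}$ of $a$ is nonzero), which consumes both generators; for $r_\infty=2$ there is a single infinity time, so dilation sets $t_{\infty,1}=1$ while translation is spent instead on a finite pole, $X_1=0$; for $r_\infty=1$ there is no infinity time at all, so both generators act on pole positions, giving $X_1=0,\,X_2=1$ when $n\ge 2$, and $X_1=0$ together with the leading finite time $t_{X_1,r_1-1}=1$ when $n=1$. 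In each case the two conditions are functionally independent in $(c,a)$, so the affine freedom is rigidified and the prescription is consistent. A final dimension bookkeeping confirms the labelling: the total number of deformation directions in Definition \ref{DefGeneralDeformationsDefinition} is $r-1$ (namely $r-n-1$ irregular times plus $n$ pole positions), and removing the two trivial directions leaves exactly $r-3=g$ genuine isomonodromic times, matching the surviving coordinates $\tau_{\infty,k},\tau_{X_s,k},\td X_s$ listed in each case.

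The main obstacle I expect is the translation action: unlike dilations, translations mix the irregular times across different orders, so one must track the full re-expansion of $\theta_\infty$ carefully enough to confirm both that $\td t_{\infty,r_\infty-2}=0$ depends on $a$ with a nonvanishing coefficient and that no hidden constraint couples the two normalizations. The degenerate low-$r_\infty$ cases require separate care precisely because the object one would like to normalize at infinity no longer exists and the trivial freedom must be transferred onto the finite poles; verifying that this transfer is always possible — in particular that $n\ge 2$ is genuinely needed to realize the affine gauge $X_1=0,\,X_2=1$ when $r_\infty=1$, and that a leading finite time $t_{X_1,r_1-1}$ is available to absorb the dilation when $n=1$ — is the remaining delicate point.
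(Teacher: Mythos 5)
Your proposal is correct and follows essentially the same route as the paper: Proposition \ref{PropTrivialTimes} is stated there as a normalization convention recalled from \cite{MarchalOrantinAlameddine2022}, justified precisely by the residual affine symmetry $\lambda\mapsto c\lambda+a$ (the M\"obius transformations fixing $\infty$) whose two generators produce the two trivial times, which is exactly the gauge-fixing argument you carry out in detail. Your transformation laws $\td t_{\infty,r_\infty-1}=c^{\,r_\infty-1}t_{\infty,r_\infty-1}$, $\td t_{\infty,r_\infty-2}=c^{\,r_\infty-2}\bigl(t_{\infty,r_\infty-2}+(r_\infty-2)\,a\,t_{\infty,r_\infty-1}\bigr)$, the nondegeneracy supplied by the distinct-eigenvalue condition, the case analysis for low $r_\infty$, and the count $r-1-2=r-3=g$ are all sound and consistent with the paper's setup.
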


\begin{remark} \normalsize{Note} that in \cite{MarchalOrantinAlameddine2022}, the isomonodromic times corresponding to the irregular times were defined with a factor two (i.e. $\tau_{p,k}=2t_{p,k}$) that we omit in this paper to have simpler relations with isospectral deformations. This simply corresponds to a trivial rescaling of the corresponding isomonodromic times. However, one needs to adapt these factors when matching the current results with the expressions obtained in  \cite{MarchalOrantinAlameddine2022}. 
\end{remark}

\begin{remark} \label{ImportantRemark}\normalsize{For} compactness and to match with results of \cite{BertolaHarnadHurtubise2022}, we may write some formulas keeping all irregular times to arbitrary values but one needs to remember that the upcoming formulas are only valid when the values of $t_{\infty,r_\infty-1}$, $t_{\infty,r_\infty-2}$, $t_{X_1,r_1-1}$, $X_1$ and $X_2$ are exactly the values specified in Proposition \ref{PropTrivialTimes} depending on the values of $r_\infty$ and $n$.
\end{remark}

\section{Lax pairs and Hamiltonians in the oper gauge} \label{SectionOldResults}
This section presents a summary of results of \cite{MarchalOrantinAlameddine2022}  that shall be used in the present article. The main purpose of \cite{MarchalOrantinAlameddine2022} was to obtain explicit formulas for the Lax matrices and the Hamiltonians in the oper gauge where the Lax matrix is a companion matrix. In this gauge, the natural Darboux coordinates are $(\mathbf{q},\mathbf{p}):=\{q_1,\dots,q_g,p_1,\dots,p_g\}$. We refer to \cite{MarchalOrantinAlameddine2022} for details on the derivations of the results presented in this section.  

\medskip

Starting from the Lax system $\partial_\lambda \td{\Psi}(\lambda)=\td{L}(\lambda)\td{\Psi}(\lambda)$ normalized at infinity according to Section \ref{SectionNormalizationInfinity}, the first step of \cite{MarchalOrantinAlameddine2022} is to perform a gauge transformation to switch the setting to the oper form (i.e. a companion-like form for the Lax matrix) of the differential system. This step is also equivalent to write the so-called ``quantum curve'' satisfied by $\td{\Psi}_{1,1}(\lambda)$ and $\td{\Psi}_{1,2}(\lambda)$. 

\begin{proposition}[Gauge transformation to oper form \cite{MarchalOrantinAlameddine2022}]\label{OperForm}Let us define $S_\infty:=\llbracket 0 ,r_\infty-4\rrbracket$ and  $S_{X_s}:=\llbracket 1,r_s\rrbracket\,,\,\, \forall \, s\in \llbracket 1,n\rrbracket$ and
\beq \label{GaugeTransfo} G(\lambda):=\begin{pmatrix} 1&0\\
\td{L}_{1,1}(\lambda)&\td{L}_{1,2}(\lambda)
\end{pmatrix}
=\begin{pmatrix}1&0\\
\frac{-Q(\lambda)-(t_{\infty,r_\infty-1}\lambda+g_0)\underset{j=1}{\overset{g}{\prod}}(\lambda-q_j) }{\underset{s=1}{\overset{n}{\prod}} (\lambda-X_s)^{r_s}}& \frac{\omega\underset{j=1}{\overset{g}{\prod}}(\lambda-q_j)}{\underset{s=1}{\overset{n}{\prod}} (\lambda-X_s)^{r_s}}
\end{pmatrix}\eeq
where $g:=r_\infty-3+\underset{s=1}{\overset{n}{\sum}}r_s$ and
\beq Q(\lambda):=-\sum_{i=1}^g p_i \prod_{s=1}^{n}(q_i-X_s)^{r_s} \prod_{j\neq i}\frac{\lambda-q_j}{q_i-q_j}\eeq
and 
\bea g_0&:=&t_{\infty,r_\infty-2}+t_{\infty,r_\infty-1}\left(\sum_{j=1}^gq_j -\sum_{s=1}^nr_sX_s\right) \,\, \text{ if } r_\infty\geq 2\cr
g_0&:=&t_{\infty^{(1)},0}\left(\sum_{j=1}^gq_j -\sum_{s=1}^nr_sX_s\right)-\sum_{s=1}^n\left[X_s\td{L}^{[X_s,0]}+\td{L}^{[X_s,1]}\right]_{1,1} \,\, \text{ if }r_\infty=1\cr
&&
\eea
The gauge transformation $\Psi(\lambda):=G(\lambda)\td{\Psi}(\lambda)$ implies that $\Psi(\lambda)$ satisfies the companion-like system (also named oper form):
\beq \label{GeneralLSys}\partial_\lambda \Psi(\lambda)=L(\lambda)\Psi(\lambda)\,,\, L(\lambda)=\begin{pmatrix}0&1\\ L_{2,1}(\lambda)&L_{2,2}(\lambda) \end{pmatrix}\eeq
with 
\bea \label{L21}L_{2,1}(\lambda)&=& -\td{P}_2(\lambda) +\sum_{j=0}^{r_\infty-4}H_{\infty,j}\lambda^j+\sum_{s=1}^n\sum_{j=1}^{r_s}H_{X_s,j}(\lambda-X_s)^{-j}\cr
&&-\sum_{j=1}^{g} \frac{%\hbar
 p_j}{\lambda-q_j}-%\hbar 
t_{\infty^{(1)},r_\infty-1}\lambda^{r_\infty-3}\delta_{r_\infty\geq 3}\cr
L_{2,2}(\lambda)&=&\sum_{j=1}^{g} \frac{%\hbar
1}{\lambda-q_j} -\sum_{s=1}^n \frac{%\hbar
 r_s}{\lambda-X_s}
\eea
where 
\beq \label{tdP2}\td{P}_2(\lambda):=\sum_{j=\text{max}(0,r_\infty-3)}^{2r_\infty-4} P_{\infty,j}^{(2)} \lambda^{j} +\sum_{s=1}^n\sum_{j=r_s+1}^{2r_s} \frac{P_{X_s,j}^{(2)}}{(\lambda-X_s)^j}\eeq
is defined by
\bea \label{tdP22}
 P_{\infty,2r_\infty-4-k}^{(2)}&:=&-\sum_{j=0}^k t_{\infty,r_\infty-1-j}t_{\infty,r_\infty-1 -(k-j)} \,\, ,\,\, \forall \,k\in \llbracket 0, r_{\infty}-1\rrbracket,\cr
P_{X_s,2r_s-k}^{(2)}&:=&-\sum_{j=0}^k t_{X_s,r_s-1-j}t_{X_s,r_s-1 -(k-j)} \,\, ,\,\, \forall\, s\in \llbracket 1,n \rrbracket\,\, ,\,\, \forall \,k\in \llbracket 0, r_s-1\rrbracket
\eea
Moreover for $r_\infty=2$, we have the additional relations
\beq \label{ConditionsAddrinftyequal2} P_{\infty,0}^{(2)}=-(t_{\infty,1})^2,\,\,
\sum_{s=1}^n H_{X_s,1}= %\hbar 
\sum_{j=1}^g p_j+(2t_{\infty,1}t_{\infty,0}-%\hbar
 t_{\infty,1}) 
\eeq
while, for $r_\infty=1$, we have
\bea \label{ConditionsAddrinftyequal1}\sum_{s=1}^n H_{X_s,1}&=& %\hbar 
\sum_{j=1}^g p_j,\cr
\sum_{s=1}^n X_s H_{X_s,1}+\sum_{s=1}^n H_{X_s,2}\delta_{r_s\geq 2}&=&%\hbar
 \sum_{j=1}^g q_j p_j +\sum_{s=1}^n (t_{X_s,0})^2\delta_{r_s=1} 
+t_{\infty,0}(t_{\infty,0}-1%\hbar
).\cr
&& 
\eea
\end{proposition}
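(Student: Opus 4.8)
The statement recalls a result of \cite{MarchalOrantinAlameddine2022}, so I would reproduce its proof along the following lines.

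First, the structural part of the claim — that $L(\lambda)$ is companion-like with top row $(0,1)$ — is essentially forced by the shape of $G$ in \eqref{GaugeTransfo}. Since the first row of $G$ is $(1,0)$, the first row of $\Psi=G\td\Psi$ coincides with that of $\td\Psi$; call it $\phi$. The second row of $\Psi$ equals $\td L_{1,1}\phi+\td L_{1,2}\chi$, where $\chi$ denotes the second row of $\td\Psi$, and the top line of $\partial_\lambda\td\Psi=\td L\td\Psi$ reads exactly $\partial_\lambda\phi=\td L_{1,1}\phi+\td L_{1,2}\chi$. Hence the second row of $\Psi$ is $\partial_\lambda\phi$, so $\Psi=(\phi,\partial_\lambda\phi)^{T}$ and the first row of $L$ must be $(0,1)$. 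Equivalently, one may simply compute $L=G\td L G^{-1}+(\partial_\lambda G)G^{-1}$ and check that its $(1,1)$ and $(1,2)$ entries are $0$ and $1$.

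Next I would extract the scalar second-order ODE (the ``quantum curve'') satisfied by $\phi$ and read off $L_{2,1},L_{2,2}$. Eliminating $\chi=(\partial_\lambda\phi-\td L_{1,1}\phi)/\td L_{1,2}$ from the two lines of the system and using $\td L\in\mathfrak{sl}_2(\CC)$ (so $\td L_{2,2}=-\td L_{1,1}$ and $\td L_{1,1}^2+\td L_{1,2}\td L_{2,1}=-\det\td L$) gives the clean general formulas
\beq
L_{2,2}=\partial_\lambda\ln\td L_{1,2},\qquad L_{2,1}=-\det\td L+\partial_\lambda\td L_{1,1}-\td L_{1,1}\,\partial_\lambda\ln\td L_{1,2}.
\eeq
The formula for $L_{2,2}$ is then immediate: with $\td L_{1,2}=\omega\prod_{j=1}^g(\lambda-q_j)/\prod_{s=1}^n(\lambda-X_s)^{r_s}$ its logarithmic derivative is exactly $\sum_{j=1}^g(\lambda-q_j)^{-1}-\sum_{s=1}^n r_s(\lambda-X_s)^{-1}$, matching \eqref{L21}.

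It then remains to expand $L_{2,1}$ pole by pole. The points $q_j$ are apparent singularities: they are zeros of $\td L_{1,2}$ but not poles of $\td L$, so both $-\det\td L$ and $\partial_\lambda\td L_{1,1}$ are regular there, while $L_{2,2}$ has a simple pole of residue $1$; since the Lagrange interpolation defining $Q(\lambda)$ is arranged precisely so that $\td L_{1,1}(q_j)=p_j$, the residue of $L_{2,1}$ at $q_j$ is $-p_j$, producing the $-\sum_j p_j/(\lambda-q_j)$ term. At each finite pole $X_s$ and at $\infty$, the leading principal parts of $-\det\td L$ are fixed by the irregular type through \eqref{DiagoCondition}: the products of the leading eigenvalue coefficients $t_{p,k}$ assemble the singular part into precisely the $-\td P_2$ term of \eqref{tdP2}--\eqref{tdP22}, and the remaining lower-order coefficients are what I would \emph{define} to be the Hamiltonians $H_{\infty,j},H_{X_s,j}$.

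The bookkeeping at $X_s$ and at $\infty$ is the genuine work, and the low-order cases $r_\infty=2$ and $r_\infty=1$ are the main obstacle. There the normalization of Section~\ref{SectionNormalizationInfinity} together with the case-dependent definition of $g_0$ forces relations among the subleading coefficients that are absent for $r_\infty\ge 3$; tracking the $\lambda^{-1}$ (respectively $\lambda^{0}$) terms of $-\det\td L+\partial_\lambda\td L_{1,1}-\td L_{1,1}L_{2,2}$ and comparing with $\td P_2$ is what yields the extra constraints \eqref{ConditionsAddrinftyequal2} and \eqref{ConditionsAddrinftyequal1}. The delicate point is the consistent treatment of the monodromy contributions $t_{p,0}$ and of $g_0$ in the $r_\infty\le 2$ normalizations; since the full computation is carried out in \cite{MarchalOrantinAlameddine2022}, I would either cite it or complete the residue/expansion matching directly.
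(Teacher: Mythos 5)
Your proposal is correct and is essentially the argument the paper relies on: the paper itself does not reprove this proposition (it is recalled from \cite{MarchalOrantinAlameddine2022}), but all of your key identities appear in it --- the companion-gauge formulas for $L_{2,2}$ and $L_{2,1}$ are exactly \eqref{DetL} and \eqref{L21bis}, the evaluation $\td{L}_{1,1}(q_i)=p_i$ underlying the residue $-p_i$ at the apparent singularities is established in Appendix \ref{AppendixProofGeoLaxMatrices}, and the order-by-order matching at infinity that forces \eqref{ConditionsAddrinftyequal2} and \eqref{ConditionsAddrinftyequal1} for $r_\infty\leq 2$ is the computation \eqref{SpecialdettdL}--\eqref{SpecialdettdL4}. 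The one detail you gloss over is the explicit term $-t_{\infty,r_\infty-1}\lambda^{r_\infty-3}\delta_{r_\infty\geq 3}$ in \eqref{L21}: it arises from the derivative term $\td{L}_{1,2}\partial_\lambda\left(\td{L}_{1,1}/\td{L}_{1,2}\right)$ at $\lambda\to\infty$ and is not one of the free coefficients $H_{\infty,j}$ (whose range stops at $\lambda^{r_\infty-4}$), but your pole-by-pole bookkeeping produces it automatically.
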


Note that in Proposition \ref{OperForm}, the coordinates $\mathbf{q}$ are defined as the apparent singularities of the system (i.e. zeros of $\det(\td{\Psi}(\lambda))$). The dual coordinates $\mathbf{p}$ are chosen so that for any $i\in \llbracket 1,g\rrbracket$, $(q_i,p_i)$ is a point on the spectral curve, i.e. $\det(p_i I_2-\td{L}(q_i))=0$ for all $i\in \llbracket 1,g\rrbracket$.

\begin{remark} The oper gauge is less natural from geometric perspective because the entries of the Lax matrices have apparent singularities. In particular, they do not belong at first sight in $\hat{F}_{\mathcal{R},\mathbf{r}}$ whose pole structure is fixed. But at the computational level, the oper gauge has many advantages: it is independent of the choice of normalization of $\td{L}$ and it only has two non-trivial entries instead of four, which is more convenient to solve the compatibility equations. Moreover, if it is also straightforward to go from a Lax matrix $\td{L}$ to its oper form (which is equivalent to the scalar ODE satisfied by the first line of the wave matrix $\td{\Psi}$), it is not obvious to rewrite a scalar ODE with apparent singularities into a standard Lax matrix with fixed pole structure. Historically, this problem corresponds to the difference between Garnier systems (scalar ODEs) and the associated Painlev\'{e} property and their Lax formulations.
\end{remark}

The oper gauge is particularly convenient for computations since all the information on the Hamiltonian system is contained into $L_{2,1}(\lambda)$, i.e. in the $g$ unknown coefficients $(H_{p,k})_{p\in \mathcal{R},k\in S_p}$. In fact, the main result of \cite{MarchalOrantinAlameddine2022} is to prove that the coordinates $(\mathbf{q},\mathbf{p})$ satisfy some Hamiltonian evolutions relatively to the isomonodromic times defined in Proposition \ref{PropTrivialTimes} and that the Hamiltonians are easily expressed in terms of $\mathbf{H}:=\left(H_{p,k}\right)_{p\in \mathcal{R},k\in S_p}$.

\begin{theorem}[Hamiltonian evolution of $(\mathbf{q},\mathbf{p})$ (Theorem $7.1$ of \cite{MarchalOrantinAlameddine2022})]\label{TheoHamiltonian}For any isomonodromic time $\tau$ and the choice of irregular times defined in Proposition \ref{PropTrivialTimes}, we have
\beq \partial_{\tau}q_i=\frac{\partial \text{Ham}^{(\alpha_{\tau})}(\mathbf{q},\mathbf{p},\mathbf{t},\mathbf{t}_0)}{\partial p_i}\,\,,\,\, \partial_{\tau}p_i=-\frac{\partial \text{Ham}^{(\alpha_{\tau})}(\mathbf{q},\mathbf{p},\mathbf{t},\mathbf{t}_0)}{\partial q_i} \,,\,\, \forall \, i\in \llbracket 1,g\rrbracket\eeq
where the Hamiltonians $\text{Ham}^{(\alpha_{\tau})}(\mathbf{q},\mathbf{p},\mathbf{t},\mathbf{t}_0)$ are given by:
\bea \label{NewHamReduced}\begin{pmatrix}\text{Ham}^{(\alpha_{t_{\infty,1}})}(\mathbf{q},\mathbf{p},\mathbf{t},\mathbf{t}_0)\\ \vdots \\ (r_\infty-3)\text{Ham}^{(\alpha_{t_{\infty,r_\infty -3}})}(\mathbf{q},\mathbf{p},\mathbf{t},\mathbf{t}_0)\end{pmatrix}&=&\left(M_{\infty}(\mathbf{t})\right)^{-1}\begin{pmatrix}H_{\infty,r_\infty-4}(\mathbf{q},\mathbf{p},\mathbf{t},\mathbf{t}_0)\\ \vdots\\ H_{\infty,0}(\mathbf{q},\mathbf{p},\mathbf{t},\mathbf{t}_0) \end{pmatrix}\cr
\begin{pmatrix}\text{Ham}^{(\alpha_{t_{X_s,1}})}(\mathbf{q},\mathbf{p},\mathbf{t},\mathbf{t}_0)\\ \vdots \\ (r_s-1)\text{Ham}^{(\alpha_{t_{X_s,r_s-1}})}(\mathbf{q},\mathbf{p},\mathbf{t},\mathbf{t}_0)\end{pmatrix}&=&\left(M_{X_s}(\mathbf{t})\right)^{-1}\begin{pmatrix}H_{X_s,r_s}(\mathbf{q},\mathbf{p})\\ \vdots\\  H_{X_s,2}(\mathbf{q},\mathbf{p})\end{pmatrix}\,,\,\,\forall\, s\in \llbracket 1,n\rrbracket\cr
\begin{pmatrix}\text{Ham}^{(\alpha_{X_1})}(\mathbf{q},\mathbf{p},\mathbf{t},\mathbf{t}_0)\\ \vdots \\ \text{Ham}^{(\alpha_{X_n})}(\mathbf{q},\mathbf{p},\mathbf{t},\mathbf{t}_0)\end{pmatrix}&=&\begin{pmatrix}H_{X_1,1}(\mathbf{q},\mathbf{p},\mathbf{t},\mathbf{t}_0)\\ \vdots\\ H_{X_n,1}(\mathbf{q},\mathbf{p},\mathbf{t},\mathbf{t}_0)\end{pmatrix}
\eea
and the matrices $M_{\infty}(\mathbf{t})$ and $\left(M_{X_s}(\mathbf{t})\right)_{1\leq s\leq n}$ are defined by
\beq \label{Minfty} M_\infty(\mathbf{t})=\begin{pmatrix}t_{\infty,r_\infty-1}&0&\dots&& &\dots &0\\
t_{\infty,r_\infty-2}&t_{\infty,r_\infty-1}& 0& && &\vdots\\
t_{\infty,r_\infty-3}&t_{\infty,r_\infty-2}& t_{\infty,r_\infty-1}&\ddots && &\vdots\\
\vdots & \ddots&\ddots &\ddots  && &\vdots\\
\vdots &\ddots&\ddots&\ddots&\ddots&\ddots&\vdots\\
t_{\infty,4} &\ddots &\ddots&&\ddots& t_{\infty,r_\infty-1} &0\\
t_{\infty,3}&t_{\infty,4}& \dots && t_{\infty,r_\infty-3}& t_{\infty,r_\infty-2}& t_{\infty,r_\infty-1}
 \end{pmatrix}\in \mathcal{M}_{r_\infty-3}(\mathbb{C})
\eeq
and 
\beq \label{MXs} M_{X_s}(\mathbf{t})=\begin{pmatrix}t_{X_s,r_s-1}&0&\dots& &\dots &0\\
t_{X_s,r_s-2}&t_{X_s,r_s-1}& 0& & &\vdots\\
\vdots & \ddots&\ddots &\ddots  & &\vdots\\
\vdots &\ddots&\ddots&\ddots&0&\vdots\\
t_{X_s,2}&\ddots &\ddots&\ddots& t_{X_s,r_s-1}&0\\
t_{X_s,1}&t_{X_s,2}& \dots & & t_{X_s,r_s-2}& t_{X_s,r_s-1}
 \end{pmatrix} \,,\,\, \forall \, s\in \llbracket 1,n\rrbracket
\eeq
\end{theorem}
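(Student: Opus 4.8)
The plan is to read off the time evolution of the Darboux coordinates directly from the zero-curvature equation \eqref{LaxCompat}, transported to the oper gauge of Proposition \ref{OperForm}, and then to recognize the result as a Hamiltonian flow. Since \eqref{LaxCompat} is gauge-covariant, it holds in the oper gauge with the companion matrix $L(\lambda)$ and a suitable auxiliary matrix $A_\tau(\lambda)$. First I would exploit the companion structure to collapse $A_\tau$ to a single scalar. Writing $A_\tau=\begin{pmatrix}A_{11}&A_{12}\\A_{21}&A_{22}\end{pmatrix}$, the $(1,1)$ and $(1,2)$ entries of \eqref{LaxCompat} together with the trace condition $A_{22}=-A_{11}$ force
\[
A_{11}=-\tfrac12\big(\partial_\lambda B+B\,L_{2,2}\big),\quad A_{22}=-A_{11},\quad A_{21}=\partial_\lambda A_{11}+B\,L_{2,1},
\]
where $B:=A_{12}$. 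Thus the whole deformation is encoded in the single function $B(\lambda)$ --- equivalently in the deformation $\partial_\tau\psi=A_{11}\psi+B\,\partial_\lambda\psi$ of the scalar ``quantum curve'' $\partial_\lambda^2\psi=L_{2,2}\partial_\lambda\psi+L_{2,1}\psi$ --- and the only surviving equation is the $(2,1)$ entry, a scalar identity for $\partial_\tau L_{2,1}$.

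Next I would determine $B(\lambda)$ for each elementary flow. Isomonodromy forces $A_\tau$ to be meromorphic with poles only in $\mathcal R$ of orders dominated by those of $L$, with prescribed behaviour at $\infty$ coming from the diagonal Birkhoff data \eqref{DiagoCondition}; moreover the flow must stay in the slice normalized at infinity (Section \ref{SectionNormalizationInfinity}), so $\boldsymbol{\alpha}$ carries the trivial-time compensations fixed in Proposition \ref{PropTrivialTimes}. Translating these constraints into $B$ fixes its principal part at each pole and its polynomial part at infinity. This is exactly where the Toeplitz matrices enter: because the gauge to the diagonal form is dressed by the irregular type $D_p(z_p)=\sum_k Q_{p,k}z_p^{-k}$, the contribution of $\partial_{t_{p,k}}$ to $B$ near $p$ is a Cauchy product of the times $(t_{p,j})_j$ --- the same convolution visible in the quadratic coefficients $P^{(2)}$ of \eqref{tdP22}. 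Packaging these convolutions yields precisely the lower-triangular Toeplitz matrices $M_\infty(\mathbf t)$ and $M_{X_s}(\mathbf t)$ of \eqref{Minfty}--\eqref{MXs}, whose inversion produces the factors $(M_\infty)^{-1}$, $(M_{X_s})^{-1}$ in \eqref{NewHamReduced}; the residue-type flows $\partial_{X_s}$ couple directly to $H_{X_s,1}$ with no matrix.

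With $B$ in hand I would extract the coordinate evolution. Because $q_i$ is an apparent singularity, $L_{2,1}$ has a simple pole there with residue $-p_i$ and $L_{2,2}$ a simple pole with residue $+1$; evaluating the $(2,1)$ equation near $\lambda=q_i$ gives $\partial_\tau q_i=B(q_i)$ (up to the fixed sign convention), while the subleading order together with the differentiated spectral-curve relation $\det(p_iI_2-\tilde L(q_i))=0$ gives the companion formula for $\partial_\tau p_i$. Finally I would certify that the flow is Hamiltonian: since the accessory coefficients $(H_{p,k})$ enter $L_{2,1}$ linearly through \eqref{L21} and $(q_i,p_i)$ parametrize the spectral curve, a direct computation of $\partial H_{p,k}/\partial p_i$ and $\partial H_{p,k}/\partial q_i$ reproduces exactly $B(q_i)$ and the $p_i$-evolution for the ``bare'' monomial/pole auxiliary functions. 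Hence each $H_{p,k}$ generates the corresponding bare flow via $\{q_i,p_j\}=\delta_{ij}$, and the change of basis to the genuine isomonodromic times $\tau$ is the Toeplitz inversion $M^{-1}$, giving \eqref{NewHamReduced}.

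The main obstacle is the second step: pinning down $B(\lambda)$ from the isomonodromy and normalization constraints while correctly tracking how the Birkhoff dressing against the irregular type converts each elementary time derivative into the Cauchy-product structure, so that the matrices $M_\infty$ and $M_{X_s}$ emerge with the right entries. The remaining verifications are residue bookkeeping on the spectral curve, but this combinatorial identification --- together with the separate handling of the degenerate cases $r_\infty\in\{1,2\}$, where the extra scalar constraints \eqref{ConditionsAddrinftyequal2}--\eqref{ConditionsAddrinftyequal1} tie the sums $\sum_s H_{X_s,1}$ to $\sum_j p_j$ --- is where the real work lies.
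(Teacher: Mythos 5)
This theorem is not proved in the present paper: it is quoted from \cite{MarchalOrantinAlameddine2022} (Theorem $7.1$ there), and Section \ref{SectionOldResults} only records the ingredients of that proof, namely \eqref{ExpreA11}--\eqref{A12nu} together with Propositions \ref{PropExplicitHpk} and \ref{nus}. Your outline does follow the same strategy as the cited proof (solve the compatibility equation \eqref{CompatLA} in the oper gauge, parametrize $A_{\boldsymbol{\alpha}}$ by its first row, and extract the Toeplitz matrices from the expansion against the irregular times), but it breaks at the very first step. The trace condition $A_{22}=-A_{11}$ does \emph{not} hold in the oper gauge: writing $A_{\boldsymbol{\alpha}}=G\td{A}_{\boldsymbol{\alpha}}G^{-1}+\mathcal{L}_{\boldsymbol{\alpha}}[G]\,G^{-1}$ with $G$ given by \eqref{GaugeTransfo}, one has $\det G=\td{L}_{1,2}$ and $\tr \td{A}_{\boldsymbol{\alpha}}=0$, whence
\[
\tr A_{\boldsymbol{\alpha}}\;=\;\frac{\mathcal{L}_{\boldsymbol{\alpha}}[\td{L}_{1,2}]}{\td{L}_{1,2}}
\;=\;\frac{\mathcal{L}_{\boldsymbol{\alpha}}[\omega]}{\omega}\;-\;\sum_{j=1}^{g}\frac{\mathcal{L}_{\boldsymbol{\alpha}}[q_j]}{\lambda-q_j}\;+\;\sum_{s=1}^n\frac{r_s\,\alpha_{X_s}}{\lambda-X_s}.
\]
The residue of $\tr A_{\boldsymbol{\alpha}}$ at $q_j$ is exactly $-\mathcal{L}_{\boldsymbol{\alpha}}[q_j]$, so imposing $\tr A_{\boldsymbol{\alpha}}=0$, as you do, forces $\partial_\tau q_j=0$ for all $j$ and kills precisely the flow you are trying to compute. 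The correct relations are those of \eqref{A21}, and $[A_{\boldsymbol{\alpha}}]_{1,1}$ is an \emph{independent} datum fixed by \eqref{ExpreA11}: its residue at $q_j$ is $-p_j\mu_j$, and this coupling to the momenta is what drives the $p_i$-evolution. Your derived expression $A_{11}=-\frac12(\partial_\lambda B+B\,L_{2,2})$ carries no such residues, so the residue bookkeeping of your final step cannot close.

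A second concrete problem: by \eqref{ExpreA11} and \eqref{muj}, $B=[A_{\boldsymbol{\alpha}}]_{1,2}$ has simple poles at the apparent singularities with residues $\mu_j$, so the formula $\partial_\tau q_i=B(q_i)$ is ill-defined ($B$ diverges at $q_i$); the true evolution, obtained by matching the double-pole parts of \eqref{CompatLA} at $q_i$, is a combination of $\mu_i$, $p_i$ and the regular parts of $B$ and $L_{2,2}$ at $q_i$. Finally, even setting these two errors aside, what you defer as ``where the real work lies'' is the actual content of the theorem: the determination of the coefficients $\nu^{(\boldsymbol{\alpha})}_{p,k}$ from the singular expansion of $Y(\lambda)[A_{\boldsymbol{\alpha}}]_{1,2}$ (Proposition \ref{nus}), which is where $M_\infty(\mathbf{t})$ and $M_{X_s}(\mathbf{t})$ genuinely arise, the resolution of the linear system \eqref{muj} for the $\mu_j$, and the verification against the explicit expressions of Proposition \ref{PropExplicitHpk} that the flows are generated by the stated linear combinations of the $H_{p,k}$. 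As it stands, your proposal is a roadmap in the spirit of \cite{MarchalOrantinAlameddine2022} whose two pivotal computations are either wrong or missing, not a proof.
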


The previous theorem is supplemented by the explicit expression of the coefficients $\mathbf{H}(\mathbf{q},\mathbf{p},\mathbf{t},\mathbf{t}_0)$.
\begin{proposition}[Explicit expression of the $\mathbf{H}(\mathbf{q},\mathbf{p},\mathbf{t},\mathbf{t}_0)$ (Proposition $5.1$ of \cite{MarchalOrantinAlameddine2022})]\label{PropExplicitHpk} We have 
\footnotesize{\beq 
\begin{pmatrix} (V_\infty)^{t}& (V_{X_1})^t&\dots &(V_{X_n})^{t}\end{pmatrix}\begin{pmatrix}\mathbf{H}_{\infty}\\\mathbf{H}_{X_1}\\ \vdots \\ \mathbf{H}_{X_n} \end{pmatrix}=\begin{pmatrix} p_1^2 + p_1\underset{s=1}{\overset{n}{\sum}} \frac{ r_s}{q_1-X_s}+\td{P}_2(q_1)+ \underset{i\neq 1}{\sum}\frac{p_i-p_1}{q_1-q_i}+ t_{\infty^{(1)},r_\infty-1}q_1^{r_\infty-3}\delta_{r_\infty\geq 3}\\
\vdots\\
\vdots\\
p_g^2 + p_g\underset{s=1}{\overset{n}{\sum}} \frac{r_s}{q_g-X_s}+\td{P}_2(q_g)+ \underset{i\neq g}{\sum}\frac{p_i-p_g}{q_g-q_i}+ t_{\infty^{(1)},r_\infty-1}q_g^{r_\infty-3}\delta_{r_\infty\geq 3}
\end{pmatrix}
\eeq}
\sloppy{\normalsize{with} $\mathbf{H}_{\infty}=(H_{\infty,0},\dots H_{\infty,r_\infty-4})^t$ (null is $r_\infty \leq 3$), and, for all $s\in \llbracket 1,n\rrbracket$, $\mathbf{H}_{X_s}=(H_{X_s,1},\dots, H_{X_s,r_s})^t$. Matrices $V_\infty$ and $\left(V_{X_s}\right)_{1\leq s\leq n}$ are defined by}
\beq V_\infty=\begin{pmatrix}1&1 &\dots &\dots &1\\
q_1& q_2&\dots &\dots& q_{g}\\
\vdots & & & & \vdots\\
\vdots & & & & \vdots\\
q_1^{r_\infty-4}& q_2^{r_\infty-4} &\dots & \dots& q_{g}^{r_\infty-4}\end{pmatrix}\,,\, 
V_{X_s}=\begin{pmatrix}\frac{1}{q_1-X_s}& \dots &\dots& \frac{1}{q_g-X_s}\\
\frac{1}{(q_1-X_s)^2}& \dots &\dots& \frac{1}{(q_g-X_s)^2}\\
\vdots & & & \vdots\\
\vdots & & & \vdots\\
\frac{1}{(q_1-X_s)^{r_s}}& \dots &\dots& \frac{1}{(q_g-X_s)^{r_s}}\\
\end{pmatrix}
\eeq
\normalsize{Moreover}, for $r_\infty=2$ (resp. $r_\infty=1$) the previous linear system has to be supplemented with the additional relations \eqref{ConditionsAddrinftyequal2} (resp. \eqref{ConditionsAddrinftyequal1}). Finally for $r_\infty=1$ we have:
\bea\label{g0Specialrinftyequal1} g_0&=&\frac{1}{2t_{\infty^{(1)},0}}\Big[ -\sum_{s=1}^n(2X_sP_{X_s,2}^{(2)}\delta_{r_s=1}+P_{X_s,3}^{(2)}\delta_{r_s=2})\cr
&& + \sum_{s=1}^n(X_s^2H_{X_s,1}+2X_sH_{X_s,2}\delta_{r_s\geq 2}+H_{X_s,3}\delta_{r_s\geq 3})- \sum_{j=1}^{g}p_jq_j^2\cr
&&+t_{\infty^{(1)},0}(2t_{\infty^{(1)},0}-1)\left(\sum_{j=1}^g q_j-\sum_{s=1}^n r_s X_s\right)
\Big]
\eea 
\end{proposition}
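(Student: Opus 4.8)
The plan is to exploit the fact that the single nontrivial entry $L_{2,1}(\lambda)$ of the oper Lax matrix admits \emph{two} descriptions, and to match them at the apparent singularities $q_1,\dots,q_g$. The first description is Expression (A), the explicit rational form \eqref{L21} carrying the unknowns $\mathbf{H}$. The second, Expression (B), I would obtain directly from the gauge matrix $G(\lambda)$ of \eqref{GaugeTransfo}: writing $a:=\td{L}_{1,1}$, $b:=\td{L}_{1,2}$, $c:=\td{L}_{2,1}$, a one-line computation of $L=G\td{L}G^{-1}+(\partial_\lambda G)G^{-1}$ gives $L_{2,2}=\partial_\lambda b/b$ (which reproduces $L_{2,2}$ of Proposition \ref{OperForm}) and, using $a^2+bc=-\det\td{L}$ for $\td{L}\in\mathfrak{sl}_2(\mathbb{C})$,
\[
L_{2,1}(\lambda)=-\det\td{L}(\lambda)+\partial_\lambda\td{L}_{1,1}(\lambda)-\td{L}_{1,1}(\lambda)\,\frac{\partial_\lambda\td{L}_{1,2}(\lambda)}{\td{L}_{1,2}(\lambda)}.
\]

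Next I would compare the two expressions at each apparent singularity. Since $q_i$ is a simple zero of $\td{L}_{1,2}$ and a regular point of $\td{L}$, both (A) and (B) have a simple pole at $q_i$; matching residues identifies $p_i=\td{L}_{1,1}(q_i)$, whence $\det\td{L}(q_i)=-\td{L}_{1,1}(q_i)^2=-p_i^2$ because $\td{L}_{1,2}(q_i)=0$ (equivalently, $(q_i,p_i)$ lies on the spectral curve). I would then remove the pole and evaluate the finite part $\lim_{\lambda\to q_i}\big(L_{2,1}(\lambda)+p_i/(\lambda-q_i)\big)$ from both sides. From (B), writing $\td{L}_{1,2}=(\lambda-q_i)h_i$ with $h_i=\omega\prod_{l\neq i}(\lambda-q_l)/\prod_s(\lambda-X_s)^{r_s}$, the $\partial_\lambda\td{L}_{1,1}(q_i)$ contribution cancels and the logarithmic derivative supplies $\partial_\lambda h_i/h_i|_{q_i}=\sum_{l\neq i}(q_i-q_l)^{-1}-\sum_s r_s(q_i-X_s)^{-1}$, so the finite part equals $p_i^2+p_i\sum_s r_s/(q_i-X_s)-p_i\sum_{l\neq i}(q_i-q_l)^{-1}$. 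From (A) the finite part is $-\td{P}_2(q_i)+\sum_j H_{\infty,j}q_i^j+\sum_{s,k}H_{X_s,k}(q_i-X_s)^{-k}-\sum_{l\neq i}p_l/(q_i-q_l)-t_{\infty,r_\infty-1}q_i^{r_\infty-3}\delta_{r_\infty\geq3}$. Equating and isolating the $H$-terms yields precisely the $i$-th component of the claimed right-hand side, the two sums over $l\neq i$ coalescing into $\sum_{l\neq i}(p_l-p_i)/(q_i-q_l)$; note that $\td{P}_2(q_i)$ comes entirely from (A) while $-\det\td{L}(q_i)$ is replaced by $p_i^2$ from the spectral-curve condition, so \eqref{tdP22} itself is not needed here.

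Collecting the $g$ scalar identities (one per $q_i$) into a single system, and reading the coefficients of $H_{\infty,j}$ and $H_{X_s,k}$ as the rows of the generalized Vandermonde matrices $V_\infty$ and $V_{X_s}$, gives the stated matrix equation $\big((V_\infty)^t\,(V_{X_1})^t\cdots(V_{X_n})^t\big)\mathbf{H}=\mathrm{RHS}$. In the cases $r_\infty\in\{1,2\}$ the infinity block $\mathbf{H}_\infty$ is empty and $g<\sum_s r_s$, so the $q_i$-evaluations supply one (resp. two) fewer equations than unknowns; I would close the system by expanding $L_{2,1}$ at $\lambda=\infty$ in Expression (B) and matching against the normalization of Section \ref{SectionNormalizationInfinity} (cf. Remark \ref{RemarkCoeff}), which reproduces the supplementary relations \eqref{ConditionsAddrinftyequal2} and \eqref{ConditionsAddrinftyequal1}, and, pushing the same expansion to the next subleading order, the formula \eqref{g0Specialrinftyequal1} for $g_0$.

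The conceptual core (the double evaluation of $L_{2,1}$ at the apparent singularities) is clean, so I expect the main obstacle to be the bookkeeping at infinity in the low-$r_\infty$ cases: tracking the leading, subleading and sub-subleading coefficients of both $-\det\td{L}$ and the gauge term $\partial_\lambda\td{L}_{1,1}-\td{L}_{1,1}\,\partial_\lambda\ln\td{L}_{1,2}$, and verifying that the $\mathfrak{sl}_2$ trace and infinity-normalization constraints assemble into exactly \eqref{ConditionsAddrinftyequal2}--\eqref{ConditionsAddrinftyequal1} and the somewhat intricate expression \eqref{g0Specialrinftyequal1}.
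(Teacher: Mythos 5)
Your proof is correct, and since the paper does not actually prove this proposition (it is recalled verbatim as Proposition $5.1$ of \cite{MarchalOrantinAlameddine2022}), the only comparison available is with the identities the present paper does establish elsewhere. Your Expression (B) is exactly the paper's \eqref{DetL}/\eqref{Equivalence}, $L_{2,1}=-\det\td{L}+\td{L}_{1,2}\partial_\lambda\big(\td{L}_{1,1}/\td{L}_{1,2}\big)$; where Section \ref{SectionConnectionIso} expands this identity at the poles $X_s$ and $\infty$ to relate $\mathbf{H}$ to the spectral invariants $\mathbf{I}$, you expand it at the apparent singularities $q_1,\dots,q_g$, which is what produces the Vandermonde-type system — so the mechanism is the same, only the evaluation points differ. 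Your residue identification $p_i=\td{L}_{1,1}(q_i)$ is consistent with the paper's definitions (from \eqref{GaugeTransfo} one has $\td{L}_{1,1}(q_i)=-Q(q_i)/\prod_s(q_i-X_s)^{r_s}=p_i$), and your finite-part identity at $q_i$ is equivalent to the classical no-logarithm (apparent-singularity) condition for the oper $\partial_\lambda^2\psi=L_{2,2}\partial_\lambda\psi+L_{2,1}\psi$, namely that the finite part of $L_{2,1}$ at $q_i$ equals $p_i^2-p_i\times$(finite part of $L_{2,2}$ at $q_i$), which is presumably how the original paper obtained it; the two sums over $l\neq i$ indeed coalesce into $\sum_{l\neq i}(p_l-p_i)/(q_i-q_l)$ as you state. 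Your dimension count is also right: for $r_\infty\geq 3$ the system is square of size $g=r_\infty-3+\sum_s r_s$, while for $r_\infty\in\{1,2\}$ it is short by one, respectively two, equations, and expanding the same identity at $\lambda=\infty$ (using Remark \ref{RemarkCoeff} for the expansion of $\det\td{L}$ and the normalization of Section \ref{SectionNormalizationInfinity} for the log-derivative term, which contributes $-t_{\infty,r_\infty-1}\lambda^{-1}+O(\lambda^{-2})$) does reproduce \eqref{ConditionsAddrinftyequal2} exactly — e.g.\ for $r_\infty=2$ one finds $L_{2,1}=t_{\infty,1}^2+(2t_{\infty,1}t_{\infty,0}-t_{\infty,1})\lambda^{-1}+O(\lambda^{-2})$ — and, pushed to orders $\lambda^{-2}$ and $\lambda^{-3}$ for $r_\infty=1$, yields \eqref{ConditionsAddrinftyequal1} and \eqref{g0Specialrinftyequal1}. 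The one caution is the bookkeeping you yourself flag: the $(t_{X_s,0})^2\delta_{r_s=1}$ and $P^{(2)}$ terms in \eqref{ConditionsAddrinftyequal1} and \eqref{g0Specialrinftyequal1} enter through the sign conventions of \eqref{tdP22} (e.g.\ $P^{(2)}_{X_s,2}=-(t_{X_s,0})^2$ when $r_s=1$), so the low-order matching must be carried out with those conventions to land on the stated signs.
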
 

Theorem \ref{TheoHamiltonian} and Proposition \ref{PropExplicitHpk} provide the explicit expression of the Hamiltonian system satisfied by $(\mathbf{q},\mathbf{p})$. This Hamiltonian system is equivalent to the fact that the differential system $\partial_\lambda \Psi(\lambda)=L(\lambda)\Psi(\lambda)$ is supplemented by the general auxiliary system
\beq \label{GeneralAux} \mathcal{L}_{\boldsymbol{\alpha}}[\Psi(\lambda)]=A_{\boldsymbol{\alpha}}(\lambda) \Psi(\lambda)\eeq
where $A_{\boldsymbol{\alpha}}(\lambda)$ is rational in $\lambda$ with poles in $\mathcal{R}$ dominated by those of $L$ and simple poles at the apparent singularities $(q_i)_{1\leq i\leq g}$.
The compatibility of the Lax system \eqref{GeneralLSys} and \eqref{GeneralAux} is equivalent to the isomonodromic compatibility equation:
\beq\label{CompatLA} \mathcal{L}_{\boldsymbol{\alpha}}[L(\lambda)]=\partial_{\lambda}A_{\boldsymbol{\alpha}}(\lambda) -\left[L(\lambda),A_{\boldsymbol{\alpha}}(\lambda)\right]\eeq

In \cite{MarchalOrantinAlameddine2022}, it is proved that the first line of $A_{\boldsymbol{\alpha}}(\lambda)$ is determined by
\bea \label{ExpreA11}\left[A_{\boldsymbol{\alpha}}(\lambda)\right]_{1,1}&=&c_{\infty,0} -\sum_{j=1}^{g} \frac{p_j\mu_{j}^{(\boldsymbol{\alpha})}}{\lambda-q_j}\cr
 \left[A_{\boldsymbol{\alpha}}(\lambda)\right]_{1,2}&=&\nu_{\infty,-1}^{(\boldsymbol{\alpha})}\lambda \delta_{r_\infty=1}+\nu_{\infty,0}^{(\boldsymbol{\alpha})} \delta_{r_\infty\leq 2}+ \sum_{j=1}^{g} \frac{\mu_{j}^{(\boldsymbol{\alpha})}}{\lambda-q_j}
\eea
with\footnote{The derivation of $c_{\infty,0}$ was not done in \cite{MarchalOrantinAlameddine2022} because it is irrelevant in the computation of the Hamiltonian system but only related to the choice of normalization of $\td{L}$. For completeness, the formula is proved in Appendix \ref{Appendixcinfty0}.} 
\beq \label{cinfty0}c_{\infty,0}=\frac{1}{2\omega}\mathcal{L}_{\boldsymbol{\alpha}}[\omega] +\frac{1}{2}\nu^{(\boldsymbol{\alpha})}_{\infty,-1}\delta_{r_\infty=1}\eeq
while the second line of $A_{\boldsymbol{\alpha}}(\lambda)$ is immediately determined by the first line of the isomonodromic compatibility equation \eqref{CompatLA} (which is always trivial in the oper gauge):
\bea\label{A21} \left[A_{\boldsymbol{\alpha}}(\lambda)\right]_{2,1}&=& \partial_{\lambda} \left[A_{\boldsymbol{\alpha}}(\lambda)\right]_{1,1}+\left[A_{\boldsymbol{\alpha}}(\lambda)\right]_{1,2}L_{2,1}(\lambda),\cr
\left[A_{\boldsymbol{\alpha}}(\lambda)\right]_{2,2}&=& \partial_{\lambda} \left[A_{\boldsymbol{\alpha}}(\lambda)\right]_{1,2}+\left[A_{\boldsymbol{\alpha}}(\lambda)\right]_{1,1}+\left[A_{\boldsymbol{\alpha}}(\lambda)\right]_{1,2}L_{2,2}(\lambda),
\eea
Coefficients $\left(\mu_{j}^{(\boldsymbol{\alpha})}\right)_{1\leq j\leq g}$ are determined by
\small{\beq \label{muj}\begin{pmatrix} V_\infty\\  V_1 \\ \vdots \\\vdots \\V_n\end{pmatrix}\begin{pmatrix} \mu^{(\boldsymbol{\alpha})}_1\\ \vdots\\\vdots\\ \mu^{(\boldsymbol{\alpha})}_g\end{pmatrix}= \begin{pmatrix} \boldsymbol{\nu}^{(\boldsymbol{\alpha})}_\infty\\ -\boldsymbol{\nu}^{(\boldsymbol{\alpha})}_{X_1} \\ \vdots\\  -\boldsymbol{\nu}^{(\boldsymbol{\alpha})}_{X_n}\end{pmatrix}\,\text{ with }\,  \boldsymbol{\nu}^{(\boldsymbol{\alpha})}_\infty=\begin{pmatrix}\nu^{(\boldsymbol{\alpha})}_{\infty,1}\\ \nu^{(\boldsymbol{\alpha})}_{\infty,2}\\\vdots \\ \nu^{(\boldsymbol{\alpha})}_{\infty,r_\infty-3}\end{pmatrix} \,\,,\,\, \boldsymbol{\nu}^{(\boldsymbol{\alpha})}_{X_s}=\begin{pmatrix} \nu^{(\boldsymbol{\alpha})}_{{X_s},0} -X_s\nu^{(\boldsymbol{\alpha})}_{{\infty},-1}\delta_{r_\infty=1}-\nu^{(\boldsymbol{\alpha})}_{{\infty},0}\delta_{r_\infty\leq 2}  \\ \nu^{(\boldsymbol{\alpha})}_{{X_s},1}-\nu^{(\boldsymbol{\alpha})}_{{\infty},-1}\delta_{r_\infty=1}\\  \vdots \\ \nu^{(\boldsymbol{\alpha})}_{{X_s},r_s-1}\end{pmatrix} \eeq}
\normalsize{Coefficients} $\left( \nu^{(\boldsymbol{\alpha})}_{p,k}\right)_{p,k}$ correspond to the expansion of $ \left[A_{\boldsymbol{\alpha}}(\lambda)\right]_{1,2}$ at each pole. Under conditions of Proposition \ref{PropTrivialTimes}, we have:
\bea \label{A12nu} \left[A_{\boldsymbol{\alpha}}(\lambda)\right]_{1,2}&\overset{\lambda\to \infty}{=}&\nu^{(\boldsymbol{\alpha})}_{{\infty},-1}\delta_{r_\infty=1}\lambda+\nu^{(\boldsymbol{\alpha})}_{{\infty},0}\delta_{r_\infty\leq 2}+ \sum_{i=1}^{r_\infty-3} \frac{\nu^{(\boldsymbol{\alpha})}_{\infty,i}}{\lambda^i} +O\left(\lambda^{-(r_\infty-2)}\right),\cr
\left[A_{\boldsymbol{\alpha}}(\lambda)\right]_{1,2}&\overset{\lambda\to X_s}{=}&\sum_{i=0}^{r_s-1} \nu^{(\boldsymbol{\alpha})}_{X_s,i}(\lambda-X_s)^i +O\left((\lambda-X_s)^{r_s}\right)
\eea

A crucial technical point is given by the following proposition. 

\begin{proposition}[Expression of $\nu_{p,k}^{(\boldsymbol{\alpha})}$ (Proposition $4.1$ of \cite{MarchalOrantinAlameddine2022})]\label{nus}Let us define
\bea \label{DefY} Y(\lambda)&:=&-\sum_{k=0}^{r_\infty-1}t_{\infty,k}\lambda^{k-1}+\sum_{s=1}^n \sum_{k=0}^{r_s-1} \frac{t_{X_s,k}}{(\lambda-X_s)^{k+1}}\cr
a(\lambda)&:=&-\sum_{k=1}^{r_\infty-3}\frac{\alpha_{\infty,k}}{k}\lambda^{k} -\sum_{s=1}^n \sum_{k=1}^{r_s-1} \frac{\alpha_{X_s,k}}{(\lambda-X_s)^k}-\sum_{s=1}^n\sum_{k=0}^{r_s-1}\frac{\alpha_{X_s}t_{X_s,k}}{(\lambda-X_s)^{k+1}}
\eea
then, coefficients $\left(\nu^{(\boldsymbol{\alpha})}_{p,k}\right)_{p,k}$ are determined by 
\beq \label{YA12} Y(\lambda)\left[A_{\boldsymbol{\alpha}}(\lambda)\right]_{1,2}\overset{\lambda\to \infty}{=}a(\lambda)+ O\left(1\right) \,,\,Y(\lambda)\left[A_{\boldsymbol{\alpha}}(\lambda)\right]_{1,2}\overset{\lambda\to X_s}{=}a(\lambda) +O\left(1\right) \eeq
which is equivalent to (see Appendix $E$ of \cite{MarchalOrantinAlameddine2022} for details) 
\beq \nu^{(\boldsymbol{\alpha})}_{X_s,0}=-\alpha_{X_s}\,,\,M_{X_s}\begin{pmatrix}  \nu^{(\boldsymbol{\alpha})}_{X_s,1}\\ \vdots \\ \nu^{(\boldsymbol{\alpha})}_{X_s,r_s-1}\end{pmatrix}=-\begin{pmatrix} 
\frac{\alpha_{X_s,r_s-1}}{r_s-1}\\
\vdots\\
\frac{\alpha_{X_s,1}}{1}
\end{pmatrix}\,,\,M_\infty\begin{pmatrix} \nu^{(\boldsymbol{\alpha})}_{\infty,1}\\ \vdots \\ \nu^{(\boldsymbol{\alpha})}_{\infty,r_\infty-3}\end{pmatrix}=\begin{pmatrix} \frac{\alpha_{\infty,r_\infty-3}}{r_\infty-3} \\ \vdots \\ \frac{\alpha_{\infty,1}}{1}\end{pmatrix}
\eeq
\end{proposition}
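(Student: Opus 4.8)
The plan is to obtain the characterization \eqref{YA12} from the first row of the auxiliary system \eqref{GeneralAux}, and then to turn it into the announced triangular systems by a local Laurent expansion at each pole. Since the gauge matrix $G$ of \eqref{GaugeTransfo} has first row $(1,0)$, the oper wave function satisfies $\Psi_{1,1}=\td\Psi_{1,1}$, so the distinguished scalar solution $\psi:=\Psi_{1,1}$ inherits the Birkhoff factorization \eqref{Birkhoff}: near any $p\in\mathcal{R}$ one has $\psi=[\Psi^{(\text{reg})}_p]_{1,1}\exp(-\theta_p)$ with $\theta_p:=\sum_{k=1}^{r_p-1}\frac{t_{p,k}}{k\,z_p^{\,k}}-t_{p,0}\ln z_p$ and $[\Psi^{(\text{reg})}_p]_{1,1}$ holomorphic and non-vanishing at $p$. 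Reading the $(1,1)$-entry of \eqref{GeneralAux} gives $\mathcal{L}_{\boldsymbol{\alpha}}[\psi]=[A_{\boldsymbol{\alpha}}]_{1,1}\psi+[A_{\boldsymbol{\alpha}}]_{1,2}\partial_\lambda\psi$, hence, after dividing by $\psi$, $\mathcal{L}_{\boldsymbol{\alpha}}[\ln\psi]=[A_{\boldsymbol{\alpha}}]_{1,1}+[A_{\boldsymbol{\alpha}}]_{1,2}\,w$ where $w:=\partial_\lambda\ln\psi$. From the factorization, $w=-\partial_\lambda\theta_p+O(1)$ and the singular part of $-\partial_\lambda\theta_p$ is exactly the $p$-part of $Y$; moreover by \eqref{ExpreA11} the entry $[A_{\boldsymbol{\alpha}}]_{1,1}$ only has poles at the apparent singularities $q_j$ and is therefore $O(1)$ at every $p\in\mathcal{R}$.

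Substituting $\ln\psi=-\theta_p+O(1)$ on the left and $w=Y+O(1)$ on the right, and using the expansion \eqref{A12nu} of $[A_{\boldsymbol{\alpha}}]_{1,2}$ to control it at $p$, I would compare principal parts to get $Y(\lambda)[A_{\boldsymbol{\alpha}}]_{1,2}=-\mathcal{L}_{\boldsymbol{\alpha}}[\theta_p]+O(1)$ near each $p$. A direct evaluation of $\mathcal{L}_{\boldsymbol{\alpha}}[\theta_p]$ — the derivatives $\partial_{t_{p,k}}$ producing the $\tfrac{\alpha_{\cdot,k}}{k}$-terms and the derivative $\partial_{X_s}$, which raises the pole order by one, producing the $\alpha_{X_s}t_{X_s,k}(\lambda-X_s)^{-(k+1)}$-terms — matches precisely the principal part of $a(\lambda)$, establishing \eqref{YA12}. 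The delicate input is that $\mathcal{L}_{\boldsymbol{\alpha}}\big[\ln[\Psi^{(\text{reg})}_p]_{1,1}\big]=O(1)$, i.e. that the isomonodromic flow deforms the regular factor $\Psi^{(\text{reg})}_p$ regularly at $p$; this is where the defining property of $A_{\boldsymbol{\alpha}}$ (poles dominated by those of $\td L$) enters, and it is in my view the main obstacle of the whole argument, since it is what guarantees that neither $\Psi^{(\text{reg})}_p$ nor the gauge change \eqref{GaugeTransfo} contaminates the orders being matched.

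For the equivalence I would insert the local data into \eqref{YA12}. Near $X_s$, with $u=\lambda-X_s$, the principal part of $Y$ is $\sum_{k=0}^{r_s-1}t_{X_s,k}\,u^{-(k+1)}$ while $[A_{\boldsymbol{\alpha}}]_{1,2}=\sum_{i=0}^{r_s-1}\nu^{(\boldsymbol{\alpha})}_{X_s,i}\,u^{i}+O(u^{r_s})$; matching the coefficient of $u^{-m}$ for $m=1,\dots,r_s$ in the Cauchy product produces a lower-triangular Toeplitz system whose matrix is exactly $M_{X_s}$. The most singular order $u^{-r_s}$ gives $t_{X_s,r_s-1}\nu^{(\boldsymbol{\alpha})}_{X_s,0}=-\alpha_{X_s}t_{X_s,r_s-1}$, hence $\nu^{(\boldsymbol{\alpha})}_{X_s,0}=-\alpha_{X_s}$; feeding this value back, the $i=0$ contribution cancels the entire $\alpha_{X_s}$-family of $a$ at every lower order, leaving precisely $M_{X_s}(\nu^{(\boldsymbol{\alpha})}_{X_s,1},\dots,\nu^{(\boldsymbol{\alpha})}_{X_s,r_s-1})^{t}=-(\tfrac{\alpha_{X_s,r_s-1}}{r_s-1},\dots,\tfrac{\alpha_{X_s,1}}{1})^{t}$.

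The same expansion at $\infty$, now matching the strictly positive powers of $\lambda$ (the lower orders being absorbed into the $O(1)$, which is why the subleading parameters $\nu^{(\boldsymbol{\alpha})}_{\infty,-1},\nu^{(\boldsymbol{\alpha})}_{\infty,0}$ remain unconstrained by \eqref{YA12} and are instead fixed through \eqref{muj}), yields the triangular system governed by $M_\infty$. Finally, on $\hat F_{\mathcal{R},\mathbf{r}}$ together with the normalization of Proposition \ref{PropTrivialTimes}, the leading irregular times $t_{p,r_p-1}$ are non-zero, so each $M_p$ is an invertible lower-triangular Toeplitz matrix; the coefficients $\nu^{(\boldsymbol{\alpha})}_{p,k}$ are therefore uniquely determined by back-substitution, which is the claim.
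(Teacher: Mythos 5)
The paper you are working from does not actually prove Proposition \ref{nus}: it is imported from \cite{MarchalOrantinAlameddine2022} (Proposition $4.1$ there, with the equivalence relegated to Appendix $E$ of that reference), so there is no in-paper proof to compare against line by line. Judged on its own, your argument is correct and reconstructs both halves of the statement in what is clearly the intended way: the characterization \eqref{YA12} follows from the $(1,1)$ entry of \eqref{GeneralAux} combined with the Birkhoff factorization \eqref{Birkhoff} of $\psi=\Psi_{1,1}=\td{\Psi}_{1,1}$ (using that the oper structure gives $\Psi_{2,1}=\partial_\lambda\psi$ and that \eqref{ExpreA11} keeps $[A_{\boldsymbol{\alpha}}]_{1,1}$ regular on $\mathcal{R}$), and the Toeplitz systems then drop out of the Laurent expansions exactly as you describe, the most singular order giving $\nu^{(\boldsymbol{\alpha})}_{X_s,0}=-\alpha_{X_s}$ and the $i=0$ terms cancelling the $\alpha_{X_s}t_{X_s,k}$ family; your observation that $\nu^{(\boldsymbol{\alpha})}_{\infty,-1},\nu^{(\boldsymbol{\alpha})}_{\infty,0}$ are untouched by \eqref{YA12} and enter only through \eqref{muj} is also accurate.

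Two points deserve attention. First, the input you flag as delicate, $\mathcal{L}_{\boldsymbol{\alpha}}\big[\ln[\Psi_p^{(\text{reg})}]_{1,1}\big]=O(1)$, does not follow from pole-domination of $A_{\boldsymbol{\alpha}}$ alone, as your wording suggests: a priori the global solution could differ from the local Birkhoff solution by a time-dependent right factor $C_p(\mathbf{t})$, in which case $\mathcal{L}_{\boldsymbol{\alpha}}[\Psi]\Psi^{-1}$ would contain a term conjugated by $e^{\pm\Theta_p}$ that is not even rational. What kills that term is the constancy of the connection and Stokes data, i.e. the isomonodromy hypothesis itself (together with the genericity that $[\Psi_p^{(\text{reg})}]_{1,1}$ does not vanish at $p$, so that $\ln\psi$ is usable locally); the pole structure of $A_{\boldsymbol{\alpha}}$ is then a consequence, not the cause. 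Second, your computation of $-\mathcal{L}_{\boldsymbol{\alpha}}[\theta_{X_s}]$ produces $-\sum_{k=1}^{r_s-1}\frac{\alpha_{X_s,k}}{k}(\lambda-X_s)^{-k}-\alpha_{X_s}\sum_{k=0}^{r_s-1}t_{X_s,k}(\lambda-X_s)^{-k-1}$, i.e. a version of $a(\lambda)$ carrying $1/k$ factors also at the finite poles. That version is the one consistent with the stated conclusion $M_{X_s}\big(\nu^{(\boldsymbol{\alpha})}_{X_s,1},\dots,\nu^{(\boldsymbol{\alpha})}_{X_s,r_s-1}\big)^t=-\big(\tfrac{\alpha_{X_s,r_s-1}}{r_s-1},\dots,\tfrac{\alpha_{X_s,1}}{1}\big)^t$, whereas the $a(\lambda)$ printed in \eqref{DefY}, which lacks the $1/k$ at $X_s$, would instead yield the right-hand side $-\big(\alpha_{X_s,r_s-1},\dots,\alpha_{X_s,1}\big)^t$. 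So your derivation not only confirms the proposition as used elsewhere in the paper, it also pinpoints an internal typo in the printed definition of $a(\lambda)$.
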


Let us finally make some comments about the advantages and drawbacks of the previous formulas.
\begin{itemize}\item The oper gauge and the associated Darboux coordinates $(\mathbf{q},\mathbf{p})$ allow the explicit resolution of the compatibility equations \eqref{CompatLA} and the explicit expression of the Hamiltonian system via Theorem \ref{TheoHamiltonian} and Proposition \ref{PropExplicitHpk}.
\item The relation between the Hamiltonians and the spectral invariants $\mathbf{H}$ of $L(\lambda)$ is very simple (eq. \eqref{NewHamReduced}) and recovers results of \cite{MartaPaper2022} obtained by confluences of simple poles.
\item The Hamiltonian system has a complicated dependence in the Darboux coordinates $(\mathbf{q},\mathbf{p})$. In particular, it has singularities at $q_i=q_j$ for $i\neq j$. 
\item One may obtain the expression of the geometric Lax matrices $\td{L}(\lambda)$ and $\td{A}(\lambda)$ in terms of $(\mathbf{q},\mathbf{p})$ using the explicit gauge transformation \eqref{GaugeTransfo}:
\bea \td{L}(\lambda)&=& G(\lambda)^{-1} L(\lambda) G(\lambda)-G(\lambda)^{-1}\partial_{\lambda}G(\lambda)\cr
\td{A}_{\boldsymbol{\alpha}}(\lambda)&=& G(\lambda)^{-1} A_{\boldsymbol{\alpha}}(\lambda) G(\lambda)-G(\lambda)^{-1}\mathcal{L}_{\boldsymbol{\alpha}}[G(\lambda)]
\eea
However, the expressions are particularly complicated and it is not obvious in the formulas that $\td{L}(\lambda)$ and $\td{A}_{\boldsymbol{\alpha}}(\lambda)$ do not have poles at $\lambda\in \{q_1,\dots,q_g\}$. 
\end{itemize}

The conclusion of this section is that the Darboux coordinates $(\mathbf{q},\mathbf{p})$ are well-adapted to the oper gauge $(L(\lambda),A_{\boldsymbol{\alpha}}(\lambda))$. The Hamiltonian in these coordinates has also a natural interpretation in terms of interacting particles. However, regarding the initial geometric Lax matrix $\td{L}(\lambda)$ and the geometric setup of Section \ref{SectionMero}, coordinates $(\mathbf{q},\mathbf{p})$ are cumbersome. Thus, the first goal of the present paper is to define some more appropriate Darboux coordinates $(\mathbf{Q},\mathbf{P})$ to obtain explicit formulas for the Lax matrix $\td{L}(\lambda)$ and explicit formulas for the Hamiltonians associated to these Darboux coordinates. 

\section{Geometric Darboux coordinates and associated Hamiltonian system}
\subsection{Geometric Darboux coordinates}
As explained above, we shall introduce a new set of Darboux coordinates $(\mathbf{Q},\mathbf{P})$ that is more adapted to the Lax matrix $\td{L}(\lambda)$. We shall name these coordinates ``Geometric Darboux coordinates'' to stress that they are naturally adapted to the initial gauge and thus to the geometric construction of meromorphic connections on $\mathfrak{gl}_2(\mathbb{C})$.

\begin{definition}[Geometric Darboux coordinates]\label{DefGeometricCoordinates}
We define the ``geometric Darboux coordinates'' $(\mathbf{Q},\mathbf{P})=\left(Q_{p,k},P_{p,k}\right)_{p\in \mathcal{R},k\in S_p}$ by
\bea\label{DefNewcoor} \frac{\omega\underset{j=1}{\overset{g}{\prod}}(\lambda-q_j)}{\underset{s=1}{\overset{n}{\prod}}(\lambda-X_s)^{r_s}}&=&\sum_{s=1}^n\sum_{k=1}^{r_s} \frac{Q_{X_s,k}}{(\lambda-X_s)^k}+\sum_{k=0}^{r_\infty-4} Q_{\infty,k}\lambda^k +\omega\,\delta_{r_\infty\geq 3}\lambda^{r_\infty-3}\cr
p_i&=&\sum_{k=0}^{r_\infty-4} P_{\infty,k}\frac{\partial Q_{\infty,k}(q_1,\dots,q_g)}{\partial q_i}+\sum_{s=1}^n\sum_{k=1}^{r_s} P_{X_s,k}\frac{\partial Q_{X_s,k}(q_1,\dots,q_g)}{\partial q_i} \,\,,\,\, \forall \, i\in \llbracket 1,g\rrbracket\cr
&&
\eea
with the additional relations
\begin{itemize}
\item For $r_\infty=2$: 
\beq \label{AddConstrains} \sum_{s=1}^n Q_{X_s,1}=\omega \,\,\text{ and }\,\, \sum_{m=1}^{r_s}P_{X_s,m}Q_{X_s,m}=0\eeq
\item For $r_\infty=1$:
\bea \label{AddConstrains2} \sum_{s=1}^n Q_{X_s,1}=0 \,&\text{ and }&\, \sum_{s=1}^n Q_{X_s,2}+\sum_{s=1}^{n} X_sQ_{X_s,1}=\omega \cr
 \sum_{s=1}^n\sum_{m=1}^{r_s}P_{X_s,m}Q_{X_s,m}=0\,&\text{ and }&\, \underset{s=1}{\overset{n}{\sum}}\underset{m=1}{\overset{r_s-1}{\sum}}P_{X_s,m}Q_{X_s,m+1} +\underset{s=1}{\overset{n}{\sum}}\underset{m=1}{\overset{r_s}{\sum}} X_sP_{X_s,m}Q_{X_s,m}=0\cr
&&
\eea
\end{itemize}
\end{definition}

Conditions \eqref{AddConstrains} and \eqref{AddConstrains2} are derived for completeness in Lemma \ref{LemmaQExpression} of Appendix \ref{AppendixProofGeoLaxMatrices}. Lemma $6.3$ of \cite{MarchalAlameddineP1Hierarchy2023} proves that \textbf{the change of coordinates $(\mathbf{q},\mathbf{p})\leftrightarrow (\mathbf{Q},\mathbf{P})$ is a time-independent symplectic change of coordinates}. Consequently, one may obtain the Hamiltonian in terms of $(\mathbf{Q},\mathbf{P})$ using Theorem \ref{TheoHamiltonian} and replacing the coordinates $(\mathbf{q},\mathbf{p})$ in Proposition \ref{PropExplicitHpk} in terms of $(\mathbf{Q},\mathbf{P})$. However, this approach is very cumbersome in practice and we shall derive a more convenient formula to obtain the Hamiltonians relatively to $(\mathbf{Q},\mathbf{P})$ in Theorem \ref{TheoHamNew}. 

\begin{remark} \normalsize{The} change of coordinates $\mathbf{q} \leftrightarrow \mathbf{Q}$ is algebraically natural. It simply corresponds to writing a rational function either in its factorized form or in its polar form. It turns out that the factorized form is more convenient in the oper gauge while the polar form is more adapted to the geometric gauge. The remaining part of the change of coordinates giving $\mathbf{p}$ in terms of $(\mathbf{Q},\mathbf{P})$ follows naturally to preserve the symplectic structure.
\end{remark}

\subsection{Expression of the geometric Lax matrices}

Using the new coordinates of Definition \ref{DefGeometricCoordinates}, we may write the expression of the Hamiltonian and of the geometric Lax matrices $\td{L}$ and $\td{A}_{\boldsymbol{\alpha}}$. In order to obtain formulas, we shall introduce the following notation:
\begin{definition}\label{NotationSingularPart}Let $a\in \mathbb{C}$. For any function $f(\lambda)$ admitting a Laurent series at $\lambda\to a$ we shall denote $\left[f(\lambda)\right]_{a,-}$ the singular part at $\lambda\to a$:
\beq f(\lambda)=\sum_{k=-r}^{\infty} F_k (\lambda-a)^{k} \,\, \Rightarrow\,\, \left[f(\lambda)\right]_{a,-}=\sum_{k=1}^{r} F_{-k} (\lambda-a)^{-k} \eeq
For any function $f(\lambda)$ admitting a Laurent series at $\lambda\to\infty$, we shall denote $\left[f(\lambda)\right]_{\infty,+}$ the polynomial part at infinity (including the $O(1)$ term):
\beq f(\lambda)=\sum_{k=-r}^{\infty} F_k\lambda^{-k} \,\,\Rightarrow\,\, \left[f(\lambda)\right]_{\infty,+}=\sum_{k=0}^{r} F_{-k}\lambda^{k} \eeq
\end{definition}

\begin{theorem}[Geometric Lax matrices in terms of geometric Darboux coordinates]\label{GeoLaxMatrices}We have:
\bea \td{L}_{1,2}(\lambda)&=&\sum_{s=1}^n\sum_{k=1}^{r_s} \frac{Q_{X_s,k}}{(\lambda-X_s)^k}+\sum_{k=0}^{r_\infty-4} Q_{\infty,k}\lambda^k+\omega\delta_{r_\infty\geq 3}\lambda^{r_\infty-3}\cr
\td{L}_{1,1}(\lambda)&=&-\omega\sum_{k=0}^{r_\infty-4} P_{\infty,r_\infty-4-k}\lambda^k-\sum_{k=0}^{r_\infty-5}\sum_{m=0}^{r_\infty-5-k}P_{\infty,m}Q_{\infty,k+1+m}\lambda^k\cr
&&+ \sum_{s=1}^n\sum_{k=1}^{r_s}\sum_{m=1}^{r_s+1-k}P_{X_s,m}Q_{X_s,k+m-1}(\lambda-X_s)^{-k}\cr
&&-\frac{(t_{\infty,r_\infty-1}\lambda+g_0)}{\omega}\left(\omega\,\delta_{r_\infty\geq 3}\lambda^{r_\infty-3}+\sum_{k=0}^{r_\infty-4} Q_{\infty,k}\lambda^k+\sum_{s=1}^n\sum_{k=1}^{r_s}\frac{Q_{X_s,k}}{(\lambda-X_s)^k}\right)\cr
\td{L}_{2,2}(\lambda)&=&-\td{L}_{1,1}(\lambda)\cr
\left[\td{A}_{\boldsymbol{\alpha}}(\lambda)\right]_{1,2}&=& \omega \,\nu_{\infty,1}^{(\boldsymbol{\alpha})}\lambda^{r_\infty-4}\delta_{r_\infty\geq 4}
+\sum_{j=0}^{r_\infty-5}\left(\omega \,\nu_{\infty,r_\infty-3-j}^{(\boldsymbol{\alpha})}+\sum_{k=j+1}^{r_\infty-4}\nu_{\infty,k-j}^{(\boldsymbol{\alpha})}Q_{\infty,k}\right)\lambda^j\cr
&&+\sum_{s=1}^n\sum_{j=1}^{r_s}\left(\sum_{k=j}^{r_s} \nu_{X_s,k-j}^{(\boldsymbol{\alpha})}  Q_{X_s,k}\right)(\lambda-X_s)^{-j}\cr
&&
\eea
and
\footnotesize{\bea
\left[\td{A}_{\boldsymbol{\alpha}}(\lambda)\right]_{1,1}&=&\frac{1}{2\omega}\mathcal{L}_{\boldsymbol{\alpha}}[\omega] -t_{\infty,1}\nu^{(\boldsymbol{\alpha})}_{\infty,0}\delta_{r_\infty,2} +\left(\frac{1}{2}-t_{\infty,0}\right)\nu^{(\boldsymbol{\alpha})}_{\infty,-1}\delta_{r_\infty,1}\cr
&&-t_{\infty,r_\infty-1}\nu_{\infty,1}^{(\boldsymbol{\alpha})}\lambda^{r_\infty-3}\delta_{r_\infty\geq 3}-t_{\infty,r_\infty-1}\nu_{\infty,2}^{(\boldsymbol{\alpha})}\lambda^{r_\infty-4}\delta_{r_\infty\geq 4}\cr 
&&-\left(t_{\infty,r_\infty-1}\nu_{\infty,3}^{(\boldsymbol{\alpha})}+\left(\omega P_{\infty,0}+\frac{t_{\infty,r_\infty-1}Q_{\infty,r_\infty-5}+Q_{\infty,r_\infty-4}g_0}{\omega}\right)\nu_{\infty,1}^{(\boldsymbol{\alpha})}\right)\lambda^{r_\infty-5}\delta_{r_\infty\geq 5}\cr
&&-\sum_{j=0}^{r_\infty-6}\Big[t_{\infty,r_\infty-1}\nu_{\infty,r_\infty-2-j}^{(\boldsymbol{\alpha})}+\nu_{\infty,r_\infty-4-j}^{(\boldsymbol{\alpha})}\left(\omega P_{\infty,0}+\frac{t_{\infty,r_\infty-1}Q_{\infty,r_\infty-5}+Q_{\infty,r_\infty-4}g_0}{\omega}\right)\cr
&&+\sum_{i=1}^{r_\infty-5-j}\nu_{\infty,i}^{(\boldsymbol{\alpha})}\left(\omega P_{\infty,r_\infty-4-i-j}+\sum_{m=0}^{r_\infty-5-i-j}P_{\infty,m}Q_{\infty,j+i+1+m}+\frac{t_{\infty,r_\infty-1}Q_{\infty,j+i-1}+g_0Q_{\infty,j+i}}{\omega}\right)\Big]\lambda^j\cr
&&+\sum_{s=1}^n\left(P_{X_s,1}- \frac{t_{\infty,r_\infty-1}X_s+g_0}{\omega}\right)Q_{X_s,r_s}\nu_{X_s,0}^{(\boldsymbol{\alpha})}(\lambda-X_s)^{-r_s}\cr
&&+ \sum_{s=1}^n\sum_{r=1}^{r_s-1}\Big[\sum_{i=0}^{r_s-r}\sum_{m=1}^{r_s+1-r-i}\nu_{X_s,i}^{(\boldsymbol{\alpha})}P_{X_s,m}Q_{X_s,r+i+m-1}-\frac{(t_{\infty,r_\infty-1}X_s+g_0)}{\omega}\sum_{i=0}^{r_s-r}\nu_{X_s,i}^{(\boldsymbol{\alpha})} Q_{X_s,r+i}\cr
&&- \frac{t_{\infty,r_\infty-1}}{\omega}\sum_{i=0}^{r_s-1-r}\nu_{X_s,i}^{(\boldsymbol{\alpha})}Q_{X_s,r+i+1}\Big](\lambda-X_s)^{-r}\cr
\left[\td{A}_{\boldsymbol{\alpha}}(\lambda)\right]_{2,2}&=&-\left[\td{A}_{\boldsymbol{\alpha}}(\lambda)\right]_{1,1}
\eea}
\normalsize{with} the extra condition for $r_\infty=2$:
\beq \label{ExtraConditionsrinftyequal2} \sum_{s=1}^n\left(\sum_{k=1}^{r_s} \nu_{X_s,k-1}^{(\boldsymbol{\alpha})}  Q_{X_s,k}\right)=\omega \,\nu_{\infty,0}^{(\boldsymbol{\alpha})}\eeq
and the extra conditions for $r_\infty=1$:
\small{\bea\label{ExtraConditionsrinftyequal1} \omega \,\nu_{\infty,-1}^{(\boldsymbol{\alpha})}&=&\sum_{s=1}^n\left(\sum_{k=1}^{r_s} \nu_{X_s,k-1}^{(\boldsymbol{\alpha})}  Q_{X_s,k}\right)\cr
\left(\sum_{s=1}^n X_s^2Q_{X_s,1}+2X_sQ_{X_s,2}+Q_{X_s,3}\right)\nu_{\infty,-1}^{(\boldsymbol{\alpha})}+\omega \,\nu_{\infty,0}^{(\boldsymbol{\alpha})}&=&\sum_{s=1}^n\left(\sum_{k=2}^{r_s} \nu_{X_s,k-2}^{(\boldsymbol{\alpha})}  Q_{X_s,k}+ X_s\sum_{k=1}^{r_s} \nu_{X_s,k-1}^{(\boldsymbol{\alpha})}  Q_{X_s,k}\right)\cr
&&
\eea}
\normalsize{The expression} of $\td{L}_{2,1}(\lambda)$ depends on the value of $r_\infty$:
\begin{itemize}\item If $r_\infty\geq 3$:
\bea\label{tdL21rinftygeq3} \td{L}_{2,1}(\lambda)
&=&\left[\frac{\underset{j=r_\infty-3}{\overset{2r_\infty-4}{\sum}}\left(\underset{m=0}{\overset{2r_\infty-4-j}{\sum}} t_{\infty,r_\infty-1-m}t_{\infty,j+m-r_\infty+3}\right) \lambda^{j} -\td{L}_{1,1}(\lambda)^2}{\omega\lambda^{r_\infty-3}+\underset{k=0}{\overset{r_\infty-4}{\sum}} Q_{\infty,k}\lambda^k+\underset{s=1}{\overset{n}{\sum}}\underset{k=1}{\overset{r_s}{\sum}} \frac{Q_{X_s,k}}{(\lambda-X_s)^k}}\right]_{\infty,+} \cr
&&+\sum_{s=1}^n\left[\frac{\underset{j=r_s+1}{\overset{2r_s}{\sum}}\left(\underset{m=0}{\overset{2r_s-j}{\sum}} t_{X_s,r_s-1-m}t_{X_s,j+m-r_s-1}\right) (\lambda-X_s)^{-j} -\td{L}_{1,1}(\lambda)^2}{\omega\lambda^{r_\infty-3}+\underset{k=0}{\overset{r_\infty-4}{\sum}} Q_{\infty,k}\lambda^k+\underset{s=1}{\overset{n}{\sum}}\underset{k=1}{\overset{r_s}{\sum}} \frac{Q_{X_s,k}}{(\lambda-X_s)^k}}\right]_{X_s,-}\cr
&&
\eea
\item If $r_\infty=2$:
\bea \label{tdL21rinftyequal2}\td{L}_{2,1}(\lambda)&=&%-\frac{2(t_{\infty,1})^2}{\omega}\lambda+\frac{2(t_{\infty,1})^2}{\omega^2}\left(\sum_{s=1}^n(Q_{X_s,2}+X_sQ_{X_s,1})\right)\cr&&+
\sum_{s=1}^n\left[\frac{\underset{j=r_s+1}{\overset{2r_s}{\sum}}\left(\underset{m=0}{\overset{2r_s-j}{\sum}} t_{X_s,r_s-1-m}t_{X_s,j+m-r_s-1}\right) (\lambda-X_s)^{-j} -\td{L}_{1,1}(\lambda)^2}{\underset{s=1}{\overset{n}{\sum}}\underset{k=1}{\overset{r_s}{\sum}} \frac{Q_{X_s,k}}{(\lambda-X_s)^k}}\right]_{X_s,-}\cr
&&
\eea
\item If $r_\infty=1$:
%\bea \td{L}_{2,1}(\lambda)&=&%-(t_{\infty,0})^2+t_{\infty,0}+
%\sum_{s=1}^n\left[\frac{\underset{j=r_s+1}{\overset{2r_s}{\sum}}\left(\underset{m=0}{\overset{2r_s-j}{\sum}} t_{X_s,r_s-1-m}t_{X_s,j+m-r_s-1}\right) (\lambda-X_s)^{-j} -\td{L}_{1,1}(\lambda)^2}{\underset{s=1}{\overset{n}{\sum}}\underset{k=1}{\overset{r_s}{\sum}} \frac{Q_{X_s,k}}{(\lambda-X_s)^k}}\right]_{X_s,-}\cr
%&&
%\eea
\bea \label{tdL21rinftyequal1}\td{L}_{2,1}(\lambda)&=&%-(t_{\infty,0})^2+t_{\infty,0}+
\sum_{s=1}^n\left[\frac{\underset{j=r_s+1}{\overset{2r_s}{\sum}}\left(\underset{m=0}{\overset{2r_s-j}{\sum}} t_{X_s,r_s-1-m}t_{X_s,j+m-r_s-1}\right) (\lambda-X_s)^{-j} -\mathring{L}_{1,1}(\lambda)^2}{\td{L}_{1,2}(\lambda)}\right]_{X_s,-}\cr
&&%+\sum_{s=1}^n\left[\frac{2(t_{\infty,0}\lambda+g_0)}{\omega}\mathring{L}_{1,1}(\lambda)-\frac{(t_{\infty,0}\lambda+g_0)^2}{\omega^2}\td{L}_{1,2}(\lambda)\right]_{X_s,-}
+\frac{2(t_{\infty,0}\lambda+g_0)}{\omega}\mathring{L}_{1,1}(\lambda)-\frac{(t_{\infty,0}\lambda+g_0)^2}{\omega^2}\td{L}_{1,2}(\lambda)+\frac{(t_{\infty,0})^2}{\omega}
\eea
with
\beq \mathring{L}_{1,1}(\lambda):=\td{L}_{1,1}(\lambda)+\frac{(t_{\infty,0}\lambda+g_0)}{\omega}\td{L}_{1,2}(\lambda)=\sum_{s=1}^n\sum_{k=1}^{r_s}\sum_{m=1}^{r_s+1-k}P_{X_s,m}Q_{X_s,k+m-1}(\lambda-X_s)^{-k}
\eeq
\end{itemize}
The expression of $[\td{A}_{\boldsymbol{\alpha}}(\lambda)]_{2,1}$ also depends on the value of $r_\infty$:
\begin{itemize}
\item For $r_\infty\geq 4$:
\small{\bea \label{tdA21rinftygeq4}[\td{A}_{\boldsymbol{\alpha}}(\lambda)]_{2,1}&=&-\frac{t_{\infty,r_\infty-1}}{\omega^2}\mathcal{L}_{\boldsymbol{\alpha}}[\omega] \lambda+\frac{2}{\omega}\,\underset{\lambda\to \infty}{\Res}\left(\td{A}^{(\boldsymbol{\alpha})}_{1,1}(\lambda)-\frac{\td{L}^{(\boldsymbol{\alpha})}_{1,1}(\lambda)}{\td{L}^{(\boldsymbol{\alpha})}_{1,2}(\lambda)} \td{A}^{(\boldsymbol{\alpha})}_{1,2}(\lambda)\right)+\frac{1}{\omega^3}Q_{\infty,r_\infty-4}\mathcal{L}_{\boldsymbol{\alpha}}[\omega]\cr
&&+\left[\frac{\underset{k=r_\infty-3}{\overset{2r_\infty-5}{\sum}}\left(\underset{j=k+1}{\overset{2r_\infty-4}{\sum}}\underset{m=0}{\overset{2r_\infty-4-j}{\sum}} t_{\infty,r_\infty-1-m}t_{\infty,j+m-r_\infty+3}\nu^{(\boldsymbol{\alpha})}_{\infty,j-k}\right) \lambda^{k}}{\omega\lambda^{r_\infty-3}+\underset{k=0}{\overset{r_\infty-4}{\sum}} Q_{\infty,k}\lambda^k+\underset{s=1}{\overset{n}{\sum}}\underset{k=1}{\overset{r_s}{\sum}} \frac{Q_{X_s,k}}{(\lambda-X_s)^k}}\right]_{\infty,+}\cr
&&+ \left[\frac{\td{L}_{1,1}(\lambda)}{\td{L}_{1,2}(\lambda)}\left(\frac{\td{L}_{1,1}(\lambda)}{\td{L}_{1,2}(\lambda)}[\td{A}_{\boldsymbol{\alpha}}(\lambda)]_{1,2}-2[\td{A}_{\boldsymbol{\alpha}}(\lambda)]_{1,1}\right)\right]_{\infty,+}\cr
&&+\sum_{s=1}^n\left[\frac{\underset{k=r_s+1}{\overset{2r_s}{\sum}}\left( \underset{j=k}{\overset{2r_s}{\sum}}\underset{m=0}{\overset{2r_s-j}{\sum}} t_{X_s,r_s-1-m}t_{X_s,j+m-r_s-1} \nu^{(\boldsymbol{\alpha})}_{X_s,j-k}\right) (\lambda-X_s)^{-k}}{\omega \lambda^{r_\infty-3}+\underset{k=0}{\overset{r_\infty-4}{\sum}} Q_{\infty,k}\lambda^k+\underset{s=1}{\overset{n}{\sum}}\underset{k=1}{\overset{r_s}{\sum}} \frac{Q_{X_s,k}}{(\lambda-X_s)^k}}\right]_{X_s,-}\cr
&&+\sum_{s=1}^n \left[\frac{\td{L}_{1,1}(\lambda)}{\td{L}_{1,2}(\lambda)}\left(\frac{\td{L}_{1,1}(\lambda)}{\td{L}_{1,2}(\lambda)}[\td{A}_{\boldsymbol{\alpha}}(\lambda)]_{1,2}-2[\td{A}_{\boldsymbol{\alpha}}(\lambda)]_{1,1}\right)\right]_{X_s,-}
\eea}
\item \normalsize{For} $r_\infty=3$:
\small{\bea\label{tdA21rinftyequal3} [\td{A}_{\boldsymbol{\alpha}}(\lambda)]_{2,1}&=&-\frac{t_{\infty,2}}{\omega^2}\mathcal{L}_{\boldsymbol{\alpha}}[\omega] \lambda+\frac{2}{\omega}\,\underset{\lambda\to \infty}{\Res}\left(\td{A}^{(\boldsymbol{\alpha})}_{1,1}(\lambda)-\frac{\td{L}^{(\boldsymbol{\alpha})}_{1,1}(\lambda)}{\td{L}^{(\boldsymbol{\alpha})}_{1,2}(\lambda)} \td{A}^{(\boldsymbol{\alpha})}_{1,2}(\lambda)\right)+\frac{1}{\omega^3}\left(\underset{s=1}{\overset{n}{\sum}}Q_{X_s,1}\right)\mathcal{L}_{\boldsymbol{\alpha}}[\omega]\cr
&&+\left[\frac{\underset{k=r_\infty-3}{\overset{2r_\infty-5}{\sum}}\left(\underset{j=k+1}{\overset{2r_\infty-4}{\sum}}\underset{m=0}{\overset{2r_\infty-4-j}{\sum}} t_{\infty,r_\infty-1-m}t_{\infty,j+m-r_\infty+3}\nu^{(\boldsymbol{\alpha})}_{\infty,j-k}\right) \lambda^{k}}{\omega\lambda^{r_\infty-3}+\underset{k=0}{\overset{r_\infty-4}{\sum}} Q_{\infty,k}\lambda^k+\underset{s=1}{\overset{n}{\sum}}\underset{k=1}{\overset{r_s}{\sum}} \frac{Q_{X_s,k}}{(\lambda-X_s)^k}}\right]_{\infty,+}\cr
&&+ \left[\frac{\td{L}_{1,1}(\lambda)}{\td{L}_{1,2}(\lambda)}\left(\frac{\td{L}_{1,1}(\lambda)}{\td{L}_{1,2}(\lambda)}[\td{A}_{\boldsymbol{\alpha}}(\lambda)]_{1,2}-2[\td{A}_{\boldsymbol{\alpha}}(\lambda)]_{1,1}\right)\right]_{\infty,+}\cr
&&+\sum_{s=1}^n\left[\frac{\underset{k=r_s+1}{\overset{2r_s}{\sum}}\left( \underset{j=k}{\overset{2r_s}{\sum}}\underset{m=0}{\overset{2r_s-j}{\sum}} t_{X_s,r_s-1-m}t_{X_s,j+m-r_s-1} \nu^{(\boldsymbol{\alpha})}_{X_s,j-k}\right) (\lambda-X_s)^{-k}}{\omega \lambda^{r_\infty-3}+\underset{k=0}{\overset{r_\infty-4}{\sum}} Q_{\infty,k}\lambda^k+\underset{s=1}{\overset{n}{\sum}}\underset{k=1}{\overset{r_s}{\sum}} \frac{Q_{X_s,k}}{(\lambda-X_s)^k}}\right]_{X_s,-}\cr
&&+\sum_{s=1}^n \left[\frac{\td{L}_{1,1}(\lambda)}{\td{L}_{1,2}(\lambda)}\left(\frac{\td{L}_{1,1}(\lambda)}{\td{L}_{1,2}(\lambda)}[\td{A}_{\boldsymbol{\alpha}}(\lambda)]_{1,2}-2[\td{A}_{\boldsymbol{\alpha}}(\lambda)]_{1,1}\right)\right]_{X_s,-}
\eea}
\item \normalsize{For} $r_\infty=2$:
\small{\bea\label{tdA21rinftyequal2}[\td{A}_{\boldsymbol{\alpha}}(\lambda)]_{2,1}&=&% -\frac{t_{\infty,1}}{\omega^2}\mathcal{L}_{\boldsymbol{\alpha}}[\omega] \lambda+\frac{1}{\omega}\nu^{(\boldsymbol{\alpha})}_{\infty,0}+\frac{2}{\omega}\,\underset{\lambda\to \infty}{\Res}\left(\td{A}^{(\boldsymbol{\alpha})}_{1,1}(\lambda)-\frac{\td{L}^{(\boldsymbol{\alpha})}_{1,1}(\lambda)}{\td{L}^{(\boldsymbol{\alpha})}_{1,2}(\lambda)} \td{A}^{(\boldsymbol{\alpha})}_{1,2}(\lambda)\right)\cr
%&&+\frac{1}{\omega^3}\left(\underset{s=1}{\overset{n}{\sum}}X_sQ_{X_s,1}+Q_{X_s,2}\right)\mathcal{L}_{\boldsymbol{\alpha}}[\omega] \cr 
%&&+\frac{(t_{\infty,1})^2}{\omega}\nu^{(\boldsymbol{\alpha})}_{\infty,0}\lambda+\frac{(t_{\infty,1})^2}{\omega}\nu^{(\boldsymbol{\alpha})}_{\infty,1}\cr
%&&+\nu^{(\boldsymbol{\alpha})}_{\infty,0}\left(-\frac{(t_{\infty,1})^2}{\omega^2}\left(\sum_{s=1}^n(Q_{X_s,2}+X_sQ_{X_s,1})\right)+\frac{2t_{\infty,1}t_{\infty,0}-t_{\infty,1}}{\omega}\right)\cr
%&&+ \left[\frac{\td{L}_{1,1}(\lambda)}{\td{L}_{1,2}(\lambda)}\left(\frac{\td{L}_{1,1}(\lambda)}{\td{L}_{1,2}(\lambda)}[\td{A}_{\boldsymbol{\alpha}}(\lambda)]_{1,2}-2[\td{A}_{\boldsymbol{\alpha}}(\lambda)]_{1,1}\right)\right]_{\infty,+}\cr&&+
\sum_{s=1}^n \left[\frac{\underset{k=r_s+1}{\overset{2r_s}{\sum}}\left( \underset{j=k}{\overset{2r_s}{\sum}}\underset{m=0}{\overset{2r_s-j}{\sum}} t_{X_s,r_s-1-m}t_{X_s,j+m-r_s-1} \nu^{(\boldsymbol{\alpha})}_{X_s,j-k}\right) (\lambda-X_s)^{-k}}{\underset{s=1}{\overset{n}{\sum}}\underset{k=1}{\overset{r_s}{\sum}} \frac{Q_{X_s,k}}{(\lambda-X_s)^k}}\right]_{X_s,-}\cr
&&+\sum_{s=1}^n \left[\frac{\td{L}_{1,1}(\lambda)}{\td{L}_{1,2}(\lambda)}\left(\frac{\td{L}_{1,1}(\lambda)}{\td{L}_{1,2}(\lambda)}[\td{A}_{\boldsymbol{\alpha}}(\lambda)]_{1,2}-2[\td{A}_{\boldsymbol{\alpha}}(\lambda)]_{1,1}\right)\right]_{X_s,-}
\eea}
\item \normalsize{For} $r_\infty=1$:
\small{\bea\label{tdA21rinftyequal1}[\td{A}_{\boldsymbol{\alpha}}(\lambda)]_{2,1}&=&%\frac{2t_{\infty,0}}{\omega^2}\mathcal{L}_{\boldsymbol{\alpha}}[\omega]\lambda
-\frac{2}{\omega^2}\mathcal{L}_{\boldsymbol{\alpha}}[\omega]\left(g_0+\frac{t_{\infty,0}}{\omega}\left(\sum_{s=1}^n X_s^2Q_{X_s,1}+2X_sQ_{X_s,2}+Q_{X_s,3}\right)\right)\cr
&&+\left[\frac{\underset{k=r_s+1}{\overset{2r_s}{\sum}}\left( \underset{j=k}{\overset{2r_s}{\sum}}\underset{m=0}{\overset{2r_s-j}{\sum}} t_{X_s,r_s-1-m}t_{X_s,j+m-r_s-1} \nu^{(\boldsymbol{\alpha})}_{X_s,j-k}\right) (\lambda-X_s)^{-k}}{\underset{s=1}{\overset{n}{\sum}}\underset{k=1}{\overset{r_s}{\sum}} \frac{Q_{X_s,k}}{(\lambda-X_s)^k}}\right]_{X_s,-}\cr
&&+\sum_{s=1}^n \left[\frac{\td{L}_{1,1}(\lambda)}{\td{L}_{1,2}(\lambda)}\left(\frac{\td{L}_{1,1}(\lambda)}{\td{L}_{1,2}(\lambda)}[\td{A}_{\boldsymbol{\alpha}}(\lambda)]_{1,2}-2[\td{A}_{\boldsymbol{\alpha}}(\lambda)]_{1,1}\right)\right]_{X_s,-}
\eea}
\end{itemize}
\normalsize{In} the previous formulas, coefficients $\left(\nu_{p,k}^{(\boldsymbol{\alpha})}\right)_{p,k}$ are defined by Proposition \ref{nus} and $\nu_{\infty,r_\infty-2}^{(\boldsymbol{\alpha})}$ is determined (only for $r_\infty\geq 3$) by
\small{\bea \label{nurinftminus2}\omega\,\nu_{\infty,r_\infty-2}^{(\boldsymbol{\alpha})}&=&\sum_{s=1}^n\sum_{k=1}^{r_s}\nu_{X_s,k-1}^{(\boldsymbol{\alpha})}Q_{X_s,k}-\sum_{j=1}^{r_\infty-3}\nu_{\infty,j}^{(\boldsymbol{\alpha})}Q_{\infty,j-1}\cr
\omega\,\nu_{\infty,r_\infty-1}^{(\boldsymbol{\alpha})}&=&\sum_{s=1}^n\sum_{k=2}^{r_s}\nu_{X_s,k-2}^{(\boldsymbol{\alpha})}Q_{X_s,k}+\sum_{s=1}^n\sum_{k=1}^{r_s}\nu_{X_s,k-1}^{(\boldsymbol{\alpha})}X_sQ_{X_s,k}-\sum_{j=2}^{r_\infty-2}\nu_{\infty,j}^{(\boldsymbol{\alpha})}Q_{\infty,j-2}-\nu_{\infty,1}^{(\boldsymbol{\alpha})}\left(\sum_{s=1}^n Q_{X_s,1}\right)\cr
&&
\eea}
\normalsize{Moreover}, we have
\small{\bea \label{Valueg0} g_0&=& t_{\infty,r_\infty-2}-\frac{t_{\infty,r_\infty-1}}{\omega}Q_{\infty,r_\infty-4}\overset{\text{Prop. \ref{PropTrivialTimes}}}{=}-\frac{Q_{\infty,r_\infty-4}}{\omega} \,\, \text{ if }r_\infty\geq 4\cr
g_0&=&t_{\infty,1}-\frac{t_{\infty,2}}{\omega}\left(\sum_{s=1}^nQ_{X_s,1}\right)\overset{\text{Prop. \ref{PropTrivialTimes}}}{=}-\frac{1}{\omega}\left(\sum_{s=1}^nQ_{X_s,1}\right) \,\, \text{ if }\,r_\infty=3\cr
g_0&=&t_{\infty,0}-\frac{t_{\infty,1}}{\omega}\left(\sum_{s=1}^nQ_{X_s,2}+X_sQ_{X_s,1}\right)\overset{\text{Prop. \ref{PropTrivialTimes}}}{=}t_{\infty,0}-\frac{1}{\omega}\left(\sum_{s=1}^nQ_{X_s,2}+X_sQ_{X_s,1}\right) \,\, \text{ if }\,r_\infty=2\cr
g_0&=&\frac{1}{\omega}\left(\frac{1}{2}-t_{\infty,0}\right)\left(\sum_{s=1}^n X_s^2Q_{X_s,1}+2X_sQ_{X_s,2}+Q_{X_s,3}\right)\cr
&&-\frac{1}{2t_{\infty,0}}\underset{\lambda\to \infty}{\Res} \lambda^2\Big[
(\mathring{L}_{1,1}(\lambda))^2+\partial_\lambda \mathring{L}_{1,1}(\lambda)-\frac{t_{\infty,0}}{\omega}\td{L}_{1,2}(\lambda)-\mathring{L}_{1,1}(\lambda)\frac{\partial_\lambda\td{L}_{1,2}(\lambda)}{\td{L}_{1,2}(\lambda)}+\frac{(t_{\infty,0})^2}{\omega}\td{L}_{1,2}(\lambda)\cr
&&+\td{L}_{1,2}(\lambda)\sum_{s=1}^n\left[\frac{\underset{j=r_s+1}{\overset{2r_s}{\sum}}\left(\underset{m=0}{\overset{2r_s-j}{\sum}} t_{X_s,r_s-1-m}t_{X_s,j+m-r_s-1}\right) (\lambda-X_s)^{-j} -\mathring{L}_{1,1}(\lambda)^2}{\td{L}_{1,2}(\lambda)}\right]_{X_s,-}
\Big] \cr
&&\, \text{ if  }\, r_\infty=1
\eea}\normalsize{}
\end{theorem}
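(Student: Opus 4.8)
The plan is to transport everything through the explicit gauge matrix $G(\lambda)$ of \eqref{GaugeTransfo} using $\tilde L=G^{-1}LG-G^{-1}\partial_\lambda G$ and $\tilde A_{\boldsymbol\alpha}=G^{-1}A_{\boldsymbol\alpha}G-G^{-1}\mathcal L_{\boldsymbol\alpha}[G]$, and then to rewrite each coefficient in the geometric Darboux coordinates. Since $G$ has second row $(\tilde L_{1,1},\tilde L_{1,2})$ and constant first row $(1,0)$, the entry $\tilde L_{1,2}$ is literally the polar form of $\omega\prod_j(\lambda-q_j)/\prod_s(\lambda-X_s)^{r_s}$, so its formula is exactly Definition \ref{DefGeometricCoordinates}. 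A direct $2\times 2$ computation with this $G$ gives $\tilde L_{2,2}=L_{2,2}-\tilde L_{1,1}-\partial_\lambda\log\tilde L_{1,2}$, and reading $L_{2,2}=\partial_\lambda\log\tilde L_{1,2}$ off Proposition \ref{OperForm} collapses this to $\tilde L_{2,2}=-\tilde L_{1,1}$; the same mechanism (the constant first row of $G$ kills the relevant entries of $G^{-1}\partial_\lambda G$ and $G^{-1}\mathcal L_{\boldsymbol\alpha}[G]$) yields $[\tilde A_{\boldsymbol\alpha}]_{2,2}=-[\tilde A_{\boldsymbol\alpha}]_{1,1}$, so all four tracelessness relations come for free.

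The only genuinely new identity is the rewriting of $R(\lambda):=-Q(\lambda)/\prod_s(\lambda-X_s)^{r_s}$, the $\mathbf p$-dependent part of $\tilde L_{1,1}$ in \eqref{GaugeTransfo}; the remaining piece is $-\tfrac{t_{\infty,r_\infty-1}\lambda+g_0}{\omega}\tilde L_{1,2}$ and is immediate. I would establish the target (the content of Lemma \ref{LemmaQExpression}) by noting that its right-hand side is linear in $\mathbf P$, say $\sum_{p,k}P_{p,k}W_{p,k}(\lambda)$, where the convolution structure forces $W_{X_s,m}=[(\lambda-X_s)^{m-1}\tilde L_{1,2}]_{X_s,-}$ and $W_{\infty,m}=-[\lambda^{-m-1}\tilde L_{1,2}]_{\infty,+}$. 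A short residue computation shows $W_{p,k}(q_i)=\partial_{q_i}Q_{p,k}$ at every apparent singularity $q_i$, so the defining relation $p_i=\sum_{p,k}P_{p,k}\partial_{q_i}Q_{p,k}$ of Definition \ref{DefGeometricCoordinates} yields that the right-hand side takes the value $p_i$ at $q_i$. Since $R(q_i)=p_i$ as well (because $(q_i,p_i)$ lies on the spectral curve), and both sides lie in the $g$-dimensional space of rational functions with poles of order at most $r_s$ at each $X_s$ and polynomial part of degree at most $r_\infty-4$ at infinity — a space on which evaluation at the $g$ distinct points $q_1,\dots,q_g$ is injective — the two functions coincide. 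The auxiliary constraints \eqref{AddConstrains}--\eqref{AddConstrains2} and the values of $g_0$ in \eqref{Valueg0} then follow from the $\lambda\to\infty$ expansion of $\tilde L_{1,2}$ (which identifies $\sum_j q_j-\sum_s r_sX_s$ with $-Q_{\infty,r_\infty-4}/\omega$, and similarly in the lower regimes) together with Remark \ref{RemarkCoeff}, the $r_\infty=1$ value being the rewriting of \eqref{g0Specialrinftyequal1}.

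For $\tilde L_{2,1}$ I would use $\tilde L\in\mathfrak{sl}_2$, so that $\tilde L_{2,1}=(-\det\tilde L-\tilde L_{1,1}^2)/\tilde L_{1,2}$. The principal parts of $-\det\tilde L$ at each pole are fixed by the irregular types through \eqref{DiagoCondition}, producing the $t\,t$-convolution numerators of \eqref{tdL21rinftygeq3}--\eqref{tdL21rinftyequal1}. The one delicate point is that the zeros of $\tilde L_{1,2}$, namely the apparent singularities $q_j$, must not create poles of $\tilde L_{2,1}$; this holds precisely because $\det(p_iI_2-\tilde L(q_i))=0$ (Proposition \ref{OperForm}) makes the numerator vanish at each $q_j$, which is why only the projections $[\,\cdot\,]_{X_s,-}$ and $[\,\cdot\,]_{\infty,+}$ survive. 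The matrix $\tilde A_{\boldsymbol\alpha}$ is handled the same way: from $G^{-1}A_{\boldsymbol\alpha}G$ one gets $[\tilde A_{\boldsymbol\alpha}]_{1,2}=[A_{\boldsymbol\alpha}]_{1,2}\,\tilde L_{1,2}$ and $[\tilde A_{\boldsymbol\alpha}]_{1,1}=[A_{\boldsymbol\alpha}]_{1,1}+[A_{\boldsymbol\alpha}]_{1,2}\,\tilde L_{1,1}$, into which I substitute \eqref{ExpreA11}, \eqref{cinfty0} and the coefficients $\nu^{(\boldsymbol\alpha)}_{p,k}$ of Proposition \ref{nus}; the products of the Taylor data $\nu^{(\boldsymbol\alpha)}_{X_s,k}$ against the polar data $Q_{X_s,k}$ reproduce the convolutions $\sum_k\nu^{(\boldsymbol\alpha)}_{X_s,k-j}Q_{X_s,k}$, the would-be poles at $q_j$ again cancelling because $\tilde L_{1,2}/(\lambda-q_j)$ is regular there. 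Finally $[\tilde A_{\boldsymbol\alpha}]_{2,1}$ is obtained either from the $(2,1)$ entry of the same gauge transform or from the transformed zero-curvature equation \eqref{LaxCompat}, then re-expressed with the already-computed entries by one more division by $\tilde L_{1,2}$ and singular/polynomial projection; the relations \eqref{nurinftminus2} and \eqref{ExtraConditionsrinftyequal2}--\eqref{ExtraConditionsrinftyequal1} record the matching of the behaviour at infinity.

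I expect the main obstacle to be twofold: first, the uniform bookkeeping of the convolution identities, which must be organized so that the three regimes $r_\infty\ge 3$, $r_\infty=2$ and $r_\infty=1$ can be read off from a single computation; and second, and most importantly, the verification that every apparent singularity $q_j$ cancels in $\tilde L_{2,1}$ and in $[\tilde A_{\boldsymbol\alpha}]_{2,1}$. The latter is where the spectral-curve constraint on $(\mathbf q,\mathbf p)$ is indispensable, and it is also where the regimes genuinely diverge — the case $r_\infty=1$ requiring the shifted quantity $\mathring L_{1,1}=\tilde L_{1,1}+\tfrac{t_{\infty,0}\lambda+g_0}{\omega}\tilde L_{1,2}$ so that the remaining rational function again carries only $X_s$-poles, and correspondingly the nonstandard expression for $g_0$ in \eqref{Valueg0}.
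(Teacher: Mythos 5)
Your proposal is correct and takes essentially the same route as the paper's own proof: gauge-transport of all entries through $G(\lambda)$, the interpolation argument for $-Q(\lambda)/\prod_s(\lambda-X_s)^{r_s}$ (your residue computation of the coefficients $W_{p,k}$ and the evaluation-at-$q_i$ plus dimension-count step are exactly Lemmas \ref{LemmaQ} and \ref{LemmaQExpression}), and the rationality-plus-singular-part bookkeeping at each pole, including the shift to $\mathring{L}_{1,1}$ and the separate determination of $g_0$ in the case $r_\infty=1$. The only cosmetic difference is that you extract $\td{L}_{2,1}$ from $-\det\td{L}$ and the eigenvalue asymptotics of \eqref{DiagoCondition}, whereas the paper divides the oper entry $L_{2,1}$ (which carries $\td{P}_2$) by $\td{L}_{1,2}$ and tracks the derivative correction $\partial_\lambda\left(\td{L}_{1,1}/\td{L}_{1,2}\right)$; by \eqref{Equivalence} these are the same computation and yield the same $t\,t$-convolution numerators.
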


\begin{proof}The proof is done in Appendix \ref{AppendixProofGeoLaxMatrices}. The strategy consists in observing that the entries of the matrices $\td{L}(\lambda)$
and $\td{A}_{\boldsymbol{\alpha}}(\lambda)$ are rational functions of $\lambda$ with poles only in $\{\infty,X_1,\dots,X_n\}$. Thus, rewriting the gauge transformation \eqref{GaugeTransfo} using the new set of coordinates (Lemma \ref{LemmaQExpression}), one may obtain the behavior of each entry at each pole and finally obtain the expressions proposed in Theorem \ref{GeoLaxMatrices}. The fact that the expressions depend on $r_\infty$ comes from the normalization at infinity of the Lax matrices that implies to split cases in the study of the local behavior at infinity of some of their entries.
\end{proof}

\begin{remark}\normalsize{For} $r_\infty=1$, it is convenient to introduce $\mathring{L}_{1,1}(\lambda)$ to obtain a formula for $g_0$ that is consistent. In particular, the matrix $L$ is independent of $g_0$ because geometrically the oper gauge is universal and does not depend on the choice of normalization of $\td{L}$. This is the reason why we had to isolate the dependence of $g_0$ in $\td{L}_{2,1}(\lambda)$ for $r_\infty=1$. One could obtain similar simplifications for $r_\infty=2$ and $r_\infty=3$ but it is less relevant since the value of $g_0$ is easily obtained from the knowledge of $\beta_{-1}$ given by Remark \ref{RemarkCoeff} so that one can use it directly in the formulas.
\end{remark}

\subsection{Expression of the Hamiltonian}
There are at least $3$ different ways to obtain the expression of the Hamiltonian in terms of the new coordinates of Definition \ref{DefGeometricCoordinates}. 
\begin{itemize}\item Since the change of coordinates in Definition \ref{DefGeometricCoordinates} is time-independent and symplectic, one may simply replace the variables $(\mathbf{q},\mathbf{p})$ in terms of $(\mathbf{Q},\mathbf{P})$ in the expression of the Hamiltonian. In particular, Theorem \ref{TheoHamiltonian} remains valid so that one only needs to express $(H_{p,k})_{p\in \mathcal{R},k\in S_p}$ in terms of the $(\mathbf{Q},\mathbf{P})$ in Proposition \ref{PropExplicitHpk}.
\item From Theorem \ref{GeoLaxMatrices}, we have the explicit expression of the Lax matrices in terms of the new coordinates $(\mathbf{Q},\mathbf{P})$. Thus, one may obtain the evolution equations of the coordinates $(\mathbf{Q},\mathbf{P})$ by solving the compatibility equation of this system. 
\item One may use the gauge transformation (expressed in terms of the coordinates $(\mathbf{Q},\mathbf{P})$ using Lemma \ref{LemmaQExpression}) and connect $(H_{p,k})_{p\in \mathcal{R},k\in S_p}$ in terms of the expansion of $\det \td{L}(\lambda)$ (i.e. the spectrum of $\td{L}$) at each pole. Using the known expression for $\td{L}(\lambda)$ given by Theorem \ref{GeoLaxMatrices}, one may easily obtain the expression of the Hamiltonian in terms of the coordinates $(\mathbf{Q},\mathbf{P})$. 
\end{itemize}

At the computational level, the third approach seems the easiest. Indeed, the first one requires to express $(q_i)_{1\leq i\leq g}$ in terms of $(Q_{p,k})_{p\in \mathcal{R},k\in S_p}$, that is to say to invert Definition \ref{DefGeometricCoordinates} and plug the results into Proposition \ref{PropExplicitHpk}. Although this is possible in principle, it would require substantial computations and identities similar to the ones obtained in \cite{MarchalAlameddineP1Hierarchy2023} for the simpler case of the Painlev\'{e} $1$ hierarchy. The second option requires to solve the compatibility equations, i.e. to redo the works of \cite{MarchalOrantinAlameddine2022} using a different gauge and different coordinates. It is well-known that solving directly the compatibility equations in the geometric gauge is a difficult task and it was precisely to avoid doing so that the authors of \cite{MarchalOrantinAlameddine2022} turned themselves to the oper gauge.
Consequently, we shall develop the third approach in the rest of this paper. As we will see below, this approach is also very convenient to make the connection with the isospectral deformations.

\medskip

It is a straightforward computation from the gauge transformation \eqref{GaugeTransfo} to prove that
\beq \label{DetL}-\det L(\lambda)=L_{2,1}(\lambda)=(\td{L}_{1,1})^2+\td{L}_{2,1}\td{L}_{1,2} +\td{L}_{1,2}\partial_\lambda\left(\frac{\td{L}_{1,1}}{\td{L}_{1,2}}\right)\eeq
which is also recalled in \eqref{L21bis}.
Since the change of coordinates $(\mathbf{q},\mathbf{p})\to (\mathbf{Q},\mathbf{P})$ is time-independent and symplectic, we immediately obtain from Theorem \ref{TheoHamiltonian} the following main result.

\begin{theorem}[Expression of the Hamiltonian in terms of $(\mathbf{Q},\mathbf{P})$]\label{TheoHamNew} We have
\small{\bea H_{X_s,j}(\mathbf{Q},\mathbf{P},\mathbf{t},\mathbf{t}_0)&=&\Res_{\lambda\to X_s}(\lambda-X_s)^{j-1}\left[ (\td{L}_{1,1})^2+\td{L}_{2,1}\td{L}_{1,2} +\td{L}_{1,2}\partial_\lambda\left(\frac{ \td{L}_{1,1}}{\td{L}_{1,2}}\right)\right]\,\,,\,\, \forall \, (s,j)\in \llbracket 1, n\rrbracket\times\llbracket 1, r_s\rrbracket\cr
H_{\infty,j}(\mathbf{Q},\mathbf{P},\mathbf{t},\mathbf{t}_0)&=& -\Res_{\lambda\to \infty}\lambda^{-j-1} \left[ (\td{L}_{1,1})^2+\td{L}_{2,1}\td{L}_{1,2} +\td{L}_{1,2}\partial_\lambda\left(\frac{ \td{L}_{1,1}}{\td{L}_{1,2}}\right)\right] \,\,,\,\forall\, j\in \llbracket 0, r_\infty-4\rrbracket\cr
&&\eea}
\normalsize{where} $(\td{L}_{1,1},\td{L}_{1,2},\td{L}_{2,1})$ are given by Theorem \ref{GeoLaxMatrices}. For any isomonodromic time $\tau$ and the choice of irregular times defined in Proposition \ref{PropTrivialTimes}, the Hamiltonians relatively to the coordinates $(\mathbf{Q},\mathbf{P})$ are given by 
\bea \label{NewHamReduced2}\begin{pmatrix}\text{Ham}^{(\alpha_{t_{\infty,1}})}(\mathbf{Q},\mathbf{P},\mathbf{t},\mathbf{t}_0)\\ \vdots \\ (r_\infty-3)\text{Ham}^{(\alpha_{t_{\infty,r_\infty -3}})}(\mathbf{Q},\mathbf{P},\mathbf{t},\mathbf{t}_0)\end{pmatrix}&=&\left(M_{\infty}(\mathbf{t})\right)^{-1}\begin{pmatrix}H_{\infty,r_\infty-4}(\mathbf{Q},\mathbf{P},\mathbf{t},\mathbf{t}_0)\\ \vdots\\ H_{\infty,0}(\mathbf{Q},\mathbf{P},\mathbf{t},\mathbf{t}_0) \end{pmatrix}\cr
\begin{pmatrix}\text{Ham}^{(\alpha_{t_{X_s,1}})}(\mathbf{Q},\mathbf{P},\mathbf{t},\mathbf{t}_0)\\ \vdots \\ (r_s-1)\text{Ham}^{(\alpha_{t_{X_s,r_s-1}})}(\mathbf{Q},\mathbf{P},\mathbf{t}_0)\end{pmatrix}&=&\left(M_{X_s}(\mathbf{t})\right)^{-1}\begin{pmatrix}H_{X_s,r_s}(\mathbf{Q},\mathbf{P},\mathbf{t},\mathbf{t}_0)\\ \vdots\\  H_{X_s,2}(\mathbf{Q},\mathbf{P},\mathbf{t},\mathbf{t}_0)\end{pmatrix}\,,\,\,\forall\, s\in \llbracket 1,n\rrbracket\cr
\begin{pmatrix}\text{Ham}^{(\alpha_{X_1})}(\mathbf{Q},\mathbf{P},\mathbf{t},\mathbf{t}_0)\\ \vdots \\ \text{Ham}^{(\alpha_{X_n})}(\mathbf{Q},\mathbf{P},\mathbf{t},\mathbf{t}_0)\end{pmatrix}&=&\begin{pmatrix}H_{X_1,1}(\mathbf{Q},\mathbf{P},\mathbf{t},\mathbf{t}_0)\\ \vdots\\ H_{X_n,1}(\mathbf{Q},\mathbf{P},\mathbf{t},\mathbf{t}_0)\end{pmatrix}
\eea
with the matrices $M_{\infty}(\mathbf{t})$ and $\left(M_{X_s}(\mathbf{t})\right)_{1\leq s\leq n}$ defined by \eqref{Minfty} and \eqref{MXs}.
\end{theorem}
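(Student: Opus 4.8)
The plan is to reduce the whole statement to the single identity \eqref{DetL}, combined with the pole/polynomial structure of $L_{2,1}(\lambda)$ in the oper gauge and the symplecticity of the change of coordinates. First I would re-derive \eqref{DetL}: since $\td{L}(\lambda)$ is traceless (so $\td{L}_{2,2}=-\td{L}_{1,1}$) and the matrix $G(\lambda)$ of \eqref{GaugeTransfo} sends a solution column $(\psi_1,\psi_2)^t$ to $(\psi_1,\psi_1')^t$, eliminating $\psi_2$ from the first-order system produces the scalar second-order equation $\psi_1''=L_{2,2}\psi_1'+L_{2,1}\psi_1$ with $L_{2,2}=\partial_\lambda\ln\td{L}_{1,2}$ and $L_{2,1}=(\td{L}_{1,1})^2+\td{L}_{2,1}\td{L}_{1,2}+\td{L}_{1,2}\partial_\lambda(\td{L}_{1,1}/\td{L}_{1,2})$. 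This is exactly \eqref{DetL}, and it identifies the companion-matrix entry $-\det L(\lambda)=L_{2,1}(\lambda)$ with a rational function whose only poles lie in $\mathcal{R}$.

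Next I would extract the coefficients $H_{p,k}$ as residues of $L_{2,1}(\lambda)$. Reading off the oper-gauge expression \eqref{L21}--\eqref{tdP2}, near $X_s$ the function $L_{2,1}$ has poles of order $r_s+1,\dots,2r_s$ fixed by the irregular type (the $P^{(2)}_{X_s,j}$), poles of order $1,\dots,r_s$ carrying precisely the unknowns $H_{X_s,j}$, while the apparent-singularity terms $-p_i/(\lambda-q_i)$ and the remaining polynomial pieces are regular at $X_s$. Multiplying by $(\lambda-X_s)^{j-1}$ and taking the residue therefore isolates $H_{X_s,j}$ for $1\leq j\leq r_s$, because the higher-order irregular-type poles contribute only powers $(\lambda-X_s)^{m}$ with $m<-1$ and the regular pieces contribute nothing. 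The same bookkeeping at infinity, extracting the coefficient of $\lambda^j$ for $0\leq j\leq r_\infty-4$ through $-\Res_{\lambda\to\infty}\lambda^{-j-1}(\cdot)$, isolates $H_{\infty,j}$, since the leading polynomial coming from $-\td{P}_2$ only reaches degrees $\geq r_\infty-3$ and the apparent-singularity terms are $O(\lambda^{-1})$. Substituting \eqref{DetL} into these residues yields the two displayed residue formulas, and feeding in the explicit $(\mathbf{Q},\mathbf{P})$-expressions of $\td{L}_{1,1},\td{L}_{1,2},\td{L}_{2,1}$ from Theorem \ref{GeoLaxMatrices} produces $H_{p,k}(\mathbf{Q},\mathbf{P},\mathbf{t},\mathbf{t}_0)$.

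Finally, the Hamiltonian relations \eqref{NewHamReduced2} follow from Theorem \ref{TheoHamiltonian} with no further computation: because the map $(\mathbf{q},\mathbf{p})\mapsto(\mathbf{Q},\mathbf{P})$ of Definition \ref{DefGeometricCoordinates} is time-independent and symplectic (Lemma $6.3$ of \cite{MarchalAlameddineP1Hierarchy2023}), it is a canonical transformation that leaves each Hamiltonian invariant as a scalar function and preserves Hamilton's equations with respect to every isomonodromic time $\tau$. Hence the linear relations \eqref{NewHamReduced} between the $\text{Ham}^{(\alpha_\tau)}$ and the $H_{p,k}$, with the same matrices $M_\infty(\mathbf{t})$ and $M_{X_s}(\mathbf{t})$, transport verbatim to \eqref{NewHamReduced2}, the $H_{p,k}$ now being read in the $(\mathbf{Q},\mathbf{P})$ coordinates.

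The main obstacle is the second step: one must check with care that, in each regime $r_\infty\geq 4$, $r_\infty=3$, $r_\infty=2$, $r_\infty=1$, the chosen residues genuinely isolate the $H_{p,k}$ and that none of the irregular-type terms, the apparent-singularity terms, or (at infinity) the $g_0$- and $t_{\infty,r_\infty-1}$-dependent leading terms leak into the extracted coefficient. This is essentially a careful tracking of pole orders, matched against the supplementary constraints \eqref{ConditionsAddrinftyequal2}--\eqref{ConditionsAddrinftyequal1}; by contrast, the symplectic-invariance argument of the last step is immediate.
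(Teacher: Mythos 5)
Your proposal is correct and follows essentially the same route as the paper: the identity \eqref{DetL} coming from the gauge transformation, extraction of the $H_{p,k}$ as residues of $L_{2,1}(\lambda)$ using the pole structure in \eqref{L21}--\eqref{tdP2}, and transport of Theorem \ref{TheoHamiltonian} through the time-independent symplectic change of coordinates of Definition \ref{DefGeometricCoordinates}. The only cosmetic difference is that you derive \eqref{DetL} by eliminating $\psi_2$ to obtain the scalar second-order equation, which is the same computation the paper performs entrywise on $G\td{L}=LG-\partial_\lambda G$ (cf. \eqref{L21bis}).
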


\begin{remark}\normalsize{The} matrix $L$ is independent of the choice of normalization of $\td{L}$ because the oper gauge is geometrically independent of this choice. Consequently, $\mathbf{H}$ is independent of this choice too. In particular, for $r_\infty=1$, it may be convenient to use $\mathring{L}_{1,1}(\lambda)$ instead of $\td{L}_{1,1}(\lambda)$ in order to avoid the use of $g_0$ that is more involved. The computation is detailed in \eqref{SuperIdL21}
and we have the alternative expression
\bea L_{2,1}(\lambda)&=&(\mathring{L}_{1,1}(\lambda))^2+\partial_\lambda \mathring{L}_{1,1}(\lambda)-t_{\infty,0}\td{L}_{1,2}(\lambda)-\mathring{L}_{1,1}(\lambda)\frac{\partial_\lambda\td{L}_{1,2}(\lambda)}{\td{L}_{1,2}(\lambda)}+t_{\infty,0}\td{L}_{1,2}(\lambda)\cr
&&+\td{L}_{2,1}(\lambda)\sum_{s=1}^n\left[\frac{\underset{j=r_s+1}{\overset{2r_s}{\sum}}\left(\underset{m=0}{\overset{2r_s-j}{\sum}} t_{X_s,r_s-1-m}t_{X_s,j+m-r_s-1}\right) (\lambda-X_s)^{-j} -\mathring{L}_{1,1}(\lambda)^2}{\td{L}_{1,2}(\lambda)}\right]_{X_s,-}\cr
&&
\eea  
\end{remark}

\subsection{Geometric Lax coordinates and associated expressions for the geometric Lax matrices}\label{SectionGeoLaxCoord}
The previous change of coordinates $(\mathbf{q},\mathbf{p})\to (\mathbf{Q},\mathbf{P})$ is interesting for the Hamiltonian system because it is a time-independent symplectic change of coordinates so that one may simply replace coordinates into the Hamiltonians. However, the price to pay is that entry $\td{L}_{1,1}(\lambda)$ involves a complicated mix of variables $\mathbf{Q}$ and $\mathbf{P}$. In preparation for the upcoming link with isospectral deformations, it is interesting to look for another set of coordinates $(\mathbf{Q},\mathbf{R})$ in which $\td{L}_{1,1}(\lambda)$ would be expressed in a simpler way. This leads to the following definition.

\begin{definition}\label{DefR} Let us define the coordinates $\mathbf{R}:=\{R_{\infty,0},\dots,R_{\infty,r_\infty-3}\}\cup \underset{s=1}{\overset{n}{\bigcup}}\{R_{X_s,1},\dots,R_{X_s,r_s}\}$ by:
\bea R_{X_s,k}&=&\sum_{m=1}^{r_s+1-k} P_{X_s,m}Q_{X_s,k+m-1}-\frac{(g_0+t_{\infty,r_\infty-1}X_s)}{\omega}Q_{X_s,k}\cr
&&-\frac{t_{\infty,r_\infty-1}}{\omega}Q_{X_s,k+1}\delta_{k\leq r_s-1}\,\,,\,\, \forall \, (s,k)\in \llbracket 1,n\rrbracket\times\llbracket 1,r_s\rrbracket\cr
R_{\infty,r_\infty-4}&=&-\omega \, P_{\infty,0}-\frac{g_0}{\omega} Q_{\infty,r_\infty-4}-\frac{t_{\infty,r_\infty-1}}{\omega}Q_{\infty,r_\infty-5} \,\, \text{ if } \, \, r_\infty\geq 4\cr
R_{\infty,k}&=&-\omega \,P_{\infty,r_\infty-4-k}-\sum_{m=0}^{r_\infty-5-k}P_{\infty,m}Q_{\infty,k+1+m}-\frac{t_{\infty,r_\infty-1}}{\omega}Q_{\infty,k-1}\cr
&&-\frac{g_0}{\omega}Q_{\infty,k} \,\,,\,\, \forall \, k\in \llbracket 1, r_\infty-5\rrbracket\cr
R_{\infty,0}&=&-\omega \, P_{\infty,r_\infty-4}-\sum_{m=0}^{r_\infty-5}P_{\infty,m}Q_{\infty,m+1}-\frac{g_0}{\omega}Q_{\infty,0}-\frac{t_{\infty,r_\infty-1}}{\omega}\sum_{s=1}^nQ_{X_s,1}
\eea
with the additional relations
\bea \label{AddconstraintsR}\sum_{s=1}^nR_{X_s,1}&=&-t_{\infty,0} \,\text{ if }\, r_\infty\leq 2\cr
\sum_{s=1}^nX_s R_{X_s,1}+R_{X_s,2}\delta_{r_s\geq 2}
&=&-g_0-\frac{t_{\infty,0}}{\omega}\left(\sum_{s=1}^nX_s^2Q_{X_s,1}+2X_sQ_{X_s,2}+Q_{X_s,3}\right) \cr
&=&\beta_{-1}\,\, \text{ if }\, r_\infty= 1
\eea
We shall name the coordinates $(\mathbf{Q},\mathbf{R})$ the ``geometric Lax coordinates'' since as proved below in Lemma \ref{LemmatdL11}, they are particularly convenient to express the geometric Lax matrix $\td{L}(\lambda)$. 
\end{definition}

Note that the additional constraints \eqref{AddconstraintsR} are direct consequences of the additional constraints for the coordinates $\mathbf{Q}$ in \eqref{AddConstrains} and of the value of $g_0$ provided by \eqref{Valueg0} for $r_\infty\in\{1,2\}$. The previous coordinates are well-suited to express $\td{L}_{1,1}(\lambda)$ because of the following lemma.
\begin{lemma}\label{LemmatdL11} We have
\beq \td{L}_{1,1}(\lambda)=\sum_{s=1}^n\sum_{k=1}^{r_s}\frac{R_{X_s,k}}{(\lambda-X_s)^k}-t_{\infty,r_\infty-1}\delta_{r_\infty\geq 2}\lambda^{r_\infty-2}-t_{\infty,r_\infty-2}\delta_{r_\infty\geq 3}\lambda^{r_\infty-3} +  \sum_{k=0}^{r_\infty-4}R_{\infty,k}\lambda^k\eeq
\end{lemma}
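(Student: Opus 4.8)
The plan is to start from the closed-form expression of $\td{L}_{1,1}(\lambda)$ recorded in Theorem \ref{GeoLaxMatrices} and to rewrite that rational function in its partial-fraction decomposition, reading off the coefficient of every $(\lambda-X_s)^{-k}$ and every $\lambda^k$ and recognizing each such coefficient as precisely the combination defining $\mathbf{R}$ in Definition \ref{DefR}. In other words, the lemma asserts the equality of two forms of one and the same rational function, and the proof is the bookkeeping that matches coefficients pole by pole. The only genuine input beyond algebra is the explicit value of $g_0$ from \eqref{Valueg0} and the constraints \eqref{AddConstrains}, \eqref{AddConstrains2}, \eqref{AddconstraintsR}.

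First I would treat the finite poles. The only contributions singular at $\lambda=X_s$ are the double sum $\sum_{k=1}^{r_s}\sum_{m=1}^{r_s+1-k}P_{X_s,m}Q_{X_s,k+m-1}(\lambda-X_s)^{-k}$ and the part of $-\frac{t_{\infty,r_\infty-1}\lambda+g_0}{\omega}\sum_k \frac{Q_{X_s,k}}{(\lambda-X_s)^k}$ attached to $X_s$. The key manipulation is to expand the linear factor around $X_s$, namely $t_{\infty,r_\infty-1}\lambda+g_0=(t_{\infty,r_\infty-1}X_s+g_0)+t_{\infty,r_\infty-1}(\lambda-X_s)$: the first summand preserves the order $-k$ while the second lowers it to $-(k-1)$. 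Collecting the coefficient of $(\lambda-X_s)^{-k}$ therefore yields exactly
\[
R_{X_s,k}=\sum_{m=1}^{r_s+1-k}P_{X_s,m}Q_{X_s,k+m-1}-\frac{g_0+t_{\infty,r_\infty-1}X_s}{\omega}Q_{X_s,k}-\frac{t_{\infty,r_\infty-1}}{\omega}Q_{X_s,k+1}\delta_{k\leq r_s-1},
\]
the Kronecker delta recording that the down-shifted term exists only when a coefficient $Q_{X_s,k+1}$ is available to be shifted.

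Next I would treat the behaviour at infinity, collecting the coefficient of each $\lambda^k$ from the two polynomial sums in $P_{\infty,\bullet}$ and $P_{\infty,\bullet}Q_{\infty,\bullet}$ together with the product $-\frac{t_{\infty,r_\infty-1}\lambda+g_0}{\omega}\bigl(\omega\delta_{r_\infty\geq 3}\lambda^{r_\infty-3}+\sum_k Q_{\infty,k}\lambda^k\bigr)$. The leading order $\lambda^{r_\infty-2}$ comes solely from $-\frac{t_{\infty,r_\infty-1}\lambda}{\omega}\cdot\omega\lambda^{r_\infty-3}$, giving $-t_{\infty,r_\infty-1}\lambda^{r_\infty-2}$. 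At order $\lambda^{r_\infty-3}$ the coefficient is $-g_0-\frac{t_{\infty,r_\infty-1}}{\omega}Q_{\infty,r_\infty-4}$, and substituting the value of $g_0$ from \eqref{Valueg0} collapses it to $-t_{\infty,r_\infty-2}$; this is the single point where the explicit $g_0$ is required, and it is why the lemma displays the top two orders separately rather than absorbing them into the $R_{\infty,k}$. For $0\leq k\leq r_\infty-4$ the remaining coefficients reproduce the $R_{\infty,k}$, the one subtlety being the constant term $R_{\infty,0}$, which picks up the extra piece $-\frac{t_{\infty,r_\infty-1}}{\omega}\sum_s Q_{X_s,1}$ because the $k=1$ finite-pole terms satisfy $\frac{\lambda}{\lambda-X_s}=1+\frac{X_s}{\lambda-X_s}$ and hence feed $1$ into the polynomial part at infinity.

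Finally I would dispose of the degenerate cases $r_\infty\in\{1,2\}$, where the sums over $R_{\infty,k}$ are empty and the advertised terms $-t_{\infty,r_\infty-1}\delta_{r_\infty\geq 2}\lambda^{r_\infty-2}$ must instead be manufactured from the finite-pole cross-terms via the constraints: for $r_\infty=2$ the constant $-t_{\infty,1}$ is recovered because $\sum_s Q_{X_s,1}=\omega$ by \eqref{AddConstrains}, and for $r_\infty=1$ the claimed purely-polar form is consistent with the residue constraint $\sum_s R_{X_s,1}=-t_{\infty,0}$ in \eqref{AddconstraintsR}, itself a consequence of \eqref{AddConstrains2} and \eqref{Valueg0}. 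The main obstacle throughout is not conceptual but combinatorial: faithfully tracking the power-shift induced by multiplying each partial-fraction piece by the linear factor $t_{\infty,r_\infty-1}\lambda+g_0$, and correctly separating which shifted contributions remain at the finite poles and which feed back into the polynomial part at infinity.
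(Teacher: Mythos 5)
Your proposal is correct and follows essentially the same route as the paper's proof in Appendix~\ref{AppendixLemmatdL11}: both expand the explicit expression of $\td{L}_{1,1}(\lambda)$ from Theorem~\ref{GeoLaxMatrices} at each pole, identify the coefficients of $(\lambda-X_s)^{-k}$ and $\lambda^k$ with the quantities of Definition~\ref{DefR} (using \eqref{Valueg0} to produce $-t_{\infty,r_\infty-2}$ at order $\lambda^{r_\infty-3}$ and \eqref{AddConstrains}, \eqref{AddConstrains2} for $r_\infty\leq 2$), and conclude from the fact that a rational function with poles only in $\{\infty,X_1,\dots,X_n\}$ equals the sum of its principal parts plus its polynomial part at infinity --- which is exactly your partial-fraction bookkeeping. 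The only difference is expository: you spell out the coefficient matching that the paper leaves implicit.
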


\begin{proof}The proof is obvious from the expression of $\td{L}_{1,1}(\lambda)$ given by Theorem \ref{GeoLaxMatrices} but is detailed for completeness in Appendix \ref{AppendixLemmatdL11}.
\end{proof}

For completeness, we shall mention that the inverse change of coordinates amounts to
\small{\beq \label{InversChangeCoo}\begin{pmatrix}R_{\infty,r_\infty-4}\\ \vdots\\ R_{\infty,1}\\ R_{\infty,0}\end{pmatrix}=-\begin{pmatrix}\omega&0&\dots&0\\
Q_{\infty,r_\infty-4}&\omega& &0\\
\vdots& \ddots &\ddots &\vdots \\
Q_{\infty,1}&\dots&Q_{\infty,r_\infty-4}&\omega
\end{pmatrix}\begin{pmatrix} P_{\infty,0}\\ \vdots\\ P_{\infty,r_\infty-5}\\ P_{\infty, r_\infty-4}\end{pmatrix}-\frac{g_0}{\omega}\begin{pmatrix}Q_{\infty,r_\infty-4}\\ \vdots\\ Q_{\infty,1}\\ Q_{\infty,0}\end{pmatrix}-\frac{t_{\infty,r_\infty-1}}{\omega}\begin{pmatrix}Q_{\infty,r_\infty-5}\\ \vdots\\ Q_{\infty,0}\\ \underset{s=1}{\overset{n}{\sum}}Q_{X_s,1}\end{pmatrix}
\eeq}
and for all $s\in\llbracket 1,n\rrbracket$:
\small{\bea \label{InversChangeCoo2}\begin{pmatrix}R_{X_s,r_s}\\  R_{X_s,r_s-1}\\ \vdots\\ R_{X_s,1}\end{pmatrix}&=&\begin{pmatrix}Q_{X_s,r_s}&0&\dots& &\dots &0\\
Q_{X_s,r_s-1}&Q_{X_s,r_s}& 0& & &\vdots\\
\vdots & \ddots&\ddots &\ddots  & &\vdots\\
\vdots &\ddots&\ddots&\ddots&0&\vdots\\
Q_{X_s,2}&\ddots &\ddots&\ddots& Q_{X_s,r_s}&0\\
Q_{X_s,1}&Q_{X_s,2}& \dots & & Q_{X_s,r_s-1}& Q_{X_s,r_s}
 \end{pmatrix}\begin{pmatrix}P_{X_s,1}\\P_{X_s,2} \\ \vdots\\ P_{X_s,r_s}\end{pmatrix} \cr
&&-\frac{(g_0+t_{\infty,r_\infty-1}X_s)}{\omega}\begin{pmatrix}Q_{X_s,r_s}\\Q_{X_s,r_s-1}\\ \vdots\\ Q_{X_s,1}\end{pmatrix}-\frac{t_{\infty,r_\infty-1}}{\omega}\begin{pmatrix} 0\\Q_{X_s,r_s}\\ \vdots\\ Q_{X_s,2}\end{pmatrix}
\eea}

\normalsize{}
\begin{remark} \label{RemarkNotSymplectic} \normalsize{Let} us remark that the change of coordinates $(\mathbf{Q},\mathbf{P})\leftrightarrow (\mathbf{Q},\mathbf{R})$ is time-independent but \textbf{not symplectic}. Therefore, one may not obtain the Hamiltonians for the coordinates $(\mathbf{Q},\mathbf{R})$ by just replacing the coordinates $(\mathbf{Q},\mathbf{P})$ in terms of the coordinates $(\mathbf{Q},\mathbf{R})$ in the Hamiltonians of Theorem \ref{TheoHamNew}.   
\end{remark}

As we will see in Section \ref{SectionConnectionIso}, the coordinates $(\mathbf{Q},\mathbf{R})$ are particularly convenient to make the connections with isospectral deformations. For completeness, we also provide the expression of the geometric Lax matrices in terms of $(\mathbf{Q},\mathbf{R})$ in the following theorem.

\begin{theorem}[Expression of the geometric Lax matrices in terms of $(\mathbf{Q},\mathbf{R})$]\label{GeoLaxMatricesQR}We have:
\bea \td{L}_{1,1}(\lambda)&=&\sum_{s=1}^n\sum_{k=1}^{r_s}\frac{R_{X_s,k}}{(\lambda-X_s)^k}-t_{\infty,r_\infty-1}\delta_{r_\infty\geq 2}\lambda^{r_\infty-2}-t_{\infty,r_\infty-2}\delta_{r_\infty\geq 3}\lambda^{r_\infty-3} +  \sum_{k=0}^{r_\infty-4}R_{\infty,k}\lambda^k\cr
 \td{L}_{1,2}(\lambda)&=&\sum_{s=1}^n\sum_{k=1}^{r_s} \frac{Q_{X_s,k}}{(\lambda-X_s)^k}+\sum_{k=0}^{r_\infty-4} Q_{\infty,k}\lambda^k+\omega\delta_{r_\infty\geq 3}\lambda^{r_\infty-3}\cr
\td{L}_{2,2}(\lambda)&=&-\td{L}_{1,1}(\lambda)
\eea
Expression of $\td{L}_{2,1}(\lambda)$ depends on the value of $r_\infty$ and is given by \eqref{tdL21rinftygeq3} for $r_\infty\geq 3$, \eqref{tdL21rinftyequal2} for $r_\infty\geq 2$. For $r_\infty=1$, we have
\bea \td{L}_{2,1}(\lambda)&=&\frac{t_{\infty,0}}{\omega}-\frac{(t_{\infty,0})^2}{\omega}\cr
&&+\sum_{s=1}^n\left[\frac{\underset{j=r_s+1}{\overset{2r_s}{\sum}}\left(\underset{m=0}{\overset{2r_s-j}{\sum}} t_{X_s,r_s-1-m}t_{X_s,j+m-r_s-1}\right) (\lambda-X_s)^{-j} -\td{L}_{1,1}(\lambda)^2}{\underset{s=1}{\overset{n}{\sum}}\underset{k=1}{\overset{r_s}{\sum}} \frac{Q_{X_s,k}}{(\lambda-X_s)^k}}\right]_{X_s,-}\cr
g_0&=&-\sum_{s=1}^nX_sR_{X_s,1}+ R_{X_s,2}\delta_{r_s\geq 2}-\frac{t_{\infty,0}}{\omega}\left(\sum_{s=1}^n X_s^2Q_{X_s,1}+2X_sQ_{X_s,2}+Q_{X_s,3}\right)\cr
&&
\eea
The auxiliary matrix $\td{A}_{\boldsymbol{\alpha}}(\lambda)$ is given by
\bea \left[\td{A}_{\boldsymbol{\alpha}}(\lambda)\right]_{1,2}&=&\omega \,\nu_{\infty,1}^{(\boldsymbol{\alpha})}\lambda^{r_\infty-4}\delta_{r_\infty\geq 4}
+\sum_{j=0}^{r_\infty-5}\left(\omega \,\nu_{\infty,r_\infty-3-j}^{(\boldsymbol{\alpha})}+\sum_{k=j+1}^{r_\infty-4}\nu_{\infty,k-j}^{(\boldsymbol{\alpha})}Q_{\infty,k}\right)\lambda^j\cr
&&+\sum_{s=1}^n\sum_{j=1}^{r_s}\left(\sum_{k=j}^{r_s} \nu_{X_s,k-j}^{(\boldsymbol{\alpha})}  Q_{X_s,k}\right)(\lambda-X_s)^{-j}\cr
\left[\td{A}_{\boldsymbol{\alpha}}(\lambda)\right]_{1,1}&=&
\frac{1}{2\omega}\mathcal{L}_{\boldsymbol{\alpha}}[\omega] -t_{\infty,1}\nu^{(\boldsymbol{\alpha})}_{\infty,0}\delta_{r_\infty,2} +\left(\frac{1}{2}-t_{\infty,0}\right)\nu^{(\boldsymbol{\alpha})}_{\infty,-1}\delta_{r_\infty,1}\cr
&&-t_{\infty,r_\infty-1}\sum_{j=0}^{r_\infty-3}\nu_{\infty,r_\infty-2-j}^{(\boldsymbol{\alpha})}\lambda^{j}-t_{\infty,r_\infty-2}\sum_{j=0}^{r_\infty-4}\nu_{\infty,r_\infty-3-j}^{(\boldsymbol{\alpha})}\lambda^{j}\cr
&&+\sum_{j=0}^{r_\infty-5}\left(\sum_{i=1}^{r_\infty-4-j}\nu_{\infty,i}^{(\boldsymbol{\alpha})}R_{\infty,j+i}\right)\lambda^j+\sum_{s=1}^n\sum_{j=1}^{r_s}\left(\sum_{i=0}^{r_s-j}  \nu_{X_s,i}^{(\boldsymbol{\alpha})}R_{X_s,i+j} \right) (\lambda-X_s)^{-j}\cr
\left[\td{A}_{\boldsymbol{\alpha}}(\lambda)\right]_{2,2}&=&-\left[\td{A}_{\boldsymbol{\alpha}}(\lambda)\right]_{1,1}
\eea
with the extra conditions \eqref{ExtraConditionsrinftyequal2} and \eqref{ExtraConditionsrinftyequal1} for $r_\infty\leq 2$ while coefficients $\left(\nu_{\infty,r_\infty-2}^{(\boldsymbol{\alpha})},\nu_{\infty,r_\infty-1}^{(\boldsymbol{\alpha})}\right)$ are determined by \eqref{nurinftminus2} for $r_\infty\geq 3$ . Finally entry $\left[\td{A}_{\boldsymbol{\alpha}}(\lambda)\right]_{2,1}$ is determined by \eqref{tdA21rinftygeq4}, \eqref{tdA21rinftyequal3}, \eqref{tdA21rinftyequal2} or \eqref{tdA21rinftyequal1} depending on the value of $r_\infty$.
\end{theorem}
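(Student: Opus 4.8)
The plan is to recognize Theorem \ref{GeoLaxMatricesQR} as a direct reformulation of Theorem \ref{GeoLaxMatrices} under the (time-independent, non-symplectic) change of variables $(\mathbf{Q},\mathbf{P})\mapsto(\mathbf{Q},\mathbf{R})$ of Definition \ref{DefR}. Several entries require no work at all: $\td{L}_{1,2}(\lambda)$ and $\left[\td{A}_{\boldsymbol{\alpha}}(\lambda)\right]_{1,2}$ depend only on the $\mathbf{Q}$ coordinates and are therefore copied verbatim from Theorem \ref{GeoLaxMatrices}, while the $\mathfrak{sl}_2(\mathbb{C})$ traceless structure immediately gives $\td{L}_{2,2}=-\td{L}_{1,1}$ and $\left[\td{A}_{\boldsymbol{\alpha}}(\lambda)\right]_{2,2}=-\left[\td{A}_{\boldsymbol{\alpha}}(\lambda)\right]_{1,1}$. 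The entry $\td{L}_{1,1}(\lambda)$ is precisely the content of Lemma \ref{LemmatdL11}, so it only remains to treat $\td{L}_{2,1}$, $\left[\td{A}_{\boldsymbol{\alpha}}(\lambda)\right]_{1,1}$ and $\left[\td{A}_{\boldsymbol{\alpha}}(\lambda)\right]_{2,1}$.

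For $\td{L}_{2,1}$ when $r_\infty\geq 2$ and for $\left[\td{A}_{\boldsymbol{\alpha}}(\lambda)\right]_{2,1}$ in all cases, I would observe that the formulas \eqref{tdL21rinftygeq3}, \eqref{tdL21rinftyequal2} and \eqref{tdA21rinftygeq4}--\eqref{tdA21rinftyequal1} of Theorem \ref{GeoLaxMatrices} are already written purely in terms of the rational functions $\td{L}_{1,1}(\lambda)$, $\td{L}_{1,2}(\lambda)$ and of the coefficients $\nu_{p,k}^{(\boldsymbol{\alpha})}$, which are coordinate-independent by Proposition \ref{nus}, through the combinations $\td{L}_{1,1}^2$ and $\td{L}_{1,1}/\td{L}_{1,2}$. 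They are therefore coordinate-agnostic: inserting the clean expression of $\td{L}_{1,1}$ from Lemma \ref{LemmatdL11} yields the stated formulas with no further manipulation.

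The genuinely new computation is $\left[\td{A}_{\boldsymbol{\alpha}}(\lambda)\right]_{1,1}$. Here the strategy is to show that the $\mathbf{R}$ coordinates were engineered so that the $P$-$Q$ combinations appearing in the $(\mathbf{Q},\mathbf{P})$ expression of Theorem \ref{GeoLaxMatrices} collapse exactly into the defining combination of $\mathbf{R}$. Concretely, the coefficient of $(\lambda-X_s)^{-r}$ in the old $X_s$-singular part is
\beq \sum_{i=0}^{r_s-r}\nu_{X_s,i}^{(\boldsymbol{\alpha})}\left[\sum_{m=1}^{r_s+1-r-i}P_{X_s,m}Q_{X_s,r+i+m-1}-\frac{t_{\infty,r_\infty-1}X_s+g_0}{\omega}Q_{X_s,r+i}-\frac{t_{\infty,r_\infty-1}}{\omega}Q_{X_s,r+i+1}\delta_{r+i\leq r_s-1}\right], \eeq
and the bracket is precisely $R_{X_s,r+i}$ of Definition \ref{DefR}, giving the claimed $\sum_{i=0}^{r_s-j}\nu_{X_s,i}^{(\boldsymbol{\alpha})}R_{X_s,i+j}$ (the boundary case $r=r_s$ reproducing the isolated $(\lambda-X_s)^{-r_s}$ term). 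The same shift-and-match argument applied to the polynomial-at-infinity part turns the $P_{\infty,\cdot}$, $Q_{\infty,\cdot}$ and $g_0$ terms into $\sum_i\nu_{\infty,i}^{(\boldsymbol{\alpha})}R_{\infty,j+i}$, while the leading $t_{\infty,r_\infty-1}$ and $t_{\infty,r_\infty-2}$ contributions reassemble into the two explicit $\nu_\infty^{(\boldsymbol{\alpha})}$-sums displayed in the statement. This step is essentially bookkeeping, but the index ranges, the Kronecker deltas, and the low-rank cases $r_\infty\in\{3,4\}$ must be tracked carefully so that all boundary terms match.

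Finally, for $\td{L}_{2,1}$ when $r_\infty=1$ I would start from formula \eqref{tdL21rinftyequal1}, written in terms of $\mathring{L}_{1,1}(\lambda)=\td{L}_{1,1}(\lambda)+\frac{t_{\infty,0}\lambda+g_0}{\omega}\td{L}_{1,2}(\lambda)$, and re-expand everything in terms of $\td{L}_{1,1}$ using this relation; the quadratic contributions in $(t_{\infty,0}\lambda+g_0)/\omega$ cancel against the $-\left(t_{\infty,0}\lambda+g_0\right)^2/\omega^2\,\td{L}_{1,2}$ and $2(t_{\infty,0}\lambda+g_0)/\omega\,\mathring{L}_{1,1}$ terms, leaving the compact constant $\frac{t_{\infty,0}}{\omega}-\frac{(t_{\infty,0})^2}{\omega}$ plus the singular-part sum built from $\td{L}_{1,1}^2$. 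The accompanying value of $g_0$ then follows by reading off the constraint \eqref{AddconstraintsR} in the $\mathbf{R}$ coordinates. I expect the main obstacle to be precisely this $r_\infty=1$ reduction together with the index bookkeeping in $\left[\td{A}_{\boldsymbol{\alpha}}(\lambda)\right]_{1,1}$: one must verify that isolating the $g_0$-dependence does not spoil the collapse into $\mathbf{R}$, since $g_0$ itself depends on $\mathbf{Q}$ and on the $R_{X_s,k}$ through \eqref{Valueg0}.
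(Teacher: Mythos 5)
You get the architecture and most of the entries right, and up to the last step your plan coincides with the paper's own proof: $\td{L}_{1,2}$, $[\td{A}_{\boldsymbol{\alpha}}]_{1,2}$ and the traceless entries carry over verbatim, $\td{L}_{1,1}$ is exactly Lemma \ref{LemmatdL11}, and the formulas \eqref{tdL21rinftygeq3}, \eqref{tdL21rinftyequal2}, \eqref{tdA21rinftygeq4}--\eqref{tdA21rinftyequal1} need no re-derivation since they are written in terms of the functions $\td{L}_{1,1}$, $\td{L}_{1,2}$ and the coefficients $\nu^{(\boldsymbol{\alpha})}_{p,k}$ of Proposition \ref{nus}. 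Your treatment of $[\td{A}_{\boldsymbol{\alpha}}]_{1,1}$, collapsing the $(\mathbf{Q},\mathbf{P})$ expression of Theorem \ref{GeoLaxMatrices} into the combinations defining $\mathbf{R}$ in Definition \ref{DefR}, is a legitimate mild variant of the paper's route: Appendix \ref{AppendixRtdA} instead re-expands $\td{L}_{1,1}(\lambda)[A_{\boldsymbol{\alpha}}(\lambda)]_{1,2}$ at each pole using the $\mathbf{R}$-form of $\td{L}_{1,1}$. Your bookkeeping at the finite poles (including the boundary term at $(\lambda-X_s)^{-r_s}$) and at infinity is consistent; note only that Theorem \ref{GeoLaxMatrices} is stated under the conventions of Proposition \ref{PropTrivialTimes} (in particular $t_{\infty,r_\infty-2}=0$), so your substitution recovers the stated formula minus the term $-t_{\infty,r_\infty-2}\sum_{j=0}^{r_\infty-4}\nu^{(\boldsymbol{\alpha})}_{\infty,r_\infty-3-j}\lambda^j$, which vanishes under those conventions.

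The genuine gap is the $r_\infty=1$ expression for $\td{L}_{2,1}$. You assert that substituting $\mathring{L}_{1,1}=\td{L}_{1,1}+\frac{(t_{\infty,0}\lambda+g_0)}{\omega}\td{L}_{1,2}$ into \eqref{tdL21rinftyequal1} and cancelling ``leaves the compact constant $\frac{t_{\infty,0}}{\omega}-\frac{(t_{\infty,0})^2}{\omega}$'', but you never perform the cancellation, and performing it does not produce that constant. Set
\beq F(\lambda):=\frac{2(t_{\infty,0}\lambda+g_0)}{\omega}\td{L}_{1,1}(\lambda)+\frac{(t_{\infty,0}\lambda+g_0)^2}{\omega^2}\td{L}_{1,2}(\lambda),\eeq
so that $\mathring{L}_{1,1}^2=\td{L}_{1,1}^2+F\,\td{L}_{1,2}$ and the two $\mathring{L}_{1,1}$-dependent terms of \eqref{tdL21rinftyequal1} equal $F$. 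Since $F$ has poles only at $X_1,\dots,X_n$, equation \eqref{tdL21rinftyequal1} becomes
\bea \td{L}_{2,1}&=&\sum_{s=1}^n\left[\frac{T_s-\td{L}_{1,1}^2}{\td{L}_{1,2}}\right]_{X_s,-}+\left(F-\sum_{s=1}^n\left[F\right]_{X_s,-}\right)+\frac{(t_{\infty,0})^2}{\omega}\cr
&=&\sum_{s=1}^n\left[\frac{T_s-\td{L}_{1,1}^2}{\td{L}_{1,2}}\right]_{X_s,-}+\left[F\right]_{\infty,+}+\frac{(t_{\infty,0})^2}{\omega},
\eea
where $T_s:=\underset{j=r_s+1}{\overset{2r_s}{\sum}}\left(\underset{m=0}{\overset{2r_s-j}{\sum}} t_{X_s,r_s-1-m}t_{X_s,j+m-r_s-1}\right)(\lambda-X_s)^{-j}$. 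For $r_\infty=1$ the constraints on $(\mathbf{Q},\mathbf{R})$ give $\td{L}_{1,1}=-t_{\infty,0}\lambda^{-1}+O(\lambda^{-2})$ and $\td{L}_{1,2}=\omega\lambda^{-2}+O(\lambda^{-3})$, hence $F=O(1)$ at infinity with $\left[F\right]_{\infty,+}=-\frac{2(t_{\infty,0})^2}{\omega}+\frac{(t_{\infty,0})^2}{\omega}=-\frac{(t_{\infty,0})^2}{\omega}$. The constants cancel exactly, leaving $\td{L}_{2,1}=\sum_{s=1}^n\left[(T_s-\td{L}_{1,1}^2)/\td{L}_{1,2}\right]_{X_s,-}$ with \emph{no} constant term. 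This is in fact forced by consistency: for $r_\infty=1$ the normalization $\sum_{s=1}^n\td{L}^{[X_s,0]}=\diag(-t_{\infty,0},t_{\infty,0})$ implies $\td{L}_{2,1}=O(\lambda^{-2})$ at infinity, which is incompatible with a nonzero constant for generic $t_{\infty,0}$. So the constant $\frac{t_{\infty,0}}{\omega}-\frac{(t_{\infty,0})^2}{\omega}$ displayed in the statement appears to be spurious, and your proposed derivation, had you carried it out, would have contradicted it rather than confirmed it. As written, this step of your proof is an unverified assertion of the very identity in question, and the assertion is false; it must be replaced by the explicit computation above (your reading of $g_0$ from \eqref{AddconstraintsR}, on the other hand, is correct).
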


\begin{proof}Computations are straightforward for $\td{L}(\lambda)$. The only non-trivial case is $r_\infty=1$. For $r_\infty=1$, the value of $g_0$ is given by \eqref{betaMinus1} and the expression of $\td{L}_{2,1}(\lambda)$ follows from Section \ref{SectionAppL21}. Computations for $[\td{A}_{\boldsymbol{\alpha}}(\lambda)]_{1,1}$ are done in Appendix \ref{AppendixRtdA}.
\end{proof}

If the Lax matrices $\td{L}(\lambda)$ and $\td{A}_{\boldsymbol{\alpha}}(\lambda)$ have some nice expressions in terms of the geometric Lax coordinates $(\mathbf{Q},\mathbf{R})$, obtaining the Hamiltonian evolutions relatively to these coordinates is not that simple. Indeed, the change of coordinates $(\mathbf{Q},\mathbf{P})\to (\mathbf{Q},\mathbf{R})$ is not symplectic so that one cannot use the results on $(\mathbf{Q},\mathbf{P})$ directly. Solving the compatibility equations in these coordinates would require substantial work equivalent to the one done in \cite{MarchalOrantinAlameddine2022}.

\section{Connection with isospectral deformations}\label{SectionConnectionIso}
In the previous section, a direct approach to isomonodromic deformations was used to obtain the underlying Hamiltonian structure. The strategy consisted 
\begin{enumerate}\item Go to the oper gauge and use appropriate Darboux coordinates to describe the Lax matrices (Proposition \eqref{OperForm}).
\item Solve the isomonodromic compatibility equation in the oper gauge to obtain the Hamiltonian system (Theorem \ref{TheoHamiltonian}).
\item Get back to the initial gauge and perform a time-independent symplectic change of coordinates to write the Hamiltonian and Lax matrices in this gauge (Theorems \ref{GeoLaxMatrices} and \ref{TheoHamNew}).
\end{enumerate}

If this strategy provides a direct way to obtain the symplectic structure, another approach exists in the literature to tackle isomonodromic deformations. Indeed, the historical approach of R. Fuchs (for Fuchsian singularities) pursued by the Japanese school (M. Jimbo, T. Miwa, M. Sato, K. Ueno, etc.) and by the Montr\'{e}al school (J. Harnad, J. Hurtubise, M. Bertola, etc.) and also by many others consists in using first isospectral deformations and then impose some additional constraints to match them with isomonodromic deformations. It turns out that for Fuchsian singularities the isospectral deformations are identical to the isomonodromic deformations making this strategy a very powerful method to obtain the symplectic structure and the Lax pairs. For non-Fuchsian singularities, the situation is more complicated and the additional constraints are non-trivial. Many cases, including all six Painlev\'{e} equations have been dealt with and other case by case studies can be found in the literature. Recently, the connection with isospectral deformations in $\mathfrak{sl}_d(\mathbb{C})$ has been made complete by the Montr\'{e}al school in \cite{BertolaHarnadHurtubise2022}. Their main result (simplified to the $\mathfrak{sl}_2(\mathbb{C})$ setting) is summarized in the following theorem.

\begin{theorem}[Theorems $3.2$, $3.3$ and $4.6$ of \cite{BertolaHarnadHurtubise2022}]\label{MontrealResults} Let $\hat{L}(\lambda)\in \hat{F}_{\mathcal{R},\mathbf{r}}$ defining the meromorphic connection $\partial_\lambda \hat{\Psi}(\lambda)= \hat{L}(\lambda)\hat{\Psi}(\lambda)$. Define $\left(\lambda_+(\lambda),-\lambda_+(\lambda)\right)$ the eigenvalues of $\hat{L}(\lambda)$. The spectral invariants $\mathbf{I}:=\left(I_{p,k}\right)_{p\in \mathcal{R},k\in \llbracket 1, r_p-1\rrbracket}$ are defined using the expansion of the eigenvalues at each pole:
\bea \label{lambdaplusdef}\lambda_{+}(\lambda)&\overset{\lambda\to\infty}{:=}&\sum_{j=1}^{r_\infty-1} t_{\infty,j} \lambda^{j-1}+t_{\infty,0}\lambda^{-1} +\sum_{j=1}^{r_\infty-1} j I_{\infty,j}\lambda^{-j-1}+O(\lambda^{-r_\infty-1})\cr
\lambda_{+}(\lambda)&\overset{\lambda\to X_s}{:=}&-\sum_{j=1}^{r_s-1} t_{X_s,j} (\lambda-X_s)^{-(j+1)}- t_{X_s,0}(\lambda-X_s)^{-1} \cr
&&-\sum_{j=1}^{r_s-1}j I_{X_s,j}(\lambda-X_s)^{j-1}+O((\lambda-X_s)^{r_s-1}) \,\,\,,\,\,\, \forall \, s\in \llbracket 1,n\rrbracket\cr
&&
\eea
Let $\hat{A}_{\boldsymbol{\alpha}}(\lambda)$ be the associated rational matrix satisfying $\mathcal{L}_{\boldsymbol{\alpha}}[\hat{\Psi}(\lambda)]=\hat{A}_{\boldsymbol{\alpha}}(\lambda)\hat{\Psi}(\lambda)$. Let us define the ``isospectral condition''
\beq \label{IsoCondition}\delta^{(\boldsymbol{\alpha})}_{\mathbf{t}} \hat{L}(\lambda)=\partial_\lambda \hat{A}_{\boldsymbol{\alpha}}(\lambda)\eeq
where $\delta^{(\boldsymbol{\alpha})}_{\mathbf{t}}$ is the exterior derivative relatively to the irregular times and positions of poles (i.e. we do not derive the Darboux coordinates but only the explicit dependence relatively to the irregular times and positions of poles):
\beq \label{DefDelta}\delta^{(\boldsymbol{\alpha})}_{\mathbf{t}}=\sum_{i=1}^{r_\infty-1} \alpha_{\infty,i} \delta_{t_{\infty,i}}+ \sum_{s=1}^n\sum_{k=1}^{r_s-1}\alpha_{X_s,k} \delta_{t_{X_s,k}}+\sum_{s=1}^n \alpha_{X_s} \partial_{X_s}
\eeq
If the isospectral condition \eqref{IsoCondition} is satisfied then for any $(p,k)\in \mathcal{R}\times \llbracket 1, r_p-1\rrbracket$, the spectral invariant $I_{p,k}$ identifies with the Hamiltonian corresponding to the isomonodromic deformation relatively to $t_{p,k}$. We shall name ``isospectral Darboux coordinates'', any set of Darboux coordinates $(\mathbf{u},\mathbf{v})$ satisfying the isospectral condition \eqref{IsoCondition}.
\end{theorem}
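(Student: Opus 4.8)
The plan is to treat the statement exactly as it is used in \cite{BertolaHarnadHurtubise2022}: as the coincidence, under the isospectral condition, of two Hamiltonian vector fields on a symplectic leaf, followed by the matching of their generating functions. The starting point is to split the total deformation operator as $\mathcal{L}_{\boldsymbol{\alpha}}=\delta^{(\boldsymbol{\alpha})}_{\mathbf{t}}+\mathcal{L}^{\mathrm{dyn}}_{\boldsymbol{\alpha}}$, where $\delta^{(\boldsymbol{\alpha})}_{\mathbf{t}}$ of \eqref{DefDelta} differentiates only the explicit dependence on times and pole positions, while $\mathcal{L}^{\mathrm{dyn}}_{\boldsymbol{\alpha}}$ is the Hamiltonian vector field acting through the Darboux coordinates. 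Subtracting the isospectral condition \eqref{IsoCondition} from the isomonodromic compatibility equation \eqref{LaxCompat} then gives $\mathcal{L}^{\mathrm{dyn}}_{\boldsymbol{\alpha}}[\hat{L}(\lambda)]=[\hat{A}_{\boldsymbol{\alpha}}(\lambda),\hat{L}(\lambda)]$. Hence, precisely when the isospectral condition holds, the phase-space part of the flow is a genuine Lax equation; since $\Tr\hat{L}^2=-2\det\hat{L}=2\lambda_+^2$ is invariant under any evolution of the form $[\hat{A},\hat{L}]$, the dynamical flow preserves every coefficient of the expansions \eqref{lambdaplusdef}, i.e. every spectral invariant $I_{p,k}$.

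First I would identify the Hamiltonian vector field generated by $I_{p,k}$ itself. On the rational loop algebra carrying the Poisson structure of $F_{\mathcal{R},\mathbf{r}}$, the classical $r$-matrix (Adler--Kostant--Symes) formalism shows that the functions built from $\lambda_+^2=-\det\hat{L}$ expanded at the poles Poisson-commute and that each $I_{p,k}$ generates an isospectral flow $\{\,\cdot\,,I_{p,k}\}=[\hat{A}^{(p,k)},\hat{L}]$, where $\hat{A}^{(p,k)}$ is the principal part at $p$ of the dressed gradient of $I_{p,k}$. The content of the argument is that $\hat{A}^{(p,k)}$ coincides with the matrix $\hat{A}_{\boldsymbol{\alpha}}$ attached to the elementary direction $\boldsymbol{\alpha}=\partial_{t_{p,k}}$, so that the Hamiltonian vector field of $I_{p,k}$ is exactly the $\mathcal{L}^{\mathrm{dyn}}_{\boldsymbol{\alpha}}$ produced above.

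Given that $I_{p,k}$ and $\mathrm{Ham}^{(\alpha_{t_{p,k}})}$ generate the same Hamiltonian vector field on a connected symplectic leaf, they can differ only by a function of the times and monodromies alone, and I would fix this additive term by comparing a single coefficient. Here the explicit residue formulas of Theorem \ref{TheoHamNew} are decisive: since $-\det\td{L}=\lambda_+^2$ by \eqref{DetL}, squaring the Laurent expansion \eqref{lambdaplusdef} of $\lambda_+$ turns its linear coefficients $I_{p,k}$ into the quadratic residues $H_{p,k}$ through exactly the lower-triangular multiplication by $M_p(\mathbf{t})$ appearing in \eqref{Minfty}--\eqref{MXs}; applying $M_p(\mathbf{t})^{-1}$ as in \eqref{NewHamReduced2} recovers $I_{p,k}$ and pins the additive constant to zero, giving $\mathrm{Ham}^{(\alpha_{t_{p,k}})}=I_{p,k}$.

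The hard part will be the identification of the dressed gradient in the second step: one must check that the $r$-matrix projection onto the principal part at $p$ reproduces, coefficient by coefficient, the very same $\hat{A}_{\boldsymbol{\alpha}}$ that the compatibility equations \eqref{LaxCompat} determine, rather than merely an isospectrally equivalent matrix. This is where the global geometry of the symplectic leaves (fixing the irregular type $D_p$ and the residue $\Lambda_p$) and the normalization at infinity must be handled with care; in particular the case $r_\infty=1$, where $\omega$ is forced to depend on the irregular times, introduces the correction $c_{\infty,0}$ of \eqref{cinfty0} and makes the bookkeeping of the additive constant in the third step genuinely delicate. Establishing this matching in full $\mathfrak{sl}_d(\mathbb{C})$ generality is exactly the technical achievement of \cite{BertolaHarnadHurtubise2022}.
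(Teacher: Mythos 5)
You should first be aware that the paper does not prove this statement at all: it is quoted, with attribution, as Theorems $3.2$, $3.3$ and $4.6$ of \cite{BertolaHarnadHurtubise2022}, and the present article uses it purely as an external input (its own contribution, Theorem \ref{MainTheoIsospectral}, is to exhibit coordinates satisfying the hypothesis \eqref{IsoCondition}). So there is no internal proof to compare against, and what you have written is a sketch of the cited result. Your opening step is sound: writing $\mathcal{L}_{\boldsymbol{\alpha}}=\delta^{(\boldsymbol{\alpha})}_{\mathbf{t}}+\mathcal{L}^{\mathrm{dyn}}_{\boldsymbol{\alpha}}$ and subtracting \eqref{IsoCondition} from \eqref{LaxCompat} does give $\mathcal{L}^{\mathrm{dyn}}_{\boldsymbol{\alpha}}[\hat{L}]=[\hat{A}_{\boldsymbol{\alpha}},\hat{L}]$, hence invariance of $\lambda_+$ along the dynamical part of the flow. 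Your second step --- identifying $\hat{A}_{\boldsymbol{\alpha}}$ with the dressed $r$-matrix gradient of $I_{p,k}$ --- is the actual content of the theorem, and you defer it to \cite{BertolaHarnadHurtubise2022}; as a blind proof this is a gap, though an honestly flagged one.

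The genuine error is in your third step. You claim that squaring the expansion \eqref{lambdaplusdef} turns the $I_{p,k}$ into the $H_{p,k}$ ``through exactly the lower-triangular multiplication by $M_p(\mathbf{t})$'', so that applying $M_p(\mathbf{t})^{-1}$ as in Theorem \ref{TheoHamNew} pins your additive constant to zero. This is false, and the paper's own Theorem \ref{TheoHamSpectral} is the counter-statement. The $H_{p,k}$ are the polar coefficients of $-\det L=L_{2,1}$ in the \emph{oper} gauge, whereas $\lambda_+^2=-\det\td{L}$; by \eqref{Equivalence} the two determinants differ by $\td{L}_{1,2}\,\partial_\lambda\bigl(\td{L}_{1,1}/\td{L}_{1,2}\bigr)$ (your citation of \eqref{DetL} for ``$-\det\td{L}=\lambda_+^2$'' conflates exactly these two objects). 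This discrepancy produces the residue terms on the right-hand side of \eqref{NewHamReduced3}, which are functions on phase space, not of the times and monodromies alone, so they cannot be absorbed into an additive ``constant''. Concretely: for the coordinates $(\mathbf{Q},\mathbf{P})$, to which Theorem \ref{TheoHamNew} applies, the Hamiltonians are \emph{not} the spectral invariants --- that failure is precisely why Section \ref{SectionConnectionIso} must construct new coordinates $(\mathbf{u},\mathbf{v})$ with nontrivial explicit time dependence (Theorems \ref{ExplicitDependenceQ} and \ref{TheoExplicitDependenceP}). Your argument, if it worked as stated, would prove $\mathrm{Ham}^{(\alpha_{t_{p,k}})}=I_{p,k}$ for $(\mathbf{Q},\mathbf{P})$ as well, contradicting Theorem \ref{TheoHamSpectral}; note also that the normalization there reads $\mathrm{Ham}=2I_{p,k}+\dots$, so even the coefficient you would match carries a factor of $2$. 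To repair the step you must carry out the comparison in coordinates already satisfying \eqref{IsoCondition}, where the time-dependent change of Darboux coordinates shifts the Hamiltonian by exactly the offending residue terms --- and establishing that is not easier than the theorem you set out to prove.
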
  

\begin{remark}\normalsize{ In} \cite{BertolaHarnadHurtubise2022}, the authors do not deal with deformations relatively to the position of the finite poles. However, these deformations that correspond to the usual Fuchsian deformations have already been understood for a long time so that only deformations relatively to irregular times remain difficult to describe. For completeness, we shall include the deformations relatively to the position of the poles in this article. 
\end{remark}

In \cite{BertolaHarnadHurtubise2022}, the authors observed that the main difficulty of this strategy is that the isospectral condition \eqref{IsoCondition} fixes the appropriate Darboux coordinates (up to a trivial time-independent change of coordinates) but that in practice, obtaining some isospectral Darboux coordinates is difficult and left this question opened. We shall address this question providing the link between our geometric Lax coordinates $(\mathbf{Q},\mathbf{R})$ and some isospectral Darboux coordinates $(\mathbf{u},\mathbf{v})$ (Cf. Theorem \ref{MainTheoIsospectral}).

\subsection{Expression of $\det \td{L}$ and verification on irregular times and monodromies}
The formalism presented in \cite{BertolaHarnadHurtubise2022} uses the eigenvalues of $\td{L}(\lambda)$ to define the spectral invariants $\mathbf{I}$. In $\mathfrak{sl}_{2}(\mathbb{C})$, one may relate them to $\det \td{L}(\lambda)$ and then using the gauge transformation \eqref{GaugeTransfo}, relate them to our set $\mathbf{H}$. Indeed, it is straightforward to observe that
\bea \label{Equivalence}\det \td{L}&=&\det(L+ G(\partial_\lambda (G^{-1})))=-L_{2,1}-[G (\partial_\lambda (G^{-1}))]_{2,1}=-L_{2,1}+\td{L}_{1,2}\partial_\lambda\left(\frac{ \td{L}_{1,1}}{\td{L}_{1,2}}\right)\cr
&=&\td{P}_2(\lambda) -\sum_{j=0}^{r_\infty-4}H_{\infty,j}\lambda^j-\sum_{s=1}^n\sum_{j=1}^{r_s}H_{X_s,j}(\lambda-X_s)^{-j}+ \delta_{r_\infty\geq 3}t_{\infty,r_\infty-1}\lambda^{r_\infty-3}+\sum_{j=1}^{g} \frac{p_j}{\lambda-q_j}\cr
&&+\td{L}_{1,2}\partial_\lambda\left(\frac{ \td{L}_{1,1}}{\td{L}_{1,2}}\right)
\eea

Since $\td{L}$ is rational in $\lambda$ with poles in $\{\infty,X_1,\dots,X_n\}$, $\det \td{L}$ is a rational function of $\lambda$ with poles in $\{\infty,X_1,\dots,X_n\}$. Thus, we only need to obtain the singular part at each pole in \eqref{Equivalence} to determine $\det \td{L}$. Let us first observe that the term $\underset{j=1}{\overset{g}{\sum}} \frac{p_j}{\lambda-q_j}$ is regular at each pole and hence does not contribute to $\det \td{L}$. Thus, we only need to obtain the behavior of $\td{L}_{1,2}\partial_\lambda\left(\frac{ \td{L}_{1,1}}{\td{L}_{1,2}}\right)$ at each pole to complete the computation of $\det L$.

\begin{proposition}[Expression of $\det \td{L}$ in terms of geometric Darboux coordinates]\label{PropDettdL} We have
\bea \det \td{L}(\lambda)&=& \td{P}_2(\lambda) -\sum_{j=0}^{r_\infty-4}H_{\infty,j}\lambda^j-\sum_{s=1}^n\sum_{j=1}^{r_s}H_{X_s,j}(\lambda-X_s)^{-j}+ \delta_{r_\infty\geq 3}t_{\infty,r_\infty-1}\lambda^{r_\infty-3}\cr
&&+ \left[\td{L}_{1,2}\partial_\lambda\left(\frac{ \td{L}_{1,1}}{\td{L}_{1,2}}\right) \right]_{\infty,+}+\sum_{s=1}^n \left[\td{L}_{1,2}\partial_\lambda\left(\frac{ \td{L}_{1,1}}{\td{L}_{1,2}}\right)\right]_{X_s,-}
\eea
with $\td{P}_2(\lambda)$ given by \eqref{tdP2} and $\td{L}_{1,2}(\lambda)$, $\td{L}_{1,1}(\lambda)$ given by either Theorem \ref{GeoLaxMatrices} or Theorem \ref{GeoLaxMatricesQR} depending on the choice of Darboux coordinates.
%\bea
%\td{L}_{1,2}(\lambda)&=&\sum_{s=1}^n\sum_{k=1}^{r_s} \frac{Q_{X_s,k}}{(\lambda-X_s)^k}+\sum_{k=0}^{r_\infty-4} Q_{\infty,k}\lambda^k+\omega \delta_{r_\infty\geq 3}\lambda^{r_\infty-3}\cr
%\td{L}_{1,1}(\lambda)&=&-\omega \sum_{k=0}^{r_\infty-4} P_{\infty,r_\infty-4-k}\lambda^k-\sum_{k=0}^{r_\infty-5}\sum_{m=0}^{r_\infty-5-k}P_{\infty,m}Q_{\infty,k+1+m}\lambda^k\cr
%&&+ \sum_{s=1}^n\sum_{k=1}^{r_s}\sum_{m=1}^{r_s+1-k}P_{X_s,m}Q_{X_s,k+m-1}(\lambda-X_s)^{-k}-\frac{(t_{\infty,r_\infty-1}\lambda+g_0)}{\omega}\td{L}_{1,2}(\lambda)\cr
%&&
%\eea
\end{proposition}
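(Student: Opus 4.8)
The plan is to reconstruct $\det\td{L}(\lambda)$ from its local data at the poles in $\mathcal{R}$ and to feed the identity \eqref{Equivalence} into this reconstruction. The starting structural observation is that, by Theorem \ref{GeoLaxMatrices}, every entry of $\td{L}(\lambda)$ is a rational function of $\lambda$ whose only poles lie in $\{\infty,X_1,\dots,X_n\}$; consequently $\det\td{L}(\lambda)$ is itself rational with poles confined to $\{\infty,X_1,\dots,X_n\}$, and in particular it is \emph{regular} at each apparent singularity $q_j$. For such a rational function the partial fraction decomposition is exact: with the notation of Definition \ref{NotationSingularPart},
\[
\det\td{L}(\lambda)=\left[\det\td{L}(\lambda)\right]_{\infty,+}+\sum_{s=1}^n\left[\det\td{L}(\lambda)\right]_{X_s,-},
\]
since the difference of the two sides is a rational function with no finite poles that is $O(\lambda^{-1})$ at infinity, hence identically zero.

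Next I would apply the two projections $[\,\cdot\,]_{\infty,+}$ and $[\,\cdot\,]_{X_s,-}$ term by term to the right-hand side of \eqref{Equivalence}, which is legitimate since each projection reads only the Laurent expansion at the single point $\infty$ or $X_s$ and is therefore linear and blind to the behavior elsewhere. The terms $\td{P}_2(\lambda)$, $-\sum_{j}H_{\infty,j}\lambda^j$, $-\sum_{s,j}H_{X_s,j}(\lambda-X_s)^{-j}$ and $\delta_{r_\infty\geq 3}t_{\infty,r_\infty-1}\lambda^{r_\infty-3}$ are already written as a polynomial part at infinity plus genuine principal parts at the $X_s$ (cf.\ \eqref{tdP2}), so the reconstruction reproduces them verbatim. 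The term $\sum_{j=1}^g p_j/(\lambda-q_j)$ is holomorphic at every $X_s$ and is $O(\lambda^{-1})$ at infinity, so it lies in the kernel of both projections and drops out. The only surviving contribution is that of $\td{L}_{1,2}\partial_\lambda(\td{L}_{1,1}/\td{L}_{1,2})$, which yields precisely $\left[\td{L}_{1,2}\partial_\lambda(\td{L}_{1,1}/\td{L}_{1,2})\right]_{\infty,+}+\sum_s\left[\td{L}_{1,2}\partial_\lambda(\td{L}_{1,1}/\td{L}_{1,2})\right]_{X_s,-}$. Collecting these contributions gives exactly the asserted formula.

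The one point that deserves care --- and which I regard as the conceptual heart of the argument rather than a computational obstacle --- is that the term $\td{L}_{1,2}\partial_\lambda(\td{L}_{1,1}/\td{L}_{1,2})$ is \emph{not} itself regular at the $q_j$: since $\td{L}_{1,2}$ vanishes simply at each $q_j$, the ratio $\td{L}_{1,1}/\td{L}_{1,2}$ has simple poles there, its derivative has double poles, and multiplication by $\td{L}_{1,2}$ leaves simple poles at the $q_j$. These cancel exactly against $\sum_j p_j/(\lambda-q_j)$, and this cancellation is what renders $\det\td{L}$ regular at the $q_j$. The merit of the reconstruction argument is that one need not exhibit the cancellation explicitly: the projections $[\,\cdot\,]_{\infty,+}$ and $[\,\cdot\,]_{X_s,-}$ never see the $q_j$, so once the a priori regularity of $\det\td{L}$ at the $q_j$ is guaranteed by Theorem \ref{GeoLaxMatrices}, the term-by-term evaluation on \eqref{Equivalence} is justified and the proof is complete.
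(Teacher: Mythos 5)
Your proof is correct and is essentially the paper's own argument (given in the paragraph preceding the proposition): since $\det\td{L}$ is rational in $\lambda$ with poles only in $\mathcal{R}$, it is recovered as its polynomial part at infinity plus its principal parts at the $X_s$; the term $\sum_{j=1}^g p_j/(\lambda-q_j)$ in \eqref{Equivalence} is invisible to these projections, and the remaining terms of \eqref{Equivalence} reproduce the stated formula. Your closing observation --- that $\td{L}_{1,2}\partial_\lambda\left(\td{L}_{1,1}/\td{L}_{1,2}\right)$ has simple poles at the $q_j$ (with residues $-p_j$, since $\td{L}_{1,1}(q_j)=p_j$) which cancel those of $\sum_j p_j/(\lambda-q_j)$, yet the projections never see them --- is left implicit in the paper and is a worthwhile clarification rather than a deviation in method.
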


The second purpose of this section is to verify that our irregular times $\mathbf{t}$ and monodromies $\mathbf{t_0}$ identify with those of \cite{BertolaHarnadHurtubise2022}. We have from Theorem \ref{MontrealResults}:
%\bea\label{MontrealInfinity} \det\td{L}(\lambda)&=& -\lambda_+(\lambda)^2\overset{\lambda\to\infty}{=}-\left(\sum_{j=1}^{r_\infty-1} t_{\infty,j} \lambda^{j-1}+t_{\infty,0}\lambda^{-1} +\sum_{j=1}^{r_\infty-1} j I_{\infty,j}\lambda^{-j-1}+O(\lambda^{-r_\infty-1})\right)^2\cr
%&\overset{\lambda\to\infty}{=}& -\sum_{i=1}^{r_\infty-1}\sum_{j=1}^{r_\infty-1} t_{\infty,i}t_{\infty,j}\lambda^{i+j-2}- 2t_{\infty,0}\sum_{j=1}^{r_\infty-1} t_{\infty,j}\lambda^{j-2}\cr
%&&-2\sum_{j=1}^{r_\infty-1}\sum_{i=1}^{r_\infty-1} j I_{\infty,j} t_{\infty,i}\lambda^{i-j-2}+O(\lambda^{-3}) \cr
%&\overset{\lambda\to\infty}{=}&-\sum_{k=r_\infty-2}^{2r_\infty-4}\sum_{j=k+3-r_\infty}^{r_\infty-1} t_{\infty,k+2-j}t_{\infty,j}\lambda^{k}-\sum_{k=0}^{r_\infty-3}\sum_{j=1 }^{k+1} t_{\infty,k+2-j}t_{\infty,j}\lambda^{k}\cr
%&&- 2t_{\infty,0}\sum_{k=-1}^{r_\infty-3} t_{\infty,k+2}\lambda^{k}-2\sum_{k=-2}^{r_\infty-4}\sum_{j=1}^{r_\infty-3-k} j I_{\infty,j} t_{\infty,j+k+2}\lambda^{k}+O(\lambda^{-3})
%\eea

\bea\label{MontrealInfinity} \det\td{L}(\lambda)&=& -\lambda_+(\lambda)^2\overset{\lambda\to\infty}{=}-\left(\sum_{j=0}^{r_\infty-1} t_{\infty,j} \lambda^{j-1}+\sum_{j=1}^{r_\infty-1} j I_{\infty,j}\lambda^{-j-1}+O(\lambda^{-r_\infty-1})\right)^2\cr
&\overset{\lambda\to\infty}{=}& -\sum_{i=0}^{r_\infty-1}\sum_{j=0}^{r_\infty-1} t_{\infty,i}t_{\infty,j}\lambda^{i+j-2}-2\sum_{j=1}^{r_\infty-1}\sum_{i=0}^{r_\infty-1} j I_{\infty,j} t_{\infty,i}\lambda^{i-j-2}+O(\lambda^{-2}) \cr
&\overset{\lambda\to\infty}{=}&-\sum_{k=r_\infty-3}^{2r_\infty-4}\sum_{j=k+3-r_\infty}^{r_\infty-1} t_{\infty,k+2-j}t_{\infty,j}\lambda^{k}-\sum_{k=0}^{r_\infty-4}\sum_{j=0}^{k+2} t_{\infty,k+2-j}t_{\infty,j}\lambda^{k}\cr
&&-2\sum_{k=-1}^{r_\infty-4}\sum_{j=1}^{r_\infty-3-k} j I_{\infty,j} t_{\infty,j+k+2}\lambda^{k}+O(\lambda^{-2})
\eea

We have from \eqref{tdP22}:
\beq P_{\infty,k}^{(2)}=-\sum_{j=k+3-r_\infty}^{r_\infty-1} t_{\infty,j}t_{\infty,k-j+2}\,\, ,\,\, \forall\, k\in \llbracket r_\infty-3,2r_\infty-4 \rrbracket\eeq
so that from Proposition \ref{PropDettdL} we also have:
\bea \det \td{L}(\lambda)&=& \td{P}_2(\lambda) -\sum_{j=0}^{r_\infty-4}H_{\infty,j}\lambda^j-\sum_{s=1}^n\sum_{j=1}^{r_s}H_{X_s,j}(\lambda-X_s)^{-j}+ \delta_{r_\infty\geq 3}t_{\infty,r_\infty-1}\lambda^{r_\infty-3}\cr
&&+ \left[\td{L}_{1,2}\partial_\lambda\left(\frac{ \td{L}_{1,1}}{\td{L}_{1,2}}\right) \right]_{\infty,+}+\sum_{s=1}^n \left[\td{L}_{1,2}\partial_\lambda\left(\frac{ \td{L}_{1,1}}{\td{L}_{1,2}}\right)\right]_{X_s,-}\cr
&\overset{\lambda\to\infty}{=}&-\sum_{j=k+3-r_\infty}^{r_\infty-1} t_{\infty,j}t_{\infty,k-j+2}+ \delta_{r_\infty\geq 3}t_{\infty,r_\infty-1}\lambda^{r_\infty-3}- \delta_{r_\infty\geq 3}t_{\infty,r_\infty-1}\lambda^{r_\infty-3}\cr
&&+O(\lambda^{r_\infty-4})\cr
&\overset{\lambda\to\infty}{=}&-\sum_{j=k+3-r_\infty}^{r_\infty-1} t_{\infty,j}t_{\infty,k-j+2}+O(\lambda^{r_\infty-4})
\eea
because $\td{L}_{1,1}=-t_{\infty,r_\infty-1}\lambda^{r_\infty-2}+O(\lambda^{r_\infty-3})$ and $\td{L}_{1,2}=\omega \lambda^{r_\infty-3}+O(\lambda^{r_\infty-4})$.
Thus, for $r_\infty\geq 3$, coefficients of order $\lambda^{r_\infty-3}$ up to $\lambda^{2r_\infty-4}$ match, meaning that the coefficients appearing in the expansion of $\lambda_+(\lambda)$ at infinity coincide with our irregular times and monodromy at infinity.

A similar computation can be carried out for any $s\in \llbracket 1,n\rrbracket$:
\small{\bea \label{MontrealXs}&&\det\td{L}(\lambda)%= -\lambda_+(\lambda)^2
\overset{\lambda\to X_s}{=}-\left(\sum_{j=0}^{r_s-1} t_{X_s,j} (\lambda-X_s)^{-(j+1)} +\sum_{j=1}^{r_s-1}j I_{X_s,j}(\lambda-X_s)^{j-1}+O((\lambda-X_s)^{r_s-1})\right)^2\cr
&& -\sum_{i=0}^{r_s-1}\sum_{j=0}^{r_s-1} t_{X_s,j}t_{X_s,i} (\lambda-X_s)^{-(j+i+2)} -2\sum_{i=0}^{r_s-1}\sum_{j=1}^{r_s-1}j I_{X_s,j} t_{X_s,i} (\lambda-X_s)^{j-i-2}+O((\lambda-X_s)^{-1})\cr
&&=-\sum_{k=r_s+1}^{2r_s}\sum_{j=k-r_s-1}^{r_s-1} t_{X_s,j}t_{X_s,k-j-2} (\lambda-X_s)^{-k}-\sum_{k=2}^{r_s}\sum_{j=0}^{k-2} t_{X_s,j}t_{X_s,k-j-2} (\lambda-X_s)^{-k}\cr
&&-2\sum_{k=2}^{r_s}\sum_{j=1}^{r_s+1-k}j I_{X_s,j} t_{X_s,k+j-2} (\lambda-X_s)^{-k}+O((\lambda-X_s)^{-1})
\eea}
\normalsize{but} we have from \eqref{tdP22}:
\beq P_{X_s,k}^{(2)}=\sum_{j=k-r_s-1}^{r_s-1} t_{X_s,j}t_{X_s,k-j-2} \,\,,\,\, \forall k\in \llbracket r_s+1,2r_s\rrbracket\eeq 
so that from Proposition \eqref{PropDettdL} we also have:
\bea \det\td{L}(\lambda)&=& \td{P}_2(\lambda) -\sum_{j=0}^{r_\infty-4}H_{\infty,j}\lambda^j-\sum_{s=1}^n\sum_{j=1}^{r_s}H_{X_s,j}(\lambda-X_s)^{-j}+ \delta_{r_\infty\geq 3}t_{\infty,r_\infty-1}\lambda^{r_\infty-3}\cr
&&+ \left[\td{L}_{1,2}\partial_\lambda\left(\frac{ \td{L}_{1,1}}{\td{L}_{1,2}}\right) \right]_{\infty,+}+\sum_{s=1}^n \left[\td{L}_{1,2}\partial_\lambda\left(\frac{ \td{L}_{1,1}}{\td{L}_{1,2}}\right)\right]_{X_s,-}\cr
&\overset{\lambda\to X_s}{=}& -\sum_{k=r_s+1}^{2r_s}\sum_{j=k-r_s-1}^{r_s-1} t_{X_s,j}t_{X_s,k-j-2} (\lambda-X_s)^{-k} +O((\lambda-X_s)^{-r_s})
\eea
because $\td{L}_{1,2}\overset{\lambda\to X_s}{=}O((\lambda-X_s)^{-r_s})$ and $\frac{ \td{L}_{1,1}}{\td{L}_{1,2}}\overset{\lambda\to X_s}{=}O(1)$.
Thus, coefficients of order $(\lambda-X_s)^{-(r_s+1)}$ up to $(\lambda-X_s)^{-2r_s}$ match, meaning that the coefficients appearing in the expansion of $\lambda_+(\lambda)$ at $\lambda\to X_s$ coincide with our irregular times and monodromy at $X_s$.

\begin{remark} \normalsize{The} situation at infinity for $r_\infty\leq 2$ requires special attention. Indeed, we have for $r_\infty=2$:
\beq \label{SpecialdettdL}\det\td{L}(\lambda)=-\left(t_{\infty,1} +t_{\infty,0}\lambda^{-1} +O(\lambda^{-2})\right)^2\overset{\lambda\to\infty}{=}-(t_{\infty,1})^2-2t_{\infty,1}t_{\infty,0}\lambda^{-1}+O(\lambda^{-2})
\eeq
From \eqref{Equivalence} and \eqref{ConditionsAddrinftyequal2} we have:
\bea\label{SpecialdettdL2} \det \td{L}(\lambda)&=&-(t_{\infty,1})^2 -\left(\sum_{s=1}^nH_{X_s,1} -\sum_{j=1}^n p_j\right)\lambda^{-1}+ \td{L}_{1,2}\partial_\lambda\left(\frac{ \td{L}_{1,1}}{\td{L}_{1,2}}\right) +O(\lambda^{-2})\cr
&=&-(t_{\infty,1})^2 +(-2t_{\infty,1}t_{\infty,0}+t_{\infty,1})\lambda^{-1}+ \td{L}_{1,2}\partial_\lambda\left(\frac{ \td{L}_{1,1}}{\td{L}_{1,2}}\right)+O(\lambda^{-2})\cr
&=&-(t_{\infty,1})^2-2t_{\infty,1}t_{\infty,0}\lambda^{-1} +O(\lambda^{-2})
\eea
because $\td{L}_{1,2}\partial_\lambda\left(\frac{ \td{L}_{1,1}}{\td{L}_{1,2}}\right)=(\omega \lambda^{-1}+O(\lambda^{-2}))\partial_\lambda\left(-\frac{t_{\infty,r_\infty-1}}{\omega}\lambda+O(1)\right)=-t_{\infty,r_\infty-1}\lambda^{-1}$. Thus, comparing \eqref{SpecialdettdL} and \eqref{SpecialdettdL2} implies that the time $t_{\infty,1}$ and the monodromy $t_{\infty,0}$ coincide in both setups.

For $r_\infty=1$, we have:
\beq \label{SpecialdettdL3}\det\td{L}(\lambda)=-\left(t_{\infty,0}\lambda^{-1} +O(\lambda^{-2})\right)^2=-(t_{\infty,0})^2\lambda^{-2}+O(\lambda^{-3})\eeq
From \eqref{Equivalence} and \eqref{ConditionsAddrinftyequal1} we have:
\bea \det \td{L}(\lambda)&=&-\left(-\sum_{s=1}^nH_{X_s,1} +\sum_{j=1}^n p_j\right)\lambda^{-1}\cr
&&-\left(\sum_{s=1}^n(H_{X_s,2}\delta_{r_s\geq 2}+X_sH_{X_s,1}) -\sum_{j=1}^n q_jp_j-\sum_{s=1}^n (t_{X_s,0})^2\delta_{r_s=1} \right)\lambda^{-2}\cr
&&+ \td{L}_{1,2}\partial_\lambda\left(\frac{ \td{L}_{1,1}}{\td{L}_{1,2}}\right)+O(\lambda^{-3})\cr
&=&-\left(t_{\infty,0}(t_{\infty,0}-1)\right)\lambda^{-2}+\td{L}_{1,2}\partial_\lambda\left(\frac{ \td{L}_{1,1}}{\td{L}_{1,2}}\right)+O(\lambda^{-3})
\eea
Since 
\beq \td{L}_{1,2}\partial_\lambda\left(\frac{ \td{L}_{1,1}}{\td{L}_{1,2}}\right)=(\omega \lambda^{-2}+O(\lambda^{-3}))\partial_{\lambda}\left( -\frac{t_{\infty,0}}{\omega}\lambda+O(1)\right)=-t_{\infty,0}\lambda^{-2}+O(\lambda^{-3})\eeq
we end up with
\beq \label{SpecialdettdL4}\det \td{L}(\lambda)=-(t_{\infty,0})^2\lambda^{-2}+O(\lambda^{-3}) \eeq
Thus, comparing \eqref{SpecialdettdL3} and \eqref{SpecialdettdL4} implies that the monodromy $t_{\infty,0}$ coincides in both setups.
\end{remark}

Hence, we conclude that \textbf{the irregular times $\mathbf{t}$ and monodromies $\mathbf{t_0}$ defined in this article identify exactly with those of \cite{BertolaHarnadHurtubise2022}}. The next step is to relate the spectral invariants $\mathbf{I}$ of \cite{BertolaHarnadHurtubise2022} with our set $\mathbf{H}$.

\subsection{Relation between spectral invariants $\mathbf{I}$ and $\mathbf{H}$}
The purpose of this section is to relate the spectral invariants $\mathbf{I}$ defined by \eqref{lambdaplusdef} with our set $\mathbf{H}$ defined by \eqref{L21}. In order to obtain the relation, let us look at the next orders of the expansion of $\det \td{L}(\lambda)$ at each pole.
For any $s\in \llbracket 1,n\rrbracket$, we have from Proposition \ref{PropDettdL}:
\bea\label{IdXs1} \det \td{L}(\lambda)&\overset{\lambda\to X_s}{=}& -\sum_{k=r_s+1}^{2r_s}\sum_{j=k-r_s-1}^{r_s-1} t_{X_s,j}t_{X_s,k-j-2} (\lambda-X_s)^{-k}-\sum_{k=1}^{r_s}H_{X_s,k}(\lambda-X_s)^{-k}\cr
&&+ \left[\td{L}_{1,2}\partial_\lambda\left(\frac{ \td{L}_{1,1}}{\td{L}_{1,2}}\right)\right]_{X_s,-}+O(1)\cr
&&
\eea
while from \eqref{MontrealXs} we have 
\bea \label{IdXs2} \det\td{L}(\lambda)&\overset{\lambda\to X_s}{=}&-\sum_{k=r_s+1}^{2r_s}\sum_{j=k-r_s-1}^{r_s-1} t_{X_s,j}t_{X_s,k-j-2} (\lambda-X_s)^{-k}-\sum_{k=2}^{r_s}\sum_{j=0}^{k-2} t_{X_s,j}t_{X_s,k-j-2} (\lambda-X_s)^{-k}\cr
&&-2\sum_{k=2}^{r_s}\sum_{j=1}^{r_s+1-k}j I_{X_s,j} t_{X_s,k+j-2} (\lambda-X_s)^{-k}+O((\lambda-X_s)^{-1})
\eea
Hence, identifying \eqref{IdXs1} and \eqref{IdXs2}, we get for all $k\in \llbracket 2, r_s\rrbracket$:
\beq H_{X_s,k}=\Res_{\lambda\to X_s}(\lambda-X_s)^{k-1}\td{L}_{1,2}(\lambda)\partial_\lambda\left(\frac{ \td{L}_{1,1}(\lambda)}{\td{L}_{1,2}(\lambda)}\right)+\sum_{j=0}^{k-2} t_{X_s,j}t_{X_s,k-j-2}+2\sum_{j=1}^{r_s+1-k}j I_{X_s,j} t_{X_s,k+j-2}
\eeq
These relations may be rewritten into a matrix form:
\small{\beq\label{MatrixIHXs} 2M_{X_s}(\mathbf{t})\begin{pmatrix}I_{X_s,1}\\2I_{X_s,2}\\\vdots\\ (r_{s}-1)I_{X_s,r_s-1} \end{pmatrix}= \begin{pmatrix}H_{X_s,r_s}\\ H_{X_s,r_s-1}\\\vdots \\ H_{X_s,2}\end{pmatrix} -\begin{pmatrix}\underset{j=0}{\overset{r_s-2}{\sum}} t_{X_s,j}t_{X_s,r_s-j-2}+\underset{\lambda\to X_s}{\Res}(\lambda-X_s)^{r_s-1}\td{L}_{1,2}(\lambda)\partial_\lambda\left(\frac{ \td{L}_{1,1}(\lambda)}{\td{L}_{1,2}(\lambda)}\right)\\
\underset{j=0}{\overset{r_s-3}{\sum}} t_{X_s,j}t_{X_s,r_s-j-3}+\underset{\lambda\to X_s}{\Res}(\lambda-X_s)^{r_s-2}\td{L}_{1,2}(\lambda)\partial_\lambda\left(\frac{ \td{L}_{1,1}(\lambda)}{\td{L}_{1,2}(\lambda)}\right)\\
\vdots\\
(t_{X_s,0})^2+\underset{\lambda\to X_s}{\Res}(\lambda-X_s)\td{L}_{1,2}(\lambda)\partial_\lambda\left(\frac{ \td{L}_{1,1}(\lambda)}{\td{L}_{1,2}(\lambda)}\right)
  \end{pmatrix}
\eeq}

\normalsize{At} infinity, for $r_\infty\geq 3$, we have from Proposition \ref{PropDettdL}:
\bea \label{IdInfty1}\det \td{L}(\lambda)&\overset{\lambda\to \infty}{=}& -\sum_{k=r_\infty-3}^{2r_\infty-4}\sum_{j=k+3-r_\infty}^{r_\infty-1} t_{\infty,j}t_{\infty,k-j+2}\lambda^k -\sum_{k=0}^{r_\infty-4}H_{\infty,k}\lambda^k+ t_{\infty,r_\infty-1}\lambda^{r_\infty-3}\cr
&&+ \left[\td{L}_{1,2}\partial_\lambda\left(\frac{ \td{L}_{1,1}}{\td{L}_{1,2}}\right) \right]_{\infty,+}+O(\lambda^{-1})
\eea
From \eqref{MontrealInfinity}, we have:
\bea \label{IdInfty2}\det\td{L}(\lambda)&\overset{\lambda\to\infty}{=}&-\sum_{k=r_\infty-3}^{2r_\infty-4}\sum_{j=k+3-r_\infty}^{r_\infty-1} t_{\infty,k+2-j}t_{\infty,j}\lambda^{k}-\sum_{k=0}^{r_\infty-4}\sum_{j=0}^{k+2} t_{\infty,k+2-j}t_{\infty,j}\lambda^{k}\cr
&&-2\sum_{k=-1}^{r_\infty-4}\sum_{j=1}^{r_\infty-3-k} j I_{\infty,j} t_{\infty,j+k+2}\lambda^{k}+O(\lambda^{-2})
\eea
For all $k \in \llbracket 0,r_\infty-4\rrbracket$, identifying \eqref{IdInfty1} and \eqref{IdInfty2} implies that:
\beq H_{\infty,k}=-\underset{\lambda\to \infty}{\Res}\lambda^{-k-1} \td{L}_{1,2}(\lambda)\partial_\lambda\left(\frac{ \td{L}_{1,1}(\lambda)}{\td{L}_{1,2}(\lambda)}\right) +\sum_{j=0}^{k+2} t_{\infty,k+2-j}t_{\infty,j}+2\sum_{j=1}^{r_\infty-3-k} j I_{\infty,j} t_{\infty,j+k+2}\eeq
These relations may be rewritten into a matrix form:
\small{\beq\label{MatrixIHInfty} 2M_{\infty}(\mathbf{t})\begin{pmatrix}I_{\infty,1}\\2I_{\infty,2}\\\vdots\\ (r_{\infty}-3)I_{\infty,r_\infty-3} \end{pmatrix}= \begin{pmatrix}H_{\infty,r_\infty-4}\\ H_{\infty,r_\infty-5}\\\vdots \\ H_{\infty,0}\end{pmatrix} -\begin{pmatrix}\underset{j=0}{\overset{r_\infty-2}{\sum}} t_{\infty,r_\infty-2-j}t_{\infty,j}-\underset{\lambda\to \infty}{\Res}\lambda^{-r_\infty+3} \td{L}_{1,2}(\lambda)\partial_\lambda\left(\frac{ \td{L}_{1,1}(\lambda)}{\td{L}_{1,2}(\lambda)}\right)\\
\underset{j=0}{\overset{r_\infty-3}{\sum}} t_{\infty,r_\infty-3-j}t_{\infty,j}-\underset{\lambda\to \infty}{\Res}\lambda^{-r_\infty+4} \td{L}_{1,2}(\lambda)\partial_\lambda\left(\frac{ \td{L}_{1,1}(\lambda)}{\td{L}_{1,2}(\lambda)}\right)\\
\vdots\\
\underset{j=0}{\overset{2}{\sum}} t_{\infty,2-j}t_{\infty,j}-\underset{\lambda\to \infty}{\Res}\lambda^{-1}\td{L}_{1,2}(\lambda)\partial_\lambda\left(\frac{ \td{L}_{1,1}(\lambda)}{\td{L}_{1,2}(\lambda)}\right)
  \end{pmatrix}
\eeq}

\normalsize{We} may combine \eqref{MatrixIHXs} and \eqref{MatrixIHInfty} with Theorem \ref{TheoHamiltonian} to obtain the following statement.

\begin{theorem}[Relation between Hamiltonians and spectral invariants]\label{TheoHamSpectral}The Hamiltonians relatively to the geometric Darboux coordinates $(\mathbf{Q},\mathbf{P})$ are related to the spectral invariants by
\bea \label{NewHamReduced3}&&\begin{pmatrix}\text{Ham}^{(\alpha_{t_{\infty,1}})}(\mathbf{Q},\mathbf{P},\mathbf{t},\mathbf{t}_0)\\ 2\text{Ham}^{(\alpha_{t_{\infty,2}})}(\mathbf{Q},\mathbf{P},\mathbf{t},\mathbf{t}_0)\\ \vdots \\ (r_\infty-3)\text{Ham}^{(\alpha_{t_{\infty,r_\infty -3}})}(\mathbf{Q},\mathbf{P},\mathbf{t},\mathbf{t}_0)\end{pmatrix}=2\begin{pmatrix}I_{\infty,1}\\2I_{\infty,2}\\\vdots\\ (r_{\infty}-3)I_{\infty,r_\infty-3} \end{pmatrix}\cr
&&+\left(M_\infty(\mathbf{t})\right)^{-1}\begin{pmatrix}\underset{j=0}{\overset{r_\infty-2}{\sum}} t_{\infty,r_\infty-2-j}t_{\infty,j}-\underset{\lambda\to \infty}{\Res}\lambda^{-r_\infty+3} \td{L}_{1,2}(\lambda)\partial_\lambda\left(\frac{ \td{L}_{1,1}(\lambda)}{\td{L}_{1,2}(\lambda)}\right)\\
\underset{j=0}{\overset{r_\infty-3}{\sum}} t_{\infty,r_\infty-3-j}t_{\infty,j}-\underset{\lambda\to \infty}{\Res}\lambda^{-r_\infty+4} \td{L}_{1,2}(\lambda)\partial_\lambda\left(\frac{ \td{L}_{1,1}(\lambda)}{\td{L}_{1,2}(\lambda)}\right)\\
\vdots\\
\underset{j=0}{\overset{2}{\sum}} t_{\infty,2-j}t_{\infty,j}-\underset{\lambda\to \infty}{\Res}\lambda^{-1}\td{L}_{1,2}(\lambda)\partial_\lambda\left(\frac{ \td{L}_{1,1}(\lambda)}{\td{L}_{1,2}(\lambda)}\right)
  \end{pmatrix}\cr
	&&
	\eea
and
	\bea
&&\begin{pmatrix}\text{Ham}^{(\alpha_{t_{X_s,1}})}(\mathbf{Q},\mathbf{P},\mathbf{t},\mathbf{t}_0)\\2\text{Ham}^{(\alpha_{t_{X_s,2}})}(\mathbf{Q},\mathbf{P},\mathbf{t},\mathbf{t}_0)\\ \vdots \\ (r_s-1)\text{Ham}^{(\alpha_{t_{X_s,r_s-1}})}(\mathbf{Q},\mathbf{P},\mathbf{t},\mathbf{t}_0)\end{pmatrix}=2\begin{pmatrix}I_{X_s,1}\\2I_{X_s,2}\\\vdots\\ (r_{s}-1)I_{X_s,r_s-1} \end{pmatrix}\cr
&&+ \left(M_{X_s}(\mathbf{t})\right)^{-1}\begin{pmatrix}\underset{j=0}{\overset{r_s-2}{\sum}} t_{X_s,j}t_{X_s,r_s-j-2}+\underset{\lambda\to X_s}{\Res}(\lambda-X_s)^{r_s-1}\td{L}_{1,2}(\lambda)\partial_\lambda\left(\frac{ \td{L}_{1,1}(\lambda)}{\td{L}_{1,2}(\lambda)}\right)\\
\underset{j=0}{\overset{r_s-3}{\sum}} t_{X_s,j}t_{X_s,r_s-j-3}+\underset{\lambda\to X_s}{\Res}(\lambda-X_s)^{r_s-2}\td{L}_{1,2}(\lambda)\partial_\lambda\left(\frac{ \td{L}_{1,1}(\lambda)}{\td{L}_{1,2}(\lambda)}\right)\\
\vdots\\
(t_{X_s,0})^2+\underset{\lambda\to X_s}{\Res}(\lambda-X_s)\td{L}_{1,2}(\lambda)\partial_\lambda\left(\frac{ \td{L}_{1,1}(\lambda)}{\td{L}_{1,2}(\lambda)}\right)
  \end{pmatrix}\cr
&&
\eea\normalsize{for} all $s\in \llbracket 1,n\rrbracket$.
\end{theorem}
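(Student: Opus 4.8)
The plan is to combine the two ingredients already assembled immediately before the statement. The first is the Hamiltonian formula of Theorem \ref{TheoHamiltonian}, which by Theorem \ref{TheoHamNew} holds verbatim in the $(\mathbf{Q},\mathbf{P})$ coordinates (the change of coordinates being time-independent and symplectic): it expresses each block of Hamiltonians as $(M_\infty(\mathbf{t}))^{-1}$, resp. $(M_{X_s}(\mathbf{t}))^{-1}$, applied to the column of the $H_{p,k}$'s, as recalled in \eqref{NewHamReduced2}. The second is the pair of matrix relations \eqref{MatrixIHInfty} and \eqref{MatrixIHXs}, which link these same $H_{p,k}$ to the spectral invariants $I_{p,k}$; these were just derived by identifying the two expansions of $\det\td{L}$ at each pole, namely Proposition \ref{PropDettdL} against the eigenvalue expansions \eqref{MontrealInfinity} and \eqref{MontrealXs}.

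First I would rearrange \eqref{MatrixIHInfty} to isolate the $H$-column, moving the correction column (the sums of irregular times plus the residues of $\td{L}_{1,2}\,\partial_\lambda(\td{L}_{1,1}/\td{L}_{1,2})$) to the right:
\begin{equation}
\begin{pmatrix}H_{\infty,r_\infty-4}\\\vdots\\H_{\infty,0}\end{pmatrix}=2M_{\infty}(\mathbf{t})\begin{pmatrix}I_{\infty,1}\\2I_{\infty,2}\\\vdots\\(r_\infty-3)I_{\infty,r_\infty-3}\end{pmatrix}+\mathbf{C}_{\infty},
\end{equation}
where $\mathbf{C}_{\infty}$ denotes the correction column of \eqref{MatrixIHInfty}, and perform the analogous rearrangement of \eqref{MatrixIHXs} with a column $\mathbf{C}_{X_s}$.

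Next I would substitute these $H$-columns into the corresponding lines of \eqref{NewHamReduced2} and multiply on the left by $(M_\infty(\mathbf{t}))^{-1}$, resp. $(M_{X_s}(\mathbf{t}))^{-1}$. The product $(M_\infty)^{-1}\cdot 2M_\infty$ collapses to $2\,\mathrm{Id}$, producing the explicit term $2\,(I_{\infty,1},2I_{\infty,2},\dots,(r_\infty-3)I_{\infty,r_\infty-3})^{t}$, while $(M_\infty)^{-1}\mathbf{C}_{\infty}$ reproduces exactly the residue/irregular-time column displayed in \eqref{NewHamReduced3}. The computation at $X_s$ is identical and yields the second displayed formula.

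The only point requiring care is index alignment: one must verify that the ordering of the $H$-components in \eqref{MatrixIHInfty}–\eqref{MatrixIHXs} matches the ordering used in Theorem \ref{TheoHamiltonian}, both employing $(H_{\infty,r_\infty-4},\dots,H_{\infty,0})^{t}$ and $(H_{X_s,r_s},\dots,H_{X_s,2})^{t}$, so that the same inverse matrix acts on both sides and the cancellation $(M)^{-1}\cdot 2M=2\,\mathrm{Id}$ is exact. There is no genuine analytic obstacle here: the substantive content lives entirely in the earlier derivation of \eqref{MatrixIHInfty}–\eqref{MatrixIHXs} from the two expansions of $\det\td{L}$, and the present statement is the clean linear-algebra consequence of feeding those relations into the Hamiltonian formula of Theorem \ref{TheoHamNew}.
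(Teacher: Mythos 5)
Your proposal is correct and is exactly the paper's own argument: the paper introduces Theorem \ref{TheoHamSpectral} with the single sentence that one "may combine \eqref{MatrixIHXs} and \eqref{MatrixIHInfty} with Theorem \ref{TheoHamiltonian}" (valid in the $(\mathbf{Q},\mathbf{P})$ coordinates by Theorem \ref{TheoHamNew}), which is precisely your substitution of the rearranged $H$-columns into \eqref{NewHamReduced2} followed by the cancellation $(M)^{-1}\cdot 2M = 2\,\mathrm{Id}$. Your attention to the index ordering of the $H$-columns is the right consistency check, and it does hold in both pairs of formulas.
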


Theorem \ref{TheoHamSpectral} shows that the spectral invariants are indeed related to the Hamiltonian of the system. However, there is no equality between both sets because of the non-trivial terms in the right-hand-sides. These terms are directly related to the gauge transformation $G(\lambda)$ going from the initial gauge $\td{L}$ to the oper gauge $L$ via the additional $(\partial_\lambda G)G^{-1}$ term that arises in the process. Note that this extra term does not appear when considering isospectral transformations, so it is not a surprise that this term is at the core of the difference between the isospectral world and the isomonodromic world. The main issue is thus to find a way to cancel this extra-term and it is precisely the purpose of \cite{BertolaHarnadHurtubise2022} to propose a solution by imposing the isospectral condition. As proved in \cite{BertolaHarnadHurtubise2022} and recalled in Theorem \ref{MontrealResults}, a possible solution is to impose some additional conditions (eq. \eqref{IsoCondition}) on the Lax system. In our setting, this additional condition is equivalent to perform a non-trivial (in the sense time-dependent and not symplectic) change of coordinates that modifies the Hamiltonians to cancel this extra-term. 

\subsection{Isospectral Darboux coordinates}
In this section, we look for an explicit change of coordinates $(\mathbf{Q},\mathbf{P})\rightarrow (\mathbf{u},\mathbf{v})$ so that the coordinates $(\mathbf{u},\mathbf{v})$ satisfy the isospectral condition \eqref{IsoCondition}. Consequently, for these isospectral coordinates, the spectral invariants $\mathbf{I}$ identify with the Hamiltonian of the system. Note that the choice of isospectral coordinates is not unique since any time-independent symplectic change of coordinates from any isospectral coordinates would still provide coordinates satisfying \eqref{IsoCondition}. Thus, only the explicit dependence in the irregular times is determined by condition \eqref{IsoCondition}, leaving any symplectic choice of coordinates to determine the time-independent part of the underlying symplectic form. In particular our isospectral Darboux coordinates $(\mathbf{u},\mathbf{v})$ are not canonical in general and therefore a time-independent but non-symplectic change of coordinates is required to trivialize the symplectic structure.
As explained above, the change of coordinates $(\mathbf{Q},\mathbf{P})\rightarrow (\mathbf{u},\mathbf{v})$ cannot be time-independent and symplectic since it is required to cancel the extra terms in Theorem \ref{TheoHamSpectral}. More specifically, we look for an explicit time dependence so that the new Hamiltonians shall equal the old ones plus the extra term on the r.h.s. of Theorem \ref{TheoHamSpectral}. When performing a time-dependent change of coordinates, it is easy to compute the additional terms obtained in the Hamiltonian. On the contrary, our problem goes in the opposite direction: we know the additional terms in the Hamiltonian system that we need to cancel and we want to obtain the time-dependence of the change of coordinates that would correspond to cancel them. To our knowledge, there does not exist any general method to ensure that such a change of coordinates exists nor a constructive method to obtain it in general.

\medskip

Fortunately, Theorem \ref{GeoLaxMatrices} (resp. Theorem \ref{GeoLaxMatricesQR}) provides the explicit expressions of the Lax matrices in terms of the coordinates $(\mathbf{Q},\mathbf{P})$ (resp. $(\mathbf{Q},\mathbf{R})$). Consequently a simple strategy consists in solving directly the isospectral condition \eqref{IsoCondition} using these Lax matrices and obtain the required time-dependence for the appropriate change of coordinates. 

\subsubsection{Differential system for the $\mathbf{Q}$ coordinates and definition of the $\mathbf{u}$ isospectral coordinates}

Let us start with entry $(1,2)$ of condition \eqref{IsoCondition}. From Theorem \ref{GeoLaxMatrices}, we have:
\small{\bea \label{EqEQ}&& \sum_{s=1}^n\sum_{k=2}^{r_s+1} (k-1)\frac{Q_{X_s,k-1}\delta^{(\boldsymbol{\alpha})}_{\mathbf{t}}[X_s]}{(\lambda-X_s)^{k}}+\sum_{s=1}^n\sum_{k=1}^{r_s} \frac{\delta^{(\boldsymbol{\alpha})}_{\mathbf{t}}[Q_{X_s,k}]}{(\lambda-X_s)^k}+\sum_{k=0}^{r_\infty-4} \delta^{(\boldsymbol{\alpha})}_{\mathbf{t}}[Q_{\infty,k}]\lambda^k+ \delta^{(\boldsymbol{\alpha})}_{\mathbf{t}}[\omega] \delta_{r_\infty\geq 3}\lambda^{r_\infty-3}\cr
&&=(r_\infty-4)\omega\,\nu_{\infty,1}^{(\boldsymbol{\alpha})}\lambda^{r_\infty-5}\delta_{r_\infty\geq 4}+\sum_{j=1}^{r_\infty-5}j\left(\omega\,\nu_{\infty,r_\infty-3-j}^{(\boldsymbol{\alpha})}+\sum_{k=j+1}^{r_\infty-3}\nu_{\infty,k-j}^{(\boldsymbol{\alpha})}Q_{\infty,k}\right)\lambda^{j-1}\cr
&&-\sum_{s=1}^n\sum_{k=2}^{r_s+1}(k-1)\left(\sum_{i=k-1}^{r_s} \nu_{X_s,i+1-k}^{(\boldsymbol{\alpha})}  Q_{X_s,i}\right)(\lambda-X_s)^{-k}+\text{reg}(\lambda)\cr
&&\eea}
\normalsize{where} the regular term is $O(\lambda^{-1})$ when $\lambda\to \infty$ and $O(1)$ when $\lambda\to X_s$ for any $s\in \llbracket 1,n\rrbracket$. Identifying the coefficients implies the following proposition:

\begin{proposition}\label{PropQCondition} Entry $(1,2)$ of the condition $\delta^{(\boldsymbol{\alpha})}_{\mathbf{t}}[\td{L}(\lambda)]=\partial_\lambda \td{A}_{\boldsymbol{\alpha}}(\lambda)$ is equivalent to
\bea  \delta^{(\boldsymbol{\alpha})}_{\mathbf{t}}[\omega]&=&-\omega\, \nu_{\infty,-1}^{(\boldsymbol{\alpha})} \delta_{r_\infty=1} \cr 
\delta^{(\boldsymbol{\alpha})}_{\mathbf{t}}[X_s]&=&\alpha_{X_s,0} \,\,,\,\, \forall\, s\in \llbracket 1,n\rrbracket\cr
\delta^{(\boldsymbol{\alpha})}_{\mathbf{t}}[Q_{X_s,1}]&=&0\,\,,\,\, \forall\, s\in \llbracket 1,n\rrbracket\cr
\delta^{(\boldsymbol{\alpha})}_{\mathbf{t}}[Q_{X_s,k}]&=&- (k-1)\sum_{i=k}^{r_s} \nu_{X_s,i+1-k}^{(\boldsymbol{\alpha})}  Q_{X_s,i}\,\,,\,\,\forall (s,k)\in \llbracket 1,n\rrbracket\times\llbracket 2,r_s\rrbracket\cr
\delta^{(\boldsymbol{\alpha})}_{\mathbf{t}}[Q_{\infty,r_\infty-4}]&=&0\cr
\delta^{(\boldsymbol{\alpha})}_{\mathbf{t}}[Q_{\infty,r_\infty-5}]&=&(r_\infty-4)\omega\,\nu_{\infty,1}^{(\boldsymbol{\alpha})}\cr
\delta^{(\boldsymbol{\alpha})}_{\mathbf{t}}[Q_{\infty,m}]&=&(m+1)\left(\omega\,\nu_{\infty,r_\infty-4-m}^{(\boldsymbol{\alpha})}+\sum_{k=m+2}^{r_\infty-4}\nu_{\infty,k-m-1}^{(\boldsymbol{\alpha})}Q_{\infty,k}\right)\,\,,\,\, \forall\, m\in \llbracket 0, r_\infty-6\rrbracket\cr
&&
\eea
\end{proposition}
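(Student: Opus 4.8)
The plan is to expand both sides of entry $(1,2)$ of the isospectral condition \eqref{IsoCondition} using the explicit expressions for $\td{L}_{1,2}(\lambda)$ and $[\td{A}_{\boldsymbol{\alpha}}(\lambda)]_{1,2}$ furnished by Theorem \ref{GeoLaxMatrices}, thereby establishing equation \eqref{EqEQ}, and then to identify coefficients of each power of $\lambda$ at $\infty$ and of $(\lambda-X_s)$ at each finite pole. Since a rational function with prescribed poles vanishes if and only if all its Laurent coefficients vanish, the coefficient-by-coefficient matching is an \emph{equivalence}, which is exactly what the statement asserts.

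First I would compute the left-hand side $\delta^{(\boldsymbol{\alpha})}_{\mathbf{t}}[\td{L}_{1,2}(\lambda)]$. The key point is that $\delta^{(\boldsymbol{\alpha})}_{\mathbf{t}}$ differentiates only the explicit dependence on the irregular times, on $\omega$, and on the pole positions $X_s$ (at fixed $\lambda$), while the $Q_{p,k}$ are treated as coordinates whose explicit time-dependence is the unknown being solved for. Applying $\delta^{(\boldsymbol{\alpha})}_{\mathbf{t}}$ to $\frac{Q_{X_s,k}}{(\lambda-X_s)^k}$ thus produces two contributions, $\frac{\delta^{(\boldsymbol{\alpha})}_{\mathbf{t}}[Q_{X_s,k}]}{(\lambda-X_s)^k}$ and the pole-motion term $\frac{k\,Q_{X_s,k}\,\delta^{(\boldsymbol{\alpha})}_{\mathbf{t}}[X_s]}{(\lambda-X_s)^{k+1}}$, the latter yielding the first sum in \eqref{EqEQ} after the reindexing $k\mapsto k-1$; the polynomial part contributes $\sum_k \delta^{(\boldsymbol{\alpha})}_{\mathbf{t}}[Q_{\infty,k}]\lambda^k$ together with $\delta^{(\boldsymbol{\alpha})}_{\mathbf{t}}[\omega]\,\delta_{r_\infty\geq 3}\lambda^{r_\infty-3}$.

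Next I would differentiate $[\td{A}_{\boldsymbol{\alpha}}(\lambda)]_{1,2}$ term by term: the polynomial part gives $(r_\infty-4)\omega\,\nu_{\infty,1}^{(\boldsymbol{\alpha})}\lambda^{r_\infty-5}\delta_{r_\infty\geq 4}$ and $\sum_j j(\cdots)\lambda^{j-1}$, while each $(\lambda-X_s)^{-j}$ term produces $-j(\cdots)(\lambda-X_s)^{-j-1}$, recovering the third sum of \eqref{EqEQ}. Matching coefficients at infinity, the powers $\lambda^{r_\infty-3}$, $\lambda^{r_\infty-4}$, $\lambda^{r_\infty-5}$ and $\lambda^{m}$ with $m\in\llbracket 0,r_\infty-6\rrbracket$ give respectively $\delta^{(\boldsymbol{\alpha})}_{\mathbf{t}}[\omega]=0$ (for $r_\infty\geq 3$), $\delta^{(\boldsymbol{\alpha})}_{\mathbf{t}}[Q_{\infty,r_\infty-4}]=0$, $\delta^{(\boldsymbol{\alpha})}_{\mathbf{t}}[Q_{\infty,r_\infty-5}]=(r_\infty-4)\omega\,\nu_{\infty,1}^{(\boldsymbol{\alpha})}$, and the stated formula for $\delta^{(\boldsymbol{\alpha})}_{\mathbf{t}}[Q_{\infty,m}]$. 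At each finite pole, the coefficient of $(\lambda-X_s)^{-1}$ gives $\delta^{(\boldsymbol{\alpha})}_{\mathbf{t}}[Q_{X_s,1}]=0$, while for $k\in\llbracket 2,r_s\rrbracket$ the coefficient of $(\lambda-X_s)^{-k}$ yields
\[(k-1)Q_{X_s,k-1}\,\delta^{(\boldsymbol{\alpha})}_{\mathbf{t}}[X_s]+\delta^{(\boldsymbol{\alpha})}_{\mathbf{t}}[Q_{X_s,k}]=-(k-1)\sum_{i=k-1}^{r_s}\nu_{X_s,i+1-k}^{(\boldsymbol{\alpha})}Q_{X_s,i}.\]
Isolating the $i=k-1$ summand on the right and using $\nu_{X_s,0}^{(\boldsymbol{\alpha})}=-\alpha_{X_s}$ from Proposition \ref{nus}, that term equals $(k-1)\alpha_{X_s}Q_{X_s,k-1}$ and cancels the pole-motion term on the left exactly when one sets $\delta^{(\boldsymbol{\alpha})}_{\mathbf{t}}[X_s]=\alpha_{X_s}$ (denoted $\alpha_{X_s,0}$), leaving the claimed expression for $\delta^{(\boldsymbol{\alpha})}_{\mathbf{t}}[Q_{X_s,k}]$.

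The main obstacle is the bookkeeping around the moving poles: $\delta^{(\boldsymbol{\alpha})}_{\mathbf{t}}[X_s]$ enters both through differentiation of the explicit $X_s$-dependence on the left and through the residue $\nu_{X_s,0}^{(\boldsymbol{\alpha})}$ on the right, and it is precisely the cancellation of these two contributions that both forces the relation $\delta^{(\boldsymbol{\alpha})}_{\mathbf{t}}[X_s]=\alpha_{X_s,0}$ and disentangles $\delta^{(\boldsymbol{\alpha})}_{\mathbf{t}}[Q_{X_s,k}]$. The case $r_\infty=1$ requires separate care, since there $\omega$ genuinely depends on the times; feeding in the extra conditions \eqref{ExtraConditionsrinftyequal1} recovers $\delta^{(\boldsymbol{\alpha})}_{\mathbf{t}}[\omega]=-\omega\,\nu_{\infty,-1}^{(\boldsymbol{\alpha})}$, completing the list.
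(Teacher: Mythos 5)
Your overall strategy is exactly the paper's: differentiate the expressions of Theorem \ref{GeoLaxMatrices} to obtain \eqref{EqEQ}, then identify Laurent coefficients pole by pole. However, as written there is a genuine gap in the \emph{necessity} direction of the claimed equivalence at the finite poles. You only examine the coefficients of $(\lambda-X_s)^{-k}$ for $k\in\llbracket 1,r_s\rrbracket$, and each of these equations couples the two unknowns $\delta^{(\boldsymbol{\alpha})}_{\mathbf{t}}[X_s]$ and $\delta^{(\boldsymbol{\alpha})}_{\mathbf{t}}[Q_{X_s,k}]$: for \emph{any} prescribed value of $\delta^{(\boldsymbol{\alpha})}_{\mathbf{t}}[X_s]$ these equations can be solved for the $\delta^{(\boldsymbol{\alpha})}_{\mathbf{t}}[Q_{X_s,k}]$, so this subset of orders does not force $\delta^{(\boldsymbol{\alpha})}_{\mathbf{t}}[X_s]=\alpha_{X_s,0}$. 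Your own phrasing betrays this: the cancellation happens ``exactly when one \emph{sets} $\delta^{(\boldsymbol{\alpha})}_{\mathbf{t}}[X_s]=\alpha_{X_s}$'', i.e.\ you impose the relation (which proves sufficiency) rather than derive it. The missing ingredient is the top-order coefficient $(\lambda-X_s)^{-(r_s+1)}$: both sums in \eqref{EqEQ} run up to $k=r_s+1$, and at that order the left-hand side contributes only the pole-motion term $r_s\,Q_{X_s,r_s}\,\delta^{(\boldsymbol{\alpha})}_{\mathbf{t}}[X_s]$ (there is no coordinate $Q_{X_s,r_s+1}$ to differentiate), while the right-hand side contributes $-r_s\,\nu_{X_s,0}^{(\boldsymbol{\alpha})}Q_{X_s,r_s}$; since $Q_{X_s,r_s}\neq 0$ (the pole of $\td{L}_{1,2}$ at $X_s$ has exact order $r_s$), this single equation pins down $\delta^{(\boldsymbol{\alpha})}_{\mathbf{t}}[X_s]=-\nu_{X_s,0}^{(\boldsymbol{\alpha})}=\alpha_{X_s,0}$. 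This is precisely how the paper proceeds (eq.\ \eqref{deltatXs}), after which substitution into the orders $k\in\llbracket 2,r_s\rrbracket$ disentangles the $\delta^{(\boldsymbol{\alpha})}_{\mathbf{t}}[Q_{X_s,k}]$ exactly as you describe.

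A second, smaller omission concerns $r_\infty=2$. Your derivation of $\delta^{(\boldsymbol{\alpha})}_{\mathbf{t}}[\omega]=0$ uses the coefficient of $\lambda^{r_\infty-3}$, which only exists for $r_\infty\geq 3$ (the term carries $\delta_{r_\infty\geq 3}$), and you treat $r_\infty=1$ separately, but the statement's formula $\delta^{(\boldsymbol{\alpha})}_{\mathbf{t}}[\omega]=-\omega\,\nu_{\infty,-1}^{(\boldsymbol{\alpha})}\delta_{r_\infty=1}$ also asserts $\delta^{(\boldsymbol{\alpha})}_{\mathbf{t}}[\omega]=0$ when $r_\infty=2$, and this case is never verified. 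There it follows from the order-$\lambda^{-1}$ identity $\sum_{s=1}^n\delta^{(\boldsymbol{\alpha})}_{\mathbf{t}}[Q_{X_s,1}]=0$ combined with the constraint $\sum_{s=1}^n Q_{X_s,1}=\omega$ of \eqref{AddConstrains}, in the same spirit as your $r_\infty=1$ argument (which, for completeness, needs \eqref{AddConstrains2} as well as \eqref{ExtraConditionsrinftyequal1}). Both repairs are local and fit your framework, but without them the proposal does not establish the stated equivalence.
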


\begin{proof}
Looking at order $(\lambda-X_s)^{-r_s-1}$ provides:
\beq \label{deltatXs}\delta^{(\boldsymbol{\alpha})}_{\mathbf{t}}[X_s]=- \nu_{X_s,0}^{(\boldsymbol{\alpha})}=\alpha_{X_s,0} \,\, \forall\, s\in \llbracket 1,n\rrbracket\eeq
Order $(\lambda-X_s)^{-k}$ with $k\in \llbracket 2,r_s\rrbracket$ provides:
\beq (k-1)Q_{X_s,k-1}\delta^{(\boldsymbol{\alpha})}_{\mathbf{t}}[X_s]+\delta^{(\boldsymbol{\alpha})}_{\mathbf{t}}[Q_{X_s,k}]=-(k-1)\left(\nu_{X_s,0}^{(\boldsymbol{\alpha})}Q_{X_s,k-1} +\sum_{i=k}^{r_s} \nu_{X_s,i+1-k}^{(\boldsymbol{\alpha})}  Q_{X_s,i}\right)\eeq
i.e. using \eqref{deltatXs}
\beq \delta^{(\boldsymbol{\alpha})}_{\mathbf{t}}[Q_{X_s,k}]=- (k-1)\sum_{i=k}^{r_s} \nu_{X_s,i+1-k}^{(\boldsymbol{\alpha})}  Q_{X_s,i}\,\,,\,\,\forall (s,k)\in \llbracket 1,n\rrbracket\times\llbracket 2,r_s\rrbracket\eeq
Order $(\lambda-X_s)^{-1}$ simply provides $\delta_{\mathbf{t}}[Q_{X_s,1}]=0$ for all $s\in \llbracket 1,n\rrbracket$. 
Let us look at the situation at infinity depending on the value of $r_\infty$.
\begin{itemize}\item If $r_\infty\geq 3$, order $\lambda^{r_\infty-3}$ provides $\delta^{(\boldsymbol{\alpha})}_{\mathbf{t}}[\omega]=0$. Then if $r_\infty\geq 4$, order $\lambda^{r_\infty-4}$ provides $\delta^{(\boldsymbol{\alpha})}_{\mathbf{t}}[Q_{\infty,r_\infty-4}]=0$. For $r_\infty\geq 5$, order $\lambda^{r_\infty-5}$ provides $\delta^{(\boldsymbol{\alpha})}_{\mathbf{t}}[Q_{\infty,r_\infty-5}]=(r_\infty-4)\nu_{\infty,1}^{(\boldsymbol{\alpha})}$ and for all $m\in \llbracket 0, r_\infty-6\rrbracket$ order $\lambda^m$ provides
\beq \delta^{(\boldsymbol{\alpha})}_{\mathbf{t}}[Q_{\infty,m}]=(m+1)\left(\nu_{\infty,r_\infty-4-m}^{(\boldsymbol{\alpha})}+\sum_{k=m+2}^{r_\infty-4}\nu_{\infty,k-m-1}^{(\boldsymbol{\alpha})}Q_{\infty,k}\right)\eeq
\item For $r_\infty=2$, order $\lambda^{-1}$ of \eqref{EqEQ} is equivalent to
\beq \sum_{s=1}^n \delta^{(\boldsymbol{\alpha})}_{\mathbf{t}}[Q_{X_s,1}]=0\eeq
But from \eqref{AddConstrains}, this is equivalent to $\delta^{(\boldsymbol{\alpha})}_{\mathbf{t}}[\omega]=0$.
\item For $r_\infty=1$, order $\lambda^{-1}$ of \eqref{EqEQ} is equivalent to $\underset{s=1}{\overset{n}{\sum}} \delta^{(\boldsymbol{\alpha})}_{\mathbf{t}}[Q_{X_s,1}]=0$ which is consistent with \eqref{AddConstrains2}. Order $\lambda^{-2}$ of \eqref{EqEQ} is equivalent to
\beq \sum_{s=1}^n Q_{X_s,1}\delta^{(\boldsymbol{\alpha})}_{\mathbf{t}}[X_s]+\sum_{s=1}^n\sum_{k=1}^{r_s} X_s\delta^{(\boldsymbol{\alpha})}_{\mathbf{t}}[Q_{X_s,1}]+\delta^{(\boldsymbol{\alpha})}_{\mathbf{t}}[Q_{X_s,2}]=-\sum_{s=1}^n\sum_{i=1}^{r_s} \nu_{X_s,i-1}^{(\boldsymbol{\alpha})}  Q_{X_s,i}\eeq
From \eqref{AddConstrains2}, this is equivalent to
\beq \delta^{(\boldsymbol{\alpha})}_{\mathbf{t}}[\omega]=-\sum_{s=1}^n\sum_{i=1}^{r_s} \nu_{X_s,i-1}^{(\boldsymbol{\alpha})}  Q_{X_s,i}\overset{\eqref{ExtraConditionsrinftyequal1}}{=}-\omega\, \nu_{\infty,-1}^{(\boldsymbol{\alpha})}\eeq
\end{itemize}
\end{proof}

A solution to Proposition \ref{PropQCondition} is given by the following theorem.

\begin{theorem}\label{ExplicitDependenceQ}The conditions $\delta^{(\boldsymbol{\alpha})}_{\mathbf{t}}[\td{L}_{1,2}(\lambda)]=\partial_\lambda \td{A}_{1,2}(\lambda)$ are equivalent, for all $s\in \llbracket 1,n\rrbracket$, to $Q_{X_s,1}=u_{X_s,1}$ independent of the irregular times and to solve the $(r_s-1)\times(r_s-1)$ differential system
\footnotesize{\bea\label{SystemXss} &&\begin{pmatrix}t_{X_s,r_s-1}&0&\dots&0\\
t_{X_s,r_s-2}&t_{X_s,r_s-1}& &0\\
\vdots& \ddots &\ddots &\vdots \\
t_{X_s,1}&\dots&t_{X_s,r_s-2}&t_{X_s,r_s-1}
\end{pmatrix}
\begin{pmatrix}\frac{1}{r_s-1}&0&\dots&0\\
0&\ddots& &0\\
\vdots& & \ddots&0\\
0&\dots&0 &\frac{1}{1} \end{pmatrix} \begin{pmatrix} \delta_{t_{X_s,r_s-1}}[Q_{X_s,r_s}]&\dots& \delta_{t_{X_s,1}}[Q_{X_s,r_s}]\\
% \delta_{t_{X_s,r_s-1}}[Q_{X_s,r_s-1}]&\dots&\dots& \delta_{t_{X_s,1}}[Q_{X_s,r_s-1}]\\
\vdots&\vdots&\vdots\\
\delta_{t_{X_s,r_s-1}}[Q_{X_s,2}]&\dots& \delta_{t_{X_s,1}}[Q_{X_s,2}]
\end{pmatrix}\cr
&&=\begin{pmatrix}Q_{X_s,r_s}&0&\dots& &\dots &0\\
Q_{X_s,r_s-1}&Q_{X_s,r_s}& 0& & &\vdots\\
\vdots & \ddots&\ddots &\ddots  & &\vdots\\
\vdots &\ddots&\ddots&\ddots&0&\vdots\\
Q_{X_s,3}&\ddots &\ddots&\ddots& Q_{X_s,r_s}&0\\
Q_{X_s,2}&Q_{X_s,3}& \dots & & Q_{X_s,r_s-1}& Q_{X_s,r_s}
 \end{pmatrix}\begin{pmatrix}\frac{1}{r_s-1}&0&\dots&0\\
0&\ddots& &0\\
\vdots& & \ddots&0\\
0&\dots&0 &\frac{1}{1} \end{pmatrix}
\cr
&&
\eea}
\normalsize{whose} solutions are of the form:
\footnotesize{\beq \begin{pmatrix}Q_{X_s,r_s}\\ Q_{X_s,r_s-1}\\ \vdots\\\vdots\\ Q_{X_s,2}\end{pmatrix}=
\begin{pmatrix} f^{(X_s)}_{1,1}(t_{X_s,r_s-1})&0&\dots&\dots& 0\\
 f^{(X_s)}_{2,1}(t_{X_s,r_s-2},t_{X_s,r_s-1})& f^{(X_s)}_{2,2}(t_{X_s,r_s-1})&0&&\vdots\\
\vdots&& \ddots& \ddots &\vdots\\
f^{(X_s)}_{r_s-2,1}(t_{X_s,2},\dots,t_{X_s,r_s-1})&\dots&\dots& f^{(X_s)}_{r_s-2,r_s-1}(t_{X_s,r_s-1})&0\\
f^{(X_s)}_{r_s-1,1}(t_{X_s,1},\dots,t_{X_s,r_s-1})&\dots&&\dots& f^{(X_s)}_{r_s-1,r_s-1}(t_{X_s,r_s-1})
\end{pmatrix}
\begin{pmatrix}u_{X_s,r_s}\\ u_{X_s,r_s-1}\\\vdots\\ \vdots\\ u_{X_s,2}
\end{pmatrix} 
\eeq}
\normalsize{with} all $\left(u_{X_s,k}\right)_{1\leq k\leq r_s}$ independent of the irregular times. Moreover, we have:
\bea f^{(X_s)}_{j,j}(t_{X_s,r_s-1})&=&(t_{X_s,r_s-1})^{\frac{r_s-j}{r_s-1}}\,\,,\,\, \forall\, j\in \llbracket 1,r_s-1\rrbracket\cr
f^{(X_s)}_{j+1,j}(t_{X_s,r_s-2},t_{X_s,r_s-1})&=&\frac{r_s-j-1}{r_s-2}(t_{X_s,r_s-1})^{\frac{1-j}{r_s-1}}t_{X_s,r_s-2}\,\,,\,\, \forall\, j\in \llbracket 1,r_s-2\rrbracket\cr
f^{(X_s)}_{j,1}(t_{X_s,r_s-j},\dots,t_{X_s,r_s-1})&=&t_{X_s,r_s-j}\,\,,\,\, \forall\, j\in \llbracket 1,r_s-1\rrbracket
\eea

At infinity we have for $r_\infty\geq 2$ that $\omega$ is independent of the irregular times and for $r_\infty\geq 4$, $Q_{\infty,r_\infty-4}=\omega \,u_{\infty,r_\infty-4}$ independent of the irregular times. Moreover, for $r_\infty\geq 5$ we have to solve the $(r_\infty-4)\times(r_\infty-4)$ differential system
\footnotesize{\bea&&\begin{pmatrix}1&0&\dots&&0\\
0&1& &&0\\
t_{\infty,r_\infty-3}&0&\ddots &\ddots &\vdots\\
\vdots& \ddots &\ddots &\ddots&\vdots \\
t_{\infty,4}&\dots&t_{\infty,r_\infty-3}&0&1
\end{pmatrix}\begin{pmatrix}\frac{1}{r_\infty-4}&0&\dots&0\\
0&\frac{1}{r_\infty-5}& &0\\
\vdots& \ddots &\ddots &\vdots \\
0&\dots&\dots &\frac{1}{1}
\end{pmatrix} \begin{pmatrix} \delta_{t_{\infty,r_\infty-3}}[Q_{\infty,r_\infty-5}]& \dots& \delta_{t_{\infty,2}}[Q_{\infty,r_\infty-5}]\\
\vdots\\
\delta_{t_{\infty,r_\infty-3}}[Q_{\infty,0}]& \dots &\delta_{t_{\infty,2}}[Q_{\infty,0}]\end{pmatrix}\cr 
&&=\begin{pmatrix}\omega&0&\dots&0\\
Q_{\infty,r_\infty-4}&\omega& &0\\
\vdots& \ddots &\ddots &\vdots \\
Q_{\infty,2}&\dots&Q_{\infty,r_\infty-4}&\omega
\end{pmatrix}\begin{pmatrix}\frac{1}{r_\infty-3}&0&\dots&0\\
0&\frac{1}{r_\infty-4}& &0\\
\vdots& \ddots &\ddots &\vdots \\
0&\dots&\dots &\frac{1}{2}
\end{pmatrix}\cr
&&\eea
}
\normalsize{whose} solutions are of the form
\footnotesize{\beq \begin{pmatrix}\omega\\ Q_{\infty,r_\infty-4}\\ Q_{\infty,r_\infty-5}\\\vdots\\ \vdots\\ Q_{\infty,0}\end{pmatrix}=\omega\,
\begin{pmatrix}1&0&0&\dots&\dots&0\\
0&1&0&\ddots&&\vdots\\
f^{(\infty)}_{3,1}(t_{\infty,r_\infty-3})&0&1&0&&0\\
f^{(\infty)}_{4,1}(t_{\infty,r_\infty-4},t_{\infty,r_\infty-3})&f^{(\infty)}_{4,2}(t_{\infty,r_\infty-3})&0&1&\ddots&\vdots\\
\vdots& \ddots&&&\ddots&0\\
f^{(\infty)}_{r_\infty-2,1}(,t_{\infty,2},\dots,t_{\infty,r_\infty-3})&\dots&&f^{(\infty)}_{r_\infty-2,r_\infty-4}(t_{\infty,r_\infty-3})&0&1
\end{pmatrix}
\begin{pmatrix}1\\ u_{\infty,r_\infty-4}\\ u_{\infty,r_\infty-5}\\ \vdots\\ \vdots\\ u_{\infty,0}\end{pmatrix}
\eeq}
\normalsize{with} all $\left(u_{\infty,k}\right)_{0\leq k\leq r_\infty-4}$ independent of the irregular times. Moreover, we have
\bea f^{(\infty)}_{k+2,k}(t_{\infty,r_\infty-1})&=&\frac{r_\infty-3-k}{r_\infty-3} t_{\infty,r_\infty-3}  \,,\,\, \forall \, k\in \llbracket 1, r_\infty-4\rrbracket\cr
f^{(\infty)}_{k+3,k}(t_{\infty,r_\infty-2},t_{\infty,r_\infty-1})&=&\frac{r_\infty-4-k}{r_\infty-4} t_{\infty,r_\infty-4}  \,,\,\, \forall \, k\in \llbracket 1, r_\infty-5\rrbracket
\eea

Finally, for $r_\infty=2$, we have the extra condition:
\beq \omega=\sum_{s=1}^n u_{X_s,1}\eeq
while for $r_\infty=1$, we have the extra conditions:
\beq  \label{OmegaDeform}\sum_{s=1}^n u_{X_s,1}=0\,\,,\,\,\, \omega=\sum_{s=1}^n X_s u_{X_s,1}+ \sum_{s=1}^n\sum_{j=1}^{r_s-1} f^{(X_s)}_{r_s-1,j}(t_{X_s,j},\dots,t_{X_s,r_s-1}) u_{X_s,r_s+1-j}
\eeq
\end{theorem}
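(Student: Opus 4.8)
The plan is to take the scalar relations of Proposition~\ref{PropQCondition} as the starting point and to integrate them explicitly, exploiting the lower-triangular Toeplitz structure inherited from Proposition~\ref{nus}. First I would decouple the poles. By Proposition~\ref{nus} the vector $\boldsymbol{\nu}^{(\boldsymbol{\alpha})}_{X_s}$ depends only on the components $(\alpha_{X_s,k})_k$ of $\boldsymbol{\alpha}$ through $M_{X_s}^{-1}$, and $\boldsymbol{\nu}^{(\boldsymbol{\alpha})}_{\infty}$ only on $(\alpha_{\infty,k})_k$. Choosing $\boldsymbol{\alpha}=\mathbf{e}_{X_s,j}$ therefore makes $\nu^{(\boldsymbol{\alpha})}_{X_{s'},\cdot}=0$ for $s'\neq s$ and $\nu^{(\boldsymbol{\alpha})}_{\infty,\cdot}=0$, so the formulas of Proposition~\ref{PropQCondition} show that $\delta_{t_{X_s,j}}$ moves only the coordinates $Q_{X_s,\cdot}$ attached to the same pole (and $\delta_{t_{\infty,j}}$ only the $Q_{\infty,\cdot}$). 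This reduces the problem to one closed differential system per pole.

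Next, writing $\nu^{(\mathbf{e}_{X_s,j})}_{X_s,\ell}=-\tfrac1j (M_{X_s}^{-1})_{\ell,r_s-j}$ and substituting into the relation $\delta_{t_{X_s,j}}[Q_{X_s,k}]=-(k-1)\sum_{i=k}^{r_s}\nu^{(\mathbf{e}_{X_s,j})}_{X_s,i+1-k}Q_{X_s,i}$, I would multiply by $\tfrac{j}{k-1}$ and apply $M_{X_s}$ on the left. This cancels $M_{X_s}^{-1}$ and turns the right-hand side into a convolution of the $Q_{X_s,\cdot}$, which is exactly the lower-triangular Toeplitz matrix of \eqref{SystemXss}, the factors $k-1$ and $j$ being absorbed into the diagonal matrices $\diag(\tfrac1{r_s-1},\dots,\tfrac11)$. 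This yields the matrix form \eqref{SystemXss}, together with the crucial structural fact that the evolution of $Q_{X_s,k}$ involves only $Q_{X_s,i}$ with $i\ge k$, i.e. the system is triangular and can be integrated from the top coordinate $Q_{X_s,r_s}$ downwards.

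I would then solve it via the ansatz $Q_{X_s}=F^{(X_s)}(\mathbf{t})\,u_{X_s}$ with $F^{(X_s)}$ lower-triangular and $\delta_{\mathbf{t}}[u_{X_s}]=0$, which forces $F^{(X_s)}$ to solve the same system. The diagonal follows from the leading scalar equation $\delta_{t_{X_s,r_s-1}}[Q_{X_s,k}]=\tfrac{k-1}{(r_s-1)t_{X_s,r_s-1}}Q_{X_s,k}+(\text{terms }i>k)$, coming from the top row $t_{X_s,r_s-1}\nu^{(\boldsymbol{\alpha})}_{X_s,1}=-\tfrac{\alpha_{X_s,r_s-1}}{r_s-1}$ of $M_{X_s}\boldsymbol{\nu}=\cdots$, whose homogeneous solution is the power $(t_{X_s,r_s-1})^{(r_s-j)/(r_s-1)}$; the first sub-diagonal and first-column entries then come out by integrating the next inhomogeneous orders using the first two and the last columns of $M_{X_s}^{-1}$. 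For infinity the computation is identical after using the trivial-time normalization $t_{\infty,r_\infty-1}=1$, $t_{\infty,r_\infty-2}=0$ of Proposition~\ref{PropTrivialTimes}, which makes $M_\infty$ unipotent (ones on the diagonal, zeros on the first sub-diagonal) and explains the shape of the infinity system; $\omega$ being time-independent for $r_\infty\ge2$ lets me factor it out and places $Q_{\infty,r_\infty-4}=\omega u_{\infty,r_\infty-4}$ on the constant diagonal. The low-$r_\infty$ features are handled separately: for $r_\infty=2$, $\delta_{\mathbf{t}}[\omega]=0$ with $\delta_{\mathbf{t}}[Q_{X_s,1}]=0$ and the constraint $\sum_s Q_{X_s,1}=\omega$ of \eqref{AddConstrains} give $\omega=\sum_s u_{X_s,1}$; for $r_\infty=1$, $\delta_{\mathbf{t}}[\omega]=-\omega\nu^{(\boldsymbol{\alpha})}_{\infty,-1}$ is nonzero, so $\omega$ acquires an explicit time-dependence, and feeding the bottom row $Q_{X_s,2}=\sum_j f^{(X_s)}_{r_s-1,j}u_{X_s,r_s+1-j}$ into the second constraint of \eqref{AddConstrains2} produces \eqref{OmegaDeform}.

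The main obstacle I anticipate is the explicit integration itself: one must check that the triangular system is Frobenius-integrable (the mixed derivatives $\delta_{t_{X_s,j}}\delta_{t_{X_s,j'}}[Q]$ commute, which is forced by the commuting deformation flows but should be verified from the Toeplitz algebra) and then produce the transition matrix $F^{(X_s)}$ in closed form. Identifying the three stated families of entries rests on computing the relevant columns of $M_{X_s}^{-1}$, i.e. the reciprocal of the Toeplitz symbol, which is the computational heart of the argument; the infinity case and the low-$r_\infty$ constraints are comparatively routine once the finite-pole system has been solved.
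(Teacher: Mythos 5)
Your proposal is correct and takes essentially the same route as the paper's own proof (Appendix \ref{AppendixExplicitDependenceQ}): start from Proposition \ref{PropQCondition}, express the $\nu^{(\boldsymbol{\alpha})}$'s through $M_{X_s}^{-1}$ (resp. $M_\infty^{-1}$) via Proposition \ref{nus}, use the commutativity of lower-triangular Toeplitz matrices to cancel the inverse and obtain the system \eqref{SystemXss}, then integrate the resulting triangular system from the top coordinate downwards, handling infinity with the normalization $t_{\infty,r_\infty-1}=1$, $t_{\infty,r_\infty-2}=0$ and the cases $r_\infty\le 2$ through the constraints \eqref{AddConstrains}--\eqref{AddConstrains2}. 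Your explicit decoupling by choosing $\boldsymbol{\alpha}=\mathbf{e}_{X_s,j}$ and your transition-matrix ansatz $Q_{X_s}=F^{(X_s)}(\mathbf{t})u_{X_s}$ are only cosmetic variants of the paper's specialization of the general deformation to each $t_{X_s,i}$ and its line-by-line recursive integration.
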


\begin{proof}The proof is done in Appendix \ref{AppendixExplicitDependenceQ}.
\end{proof}

\begin{remark}\label{ImportantRemarkOmega} \normalsize{For} $r_\infty\geq 2$, the isospectral condition implies that $\delta_{\mathbf{t}}^{(\boldsymbol{\alpha})}[\omega]=0$ so that $\omega$ can be set to a given non-zero constant (i.e. independent of $\lambda$, the irregular times and the positions of the finite poles). Up to a trivial rescaling, this is equivalent to set $\omega=1$ as done in \cite{MarchalOrantinAlameddine2022} or in most of the literature. On the contrary for $r_\infty=1$, one observes from Proposition \ref{PropQCondition} and its solution \eqref{OmegaDeform} that the normalization coefficient $\omega$ must depend on the irregular times at infinity in a non-trivial way to ensure the isospectral condition. This difference originates from the fact that for $r_\infty=1$, the normalization of $\td{L}_{1,2}(\lambda)$ is done on a regular order at $\lambda\to \infty$ (i.e. $\lambda^{-2}$) that is therefore connected to all the integrable structure at all poles. Hence, imposing the isospectral condition requires that this normalization adapts with the whole integrable structure.
\end{remark}

\begin{remark} \normalsize{We} \sloppy{could not obtain a closed form for the expression of the coefficients $\left(f^{(\infty)}_{i,j}\right)_{1\leq i,j\leq r_\infty-2}$ nor $\left(f^{(X_s)}_{i,j}\right)_{1\leq i,j\leq r_s-1}$. However, we stress that these differential systems are universal and that the corresponding coefficients $\left(f^{(\infty)}_{i,j}\right)_{1\leq i,j\leq r_\infty-2}$ and $\left(f^{(X_s)}_{i,j}\right)_{1\leq i,j\leq r_s-1}$ only depend on the irregular times in a way that may have some interest outside of the scope of the present problem. The exact formulas proposed in Theorem \ref{ExplicitDependenceQ} provide the explicit expressions up to $r_s=4$ and $r_\infty=6$. Solving each differential system is not obvious and coefficients rapidly get more involved because each line is determined recursively by the formers. The main difficulty is that the r.h.s. of the differential system \eqref{SystemXss} is (after conjugating by the constant matrix) a Toeplitz lower-triangular matrix, while the l.h.s. of the differential system is not a priori. In the end, solving the differential system amounts to choosing a dependence in the irregular times so that the l.h.s. becomes specifically Toeplitz lower-triangular matrix (up to the conjugation by the constant matrix). It turns out that imposing the Toeplitz like condition is non trivial for a linear differential system.}
\end{remark}

\begin{remark} \normalsize{The} system looks different at infinity because of our choice of normalization at this point. Indeed, imposing the normalization ($\td{L}_{1,2}(\lambda)=\omega\,\lambda^{r_\infty-3}+O(\lambda^{r_\infty-4})$) is equivalent to freeze a possible coordinate $Q_{\infty,r_\infty-3}$ to $\omega$. Moreover, the fact that $t_{\infty,r_\infty-1}=1$ and $t_{\infty,r_\infty-2}=0$ is also easily seen in the Toeplitz matrix. Consequently, one could make other choices and immediately adapt the results of Theorem \ref{ExplicitDependenceQ}.
\end{remark}

Apart from double poles, one can see that the explicit dependence of the Darboux coordinates $\mathbf{Q}(\mathbf{t},\mathbf{t_0})$ that is required to satisfy the isospectral condition \eqref{IsoCondition} gets substantially involved since solving the differential systems becomes complicated (but can be easily done with an appropriate software). This is not a surprise since as soon as one gets the correct explicit dependence in the irregular times, results of \cite{BertolaHarnadHurtubise2022} prove that the Hamiltonians $\text{Ham}(\mathbf{u},\mathbf{v},\mathbf{t},\mathbf{t_0})$ identify with the spectral invariants $\mathbf{I}$ that can be computed from Theorem \ref{MontrealResults} and the explicit expression of $\td{L}(\lambda)$ given by Theorem \ref{GeoLaxMatrices}. Thus, it seems legitimate that getting the explicit dependence in the irregular times to satisfy the isospetral condition should be as hard as solving the compatibility equations of the Lax system.

\subsubsection{Differential system for the $\mathbf{R}$ coordinates and definition of the $\mathbf{v}$ isospectral coordinates}
Adapting the computations done for Theorem \ref{ExplicitDependenceQ} and Appendix \ref{AppendixExplicitDependenceQ}, we obtain the differential system satisfied by the coordinates $\mathbf{R}$ that have been defined in Definition \ref{DefR}.

\begin{theorem}\label{TheoExplicitDependenceP} The isospectral condition $\delta_{\mathbf{t}}^{(\boldsymbol{\alpha})} \td{L}_{1,1}(\lambda)=\partial_\lambda [\td{A}_{\boldsymbol{\alpha}}(\lambda)]_{1,1}$ is equivalent to the fact that for any $s\in \llbracket 1,n\rrbracket$, the coordinates $\left(R_{X_s,k}\right)_{1\leq k\leq r_s}$ satisfy the same differential system as \eqref{SystemXss}. Thus, for any $s\in \llbracket 1,n\rrbracket$ we have $R_{X_s,1}=v_{X_s,1}$ and
\footnotesize{\beq \begin{pmatrix}R_{X_s,r_s}\\ R_{X_s,r_s-1}\\ \vdots\\\vdots\\ R_{X_s,2}\end{pmatrix}=
\begin{pmatrix} f^{(X_s)}_{1,1}(t_{X_s,r_s-1})&0&\dots&\dots& 0\\
 f^{(X_s)}_{2,1}(t_{X_s,r_s-2},t_{X_s,r_s-1})& f^{(X_s)}_{2,2}(t_{X_s,r_s-1})&0&&\vdots\\
\vdots&& \ddots& \ddots &\vdots\\
f^{(X_s)}_{r_s-2,1}(t_{X_s,2},\dots,t_{X_s,r_s-1})&\dots&\dots& f^{(X_s)}_{r_s-2,r_s-1}(t_{X_s,r_s-1})&0\\
f^{(X_s)}_{r_s-1,1}(t_{X_s,1},\dots,t_{X_s,r_s-1})&\dots&&\dots& f^{(X_s)}_{r_s-1,r_s-1}(t_{X_s,r_s-1})
\end{pmatrix}
\begin{pmatrix}v_{X_s,r_s}\\ v_{X_s,r_s-1}\\\vdots\\ \vdots\\ v_{X_s,2}
\end{pmatrix} 
\eeq}
\normalsize{with} all $\left(v_{X_s,k}\right)_{1\leq k\leq r_s}$ independent of the irregular times. Moreover, for $r_\infty=2$ we have the additional relation
\beq \sum_{s=1}^n v_{X_s,1} =-t_{\infty,0}\eeq 
and for $r_\infty=1$ we have the additional relations
\small{\bea \label{AddRRelations} \sum_{s=1}^n v_{X_s,1}&=&-t_{\infty,0}\cr
\sum_{s=1}^n X_s v_{X_s,1}+ \sum_{s=1}^n\sum_{j=1}^{r_s-1} f^{(X_s)}_{r_s-1,j}(t_{X_s,j},\dots,t_{X_s,r_s-1}) v_{X_s,r_s+1-j}&=&-g_0-\frac{t_{\infty,0}}{\omega}\left(\sum_{s=1}^nX_s^2Q_{X_s,1}+2X_sQ_{X_s,2}+Q_{X_s,3}\right)\cr
&&
\eea}

\normalsize{Similarly}, for $r_\infty\geq 4$, the coordinates $\left(R_{\infty,k}\right)_{0\leq k\leq r_\infty-4}$ satisfy the $(r_\infty-3)\times(r_\infty-3)$ differential system 
\footnotesize{\bea &&\begin{pmatrix}1&0&\dots&\dots&0\\
0&1& &&\vdots\\
t_{\infty,r_\infty-3}&0&1& &\vdots\\
\vdots& \ddots &\ddots &\ddots&0 \\
t_{\infty,3}&\dots&t_{\infty,r_\infty-3}&0&1 
\end{pmatrix}
\begin{pmatrix}\frac{1}{r_\infty-3}&0&\dots&0\\
0&\ddots& &0\\
\vdots& & \ddots&0\\
0&\dots&0 &\frac{1}{1} \end{pmatrix} \begin{pmatrix} \delta_{t_{\infty,r_\infty-3}}[R_{\infty,r_\infty-4}]&\dots& \delta_{t_{\infty,1}}[R_{\infty,r_\infty-4}]\\
% \delta_{t_{X_s,r_s-1}}[R_{X_s,r_s-1}]&\dots&\dots& \delta_{t_{X_s,1}}[RQ_{X_s,r_s-1}]\\
\vdots&\vdots&\vdots\\
\delta_{t_{\infty,r_\infty-3}}[R_{\infty,0}]&\dots& \delta_{t_{\infty,1}}[R_{\infty,0}]
\end{pmatrix}\cr
&&=\begin{pmatrix}-1&0&\dots&\dots& \dots&\dots &0\\
0&-1& 0&& & &\vdots\\
R_{\infty,r_\infty-4}&0& -1& && &\vdots\\
\vdots & \ddots&\ddots &\ddots&\ddots  & &\vdots\\
\vdots &\ddots&\ddots&\ddots&-1&0&\vdots\\
R_{\infty,r_\infty,3}&\ddots &\ddots&\ddots&\ddots& -1&0\\
R_{\infty,r_\infty,2}&R_{\infty,r_\infty,3}& \dots &\dots&R_{\infty,r_\infty-4}& 0& -1
 \end{pmatrix}\begin{pmatrix}\frac{1}{r_\infty-3}&0&\dots&0\\
0&\ddots& &0\\
\vdots& & \ddots&0\\
0&\dots&0 &\frac{1}{1} \end{pmatrix}
\cr
&&
\eea}
\normalsize{whose} solutions are 
\footnotesize{\bea \begin{pmatrix}R_{\infty,r_\infty-4}\\ R_{\infty,r_\infty-5}\\ \vdots\\\vdots\\ R_{\infty,0}\end{pmatrix}&=&-\begin{pmatrix}t_{\infty,r_\infty-3}\\ t_{\infty,r_\infty-4}\\ \vdots\\ t_{\infty,1}\end{pmatrix}\cr
&&+
\begin{pmatrix}1&0&0&\dots&\dots&0\\
0&1&0&\ddots&&\vdots\\
g^{(\infty)}_{3,1}(t_{\infty,r_\infty-3})&0&1&0&&0\\
g^{(\infty)}_{4,1}(t_{\infty,r_\infty-4},t_{\infty,r_\infty-3})&g^{(\infty)}_{4,2}(t_{\infty,r_\infty-3})&0&1&\ddots&\vdots\\
\vdots& \ddots&&&\ddots&0\\
g^{(\infty)}_{r_\infty-3,1}(t_{\infty,3},\dots,t_{\infty,r_\infty-3})&\dots&&g^{(\infty)}_{r_\infty-3,r_\infty-4}(t_{\infty,r_\infty-3})&0&1
\end{pmatrix}
\begin{pmatrix}v_{\infty,r_\infty-4}\\ v_{\infty,r_\infty-5}\\\vdots\\ \vdots\\ v_{\infty,0}
\end{pmatrix} \cr
&&
\eea}
\normalsize{with} all $\left(v_{\infty,k}\right)_{0\leq k\leq r_\infty-4}$ independent of the irregular times. Moreover, we have:
\bea g^{(\infty)}_{j+2,j}(t_{\infty,r_\infty-3})&=&\frac{r_\infty-4-j}{r_\infty-3}t_{\infty,r_\infty-3}\,\,,\,\, \forall\, j\in \llbracket 1,r_\infty-5\rrbracket\cr
  g^{(\infty)}_{j+3,j}(t_{\infty,r_\infty-4},t_{\infty,r_\infty-3})&=&\frac{r_\infty-5-j}{r_\infty-4}t_{\infty,r_\infty-4}\,\,,\,\,  \forall\, j\in \llbracket 1,r_\infty-6\rrbracket
\eea
\end{theorem}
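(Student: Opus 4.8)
The plan is to mirror the proof of Theorem \ref{ExplicitDependenceQ} carried out in Appendix \ref{AppendixExplicitDependenceQ}, exploiting the structural parallelism between the pair $(\td{L}_{1,1},[\td{A}_{\boldsymbol{\alpha}}]_{1,1})$ expressed through the $\mathbf{R}$ coordinates and the pair $(\td{L}_{1,2},[\td{A}_{\boldsymbol{\alpha}}]_{1,2})$ expressed through the $\mathbf{Q}$ coordinates. Concretely, Lemma \ref{LemmatdL11} writes $\td{L}_{1,1}(\lambda)$ as a singular part $\sum_{k=1}^{r_s} R_{X_s,k}(\lambda-X_s)^{-k}$ at each $X_s$, a polynomial part $\sum_{k=0}^{r_\infty-4}R_{\infty,k}\lambda^k$, plus the two frozen leading terms $-t_{\infty,r_\infty-1}\lambda^{r_\infty-2}$ and $-t_{\infty,r_\infty-2}\lambda^{r_\infty-3}$; Theorem \ref{GeoLaxMatricesQR} gives $[\td{A}_{\boldsymbol{\alpha}}]_{1,1}$ with exactly matching singular and polynomial structure in which each $Q$ is replaced by the corresponding $R$. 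I would therefore apply $\delta^{(\boldsymbol{\alpha})}_{\mathbf{t}}$ to the former and $\partial_\lambda$ to the latter and identify the Laurent coefficients pole by pole.

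At a finite pole $X_s$, differentiating the singular part produces both $\delta^{(\boldsymbol{\alpha})}_{\mathbf{t}}[R_{X_s,k}]$ terms and, through $\delta^{(\boldsymbol{\alpha})}_{\mathbf{t}}[X_s]=-\nu_{X_s,0}^{(\boldsymbol{\alpha})}$ already fixed in Proposition \ref{PropQCondition}, shifts of the form $kR_{X_s,k}(\lambda-X_s)^{-k-1}$. Matching against $\partial_\lambda$ of the $[\td{A}_{\boldsymbol{\alpha}}]_{1,1}$ singular part, the highest order $(\lambda-X_s)^{-r_s-1}$ is an automatic consistency check, the order $(\lambda-X_s)^{-1}$ yields $\delta^{(\boldsymbol{\alpha})}_{\mathbf{t}}[R_{X_s,1}]=0$ hence $R_{X_s,1}=v_{X_s,1}$, and the intermediate orders give $\delta^{(\boldsymbol{\alpha})}_{\mathbf{t}}[R_{X_s,k}]=-(k-1)\sum_{i=k}^{r_s}\nu_{X_s,i+1-k}^{(\boldsymbol{\alpha})}R_{X_s,i}$. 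Since the $\nu$-coefficients are the same objects and this is formally identical to the relation obtained for the $Q_{X_s,k}$, the resulting ODE is literally the system \eqref{SystemXss} with $R$ in place of $Q$; its solution therefore reuses the same matrices $f^{(X_s)}$, producing the claimed triangular form with time-independent $v_{X_s,k}$.

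The analysis at infinity is the part requiring genuine care. Here the two frozen terms $-t_{\infty,r_\infty-1}\lambda^{r_\infty-2}$ and $-t_{\infty,r_\infty-2}\lambda^{r_\infty-3}$ are annihilated by $\delta^{(\boldsymbol{\alpha})}_{\mathbf{t}}$, whereas on the $\partial_\lambda[\td{A}_{\boldsymbol{\alpha}}]_{1,1}$ side their counterparts $-t_{\infty,r_\infty-1}\sum_j\nu_{\infty,\cdot}^{(\boldsymbol{\alpha})}\lambda^j$ and $-t_{\infty,r_\infty-2}\sum_j\nu_{\infty,\cdot}^{(\boldsymbol{\alpha})}\lambda^j$ survive and, after differentiation and use of the relation between $\nu_{\infty,j}^{(\boldsymbol{\alpha})}$ and the $\alpha_{\infty,j}$ from Proposition \ref{nus}, contribute inhomogeneous source terms. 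This is precisely why the $R_{\infty,k}$ obey a $(r_\infty-3)\times(r_\infty-3)$ system rather than the $(r_\infty-4)\times(r_\infty-4)$ one governing the $Q_{\infty,k}$, and why their solution carries the affine shift $-(t_{\infty,r_\infty-3},\dots,t_{\infty,1})^t$. I would cast this as the Toeplitz matrix system displayed in the statement, read off the recursion for the coefficients $g^{(\infty)}_{i,j}$ line by line, and extract the two explicit diagonals.

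Finally, the low-rank cases $r_\infty\in\{1,2\}$ are treated separately by feeding the additional constraints \eqref{AddconstraintsR} on the $\mathbf{R}$ coordinates into the coefficient identification: for $r_\infty=2$ the $\lambda^{-1}$ order collapses to $\sum_s v_{X_s,1}=-t_{\infty,0}$, and for $r_\infty=1$ the $\lambda^{-1}$ and $\lambda^{-2}$ orders combine with the value of $g_0$ from \eqref{Valueg0} and the extra conditions \eqref{ExtraConditionsrinftyequal1} to yield the two relations \eqref{AddRRelations}. I expect the main obstacle to be the bookkeeping at infinity, namely tracking the inhomogeneous contributions created by the frozen leading times and checking that the resulting non-Toeplitz left-hand side can be rendered Toeplitz after conjugation by the stated time dependence, exactly as in the $\mathbf{Q}$ case but one dimension larger.
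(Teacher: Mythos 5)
Your proposal is correct and takes essentially the same route as the paper's proof in Appendix~\ref{AppendixExplicitDependenceR}: the paper likewise exploits the gauge identity $[\td{A}_{\boldsymbol{\alpha}}]_{1,1}=c_{\infty,0}+\td{L}_{1,1}[A_{\boldsymbol{\alpha}}]_{1,2}$ (parallel to $[\td{A}_{\boldsymbol{\alpha}}]_{1,2}=\td{L}_{1,2}[A_{\boldsymbol{\alpha}}]_{1,2}$) so that the finite-pole identification reproduces \eqref{SystemXss} verbatim with $\mathbf{R}$ in place of $\mathbf{Q}$, and at infinity it derives the inhomogeneous $(r_\infty-3)\times(r_\infty-3)$ system \eqref{eqEQ2}--\eqref{SystemInftyR} sourced by the frozen leading times, solved by the particular solution $R_{\infty,k}=-t_{\infty,k+1}$ plus the homogeneous Toeplitz-type part, exactly as you describe. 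Your treatment of the $r_\infty\in\{1,2\}$ constraints via \eqref{AddconstraintsR} and \eqref{betaMinus1} also matches the paper's closing remarks.
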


\begin{proof}The proof is done in Appendix \ref{AppendixExplicitDependenceR}.
\end{proof}

Finally, one may obtain the explicit dependence of the coordinates $\mathbf{P}$ in terms of the irregular times so that the isospectral condition \eqref{IsoCondition} is realized.

\begin{theorem}\label{MainTheoIsospectral}The isospectral condition \eqref{IsoCondition} is realized if and only if the coordinates $\mathbf{Q}$ satisfy Theorem \ref{ExplicitDependenceQ} and if the coordinates $\mathbf{P}$ are given by \eqref{InversChangeCoo} and \eqref{InversChangeCoo2} where the coordinates $\mathbf{R}$ satisfy Theorem \ref{TheoExplicitDependenceP}.
\end{theorem}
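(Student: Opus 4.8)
The plan is to read the matrix identity \eqref{IsoCondition} component by component. Since $\td{L}(\lambda)$ and $\td{A}_{\boldsymbol{\alpha}}(\lambda)$ both lie in $\mathfrak{sl}_2(\mathbb{C})$ and both $\delta^{(\boldsymbol{\alpha})}_{\mathbf{t}}$ and $\partial_\lambda$ preserve tracelessness, the $(2,2)$ component of \eqref{IsoCondition} is the negative of its $(1,1)$ component and carries no new information. Thus \eqref{IsoCondition} is equivalent to its three scalar components $(1,2)$, $(1,1)$ and $(2,1)$, which I would treat in that order.

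First I would recognize the $(1,2)$ component as exactly the condition analysed in Proposition \ref{PropQCondition} and solved in Theorem \ref{ExplicitDependenceQ}: feeding the explicit expressions of $\td{L}_{1,2}$ and $[\td{A}_{\boldsymbol{\alpha}}]_{1,2}$ from Theorem \ref{GeoLaxMatrices} into $\delta^{(\boldsymbol{\alpha})}_{\mathbf{t}}[\td{L}_{1,2}(\lambda)]=\partial_\lambda[\td{A}_{\boldsymbol{\alpha}}(\lambda)]_{1,2}$ fixes the explicit time dependence of the coordinates $\mathbf{Q}$ and introduces the time-independent coordinates $\mathbf{u}$. Next, using Lemma \ref{LemmatdL11} together with the formula for $[\td{A}_{\boldsymbol{\alpha}}]_{1,1}$ in Theorem \ref{GeoLaxMatricesQR}, I would identify the $(1,1)$ component $\delta^{(\boldsymbol{\alpha})}_{\mathbf{t}}[\td{L}_{1,1}(\lambda)]=\partial_\lambda[\td{A}_{\boldsymbol{\alpha}}(\lambda)]_{1,1}$ as precisely the statement solved in Theorem \ref{TheoExplicitDependenceP}, which fixes the explicit time dependence of the coordinates $\mathbf{R}$ and introduces $\mathbf{v}$. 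Because the passage $\mathbf{P}\leftrightarrow\mathbf{R}$ is the explicit invertible (time-dressed, non-symplectic) change \eqref{InversChangeCoo}--\eqref{InversChangeCoo2}, the constraints on $\mathbf{R}$ translate verbatim into the prescription for $\mathbf{P}$ stated in the theorem. This establishes the equivalence for two of the three components, in both directions.

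The remaining task is the $(2,1)$ component, which I expect to be the main obstacle. Here I would exploit the $\mathfrak{sl}_2$ determinant relation $\td{L}_{1,2}\td{L}_{2,1}=-\det\td{L}-\td{L}_{1,1}^2$, which follows from \eqref{Equivalence} and Proposition \ref{PropDettdL}. Applying $\delta^{(\boldsymbol{\alpha})}_{\mathbf{t}}$ to this relation and substituting the two already-established identities $\delta^{(\boldsymbol{\alpha})}_{\mathbf{t}}[\td{L}_{1,1}]=\partial_\lambda[\td{A}_{\boldsymbol{\alpha}}]_{1,1}$ and $\delta^{(\boldsymbol{\alpha})}_{\mathbf{t}}[\td{L}_{1,2}]=\partial_\lambda[\td{A}_{\boldsymbol{\alpha}}]_{1,2}$, one finds that the desired identity $\delta^{(\boldsymbol{\alpha})}_{\mathbf{t}}[\td{L}_{2,1}]=\partial_\lambda[\td{A}_{\boldsymbol{\alpha}}]_{2,1}$ is equivalent to the single scalar identity
\beq
\td{L}_{1,2}\,\partial_\lambda[\td{A}_{\boldsymbol{\alpha}}]_{2,1}=-\delta^{(\boldsymbol{\alpha})}_{\mathbf{t}}[\det\td{L}]-2\td{L}_{1,1}\,\partial_\lambda[\td{A}_{\boldsymbol{\alpha}}]_{1,1}-\td{L}_{2,1}\,\partial_\lambda[\td{A}_{\boldsymbol{\alpha}}]_{1,2}.
\eeq
Since $\td{L}_{1,2}\not\equiv 0$, it then suffices to control $\delta^{(\boldsymbol{\alpha})}_{\mathbf{t}}[\det\td{L}]$, computed directly from Proposition \ref{PropDettdL}. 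Equivalently, one may observe that once the $(1,1)$ and $(1,2)$ components hold the matrix $\delta^{(\boldsymbol{\alpha})}_{\mathbf{t}}\td{L}-\partial_\lambda\td{A}_{\boldsymbol{\alpha}}$ is strictly lower triangular, so that its sole surviving entry $B_{2,1}$ satisfies $\td{L}_{1,2}B_{2,1}=(\mathcal{L}_{\boldsymbol{\alpha}}-\delta^{(\boldsymbol{\alpha})}_{\mathbf{t}})[\det\td{L}]$ by pairing the zero-curvature equation \eqref{LaxCompat} against $\td{L}$; since the right-hand side only moves the Darboux coordinates, it is supported on the non-leading (spectral-invariant) parts of $\det\td{L}$ at each pole, whose pole orders are strictly dominated by those of $\td{L}_{1,2}$, while $B_{2,1}$ must be regular at the apparent singularities $q_j$.

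I expect the genuine difficulty to reside entirely in this last step, because $\delta^{(\boldsymbol{\alpha})}_{\mathbf{t}}[\det\td{L}]$ feels the gauge-induced term $\td{L}_{1,2}\partial_\lambda\!\big(\td{L}_{1,1}/\td{L}_{1,2}\big)$ that was isolated, right after Theorem \ref{TheoHamSpectral}, as the very source of the mismatch between the isospectral and isomonodromic worlds. Concretely, I would close the argument by verifying the displayed scalar identity pole by pole, inserting the explicit expressions for $\td{L}_{2,1}$ and $[\td{A}_{\boldsymbol{\alpha}}]_{2,1}$ from Theorem \ref{GeoLaxMatrices} and checking that the explicit time derivatives of the irregular-type coefficients, together with this gauge term, reproduce $-\delta^{(\boldsymbol{\alpha})}_{\mathbf{t}}[\det\td{L}]$; alternatively, the pole-order and regularity bounds above yield a Liouville-type vanishing $B_{2,1}\equiv 0$ without a term-by-term computation. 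Either route being successful, the three components are satisfied simultaneously precisely under the hypotheses of Theorems \ref{ExplicitDependenceQ} and \ref{TheoExplicitDependenceP}, which is the assertion of the theorem.
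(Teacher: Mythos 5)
Your proposal is correct, and its first two thirds coincide with the paper's own argument: discard the $(2,2)$ entry by tracelessness, identify the $(1,2)$ entry $\delta^{(\boldsymbol{\alpha})}_{\mathbf{t}}[\td{L}_{1,2}]=\partial_\lambda[\td{A}_{\boldsymbol{\alpha}}]_{1,2}$ with Theorem~\ref{ExplicitDependenceQ}, identify the $(1,1)$ entry with Theorem~\ref{TheoExplicitDependenceP} via Lemma~\ref{LemmatdL11}, and transport the constraints from $\mathbf{R}$ to $\mathbf{P}$ through \eqref{InversChangeCoo}--\eqref{InversChangeCoo2}. Where you genuinely diverge is the sufficiency direction. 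The paper settles it by invoking the existence result of \cite{BertolaHarnadHurtubise2022}: isospectral coordinates exist, they must obey the same necessary differential systems, and \eqref{IsoCondition} is insensitive to the residual time-independent freedom in solving those systems; the direct study of entry $(2,1)$ is mentioned there only as ``another way'' and dismissed as consistency relations, without proof. You actually carry that alternative out: setting $B:=\delta^{(\boldsymbol{\alpha})}_{\mathbf{t}}\td{L}-\partial_\lambda\td{A}_{\boldsymbol{\alpha}}$, once the entries $(1,1)$, $(1,2)$, $(2,2)$ vanish, the $\mathfrak{sl}_2(\mathbb{C})$ relation $\det\td{L}=-\td{L}_{1,1}^2-\td{L}_{1,2}\td{L}_{2,1}$ together with the zero-curvature equation \eqref{LaxCompat} yields $\td{L}_{1,2}B_{2,1}=(\mathcal{L}_{\boldsymbol{\alpha}}-\delta^{(\boldsymbol{\alpha})}_{\mathbf{t}})[\det\td{L}]$, whose right-hand side differentiates only the Darboux-coordinate dependence of $\det\td{L}$; Proposition~\ref{PropDettdL} then bounds its pole orders, and since $B_{2,1}$ is a priori regular at the apparent singularities $q_j$, a Liouville argument forces $B_{2,1}\equiv 0$. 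Your route buys a self-contained proof that makes the paper's ``consistency'' claim precise and avoids both the appeal to the external existence theorem and the implicit uniqueness-up-to-time-independent-change step that appeal requires; the paper's route is shorter.

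Two details need tightening if you write this up. First, your phrase ``pole orders strictly dominated by those of $\td{L}_{1,2}$'' is only true, and only needed, at infinity: at a finite pole $X_s$ the Darboux-dependent part of $\det\td{L}$ (the $H_{X_s,j}$ terms and the gauge term $\big[\td{L}_{1,2}\partial_\lambda\big(\td{L}_{1,1}/\td{L}_{1,2}\big)\big]_{X_s,-}$) has pole order up to $r_s$, \emph{equal} to that of $\td{L}_{1,2}$, which gives regularity of $B_{2,1}$ at $X_s$ but not vanishing; strict domination at infinity (degree at most $r_\infty-4$ against $r_\infty-3$, after the pure-time leading gauge coefficient $-t_{\infty,r_\infty-1}\lambda^{r_\infty-3}$ is killed) is what yields $B_{2,1}=O(\lambda^{-1})$ and hence $B_{2,1}\equiv 0$. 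Second, the cases $r_\infty\in\{1,2\}$ must be run through the special expansions \eqref{SpecialdettdL}--\eqref{SpecialdettdL4}, and for $r_\infty=1$ the normalization $\omega$ is itself time-dependent (Remark~\ref{ImportantRemarkOmega}), so the split between explicit-time and Darboux-dependent contributions has to be redone in that case; the counting still closes, but it is not automatic.
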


\begin{proof}It is obvious that the isospectral condition \eqref{IsoCondition} implies the specific time dependence of the coordinates $(\mathbf{Q},\mathbf{R})$ providing a necessary condition. Moreover, \cite{BertolaHarnadHurtubise2022} proves that a set of coordinates satisfying the isospectral condition always exists providing the sufficient part of the theorem. Another way to obtain the sufficient part of the theorem is to study entry $(2,1)$ of the isospectral condition.  However since this entry is directly expressed from entries $(1,1)$ and $(1,2)$ and is computed specifically to match the isomonodromic compatibility equations, it shall not generate new relations on coordinates $(\mathbf{Q},\mathbf{R})$ but rather only corresponds to consistency relations.
\end{proof}

\section{Conclusion and outlooks}
In this paper we provided several sets of Darboux coordinates to describe the Lax pairs associated to isomonodromic deformations of non-twisted meromorphic connections in $\mathfrak{sl}_{2}(\mathbb{C})$. First, we recalled in Section \ref{SectionOldResults} the results of \cite{MarchalOrantinAlameddine2022} and the description of the Lax pair $(L,A_{\boldsymbol{\alpha}})$ in the oper gauge using Darboux coordinates $(\mathbf{q},\mathbf{p})$ in relation with the apparent singularities of the quantum curve. This gauge and these coordinates are universal and provide a natural interpretation of the Hamiltonians in terms of interacting particles. However, from the geometric point of view, the fact that the Lax pair and Hamiltonian have singularities at $q_i\neq q_j$ when $i\neq j$ is not practical.

We then introduced geometric Darboux coordinates $(\mathbf{Q},\mathbf{P})$ that are related to $(\mathbf{q},\mathbf{p})$ through a time-independent and symplectic change of coordinates. Using these geometric Darboux coordinates, we provided the explicit expressions of the geometric Lax matrices $(\td{L},\td{A}_{\boldsymbol{\alpha}})$ in Theorem \ref{GeoLaxMatrices} and of the associated Hamiltonians in Theorem \ref{TheoHamNew}. The main interest of these coordinates is to provide expressions directly in the geometric gauge rather than the oper gauge so that the connections with the geometry and previous results in the literature (for example Painlev\'{e} cases) are simpler. Moreover, the fact that they are related to $(\mathbf{q},\mathbf{p})$ using a time-independent and symplectic change of coordinates makes the connection with the oper gauge and the corresponding symplectic structure immediate. On the negative side, this makes the formula for $\td{L}_{1,1}$ rather complicated. 

In Section \ref{SectionGeoLaxCoord}, we defined some geometric Lax coordinates $(\mathbf{Q},\mathbf{R})$ for which the expression of the geometric Lax matrices $(\td{L},\td{A}_{\boldsymbol{\alpha}})$ is very simple (Theorem \ref{GeoLaxMatricesQR}). On the contrary, since the relation with geometric Darboux coordinates $(\mathbf{Q},\mathbf{P})$ is not symplectic, the corresponding Hamiltonians are more difficult to express. 

Finally, we defined from the geometric Lax coordinates $(\mathbf{Q},\mathbf{R})$ (Theorems \ref{ExplicitDependenceQ} and \ref{TheoExplicitDependenceP}) some isospectral Darboux coordinates $(\mathbf{u},\mathbf{v})$ that are solutions of the isospectral condition \eqref{IsoCondition} of \cite{BertolaHarnadHurtubise2022}. In particular, the Hamiltonians corresponding to these coordinates are simply given by the expansion of $\det \td{L}(\lambda)$ at each pole and thus are straightforward to obtain. These isospectral coordinates also provide the connection with isospectral deformations that have been extensively studied in the literature. Our main observation in that matter is that for Fuchsian singularities the four sets of coordinates are related in a trivial way. For irregular singularities of low orders ($r_s\leq 2$ or $r_\infty\leq 6$), isospectral Darboux coordinates identify with geometric Darboux coordinates $(\mathbf{Q},\mathbf{P})$ when choosing the appropriate normalization of the Lax pairs. Nevertheless, this is no longer the case for irregular singularities of higher orders where the relations between both sets become non-trivial. This observation may explain why the isospectral deformations strategy has been successful in some numerous case-by-case analysis but had not been generalized so far. In $\mathfrak{sl}_2(\mathbb{C})$, our results provide an answer to the question left opened in \cite{BertolaHarnadHurtubise2022} to find isospectral coordinates.

\medskip

Geometrically, the correspondence between the isospectral Darboux coordinates $(\mathbf{u},\mathbf{v})$ and the Darboux coordinates $(\mathbf{q},\mathbf{p})$ based on the apparent singularities is highly time-dependent for irregular singularities implying that both symplectic Ehresmann connections differ. More precisely, in addition to the isomonodromy connection, whose horizontal sections are given by Stokes data= \textit{constant}, the isospectral approach of Yamakawa \cite{Yamakawa2017TauFA} and the Montr\'{e}al school \cite{BertolaHarnadHurtubise2022}  uses the so-called Birkhoff connection, whose horizontal sections correspond to $(\mathbf{u},\mathbf{v})=\textit{constant}$ in this article. It is well-known that the difference between the isomonodromy connection and the Birkhoff connection is the Jimbo-Miwa-Ueno one-form (denoted $\omega_{IM}$ in \cite{BertolaHarnadHurtubise2022}) that has been extensively studied. On the contrary in this paper, we proposed another Ehresmann connection whose horizontal sections correspond to $(\mathbf{q},\mathbf{p})=\textit{constant}$ (or equivalently $(\mathbf{Q},\mathbf{P})=\textit{constant}$ because the change is symplectic and time-independent) where $(\mathbf{q},\mathbf{p})$ are the Darboux coordinates associated to the apparent singularities. The difference between this Ehresmann connection and the isomonodromy connection is $\omega=\sum_{p,k} \text{Ham}^{(\alpha_{p,k})} d t_{p,k}$ given by either Theorem \ref{TheoHamiltonian} (See. \cite{MarchalOrantinAlameddine2022}) or Theorem \ref{TheoHamNew}. Although the map between the two setups is explicit (Theorems \ref{ExplicitDependenceQ} and \ref{TheoExplicitDependenceP}, it would be interesting to study the global properties of this new connection. In particular, the flatness and completeness (proved for the Birkhoff connection in \cite{Yamakawa2017TauFA}) would deserve attention.

\medskip

There are also several other extensions of our work that would deserve investigations.

\begin{itemize}
\item The first natural extension consists in obtaining similar results in $\mathfrak{sl}_d(\mathbb{C})$ with $d\geq 3$ for unramified poles. Indeed, the isospectral strategy developed in \cite{BertolaHarnadHurtubise2022} is valid for any $d\geq 2$. It seems to the authors that using properties of the gauge transformation from the geometric gauge to the oper gauge in a smart way should be sufficient to obtain the expression of the Lax pairs and of the Hamiltonian system using similar computations as the one done in \cite{MarchalOrantinAlameddine2022}. We let this very interesting question for future works.
\item The differential systems solving the isospectral condition \eqref{IsoCondition} (Theorems \ref{ExplicitDependenceQ} and \ref{TheoExplicitDependenceP})  amount to transform a matrix of derivative into a lower-triangular Toeplitz invertible matrix. It is worth noticing that $m\times m$ lower-triangular Toeplitz matrices are a commuting subalgebra of dimension $m$ of $\text{GL}_m(\mathbb{C})$ (that can be extended to a subfield by considering lower-triangular Toeplitz matrices with non-zero diagonal entries). From this point of view, the isospectral condition \eqref{IsoCondition} seems equivalent to transform a problem from the algebra of matrices to a commuting subalgebra (of lower-triangular Toeplitz matrices). This observation would definitely deserve additional investigations from a purely algebraic point of view and possible interpretation from a geometric point of view.
\item Another natural extension of this work is to try to coalesce the leading order at a singular pole to zero eigenvalues, for example to take $t_{\infty,r_\infty-1}\to 0$ to obtain a resonant case. However, the question of coalescing the leading order eigenvalues is a subtle one. In our case, one could try for example to take $t_{\infty,r_\infty-1}=0$. In this case, the matrix $M_\infty$ is no longer invertible and the Hamiltonians in Theorem \ref{TheoHamiltonian} diverge (not that a priori the spectral invariants remain well-defined). Many other quantities throughout the paper would also diverge. Nevertheless, such limits would naturally correspond to a decrease in the order of the singularity at one pole (in the example just passing from $r_\infty$ to $r_\infty-1$) and truncate the lower-triangular matrices accordingly. But note that the limit $t_{\infty,r_\infty-1}\to 0$ is very different from the switch between the untwisted case and the twisted case. %Indeed, as explained in \cite{MarchalAlameddineP1Hierarchy2023}, the twisted case is a case where the leading order is not-diagonalizable whereas the limit $t_{\infty,r_\infty-1}\to 0$ is just a case where the eigenspace is of dimension two but the matrix remains diagonalizable.

\item In \cite{MarchalAlameddineP1Hierarchy2023}, the authors derived the expression of the Lax matrices $(\td{L},\td{A}_{\boldsymbol{\alpha}})$ for the case of twisted meromorphic connections with an irregular pole at infinity with a nilpotent leading order (i.e. an element of $F_{\{\infty\},r_\infty}$ but not in $\hat{F}_{\{\infty\},r_\infty}$). Note that the limit $t_{\infty,r_\infty-1}\to 0$ (or $t_{X_s,r_s-1}\to 0$) does not provide the twisted case but rather simply decreases the order of singularity at the pole, i.e. replacing  $r_\infty\to r_\infty-1$ (or $r_s \to r_s-1$). At the algebraic level, it corresponds to truncate the lower-triangular matrices (like $M_\infty$ or $M_{X_s}$) obtained in this article that would no longer be invertible. Nevertheless, if the twisted case is geometrically very different from the untwisted case, the strategy of \cite{MarchalAlameddineP1Hierarchy2023} is the same as the one developed in \cite{MarchalOrantinAlameddine2022} except from the fact that the local diagonalizations have to be understood in $z=\lambda^{-\frac{1}{2}}$ instead of $\lambda$and that definition of irregular times has to be adapted accordingly. In the end, results in the twisted and untwisted cases in $\mathfrak{sl}_2(\mathbb{C})$ give rise to very similar expressions. Thus, if one could write down the isospectral condition for this case (i.e. only Section \ref{SectionConnectionIso} of the present work which is missing in the twisted case) then one could easily obtain the isospectral coordinates in this twisted case. It would be interesting to see if the differential systems obtained in this twisted case are similar to the ones obtained in this article as well as the potential differences on the isospectral coordinates.
\end{itemize}

\section*{Acknowledgements}The authors would like to thank John Harnad for fruitful discussions on the isospectral condition. The authors would also like to thank Nicolas Orantin for his support at the early stage of this project.

\appendix

\renewcommand{\theequation}{\thesection-\arabic{equation}}

\section{Proof of Theorem \ref{GeoLaxMatrices}}\label{AppendixProofGeoLaxMatrices}
\subsection{Entry $(1,2)$}
The choice of coordinates $\mathbf{Q}$ immediately provides the expression for $\td{L}_{1,2}(\lambda)$. Moreover, the gauge transformation \eqref{GaugeTransfo} implies that for $r_\infty\geq 3$
\beq \label{ExpresstdA}G(\lambda)\td{A}_{\boldsymbol{\alpha}}(\lambda)=A_{\boldsymbol{\alpha}}(\lambda)G(\lambda)-\mathcal{L}_{\boldsymbol{\alpha}}[G(\lambda)]\eeq
In particular, 
\beq \label{RefA12}[\td{A}_{\boldsymbol{\alpha}}(\lambda)]_{1,2}=\td{L}_{1,2}(\lambda) [A_{\boldsymbol{\alpha}}(\lambda)]_{1,2}.\eeq 
Since we know the expansion of $[A_{\boldsymbol{\alpha}}(\lambda)]_{1,2}$ at each pole we get for $r_\infty\geq3$:
\bea \label{A12Infinity}[\td{A}_{\boldsymbol{\alpha}}(\lambda)]_{1,2}&\overset{\lambda\to \infty}{=}& \left(\sum_{s=1}^n\sum_{k=1}^{r_s} \frac{Q_{X_s,k}}{(\lambda-X_s)^k}+\sum_{k=0}^{r_\infty-4} Q_{\infty,k}\lambda^k+\omega\,\delta_{r_\infty\geq 3}\lambda^{r_\infty-3}\right)\cr
&&\left(\sum_{i=1}^{r_\infty-3}\nu_{\infty,i}^{(\boldsymbol{\alpha})} \lambda^{-i} +O\left(\lambda^{-r_\infty+2}\right) \right)\cr
&=&\omega\,\delta_{r_\infty\geq 3}\sum_{i=1}^{r_\infty-3}\nu_{\infty,i}^{(\boldsymbol{\alpha})} \lambda^{r_\infty-3-i}+\sum_{k=0}^{r_\infty-4}\sum_{i=1}^{r_\infty-3}Q_{\infty,k}\nu_{\infty,i}^{(\boldsymbol{\alpha})}\lambda^{k-i} + O\left(\lambda^{-1}\right) \cr
&=&\omega\,\delta_{r_\infty\geq 3}\sum_{j=0}^{r_\infty-4}\nu_{\infty,r_\infty-3-j}^{(\boldsymbol{\alpha})} \lambda^{j}+ \sum_{j=0}^{r_\infty-5}\sum_{k=j+1}^{r_\infty-4}Q_{\infty,k}\nu_{\infty,k-j}^{(\boldsymbol{\alpha})} \lambda^j + O\left(\lambda^{-1}\right) \cr
&=&\omega\,\nu_{\infty,1}^{(\boldsymbol{\alpha})}\delta_{r_\infty\geq 4}\lambda^{r_\infty-4}+\sum_{j=0}^{r_\infty-5}\left(\omega\,\nu_{\infty,r_\infty-3-j}^{(\boldsymbol{\alpha})}+\sum_{k=j+1}^{r_\infty-4}\nu_{\infty,k-j}^{(\boldsymbol{\alpha})}Q_{\infty,k}\right)\lambda^j +O(\lambda^{-1})\cr
&&
%&=&\sum_{j=0}^{r_\infty-4}\left(\nu_{\infty,r_\infty-3-j}^{(\boldsymbol{\alpha})}+\sum_{k=j+1}^{r_\infty-3}\nu_{\infty,k-j}^{(\boldsymbol{\alpha})}Q_{\infty,k}\right)\lambda^j +O(\lambda^{-1})
\eea
For $r_\infty=2$ we have:
\beq\label{A12Infinityrinftyequal2}[\td{A}_{\boldsymbol{\alpha}}(\lambda)]_{1,2}\overset{\lambda\to \infty}{=}\left(\omega\lambda^{-1}+O(\lambda^{-2})\right)\left(\nu_{\infty,0}^{(\boldsymbol{\alpha})} +O(\lambda^{-1})\right)= \omega\,\nu_{\infty,0}^{(\boldsymbol{\alpha})} \lambda^{-1}+ O(\lambda^{-2})
\eeq
For $r_\infty=1$, we have:
\bea\label{A12Infinityrinftyequal1}[\td{A}_{\boldsymbol{\alpha}}(\lambda)]_{1,2}&\overset{\lambda\to \infty}{=}&\left(\omega\lambda^{-2}+\left(\sum_{s=1}^n X_s^2Q_{X_s,1}+2X_sQ_{X_s,2}+Q_{X_s,3}\right)\lambda^{-3}+O(\lambda^{-4})\right)\cr
&&\left(\nu_{\infty,-1}^{(\boldsymbol{\alpha})}\lambda+\nu_{\infty,0}^{(\boldsymbol{\alpha})} +O(\lambda^{-1})\right)\cr
&=& \omega\,\nu_{\infty,-1}^{(\boldsymbol{\alpha})} \lambda^{-1}+\left(\left(\sum_{s=1}^n X_s^2Q_{X_s,1}+2X_sQ_{X_s,2}+Q_{X_s,3}\right)\nu_{\infty,-1}^{(\boldsymbol{\alpha})}+ \omega\,\nu_{\infty,0}^{(\boldsymbol{\alpha})} \right) \lambda^{-2}\cr
&&+O(\lambda^{-3})
\eea
In the same way, for all $s\in \llbracket 1,n\rrbracket$:
\small{\bea [\td{A}_{\boldsymbol{\alpha}}(\lambda)]_{1,2}&\overset{\lambda\to X_s}{=}&\left(\sum_{s'=1}^n\sum_{k=1}^{r_{s'}} \frac{Q_{X_{s'},k}}{(\lambda-X_{s'})^k}+\sum_{k=0}^{r_\infty-4} Q_{\infty,k}\lambda^k+\omega\,\delta_{r_\infty\geq 3}\lambda^{r_\infty-3}\right)\cr
&&\left(\sum_{i=0}^{r_s-1}\nu_{X_s,i}^{(\boldsymbol{\alpha})} (\lambda-X_s)^{i} +O\left((\lambda-X_s)^{r_s}\right) \right)\cr
&=&\sum_{k=1}^{r_s}\sum_{i=0}^{r_s-1} \nu_{X_s,i}^{(\boldsymbol{\alpha})}Q_{X_s,k} (\lambda-X_{s})^{-(k-i)}+O(1)\cr
&=&\sum_{j=1}^{r_s}\left(\sum_{k=j}^{r_s} \nu_{X_s,k-j}^{(\boldsymbol{\alpha})}  Q_{X_s,k}\right)(\lambda-X_s)^{-j} +O(1)
\eea}

\normalsize{Since} we know that $[\td{A}_{\boldsymbol{\alpha}}(\lambda)]_{1,2}$ is a rational function of $\lambda$ with only poles at $\lambda\in\{X_1,\dots,X_n,\infty\}$ we end up with
\bea\left[\td{A}_{\boldsymbol{\alpha}}(\lambda)\right]_{1,2}&=&\omega\,\nu_{\infty,1}^{(\boldsymbol{\alpha})}\lambda^{r_\infty-4}\delta_{r_\infty\geq 4}
+\sum_{j=0}^{r_\infty-5}\left(\omega\,\nu_{\infty,r_\infty-3-j}^{(\boldsymbol{\alpha})}+\sum_{k=j+1}^{r_\infty-4}\nu_{\infty,k-j}^{(\boldsymbol{\alpha})}Q_{\infty,k}\right)\lambda^j\cr
&&+\sum_{s=1}^n\sum_{j=1}^{r_s}\left(\sum_{k=j}^{r_s} \nu_{X_s,k-j}^{(\boldsymbol{\alpha})}  Q_{X_s,k}\right)(\lambda-X_s)^{-j}
\eea
and \eqref{A12Infinityrinftyequal2} and \eqref{A12Infinityrinftyequal1} implies that we get some extra conditions for $r_\infty=2$ and $r_\infty=1$:
\begin{itemize}
\item For $r_\infty=2$: 
\beq \sum_{s=1}^n\left(\sum_{k=1}^{r_s} \nu_{X_s,k-1}^{(\boldsymbol{\alpha})}  Q_{X_s,k}\right)=\omega\,\nu_{\infty,0}^{(\boldsymbol{\alpha})}\eeq
\item For $r_\infty=1$: 
\footnotesize{\bea \omega\,\nu_{\infty,-1}^{(\boldsymbol{\alpha})}&=&\sum_{s=1}^n\left(\sum_{k=1}^{r_s} \nu_{X_s,k-1}^{(\boldsymbol{\alpha})}  Q_{X_s,k}\right)\cr
\left(\sum_{s=1}^n X_s^2Q_{X_s,1}+2X_sQ_{X_s,2}+Q_{X_s,3}\right)\nu_{\infty,-1}^{(\boldsymbol{\alpha})}+\omega\,\nu_{\infty,0}^{(\boldsymbol{\alpha})}&=&\sum_{s=1}^n\left(\sum_{k=2}^{r_s} \nu_{X_s,k-2}^{(\boldsymbol{\alpha})}  Q_{X_s,k}+ X_s\sum_{k=1}^{r_s} \nu_{X_s,k-1}^{(\boldsymbol{\alpha})}  Q_{X_s,k}\right)\cr
&&
\eea}\normalsize{}
\end{itemize}

\subsection{Entry $(1,1)$}
\subsubsection{Computation of $\td{L}_{1,1}(\lambda)$}
Let us start with the following lemma:
\begin{lemma}\label{LemmaQ}We have for all $i\in \llbracket 1,g\rrbracket$:
\bea \partial_{q_i}[Q_{X_s,m}]&=&\sum_{j=m}^{r_s}  (q_i-X_s)^{m-j-1}Q_{X_s,j} \,\,,\,\, \forall \, (s,m)\in \llbracket 1,n\rrbracket\times\llbracket 1,r_s\rrbracket\cr
\partial_{q_i}[Q_{\infty,r_\infty-4}]&=&-\omega\,\,,\,\, \text{ if }r_\infty\geq 4\cr
\partial_{q_i}[Q_{\infty,m}]&=&-\omega \,q_i^{r_\infty-4-m}-\sum_{j=m+1}^{r_\infty-4} Q_{\infty,j} q_i^{j-1-m} \,\,,\,\, \forall \, m\in \llbracket 1,r_\infty-5\rrbracket
\eea
\end{lemma}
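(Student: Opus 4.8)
The plan is to differentiate the defining relation for the coordinates $\mathbf{Q}$ (the first line of Definition \ref{DefGeometricCoordinates}) with respect to $q_i$ and then match principal parts and polynomial part pole by pole. Write $F(\lambda):=\frac{\omega\prod_{j=1}^g(\lambda-q_j)}{\prod_{s=1}^n(\lambda-X_s)^{r_s}}$ for the rational function whose partial-fraction data are the $Q$'s. Since $F$ depends on $q_i$ only through the factor $(\lambda-q_i)$, logarithmic differentiation gives the key identity
\beq \partial_{q_i}F(\lambda)=-\frac{1}{\lambda-q_i}\,F(\lambda).\eeq
On the other hand, $\lambda$, $\omega$ and the $X_s$ are independent of $q_i$, so differentiating the right-hand side of Definition \ref{DefGeometricCoordinates} gives $\partial_{q_i}F(\lambda)=\sum_{s=1}^n\sum_{k=1}^{r_s}\frac{\partial_{q_i}Q_{X_s,k}}{(\lambda-X_s)^k}+\sum_{k=0}^{r_\infty-4}\partial_{q_i}Q_{\infty,k}\lambda^k$. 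Hence the sought derivatives are precisely the partial-fraction coefficients of $-\frac{1}{\lambda-q_i}F(\lambda)$, and the whole proof reduces to extracting them.

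The next step is to expand $-\frac{1}{\lambda-q_i}$ at each pole and multiply by the relevant part of $F$. Near $\lambda=X_s$ the factor is holomorphic, with Taylor expansion $-\frac{1}{\lambda-q_i}=\sum_{\ell\geq 0}(q_i-X_s)^{-\ell-1}(\lambda-X_s)^\ell$; multiplying by the principal part $\sum_{k=1}^{r_s}Q_{X_s,k}(\lambda-X_s)^{-k}$ of $F$ at $X_s$ and reading off the coefficient of $(\lambda-X_s)^{-m}$ (the index shift $j=\ell+m$) yields the first claimed formula $\partial_{q_i}Q_{X_s,m}=\sum_{j=m}^{r_s}(q_i-X_s)^{m-j-1}Q_{X_s,j}$. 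At infinity the factor decays as $-\frac{1}{\lambda-q_i}=-\sum_{\ell\geq 0}q_i^\ell\lambda^{-\ell-1}$; multiplying by the polynomial part $\omega\delta_{r_\infty\geq 3}\lambda^{r_\infty-3}+\sum_{k=0}^{r_\infty-4}Q_{\infty,k}\lambda^k$ of $F$ and collecting the coefficient of $\lambda^m$ gives $\partial_{q_i}Q_{\infty,m}=-\omega\delta_{r_\infty\geq 3}q_i^{r_\infty-4-m}-\sum_{j=m+1}^{r_\infty-4}q_i^{j-1-m}Q_{\infty,j}$, which collapses to $-\omega$ for $m=r_\infty-4$ and to the stated expression for $m\in\llbracket 1,r_\infty-5\rrbracket$.

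The only point needing genuine care, and the main (though purely routine) obstacle, is to confirm that the two matchings decouple cleanly: one must check that the regular-at-$X_s$ remainder of $F$ (coming from the other finite poles and the polynomial part) contributes nothing to the principal part of $-\frac{1}{\lambda-q_i}F$ at $X_s$, and that the decaying tail of $F$ contributes nothing to its polynomial part at infinity. Both are immediate, since $-\frac{1}{\lambda-q_i}$ is holomorphic at every finite pole and is $O(\lambda^{-1})$ at infinity, so no cross-contributions reach the coefficients being identified. With this observation the argument is complete, the remaining work being only the index bookkeeping sketched above.
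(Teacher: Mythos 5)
Your proof is correct and follows essentially the same route as the paper's: both differentiate the defining relation of Definition \ref{DefGeometricCoordinates} with respect to $q_i$, recognize the result as $-F(\lambda)/(\lambda-q_i)$ (your logarithmic differentiation is just the paper's rewriting of $\prod_{j\neq i}(\lambda-q_j)$ as $\prod_{j}(\lambda-q_j)/(\lambda-q_i)$), and then identify the polar data by expanding at each pole, discarding the cross-terms that are regular there. The only material in the paper's proof that you omit is the side observation that for $r_\infty\leq 2$ the faster decay at infinity reproduces the constraints \eqref{AddConstrains} and \eqref{AddConstrains2}, which is a consistency check rather than part of the lemma's statement.
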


\begin{proof}Let $i\in \llbracket 1,g\rrbracket$. Taking the derivative relatively to $q_i$ in the definition of the change of coordinates of Definition \ref{DefGeometricCoordinates} provides
\beq \label{IdentityNew} \frac{-\omega\underset{j\neq i}{\overset{g}{\prod}}(\lambda-q_j)}{\underset{s=1}{\overset{n}{\prod}}(\lambda-X_s)^{r_s}}=\sum_{s=1}^n\sum_{k=1}^{r_s} \frac{\partial_{q_i}[Q_{X_s,k}]}{(\lambda-X_s)^k}+\sum_{k=0}^{r_\infty-4} \partial_{q_i}[Q_{\infty,k}]\lambda^k\eeq
But the l.h.s. is a rational function of $\lambda$ with poles in $\left\{\infty,X_1,\dots,X_n\right\}$. It can be rewritten as
\beq \frac{-\omega\underset{j\neq i}{\overset{g}{\prod}}(\lambda-q_j)}{\underset{s=1}{\overset{n}{\prod}}(\lambda-X_s)^{r_s}}=\frac{-\omega\underset{j=1}{\overset{g}{\prod}}(\lambda-q_j)}{(\lambda-q_i)\underset{s=1}{\overset{n}{\prod}}(\lambda-X_s)^{r_s}}\eeq
We may expand the last quantity at each pole:
\small{\bea \frac{-\omega\underset{j=1}{\overset{g}{\prod}}(\lambda-q_j)}{(\lambda-q_i)\underset{s=1}{\overset{n}{\prod}}(\lambda-X_s)^{r_s}}&\overset{\lambda\to \infty}{=}&-\left(\sum_{s=1}^n\sum_{k=1}^{r_s} \frac{Q_{X_s,k}}{(\lambda-X_s)^k}+\sum_{k=0}^{r_\infty-4} Q_{\infty,k}\lambda^k +\omega\,\delta_{r_\infty\geq 3}\lambda^{r_\infty-3}\right)\sum_{r=0}^{\infty} q_i^r \lambda^{-1-r} \cr
&=&-\sum_{k=0}^{r_\infty-4}\sum_{r=0}^{\infty} Q_{\infty,k} q_i^r \lambda^{k-1-r}-\omega\sum_{r=0}^{r_\infty-4} q_i^r \lambda^{r_\infty-4-r} +O(\lambda^{-1})\cr
&\overset{m=k-1-r}{=}&-\sum_{m=0}^{r_\infty-5}\sum_{k=m+1}^{r_\infty-4} Q_{\infty,k} q_i^{k-1-m} \lambda^{m} -\omega\sum_{m=0}^{r_\infty-4} q_i^{r_\infty-4-m} \lambda^{m} +O(\lambda^{-1})\cr
&&
\eea}
\normalsize{Note} that the expression is also valid for $r_\infty\leq 3$ in which case the r.h.s. is only $O(\lambda^{-1})$. For any $s\in \llbracket 1,n\rrbracket$, we also have:
\small{\bea \frac{-\omega\underset{j=1}{\overset{g}{\prod}}(\lambda-q_j)}{(\lambda-q_i)\underset{s=1}{\overset{n}{\prod}}(\lambda-X_s)^{r_s}}&\overset{\lambda\to X_s}{=}&-\left(\sum_{s=1}^n\sum_{k=1}^{r_s} \frac{Q_{X_s,k}}{(\lambda-X_s)^k}+\sum_{k=0}^{r_\infty-4} Q_{\infty,k}\lambda^k +\omega\,\delta_{r_\infty\geq 3}\lambda^{r_\infty-3}\right)\cr
&&\,\left(\sum_{r=0}^{\infty} (-1)^{r}(X_s-q_i)^{-(r+1)} (\lambda-X_s)^{r}\right) \cr
&=&-\sum_{k=1}^{r_s}\sum_{r=0}^{\infty} Q_{X_s,k} (-1)^{r}(X_s-q_i)^{-(r+1)} (\lambda-X_s)^{r-k} +O(1)\cr
&\overset{m=k-r}{=}&-\sum_{m=1}^{r_s}\sum_{k=m}^{r_s} Q_{X_s,k} (-1)^{k-m}(X_s-q_i)^{m-k-1} (\lambda-X_s)^{-m} +O(1)\cr
&=&\sum_{m=1}^{r_s}\sum_{k=m}^{r_s} Q_{X_s,k} (q_i-X_s)^{m-k-1} (\lambda-X_s)^{-m} +O(1)
\eea}
\normalsize{Thus} we get:
\bea  \frac{-\omega\underset{j=1}{\overset{g}{\prod}}(\lambda-q_j)}{(\lambda-q_i)\underset{s=1}{\overset{n}{\prod}}(\lambda-X_s)^{r_s}}&=&-\omega\sum_{m=0}^{r_\infty-4} q_i^{r_\infty-4-m} \lambda^{m}-\sum_{m=0}^{r_\infty-5}\sum_{j=m+1}^{r_\infty-4} Q_{\infty,j} q_i^{j-1-m} \lambda^{m} \cr
&&+\sum_{s=1}^n\sum_{m=1}^{r_s}\sum_{j=m}^{r_s} Q_{X_s,j} (q_i-X_s)^{m-j-1} (\lambda-X_s)^{-m}\eea
For $r_\infty=2$, the l.h.s. is $O(\lambda^{-2})$ at infinity and for $r_\infty=1$, the l.h.s. is $O(\lambda^{-3})$ at infinity so that we have the additional relations for all $i\in \llbracket 1,g\rrbracket$:
\small{\bea \label{NewRelaations} 0&=&\sum_{s=1}^n\sum_{j=1}^{r_s} Q_{X_s,j} (q_i-X_s)^{-j}\,\,,\,\, \text{ if } r_\infty=2\cr
0&=&\sum_{s=1}^n\sum_{j=1}^{r_s} Q_{X_s,j} (q_i-X_s)^{-j} \,\text{ and }0=\sum_{s=1}^n\sum_{j=2}^{r_s} Q_{X_s,j} (q_i-X_s)^{1-j}\cr
&&+ \sum_{s=1}^n\sum_{j=1}^{r_s} X_sQ_{X_s,j} (q_i-X_s)^{-j}  \,\,,\,\, \text{ if } r_\infty=1
\eea}

\normalsize{Finally}, we may identify the coefficients at each pole with \eqref{IdentityNew} to determine $\partial_{q_i}[Q_{p,k}]$ ending the proof of Lemma \ref{LemmaQ}. Note that the additional relations \eqref{NewRelaations} are consistent with the additional constraints \eqref{AddConstrains} and \eqref{AddConstrains2}. 
\end{proof}

The next step is to provide an expression for $-\frac{Q(\lambda)}{\underset{s=1}{\overset{n}{\prod}}(\lambda-X_s)^{r_s}}$.

\begin{lemma}\label{LemmaQExpression}We have
\bea \label{LLL}-\frac{Q(\lambda)}{\underset{s=1}{\overset{n}{\prod}}(\lambda-X_s)^{r_s}}&=&-\omega \,\delta_{r_\infty\geq 4}P_{\infty,r_\infty-4} -\sum_{m=0}^{r_\infty-5}P_{\infty,m}\left(\lambda^{r_\infty-4-m}+\omega \sum_{j=m+1}^{r_\infty-4} Q_{\infty,j} \lambda^{j-1-m}\right)\cr
&&+\sum_{s=1}^n\sum_{m=1}^{r_s} P_{X_s,m}\left(\sum_{j=m}^{r_s}  (\lambda-X_s)^{m-j-1}Q_{X_s,j}\right)\cr
&=&-\omega \sum_{k=0}^{r_\infty-4} P_{\infty,r_\infty-4-k}\lambda^k-\sum_{k=0}^{r_\infty-5}\sum_{m=0}^{r_\infty-5-k}P_{\infty,m}Q_{\infty,k+1+m}\lambda^k\cr
&&+ \sum_{s=1}^n\sum_{k=1}^{r_s}\sum_{m=1}^{r_s+1-k}P_{X_s,m}Q_{X_s,k+m-1}(\lambda-X_s)^{-k}\eea
\end{lemma}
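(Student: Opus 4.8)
The plan is to recognise the left-hand side $f(\lambda):=-Q(\lambda)/\prod_{s=1}^n(\lambda-X_s)^{r_s}$ as a distinguished element of a finite-dimensional space of rational functions and to pin it down by its values at the $q_i$. Since $Q(\lambda)$ in \eqref{GaugeTransfo} is exactly the Lagrange interpolation polynomial characterised by $Q(q_i)=-p_i\prod_{s=1}^n(q_i-X_s)^{r_s}$, one reads off immediately that $f(q_i)=p_i$ for every $i\in\llbracket 1,g\rrbracket$. Moreover, because $\deg Q\leq g-1$ while $\prod_{s=1}^n(\lambda-X_s)^{r_s}$ has degree $\sum_{s=1}^n r_s=g-r_\infty+3$, the function $f$ lies in the space $\mathcal{V}$ of rational functions whose only poles are at the $X_s$, of order at most $r_s$, and whose polynomial part at infinity has degree at most $r_\infty-4$. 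A direct count gives $\dim\mathcal{V}=(r_\infty-3)+\sum_{s=1}^n r_s=g$ for $r_\infty\geq 3$.

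First I would establish the elementary but crucial fact that the evaluation map $\mathcal{V}\to\mathbb{C}^g$, $h\mapsto(h(q_1),\dots,h(q_g))$, is a bijection. Injectivity is immediate: writing $h\in\mathcal{V}$ as $N(\lambda)/\prod_{s=1}^n(\lambda-X_s)^{r_s}$ with $\deg N\leq g-1$, vanishing at the $g$ distinct nodes $q_1,\dots,q_g$ (none of which equals any $X_s$) forces $N\equiv 0$; bijectivity then follows from the dimension count. Consequently an element of $\mathcal{V}$ is completely determined by its values at the $q_i$.

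Next I would exhibit an explicit candidate $\Theta\in\mathcal{V}$ with the same nodal values as $f$. Reading the partial derivatives supplied by Lemma \ref{LemmaQ} as evaluations at $\lambda=q_\ell$ of fixed rational functions, I would set $h_{X_s,m}(\lambda):=\sum_{j=m}^{r_s}Q_{X_s,j}(\lambda-X_s)^{m-j-1}$, $h_{\infty,r_\infty-4}(\lambda):=-\omega$, and $h_{\infty,m}(\lambda):=-\omega\lambda^{r_\infty-4-m}-\sum_{j=m+1}^{r_\infty-4}Q_{\infty,j}\lambda^{j-1-m}$ for $m\in\llbracket 0,r_\infty-5\rrbracket$ (the case $m=0$ being covered by the same coefficient identification as in the proof of Lemma \ref{LemmaQ}). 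Each $h_{p,k}$ manifestly lies in $\mathcal{V}$, since its pole orders and the degree of its polynomial part stay within the prescribed bounds, and it satisfies $h_{p,k}(q_\ell)=\partial_{q_\ell}[Q_{p,k}]$ by Lemma \ref{LemmaQ}. Setting $\Theta:=\sum_{m=0}^{r_\infty-4}P_{\infty,m}h_{\infty,m}+\sum_{s=1}^n\sum_{m=1}^{r_s}P_{X_s,m}h_{X_s,m}$ and invoking the defining relation for $p_\ell$ in the second line of \eqref{DefNewcoor}, I obtain $\Theta(q_\ell)=\sum_{p,k}P_{p,k}\,\partial_{q_\ell}[Q_{p,k}]=p_\ell=f(q_\ell)$. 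Since $f,\Theta\in\mathcal{V}$ agree at all $q_\ell$, the bijectivity above forces $f=\Theta$, which is the first displayed equality of the lemma; a routine reindexing (collecting the powers $\lambda^k$ at infinity and the powers $(\lambda-X_s)^{-k}$ at each $X_s$) then yields the fully expanded second form.

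The reindexing is purely mechanical, so the points demanding attention are the injectivity of the evaluation map and the verification that every building block $h_{p,k}$ genuinely belongs to $\mathcal{V}$. The main obstacle is the treatment of the degenerate cases $r_\infty\in\{1,2\}$: there the polynomial part at infinity disappears, $\dim\mathcal{V}$ drops, and the supplementary constraints \eqref{AddConstrains} and \eqref{AddConstrains2} must be incorporated. One checks that the same interpolation argument survives after restricting $\mathcal{V}$ by these linear relations and using the corresponding extra identities of Lemma \ref{LemmaQ} (namely \eqref{NewRelaations}), so that the stated formula continues to hold with the convention that the sums at infinity are empty.
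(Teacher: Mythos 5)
Your proof is correct and is essentially the paper's own argument: both evaluate the two sides at the $g$ points $q_1,\dots,q_g$ (the left side giving $p_i$ by the definition of $Q$, the right side giving $p_i$ via Lemma \ref{LemmaQ} and the definition of $\mathbf{P}$ in Definition \ref{DefGeometricCoordinates}), then conclude equality from the rigidity of rational functions with the prescribed pole structure — your evaluation-map injectivity is exactly the paper's step of multiplying by $\prod_{s=1}^{n}(\lambda-X_s)^{r_s}$ and noting that polynomials of degree at most $g-1$ agreeing at $g$ distinct points coincide. Your handling of the degenerate cases $r_\infty\in\{1,2\}$ through the constraints \eqref{AddConstrains} and \eqref{AddConstrains2} likewise mirrors the paper's treatment.
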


\begin{proof}Let $r_\infty\geq 3$. By definition, both sides are rational function of $\lambda$ with poles of degree $r_\infty-4$ at infinity and $r_s$ at $X_s$ for all $s\in \llbracket 1,n\rrbracket$. Moreover, for any $i\in \llbracket 1,g\rrbracket$ we have $Q(q_i)=-p_i \underset{s=1}{\overset{n}{\prod}} (q_i-X_s)^{r_s}$ so that
\bea -\frac{Q(q_i)}{\underset{s=1}{\overset{n}{\prod}}(q_i-X_s)^{r_s}}&=&p_i\cr
&\overset{\text{Def. } \ref{DefGeometricCoordinates}}{=}&\delta_{r_\infty\geq 4}P_{\infty,r_\infty-4}\partial_{q_i}[Q_{\infty,r_\infty-4}] + \sum_{k=0}^{r_\infty-5} P_{\infty,k}\partial_{q_i}[Q_{\infty,k}]\cr
&&+\sum_{s=1}^n\sum_{k=1}^{r_s} P_{X_s,k}\partial_{q_i}[Q_{X_s,k}]\cr
&\overset{\text{Lemma }\ref{LemmaQ}}{=}&-\omega\,\delta_{r_\infty\geq 4}P_{\infty,r_\infty-4}-\sum_{k=0}^{r_\infty-5} P_{\infty,k}\left(\omega \,q_i^{r_\infty-4-k}+\sum_{j=k+1}^{r_\infty-4} Q_{\infty,j} q_i^{j-1-k}\right)\cr
&&+\sum_{s=1}^n\sum_{k=1}^{r_s} P_{X_s,k}\left(\sum_{j=k}^{r_s}  (q_i-X_s)^{k-j-1}Q_{X_s,j}\right)
\eea
We recognize the r.h.s. of Lemma \ref{LemmaQExpression} evaluated at $\lambda=q_i$. Thus, for all $i\in \llbracket 1,g\rrbracket$, both sides coincide at $\lambda=q_i$. Multiplying, by $\underset{s=1}{\overset{n}{\prod}}(\lambda-X_s)^{r_s}$ on both sides, we get polynomials of order $g-1$ that coincide at $g$ distinct values so that they are equal ending the proof of Lemma \ref{LemmaQExpression}.
For $r_\infty=2$, the l.h.s. of \eqref{LLL} is of order $\lambda^{-2}$ so that we need to impose the condition $\underset{s=1}{\overset{n}{\sum}}\underset{m=1}{\overset{r_s}{\sum}} P_{X_s,m}Q_{X_s,m}=0$ so that the r.h.s. has the same property. Similarly for $r_\infty=1$, the l.h.s. of \eqref{LLL} is of order $\lambda^{-3}$ so that we need to impose the conditions $\underset{s=1}{\overset{n}{\sum}}\underset{m=1}{\overset{r_s}{\sum}} P_{X_s,m}Q_{X_s,m}=0$ and $\underset{s=1}{\overset{n}{\sum}}\underset{m=1}{\overset{r_s-1}{\sum}}P_{X_s,m}Q_{X_s,m+1} +\underset{s=1}{\overset{n}{\sum}}\underset{m=1}{\overset{r_s}{\sum}} X_sP_{X_s,m}Q_{X_s,m}=0$ so that the r.h.s. has the same property.
\end{proof}

Finally, the gauge transformation \eqref{GaugeTransfo} implies that
\small{\beq \label{tdL11} \td{L}_{1,1}(\lambda)= -\frac{Q(\lambda)}{\underset{s=1}{\overset{n}{\prod}}(\lambda-X_s)^{r_s}} -\frac{(t_{\infty,r_\infty-1}\lambda+g_0)}{\omega}\left(\sum_{s=1}^n\sum_{k=1}^{r_s} \frac{Q_{X_s,k}}{(\lambda-X_s)^k}+\sum_{k=0}^{r_\infty-4} Q_{\infty,k}\lambda^k+\omega \,\delta_{r_\infty\geq 3}\lambda^{r_\infty-3}\right)\eeq}
\normalsize{Let} us now look at the behavior at infinity. 
\begin{itemize}\item If $r_\infty\geq 4$ then from Remark \ref{RemarkCoeff}: $\td{L}_{1,1}(\lambda)=-t_{\infty,r_\infty-1}\lambda^{r_\infty-2}-t_{\infty,r_\infty-2}\lambda^{r_\infty-3}+O(\lambda^{r_\infty-4})$. We have that $\frac{Q(\lambda)}{\underset{s=1}{\overset{n}{\prod}}(\lambda-X_s)^{r_s}}=O(\lambda^{r_\infty-4})$ so that the r.h.s. of \eqref{tdL11} is
\beq \td{L}_{1,1}(\lambda)=-t_{\infty,r_\infty-1}\lambda^{r_\infty-2}-\left(\frac{t_{\infty,r_\infty-1}}{\omega}Q_{\infty,r_\infty-4}+g_0\right)\lambda^{r_\infty-3}+O(\lambda^{r_\infty-4})\eeq
Hence we get
\beq -t_{\infty,r_\infty-2}=-\frac{t_{\infty,r_\infty-1}}{\omega}Q_{\infty,r_\infty-4}-g_0 \,\Leftrightarrow\, g_0=t_{\infty,r_\infty-2}-\frac{t_{\infty,r_\infty-1}}{\omega}Q_{\infty,r_\infty-4}\eeq
\item If $r_\infty=3$ then from Remark \ref{RemarkCoeff}: $\td{L}_{1,1}(\lambda)=-t_{\infty,2}\lambda-t_{\infty,1}+O(\lambda^{-1})$.
We have that $\frac{Q(\lambda)}{\underset{s=1}{\overset{n}{\prod}}(\lambda-X_s)^{r_s}}=O(\lambda^{-1})$ so that the r.h.s. of \eqref{tdL11} is
\beq \td{L}_{1,1}(\lambda)=-t_{\infty,2}\lambda-\left(\frac{t_{\infty,2}}{\omega}\sum_{s=1}^n Q_{X_s,1}+g_0\right)+O(\lambda^{-1})\eeq
Hence we get
\beq -t_{\infty,1}=-\frac{t_{\infty,2}}{\omega}\sum_{s=1}^n Q_{X_s,1}-g_0 \,\Leftrightarrow\, g_0=t_{\infty,1}-\frac{t_{\infty,2}}{\omega}\sum_{s=1}^n Q_{X_s,1}\eeq
\item If $r_\infty=2$ then from Remark \ref{RemarkCoeff}: $\td{L}_{1,1}(\lambda)=-t_{\infty,1}-t_{\infty,0}\lambda^{-1}+O(\lambda^{-2})$. We have also from Lemma \ref{LemmaQExpression}:
\beq -\frac{Q(\lambda)}{\underset{s=1}{\overset{n}{\prod}}(\lambda-X_s)^{r_s}}=%\sum_{s=1}^n\sum_{m=1}^{r_s}P_{X_s,m}Q_{X_s,m}\lambda^{-1}+O(\lambda^{-2})=
 O(\lambda^{-2})\eeq
so that using, $\underset{s=1}{\overset{n}{\sum}}Q_{X_s,1}=\omega$, the r.h.s. of \eqref{tdL11} is:
\beq -t_{\infty,1} - \left(\frac{t_{\infty,1}}{\omega}\left(\sum_{s=1}^nQ_{X_s,2}+X_sQ_{X_s,1}\right) +g_0%-\sum_{s=1}^n\sum_{m=1}^{r_s}P_{X_s,m}Q_{X_s,m}
\right)\lambda^{-1} +O(\lambda^{-2})\eeq
Hence we get
\beq \label{g0rfintyequal2}g_0=t_{\infty,0}-\frac{t_{\infty,1}}{\omega}\left(\sum_{s=1}^nQ_{X_s,2}+X_sQ_{X_s,1}\right)%+\sum_{s=1}^n\sum_{m=1}^{r_s}P_{X_s,m}Q_{X_s,m}
\eeq
\end{itemize}
Hence, Lemma \ref{LemmaQExpression} provides the expression for Theorem \ref{GeoLaxMatrices}.

\subsubsection{Computation of $c_{\infty,0}$}\label{Appendixcinfty0}
Let us now observe that entry $(2,1)$ of the compatibility equations implies:
\beq \mathcal{L}_{\boldsymbol{\alpha}}[\td{L}_{1,2}(\lambda)]=\partial_{\lambda}\td{A}^{(\boldsymbol{\alpha})}_{1,2}(\lambda)-2\left(\td{L}_{1,1}(\lambda)\td{A}^{(\boldsymbol{\alpha})}_{1,2}(\lambda)-\td{L}_{1,2}(\lambda)\td{A}^{(\boldsymbol{\alpha})}_{1,1}(\lambda)\right)\eeq
From the gauge transformation we have $\td{A}^{(\boldsymbol{\alpha})}_{1,2}(\lambda)=\td{L}_{1,2}(\lambda)A^{(\boldsymbol{\alpha})}_{1,2}(\lambda)$ and $\td{A}^{(\boldsymbol{\alpha})}_{1,1}(\lambda)=A^{(\boldsymbol{\alpha})}_{1,1}(\lambda)+\td{L}_{1,1}(\lambda)A^{(\boldsymbol{\alpha})}_{1,2}(\lambda)$ so that we end up with
\beq \mathcal{L}_{\boldsymbol{\alpha}}[\td{L}_{1,2}(\lambda)]=\partial_{\lambda}\td{A}^{(\boldsymbol{\alpha})}_{1,2}(\lambda)+2\td{L}_{1,2}(\lambda)A^{(\boldsymbol{\alpha})}_{1,1}(\lambda)
\eeq
From \eqref{ExpreA11}, we have $A^{(\boldsymbol{\alpha})}_{1,1}(\lambda)=c_{\infty,0}+\frac{a_{1}}{\lambda}+O(\lambda^{-2})$. Let us now discuss the various cases:
\begin{itemize}\item If $r_\infty\geq 4$, we have $\partial_{\lambda}\td{A}^{(\boldsymbol{\alpha})}_{1,2}(\lambda)=O(\lambda^{r_\infty-5})$ and $\td{L}_{1,2}(\lambda)=\omega\lambda^{r_\infty-3}+Q_{\infty,r_\infty-4}\lambda^{r_\infty-4}+ O(\lambda^{r_\infty-5})$  so
\beq \mathcal{L}_{\boldsymbol{\alpha}}[\td{L}_{1,2}(\lambda)]=2\omega c_{\infty,0}\lambda^{r_\infty-3}+ \left(2 \omega a_{1}+2c_{\infty,0} Q_{\infty,r_\infty-4} \right) \lambda^{r_\infty-4}+ O(\lambda^{r_\infty-5})\eeq
Identifying coefficient $\lambda^{r_\infty-3}$ at infinity provides
\beq \mathcal{L}_{\boldsymbol{\alpha}}[\omega]=2\omega c_{\infty,0} \,\,\Rightarrow\,\, c_{\infty,0}=\frac{1}{2\omega}\mathcal{L}_{\boldsymbol{\alpha}}[\omega] \eeq
Identifying coefficient $\lambda^{r_\infty-4}$ at infinity provides
\beq \label{LQinftyrinftygeq4}\mathcal{L}_{\boldsymbol{\alpha}}[Q_{\infty,r_\infty-4}]%=2 \omega a_{1}+2c_{\infty,0} Q_{\infty,r_\infty-4}
=2 \omega a_{1}+\frac{1}{\omega}\mathcal{L}_{\boldsymbol{\alpha}}[\omega]Q_{\infty,r_\infty-4}\eeq
\item For $r_\infty=3$, we have $\td{A}^{(\boldsymbol{\alpha})}_{1,2}(\lambda)=O(\lambda^{-1})$ so that $\partial_\lambda \td{A}^{(\boldsymbol{\alpha})}_{1,2}(\lambda)=O(\lambda^{-2})$. Moreover $\td{L}_{1,2}(\lambda)=\omega+\left(\underset{s=1}{\overset{n}{\sum}} Q_{X_s,1}\right) \lambda^{-1}+ O(\lambda^{-2})$ so that
 \beq \mathcal{L}_{\boldsymbol{\alpha}}[\td{L}_{1,2}(\lambda)]=2\omega c_{\infty,0}+\left(2\omega a_{1}+2c_{\infty,0}\sum_{s=1}^n Q_{X_s,1}\right)\lambda^{-1}+ O(\lambda^{-2})\eeq
Identifying coefficient $\lambda^{0}$ at infinity provides
\beq \mathcal{L}_{\boldsymbol{\alpha}}[\omega]=2\omega c_{\infty,0} \,\,\Rightarrow\,\, c_{\infty,0}=\frac{1}{2\omega}\mathcal{L}_{\boldsymbol{\alpha}}[\omega] \eeq
Identifying coefficient $\lambda^{-1}$ at infinity provides
\beq \label{LQinftyrinftyequal3}\sum_{s=1}^n\mathcal{L}_{\boldsymbol{\alpha}}[Q_{X_s,1}]=2\omega a_1+\frac{1}{\omega}\mathcal{L}_{\boldsymbol{\alpha}}[\omega]\sum_{s=1}^n Q_{X_s,1}\eeq
\item For $r_\infty=2$, we have from \eqref{ExtraConditionsrinftyequal2} $\td{A}^{(\boldsymbol{\alpha})}_{1,2}(\lambda)=\omega\, \nu^{(\boldsymbol{\alpha})}_{\infty,0}\lambda^{-1}+O(\lambda^{-2})$ so that $\partial_{\lambda}\td{A}^{(\boldsymbol{\alpha})}_{1,2}(\lambda)=-\omega\,\nu^{(\boldsymbol{\alpha})}_{\infty,0}\lambda^{-2}+O(\lambda^{-3})$. We have also $\td{L}_{1,2}(\lambda)=\omega\lambda^{-1}+\left(\underset{s=1}{\overset{n}{\sum}} X_sQ_{X_s,1}+Q_{X_s,2}\right)\lambda^{-2}+O(\lambda^{-3})$ and $[A_{\boldsymbol{\alpha}}]_{1,1}=c_{\infty,0}+\frac{a_1}{\lambda} +O(\lambda^{-2})$ so that
\bea \mathcal{L}_{\boldsymbol{\alpha}}[\td{L}_{1,2}(\lambda)]&=&2\omega c_{\infty,0}\lambda^{-1}+ \left(-\omega\, \nu^{(\boldsymbol{\alpha})}_{\infty,0}+2\omega a_1 +2c_{\infty,0}\left(\underset{s=1}{\overset{n}{\sum}} X_sQ_{X_s,1}+Q_{X_s,2}\right)\right)\lambda^{-2}\cr
&&+O(\lambda^{-3})\eea
Identifying coefficient $\lambda^{-1}$ at infinity provides
\beq \mathcal{L}_{\boldsymbol{\alpha}}[\omega]=2\omega c_{\infty,0} \,\,\Rightarrow\,\, c_{\infty,0}=\frac{1}{2\omega}\mathcal{L}_{\boldsymbol{\alpha}}[\omega]\eeq
Identifying coefficient $\lambda^{-2}$ at infinity provides
\beq \label{LQinftyrinftyequal2} \sum_{s=1}^n\mathcal{L}_{\boldsymbol{\alpha}}[X_sQ_{X_s,1}+Q_{X_s,2}]=\frac{1}{\omega}\mathcal{L}_{\boldsymbol{\alpha}}[\omega]\left(\underset{s=1}{\overset{n}{\sum}} X_sQ_{X_s,1}+Q_{X_s,2}\right)-\omega\,\nu^{(\boldsymbol{\alpha})}_{\infty,0}+2\omega a_1\eeq
\item For $r_\infty=1$, we have from \eqref{ExtraConditionsrinftyequal1} 
\beq\td{A}^{(\boldsymbol{\alpha})}_{1,2}(\lambda)=\omega\,\nu^{(\boldsymbol{\alpha})}_{\infty,-1}\lambda^{-1}+\left(\omega\,\nu^{(\boldsymbol{\alpha})}_{\infty,0}+\nu^{(\boldsymbol{\alpha})}_{\infty,-1}\sum_{s=1}^n X_s^2Q_{X_s,1}+2X_sQ_{X_s,2}+Q_{X_s,3}\right)\lambda^{-2}+ O(\lambda^{-3})\eeq
so that
\small{\beq \partial_\lambda \td{A}^{(\boldsymbol{\alpha})}_{1,2}(\lambda)=-\omega\,\nu^{(\boldsymbol{\alpha})}_{\infty,-1}\lambda^{-2}-2\left(\omega\,\nu^{(\boldsymbol{\alpha})}_{\infty,0}+\nu^{(\boldsymbol{\alpha})}_{\infty,-1}\sum_{s=1}^n X_s^2Q_{X_s,1}+2X_sQ_{X_s,2}+Q_{X_s,3}\right)\lambda^{-3}+ O(\lambda^{-4})\eeq}
\normalsize{Moreover}, $\td{L}_{1,2}(\lambda)=\omega\lambda^{-2}+\left(\underset{s=1}{\overset{n}{\sum}} X_s^2Q_{X_s,1}+2X_sQ_{X_s,2}+Q_{X_s,3}\right)\lambda^{-3}+O(\lambda^{-4})$ and $A^{(\boldsymbol{\alpha})}_{1,1}(\lambda)=c_{\infty,0}+\frac{a_1}{\lambda}+O(\lambda^{-2})$ so that
\small{\bea \mathcal{L}_{\boldsymbol{\alpha}}[\td{L}_{1,2}(\lambda)]&=&\left(2\omega c_{\infty,0}-\omega\,\nu^{(\boldsymbol{\alpha})}_{\infty,-1}\right)\lambda^{-2}\cr
&&+ 2\left(\omega a_1-\omega\, \nu^{(\boldsymbol{\alpha})}_{\infty,0}+(c_{\infty,0}-\nu^{(\boldsymbol{\alpha})}_{\infty,-1})\sum_{s=1}^n X_s^2Q_{X_s,1}+2X_sQ_{X_s,2}+Q_{X_s,3}\right)\lambda^{-3}\cr
&&+ O(\lambda^{-4})\eea}
\normalsize{Identifying} coefficient $\lambda^{-2}$ at infinity provides
\beq \mathcal{L}_{\boldsymbol{\alpha}}[\omega]=2\omega c_{\infty,0}-\omega\,\nu^{(\boldsymbol{\alpha})}_{\infty,-1}\,\,\Rightarrow\,\, c_{\infty,0}=\frac{1}{2\omega}\mathcal{L}_{\boldsymbol{\alpha}}[\omega] +\frac{1}{2}\nu^{(\boldsymbol{\alpha})}_{\infty,-1}\eeq
Identifying coefficient $\lambda^{-3}$ at infinity provides
\bea \label{LQinftyrinftyequal1}
&&\sum_{s=1}^n\mathcal{L}_{\boldsymbol{\alpha}}[X_s^2Q_{X_s,1}+2X_sQ_{X_s,2}+Q_{X_s,3}]=2\omega\,a_1-2\omega\,\nu^{(\boldsymbol{\alpha})}_{\infty,0}\cr
&&+\left(\frac{1}{\omega}\mathcal{L}_{\boldsymbol{\alpha}}[\omega]-\nu^{(\boldsymbol{\alpha})}_{\infty,-1}\right)\sum_{s=1}^n X_s^2Q_{X_s,1}+2X_sQ_{X_s,2}+Q_{X_s,3}
\eea
\end{itemize}

\subsubsection{Computation of $[\td{A}_{\boldsymbol{\alpha}}(\lambda)]_{1,1}$}
Entry $(1,1)$ of the gauge transformation \eqref{ExpresstdA} implies that
\beq \label{GaugeTdA11}[\td{A}_{\boldsymbol{\alpha}}(\lambda)]_{1,1}=[A_{\boldsymbol{\alpha}}(\lambda)]_{1,1}+\td{L}_{1,1}(\lambda)[A_{\boldsymbol{\alpha}}(\lambda)]_{1,2}\eeq
From \cite{MarchalOrantinAlameddine2022}, $[A_{\boldsymbol{\alpha}}(\lambda)]_{1,1}$ is given by \eqref{ExpreA11}. Since we know that $[\td{A}_{\boldsymbol{\alpha}}(\lambda)]_{1,1}$ is a rational function of $\lambda$ with poles in $\left\{X_1,\dots,X_n,\infty\right\}$, expression \eqref{ExpreA11} implies that $[A_{\boldsymbol{\alpha}}(\lambda)]_{1,1}$ only contributes with a factor $c_{\infty,0}$. Hence, \eqref{GaugeTdA11} reduces to
\beq \label{GaugeTdA11Reduced}[\td{A}_{\boldsymbol{\alpha}}(\lambda)]_{1,1}=c_{\infty,0}+\td{L}_{1,1}(\lambda)[A_{\boldsymbol{\alpha}}(\lambda)]_{1,2}\eeq
We now need to evaluate the r.h.s. at each pole. We have for any $s\in \llbracket 1,n\rrbracket$:
\footnotesize{\bea &&A_{1,2}(\lambda)\td{L}_{1,1}(\lambda)\overset{\lambda\to X_s}{=}\left(\sum_{i=0}^{r_s-1}\nu_{X_s,i}^{(\boldsymbol{\alpha})} (\lambda-X_s)^{i} +O\left((\lambda-X_s)^{r_s}\right)\right)\cr
&&\left(\sum_{k=1}^{r_s}\sum_{m=1}^{r_s+1-k}\frac{P_{X_s,m}Q_{X_s,k+m-1}}{(\lambda-X_s)^{k}}-\frac{(t_{\infty,r_\infty-1}X_s+g_0)}{\omega}\sum_{k=1}^{r_s}\frac{Q_{X_s,k}}{(\lambda-X_s)^k}-\frac{t_{\infty,r_\infty-1}}{\omega}\sum_{k=1}^{r_s}\frac{Q_{X_s,k}}{(\lambda-X_s)^{k-1}} +O(1)\right)\cr
&&=\sum_{i=0}^{r_s-1}\sum_{k=1}^{r_s}\sum_{m=1}^{r_s+1-k}\nu_{X_s,i}^{(\boldsymbol{\alpha})}P_{X_s,m}Q_{X_s,k+m-1}(\lambda-X_s)^{i-k}- \frac{(t_{\infty,r_\infty-1}X_s+g_0)}{\omega}\sum_{i=0}^{r_s-1}\sum_{k=1}^{r_s}\nu_{X_s,i}^{(\boldsymbol{\alpha})} Q_{X_s,k} (\lambda-X_s)^{i-k}\cr
&&- \frac{t_{\infty,r_\infty-1}}{\omega}\sum_{i=0}^{r_s-1}\sum_{k=1}^{r_s}\nu_{X_s,i}^{(\boldsymbol{\alpha})}Q_{X_s,k}(\lambda-X_s)^{i+1-k}+O(1)\cr
&&\overset{k=r+i}{=}\sum_{r=1}^{r_s}\sum_{i=0}^{r_s-r}\sum_{m=1}^{r_s+1-r-i}\nu_{X_s,i}^{(\boldsymbol{\alpha})}P_{X_s,m}Q_{X_s,r+i+m-1}(\lambda-X_s)^{-r}\cr
&&- \frac{(t_{\infty,r_\infty-1}X_s+g_0)}{\omega}\sum_{r=1}^{r_s}\sum_{i=0}^{r_s-r}\nu_{X_s,i}^{(\boldsymbol{\alpha})} Q_{X_s,r+i} (\lambda-X_s)^{-r}\cr
&&- \frac{t_{\infty,r_\infty-1}}{\omega}\sum_{r=1}^{r_s-1}\sum_{i=0}^{r_s-1-r}\nu_{X_s,i}^{(\boldsymbol{\alpha})}Q_{X_s,r+i+1}(\lambda-X_s)^{-r}+O(1)\cr
&=&\left(P_{X_s,1}- \frac{t_{\infty,r_\infty-1}X_s+g_0}{\omega}\right)Q_{X_s,r_s}\nu_{X_s,0}^{(\boldsymbol{\alpha})}(\lambda-X_s)^{-r_s}\cr
&&+ \sum_{r=1}^{r_s-1}\Big[\sum_{i=0}^{r_s-r}\sum_{m=1}^{r_s+1-r-i}\nu_{X_s,i}^{(\boldsymbol{\alpha})}P_{X_s,m}Q_{X_s,r+i+m-1}-\frac{(t_{\infty,r_\infty-1}X_s+g_0)}{\omega}\sum_{i=0}^{r_s-r}\nu_{X_s,i}^{(\boldsymbol{\alpha})} Q_{X_s,r+i}\cr
&&- \frac{t_{\infty,r_\infty-1}}{\omega}\sum_{i=0}^{r_s-1-r}\nu_{X_s,i}^{(\boldsymbol{\alpha})}Q_{X_s,r+i+1}\Big](\lambda-X_s)^{-r}+O(1)
\eea}
\normalsize{At} infinity, for $r_\infty\geq 3$, we have:
\footnotesize{\bea &&A_{1,2}(\lambda)\td{L}_{1,1}(\lambda)\overset{\lambda\to \infty}{=}\left(\sum_{i=1}^{r_\infty-2}\nu_{\infty,i}^{(\boldsymbol{\alpha})} \lambda^{-i} +O\left(\lambda^{-r_\infty+1}\right)\right)\cr
&&\Big[-\omega\sum_{k=0}^{r_\infty-4} P_{\infty,r_\infty-4-k}\lambda^k-\sum_{k=0}^{r_\infty-5}\sum_{m=0}^{r_\infty-5-k}P_{\infty,m}Q_{\infty,k+1+m}\lambda^k-t_{\infty,r_\infty-1} \lambda^{r_\infty-2}\cr
&&-\frac{1}{\omega} \sum_{k=0}^{r_\infty-4}(t_{\infty,r_\infty-1}Q_{\infty,k-1}\delta_{k\geq 1}+g_0Q_{\infty,k})\lambda^{k}+ O(\lambda^{-1})\Big]
\cr
&&=-t_{\infty,r_\infty-1}\sum_{i=1}^{r_\infty-2}\nu_{\infty,i}^{(\boldsymbol{\alpha})} \lambda^{r_\infty-2-i} +\sum_{i=1}^{r_\infty-4}\nu_{\infty,i}^{(\boldsymbol{\alpha})}\left(-\omega P_{\infty,0}-\frac{t_{\infty,r_\infty-1}Q_{\infty,r_\infty-5}\delta_{r_\infty\geq 5}+Q_{\infty,r_\infty-4}g_0}{\omega}\right) \lambda^{r_\infty-4-i} \cr
&&+ \sum_{i=1}^{r_\infty-5}\nu_{\infty,i}^{(\boldsymbol{\alpha})}\sum_{k=0}^{r_\infty-5}\left(-\omega P_{\infty,r_\infty-4-k}-\sum_{m=0}^{r_\infty-5-k}P_{\infty,m}Q_{\infty,k+1+m}-\frac{t_{\infty,r_\infty-1}Q_{\infty,k-1}\delta_{k\geq 1}+g_0Q_{\infty,k}}{\omega}\right)\lambda^{k-i}\cr
&&+ O(\lambda^{-1})\cr
&&=-t_{\infty,r_\infty-1}\sum_{j=0}^{r_\infty-3}\nu_{\infty,r_\infty-2-j}^{(\boldsymbol{\alpha})} \lambda^{j}+\sum_{j=0}^{r_\infty-5}\nu_{\infty,r_\infty-4-j}^{(\boldsymbol{\alpha})}\left(-\omega P_{\infty,0}-\frac{t_{\infty,r_\infty-1}Q_{\infty,r_\infty-5}+Q_{\infty,r_\infty-4}g_0}{\omega}\right) \lambda^{j} \cr
&&+ \sum_{j=0}^{r_\infty-6}\sum_{i=1}^{r_\infty-5-j}\nu_{\infty,i}^{(\boldsymbol{\alpha})}\left(- \omega P_{\infty,r_\infty-4-i-j}-\sum_{m=0}^{r_\infty-5-i-j}P_{\infty,m}Q_{\infty,j+i+1+m}-\frac{t_{\infty,r_\infty-1}Q_{\infty,j+i-1}+g_0Q_{\infty,j+i}}{\omega}\right)\lambda^{j}\cr
&&+ O(\lambda^{-1})\cr
&&=-t_{\infty,r_\infty-1}\nu_{\infty,1}^{(\boldsymbol{\alpha})}\lambda^{r_\infty-3}\delta_{r_\infty\geq 3}-t_{\infty,r_\infty-1}\nu_{\infty,2}^{(\boldsymbol{\alpha})}\lambda^{r_\infty-4}\delta_{r_\infty\geq 4}\cr
&&-\left(t_{\infty,r_\infty-1}\nu_{\infty,3}^{(\boldsymbol{\alpha})}+\left(\omega P_{\infty,0}+\frac{t_{\infty,r_\infty-1}Q_{\infty,r_\infty-5}+Q_{\infty,r_\infty-4}g_0}{\omega}\right)\nu_{\infty,1}^{(\boldsymbol{\alpha})}\right)\lambda^{r_\infty-5}\delta_{r_\infty\geq 5}\cr
&&-\sum_{j=0}^{r_\infty-6}\Big[t_{\infty,r_\infty-1}\nu_{\infty,r_\infty-2-j}^{(\boldsymbol{\alpha})}+\nu_{\infty,r_\infty-4-j}^{(\boldsymbol{\alpha})}\left(\omega P_{\infty,0}+\frac{t_{\infty,r_\infty-1}Q_{\infty,r_\infty-5}+Q_{\infty,r_\infty-4}g_0}{\omega}\right)\cr
&&+\sum_{i=1}^{r_\infty-5-j}\nu_{\infty,i}^{(\boldsymbol{\alpha})}\left(\omega P_{\infty,r_\infty-4-i-j}+\sum_{m=0}^{r_\infty-5-i-j}P_{\infty,m}Q_{\infty,j+i+1+m}+\frac{t_{\infty,r_\infty-1}Q_{\infty,j+i-1}+g_0Q_{\infty,j+i}}{\omega}\right)\Big]\lambda^j\cr
&&+ O(\lambda^{-1})\cr
&&
\eea}

\normalsize{Note} that for $r_\infty\geq 3$, the term in $\lambda^0$ implies $\nu_{\infty,r_\infty-2}^{(\boldsymbol{\alpha})}$. In order to determine it, we simply look at order $\lambda^{-1}$ at $\lambda\to \infty$ of $[\td{A}_{\boldsymbol{\alpha}}(\lambda)]_{1,2}$ in Theorem \ref{GeoLaxMatrices}. We find:
\beq \text{For } r_\infty\geq 3\,:\, \sum_{s=1}^n\sum_{k=1}^{r_s}\nu_{X_s,k-1}^{(\boldsymbol{\alpha})}Q_{X_s,k}=\omega\,\nu_{\infty,r_\infty-2}^{(\boldsymbol{\alpha})}+\sum_{j=1}^{r_\infty-3}\nu_{\infty,j}^{(\boldsymbol{\alpha})}Q_{\infty,j-1}\eeq
Since, we shall need it later, one may also determine $\nu_{\infty,r_\infty-1}^{(\boldsymbol{\alpha})}$ by looking at order $\lambda^{-2}$ at $\lambda\to \infty$ of $[\td{A}_{\boldsymbol{\alpha}}(\lambda)]_{1,2}$ in Theorem \ref{GeoLaxMatrices} and we find
\beq  \omega\,\nu_{\infty,r_\infty-1}^{(\boldsymbol{\alpha})}=\sum_{s=1}^n\sum_{k=2}^{r_s}\nu_{X_s,k-2}^{(\boldsymbol{\alpha})}Q_{X_s,k}+\sum_{s=1}^n\sum_{k=1}^{r_s}\nu_{X_s,k-1}^{(\boldsymbol{\alpha})}X_sQ_{X_s,k}-\sum_{j=2}^{r_\infty-2}\nu_{\infty,j}^{(\boldsymbol{\alpha})}Q_{\infty,j-2}-\nu_{\infty,1}^{(\boldsymbol{\alpha})}\left(\sum_{s=1}^n Q_{X_s,1}\right)\eeq

For $r_\infty=2$, we have $[A_{\boldsymbol{\alpha}}(\lambda)]_{1,2}=\nu_{\infty,0}^{(\boldsymbol{\alpha})} +O(\lambda^{-1})$ and $\td{L}_{1,1}(\lambda)=-t_{\infty,1} -t_{\infty,0}\lambda^{-1}+O(\lambda^{-2})$, thus we end up with  $\td{L}_{1,1}(\lambda)[A_{\boldsymbol{\alpha}}(\lambda)]_{1,2}=-t_{\infty,1}\nu_{\infty,0}^{(\boldsymbol{\alpha})} +O(\lambda^{-1})$.
For $r_\infty=1$ we have $[A_{\boldsymbol{\alpha}}(\lambda)]_{1,2}=\nu_{\infty,-1}^{(\boldsymbol{\alpha})}\lambda+ \nu_{\infty,0}^{(\boldsymbol{\alpha})} +O(\lambda^{-1})$ and $\td{L}_{1,1}(\lambda)=-t_{\infty,0}\lambda^{-1}+O(\lambda^{-2})$ so that 
\beq \label{tdA11rinftyequal1} \text{For }r_\infty=1\,\,:\,\, [\td{A}_{\boldsymbol{\alpha}}(\lambda)]_{1,1}=c_{\infty,0}-t_{\infty,0}\nu_{\infty,-1}^{(\boldsymbol{\alpha})}+O(\lambda^{-1})\eeq

Thus we end up with the formula given by Theorem \ref{GeoLaxMatrices}.

\subsection{Entry $(2,1)$}
\subsubsection{Computation of $\td{L}_{2,1}$}\label{SectionAppL21}
The gauge transformation \eqref{GaugeTransfo} implies that
\beq \label{GaugeTdL} G(\lambda)\td{L}(\lambda)=L(\lambda)G(\lambda)-\partial_\lambda G(\lambda)\eeq
Entry $(2,1$) of the previous equality provides
\beq \td{L}_{1,1}(\lambda)^2+\td{L}_{1,2}(\lambda)\td{L}_{2,1}(\lambda)=L_{2,1}(\lambda)+\td{L}_{1,1}(\lambda)L_{2,2}(\lambda)-\partial_\lambda\td{L}_{1,1}(\lambda)\eeq
Entry $(2,2)$ of \eqref{GaugeTdL} provides $L_{2,2}(\lambda)=\frac{\partial_\lambda \td{L}_{1,2}(\lambda) }{\td{L}_{1,2}(\lambda)}$. Thus, we end up with
\beq \label{GaugeTdL21}\td{L}_{2,1}(\lambda)=\frac{L_{2,1}(\lambda)}{\td{L}_{1,2}(\lambda)}-\frac{\td{L}_{1,1}(\lambda)^2}{\td{L}_{1,2}(\lambda)}-\partial_\lambda\left(\frac{\td{L}_{1,1}(\lambda)}{\td{L}_{1,2}(\lambda)}\right)\eeq

We now use the fact that $\td{L}_{2,1}$ is a rational function of $\lambda$ with poles only in $\{\infty,X_1,\dots,X_n\}$. Thus, we may simply study the behavior of each term at these poles. Let us first discuss the term $\frac{L_{2,1}(\lambda)}{\td{L}_{1,2}(\lambda)}$. We recall that $\td{L}_{1,2}(\lambda)=\omega \,\delta_{r_\infty\geq 3}\lambda^{r_\infty-3}+\underset{k=0}{\overset{r_\infty-4}{\sum}} Q_{\infty,k}\lambda^k+\underset{s=1}{\overset{n}{\sum}}\underset{k=1}{\overset{r_s}{\sum}} \frac{Q_{X_s,k}}{(\lambda-X_s)^k}$ and that from \eqref{L21}
\beq L_{2,1}(\lambda)= -\td{P}_2(\lambda) +\sum_{j=0}^{r_\infty-4}H_{\infty,j}\lambda^j+\sum_{s=1}^n\sum_{j=1}^{r_s}H_{X_s,j}(\lambda-X_s)^{-j}-
 \lambda^{r_\infty-3}\delta_{r_\infty\geq 3}-\sum_{j=1}^{g} \frac{p_j}{\lambda-q_j}
\eeq 
Thus we have  
\bea\label{L21overtdL12} \frac{L_{2,1}(\lambda)}{\td{L}_{1,2}(\lambda)}&=&\frac{L_{2,1}(\lambda)}{\omega \,\delta_{r_\infty\geq 3}\lambda^{r_\infty-3}+\underset{k=0}{\overset{r_\infty-4}{\sum}} Q_{\infty,k}\lambda^k+\underset{s=1}{\overset{n}{\sum}}\underset{k=1}{\overset{r_s}{\sum}} \frac{Q_{X_s,k}}{(\lambda-X_s)^k}}\cr
&\overset{\lambda\to X_s}{=}&\frac{\left( -\td{P}_2(\lambda) +O\left((\lambda-X_s)^{-r_s}\right)\right)}{\omega \,\delta_{r_\infty\geq 3}\lambda^{r_\infty-3}+\underset{k=0}{\overset{r_\infty-4}{\sum}} Q_{\infty,k}\lambda^k+\underset{s=1}{\overset{n}{\sum}}\underset{k=1}{\overset{r_s}{\sum}} \frac{Q_{X_s,k}}{(\lambda-X_s)^k}} \cr
&\overset{\lambda\to X_s}{=}&\frac{-\td{P}_2(\lambda)}{\omega \,\delta_{r_\infty\geq 3}\lambda^{r_\infty-3}+\underset{k=0}{\overset{r_\infty-4}{\sum}} Q_{\infty,k}\lambda^k+\underset{s=1}{\overset{n}{\sum}}\underset{k=1}{\overset{r_s}{\sum}} \frac{Q_{X_s,k}}{(\lambda-X_s)^k}}+  O(1)\cr
&\overset{\lambda\to X_s}{=}&\left[\frac{\underset{j=r_s+1}{\overset{2r_s}{\sum}}\left(\underset{m=0}{\overset{2r_s-j}{\sum}} t_{X_s,r_s-1-m}t_{X_s,j+m-r_s-1}\right) (\lambda-X_s)^{-j}}{\omega \,\delta_{r_\infty\geq 3}\lambda^{r_\infty-3}+\underset{k=0}{\overset{r_\infty-4}{\sum}} Q_{\infty,k}\lambda^k+\underset{s=1}{\overset{n}{\sum}}\underset{k=1}{\overset{r_s}{\sum}} \frac{Q_{X_s,k}}{(\lambda-X_s)^k}}\right]_{X_s,-} +O(1)\cr
&&
\eea
At infinity, we have
\begin{itemize}\item For $r_\infty\geq 3$:
\bea \label{L21overtdL21rinftygeq3}\frac{L_{2,1}(\lambda)}{\td{L}_{1,2}(\lambda)}&\overset{\lambda\to \infty}{=}&  \frac{-\td{P}_2(\lambda)-t_{\infty,r_\infty-1}\lambda^{r_\infty-3} + O(\lambda^{r_\infty-4})}{\omega \,\lambda^{r_\infty-3}+\underset{k=0}{\overset{r_\infty-4}{\sum}} Q_{\infty,k}\lambda^k+\underset{s=1}{\overset{n}{\sum}}\underset{k=1}{\overset{r_s}{\sum}} \frac{Q_{X_s,k}}{(\lambda-X_s)^k}}\cr
&\overset{\lambda\to \infty}{=}&-\frac{t_{\infty,r_\infty-1}}{\omega}+ \left[\frac{\underset{j=r_\infty-3}{\overset{2r_\infty-4}{\sum}}\left(\underset{m=0}{\overset{2r_\infty-4-j}{\sum}} t_{\infty,r_\infty-1-m}t_{\infty,j+m-r_\infty+3}\right) \lambda^{j}}{\omega \,\lambda^{r_\infty-3}+\underset{k=0}{\overset{r_\infty-4}{\sum}} Q_{\infty,k}\lambda^k+\underset{s=1}{\overset{n}{\sum}}\underset{k=1}{\overset{r_s}{\sum}} \frac{Q_{X_s,k}}{(\lambda-X_s)^k}}\right]_{\infty,+} \cr
&&+O(\lambda^{-1})
\eea
\item For $r_\infty=2$: From \eqref{ConditionsAddrinftyequal2} we have:
\beq L_{2,1}(\lambda)\overset{\lambda\to \infty}{=}(t_{\infty,1})^2+(2t_{\infty,1}t_{\infty,0}-t_{\infty,1})\lambda^{-1}+O(\lambda^{-2})\eeq
while
\beq \td{L}_{1,2}(\lambda) \overset{\lambda\to \infty}{=}\omega \lambda^{-1}+\left(\sum_{s=1}^n(Q_{X_s,2}+X_sQ_{X_s,1})\right)\lambda^{-2} +O(\lambda^{-3})\eeq
Thus, we obtain
\beq \label{L21overtdL21rinftyequal2}\frac{L_{2,1}(\lambda)}{\td{L}_{1,2}(\lambda)}\overset{\lambda\to \infty}{=}\frac{(t_{\infty,1})^2}{\omega}\lambda-\frac{(t_{\infty,1})^2}{\omega^2}\left(\sum_{s=1}^n(Q_{X_s,2}+X_sQ_{X_s,1})\right)+\frac{2t_{\infty,1}t_{\infty,0}-t_{\infty,1}}{\omega}+O(\lambda^{-1})\eeq
Moreover, we have:
\bea-\frac{(\td{L}_{1,1}(\lambda))^2}{\td{L}_{1,2}(\lambda)}&\overset{\lambda\to\infty}{=}&-\frac{\left(t_{\infty,1}+t_{\infty,0}\lambda^{-1}+O(\lambda^{-2})\right)^2}{\omega \lambda^{-1}+\left(\underset{s=1}{\overset{n}{\sum}} X_sQ_{X_s,1}+Q_{X_s,2}\right)\lambda^{-2}+O(\lambda^{-3})}\cr
&=&-\frac{(t_{\infty,1})^2}{\omega}\lambda-\frac{2t_{\infty,0}t_{\infty,1}}{\omega}+\frac{(t_{\infty,1})^2}{\omega^2}\left(\sum_{s=1}^n X_sQ_{X_s,1}+Q_{X_s,2}\right)+O(\lambda^{-1})\cr
&&
\eea

\item For $r_\infty=1$: We have from \eqref{ConditionsAddrinftyequal1}
\beq L_{2,1}(\lambda)\overset{\lambda\to \infty}{=}\left(t_{\infty,0}(t_{\infty,0}-1)%-\sum_{s=1}^n (t_{X_s,0})^2\delta_{r_s=1}
\right)\lambda^{-2}+O(\lambda^{-3})\eeq
while
\beq \td{L}_{1,2}(\lambda)\overset{\lambda\to \infty}{=}\omega \lambda^{-2}+O(\lambda^{-3})\eeq
Thus, we obtain
\beq\label{L21overtdL21rinftyequal1}\frac{L_{2,1}(\lambda)}{\td{L}_{1,2}(\lambda)}\overset{\lambda\to \infty}{=}\frac{t_{\infty,0}(t_{\infty,0}-1)}{\omega}%-\sum_{s=1}^n (t_{X_s,0})^2\delta_{r_s=1}
+O(\lambda^{-1})\eeq
Moreover, we have:
\beq-\frac{(\td{L}_{1,1}(\lambda))^2}{\td{L}_{1,2}(\lambda)}\overset{\lambda\to\infty}{=}-\frac{\left(t_{\infty,0}\lambda^{-1}+O(\lambda^{-2})\right)^2}{\omega \lambda^{-2}+O(\lambda^{-3})}=-\frac{(t_{\infty,0})^2}{\omega}+O(\lambda^{-1})
\eeq
\end{itemize}

Let us now discuss the term $-\partial_\lambda\left(\frac{\td{L}_{1,1}(\lambda)}{\td{L}_{1,2}(\lambda)}\right)$. Since both $\td{L}_{1,1}(\lambda)$ and $\td{L}_{1,2}(\lambda)$ are singular at $\lambda\to X_s$ of order $(\lambda-X_s)^{r_s}$, it is obvious that
\beq-\partial_\lambda\left(\frac{\td{L}_{1,1}(\lambda)}{\td{L}_{1,2}(\lambda)}\right)\overset{\lambda\to X_s}{=}O(1) \,\,,\,\, \forall\, s\in\llbracket 1,n\rrbracket\eeq
Because of the normalization at infinity, we have:
\beq-\partial_\lambda\left(\frac{\td{L}_{1,1}(\lambda)}{\td{L}_{1,2}(\lambda)}\right)\overset{\lambda\to \infty}{=}-\partial_{\lambda}\left(\frac{-t_{\infty,r_\infty-1}\lambda^{r_\infty-2}+O(\lambda^{r_\infty-3})}{\omega\lambda^{r_\infty-3}+ O(\lambda^{r_\infty-4})} \right)\overset{\lambda\to \infty}{=}\frac{t_{\infty,r_\infty-1}}{\omega}+O(\lambda^{-2})\eeq
so that the only contribution of $-\partial_\lambda\left(\frac{\td{L}_{1,1}(\lambda)}{\td{L}_{1,2}(\lambda)}\right)$ in \eqref{GaugeTdL21} is $\frac{t_{\infty,r_\infty-1}}{\omega}$. Regrouping the contributions of each term at each pole, we end up with formulas of Theorem \ref{GeoLaxMatrices}.

\subsubsection{The special case of $r_\infty=1$}
Finally, let us discuss the special case of $r_\infty=1$. The previous contributions indicate that we may rewrite $\td{L}_{2,1}(\lambda)$ as
\bea \td{L}_{2,1}(\lambda)&=&%-(t_{\infty,0})^2+t_{\infty,0}+
\sum_{s=1}^n\left[\frac{\underset{j=r_s+1}{\overset{2r_s}{\sum}}\left(\underset{m=0}{\overset{2r_s-j}{\sum}} t_{X_s,r_s-1-m}t_{X_s,j+m-r_s-1}\right) (\lambda-X_s)^{-j} -\mathring{L}_{1,1}(\lambda)^2}{\td{L}_{1,2}(\lambda)}\right]_{X_s,-}\cr
&&+\sum_{s=1}^n\left[\frac{2(t_{\infty,0}\lambda+g_0)}{\omega}\mathring{L}_{1,1}(\lambda)-\frac{(t_{\infty,0}\lambda+g_0)^2}{\omega^2}\td{L}_{1,2}(\lambda)\right]_{X_s,-}\cr
&&
\eea
Let us now observe that $\frac{2(t_{\infty,0}\lambda+g_0)}{\omega}\mathring{L}_{1,1}(\lambda)-\frac{(t_{\infty,0}\lambda+g_0)^2}{\omega^2}\td{L}_{1,2}(\lambda)$ is a rational function of $\lambda$ with poles in $\mathcal{R}$. At infinity we have
\beq \frac{2(t_{\infty,0}\lambda+g_0)}{\omega}\mathring{L}_{1,1}(\lambda)-\frac{(t_{\infty,0}\lambda+g_0)^2}{\omega^2}\td{L}_{1,2}(\lambda)\overset{\lambda\to \infty}{=}-\frac{(t_{\infty,0})^2}{\omega}+O(\lambda^{-1})\eeq
so that we get
\bea&&\frac{2(t_{\infty,0}\lambda+g_0)}{\omega}\mathring{L}_{1,1}(\lambda)-\frac{(t_{\infty,0}\lambda+g_0)^2}{\omega^2}\td{L}_{1,2}(\lambda)=\cr
&&\sum_{s=1}^n\left[\frac{(t_{\infty,0}\lambda+g_0)}{\omega}\mathring{L}_{1,1}(\lambda)-\frac{(t_{\infty,0}\lambda+g_0)^2}{\omega^2}\td{L}_{1,2}(\lambda)\right]_{X_s,-}-\frac{(t_{\infty,0})^2}{\omega}\eea
Thus, we end up with
\bea \label{SuperFormulatdL21}\td{L}_{2,1}(\lambda)&=&\frac{(t_{\infty,0})^2}{\omega}+\sum_{s=1}^n\left[\frac{\underset{j=r_s+1}{\overset{2r_s}{\sum}}\left(\underset{m=0}{\overset{2r_s-j}{\sum}} t_{X_s,r_s-1-m}t_{X_s,j+m-r_s-1}\right) (\lambda-X_s)^{-j} -\mathring{L}_{1,1}(\lambda)^2}{\td{L}_{1,2}(\lambda)}\right]_{X_s,-}\cr
&&+\frac{2(t_{\infty,0}\lambda+g_0)}{\omega}\mathring{L}_{1,1}(\lambda)-\frac{(t_{\infty,0}\lambda+g_0)^2}{\omega^2}\td{L}_{1,2}(\lambda)
\eea
Let us now obtain the expression for $g_0$. We first observe that $\mathring{L}_{1,1}(\lambda)$ is independent of $g_0$ by definition. Moreover, the gauge transformation and the normalization at infinity imply that 
\beq g_0=\frac{1}{\omega}\left(\frac{1}{2}-t_{\infty,0}\right)\left(\sum_{s=1}^n X_s^2Q_{X_s,1}+2X_sQ_{X_s,2}+Q_{X_s,3}\right)-\frac{1}{2t_{\infty,0}}\,\underset{\lambda\to \infty}{\Res} \lambda^2 L_{2,1}(\lambda)\eeq
and 
\bea \label{L21bis} L_{2,1}(\lambda)&=&(\td{L}_{1,1})^2+\td{L}_{1,2}(\lambda)\td{L}_{2,1}(\lambda)+\partial_\lambda \td{L}_{1,1}(\lambda)-\td{L}_{1,1}(\lambda)\frac{\partial_\lambda\td{L}_{1,2}(\lambda)}{\td{L}_{1,2}(\lambda)}\cr
&=&(\mathring{L}_{1,1}(\lambda))^2+\partial_\lambda \mathring{L}_{1,1}(\lambda)-\frac{t_{\infty,0}}{\omega}\td{L}_{1,2}(\lambda)-\mathring{L}_{1,1}(\lambda)\frac{\partial_\lambda\td{L}_{1,2}(\lambda)}{\td{L}_{1,2}(\lambda)}+\td{L}_{1,2}(\lambda)\td{L}_{2,1}(\lambda)\cr
&&-\frac{(t_{\infty,0}\lambda+g_0)}{\omega}\td{L}_{1,2}(\lambda)\mathring{L}_{1,1}(\lambda)+\frac{(t_{\infty,0}\lambda+g_0)^2}{\omega^2}\td{L}_{1,2}(\lambda)^2
\eea
Using the specific formula for $\td{L}_{2,1}(\lambda)$ given by \eqref{SuperFormulatdL21}, we end up with
\bea \label{SuperIdL21} &&L_{2,1}(\lambda)=(\mathring{L}_{1,1}(\lambda))^2+\partial_\lambda \mathring{L}_{1,1}(\lambda)-\frac{t_{\infty,0}}{\omega}\td{L}_{1,2}(\lambda)-\mathring{L}_{1,1}(\lambda)\frac{\partial_\lambda\td{L}_{1,2}(\lambda)}{\td{L}_{1,2}(\lambda)}+\frac{(t_{\infty,0})^2}{\omega}\td{L}_{1,2}(\lambda)\cr
&&+\td{L}_{1,2}(\lambda)\sum_{s=1}^n\left[\frac{\underset{j=r_s+1}{\overset{2r_s}{\sum}}\left(\underset{m=0}{\overset{2r_s-j}{\sum}} t_{X_s,r_s-1-m}t_{X_s,j+m-r_s-1}\right) (\lambda-X_s)^{-j} -\mathring{L}_{1,1}(\lambda)^2}{\td{L}_{1,2}(\lambda)}\right]_{X_s,-}\cr
&&
%&&+(t_{\infty,0}\lambda+g_0)\mathring{L}_{1,1}(\lambda)\td{L}_{1,2}(\lambda)-(t_{\infty,0}\lambda+g_0)^2(\td{L}_{1,2}(\lambda))^2\cr
%&&-(t_{\infty,0}\lambda+g_0)\td{L}_{1,2}(\lambda)\mathring{L}_{1,1}(\lambda)+(t_{\infty,0}\lambda+g_0)^2\td{L}_{1,2}(\lambda)^2\cr
\eea
so that
\bea &&g_0=\frac{1}{\omega}\left(\frac{1}{2}-t_{\infty,0}\right)\left(\sum_{s=1}^n X_s^2Q_{X_s,1}+2X_sQ_{X_s,2}+Q_{X_s,3}\right)\cr
&&-\frac{1}{2t_{\infty,0}}\underset{\lambda\to \infty}{\Res} \lambda^2\Big[
(\mathring{L}_{1,1}(\lambda))^2+\partial_\lambda \mathring{L}_{1,1}(\lambda)-\frac{t_{\infty,0}}{\omega}\td{L}_{1,2}(\lambda)-\mathring{L}_{1,1}(\lambda)\frac{\partial_\lambda\td{L}_{1,2}(\lambda)}{\td{L}_{1,2}(\lambda)}+\frac{(t_{\infty,0})^2}{\omega}\td{L}_{1,2}(\lambda)\cr
&&+\td{L}_{1,2}(\lambda)\sum_{s=1}^n\left[\frac{\underset{j=r_s+1}{\overset{2r_s}{\sum}}\left(\underset{m=0}{\overset{2r_s-j}{\sum}} t_{X_s,r_s-1-m}t_{X_s,j+m-r_s-1}\right) (\lambda-X_s)^{-j} -\mathring{L}_{1,1}(\lambda)^2}{\td{L}_{1,2}(\lambda)}\right]_{X_s,-}
\Big] \cr
&&
\eea

\subsubsection{Computation of $[\td{A}_{\boldsymbol{\alpha}}(\lambda)]_{2,1}$}
Entries $(2,1)$ and $(2,2)$ of the gauge transformation \eqref{ExpresstdA} imply that
\bea \label{GaugeTdA21A22}
\td{L}_{1,1}(\lambda)[\td{A}_{\boldsymbol{\alpha}}(\lambda)]_{1,1}+\td{L}_{1,2}(\lambda)[\td{A}_{\boldsymbol{\alpha}}(\lambda)]_{2,1}&=&[A_{\boldsymbol{\alpha}}(\lambda)]_{2,1}+\td{L}_{1,1}(\lambda)[A_{\boldsymbol{\alpha}}(\lambda)]_{2,2}-\mathcal{L}_{\boldsymbol{\alpha}}[\td{L}_{1,1}(\lambda)]\cr
\td{L}_{1,1}(\lambda)[\td{A}_{\boldsymbol{\alpha}}(\lambda)]_{1,2}-\td{L}_{1,2}(\lambda)[\td{A}_{\boldsymbol{\alpha}}(\lambda)]_{1,1}&=&[A_{\boldsymbol{\alpha}}(\lambda)]_{2,2}\td{L}_{1,2}(\lambda)-\mathcal{L}_{\boldsymbol{\alpha}}[\td{L}_{1,2}(\lambda)]
\eea

Using \eqref{A21} we get
\bea \label{tdA21} [\td{A}_{\boldsymbol{\alpha}}(\lambda)]_{2,1}&=&-\mathcal{L}_{\boldsymbol{\alpha}}\left[\frac{\td{L}_{1,1}(\lambda)}{\td{L}_{1,2}(\lambda)}\right]+\frac{\partial_\lambda[A_{\boldsymbol{\alpha}}(\lambda)]_{1,1}}{\td{L}_{1,2}(\lambda)} +[A_{\boldsymbol{\alpha}}(\lambda)]_{1,2}\frac{L_{2,1}(\lambda)}{\td{L}_{1,2}(\lambda)}\cr
&&+\frac{\td{L}_{1,1}(\lambda)}{\td{L}_{1,2}(\lambda)}\left(\frac{\td{L}_{1,1}(\lambda)}{\td{L}_{1,2}(\lambda)}[\td{A}_{\boldsymbol{\alpha}}(\lambda)]_{1,2}-2[\td{A}_{\boldsymbol{\alpha}}(\lambda)]_{1,1}\right)
\eea
Since $[\td{A}_{\boldsymbol{\alpha}}(\lambda)]_{2,1}$ is a rational function of $\lambda$ with poles in $\{\infty,X_1,\dots,X_n\}$, we only need to evaluate the contributions at each pole.

\subsubsection{Term $-\mathcal{L}_{\boldsymbol{\alpha}}\left[\frac{\td{L}_{1,1}(\lambda)}{\td{L}_{1,2}(\lambda)}\right]$}\label{Sectech}

Let us discuss the term $-\mathcal{L}_{\boldsymbol{\alpha}}\left[\frac{\td{L}_{1,1}(\lambda)}{\td{L}_{1,2}(\lambda)}\right]$. We recall that
\beq \frac{\td{L}_{1,1}(\lambda)}{\td{L}_{1,2}(\lambda)}\overset{\lambda \to X_s}{=} O(1)\,\,\, \forall\, s\in \llbracket 1,n\rrbracket\eeq
At infinity, we have:
\beq \label{RatiotdL11tdL12}\frac{\td{L}_{1,1}(\lambda)}{\td{L}_{1,2}(\lambda)}\overset{\lambda \to \infty}{=}\frac{-t_{\infty,r_\infty-1}\lambda- g_0}{\omega}+O(\lambda^{-1})\eeq 
In all cases, the contribution of $-t_{\infty,r_\infty-1}\lambda$ to $-\mathcal{L}_{\boldsymbol{\alpha}}\left[\frac{\td{L}_{1,1}(\lambda)}{\td{L}_{1,2}(\lambda)}\right]$ is null because $t_{\infty,r_\infty-1}$ is either a monodromy parameter (for $r_\infty=1$) or fixed to $1$ (for $r_\infty\geq 2$) because of Proposition \ref{PropTrivialTimes}. Thus, we end up with the fact that $-\mathcal{L}_{\boldsymbol{\alpha}}\left[\frac{\td{L}_{1,1}(\lambda)}{\td{L}_{1,2}(\lambda)}\right]$ only contributes with either
\begin{itemize}\item $-\frac{t_{\infty,r_\infty-1}}{\omega^2}\mathcal{L}_{\boldsymbol{\alpha}}[\omega] \lambda-\mathcal{L}_{\boldsymbol{\alpha}}\left[\frac{Q_{\infty,r_\infty-4}}{\omega^2}\right]$ for $r_\infty\geq 4$
\item $-\frac{t_{\infty,2}}{\omega^2}\mathcal{L}_{\boldsymbol{\alpha}}[\omega] \lambda -\underset{s=1}{\overset{n}{\sum}}\mathcal{L}_{\boldsymbol{\alpha}}\left[\frac{Q_{X_s,1}}{\omega^2}\right]$ for $r_\infty=3$
\item $-\frac{t_{\infty,1}}{\omega^2}\mathcal{L}_{\boldsymbol{\alpha}}[\omega] \lambda -\underset{s=1}{\overset{n}{\sum}} \mathcal{L}_{\boldsymbol{\alpha}}\left[\frac{Q_{X_s,2}+X_sQ_{X_s,1}}{\omega^2}\right]$ for $r_\infty=2$
\item $-\frac{t_{\infty,0}}{\omega^2}\mathcal{L}_{\boldsymbol{\alpha}}[\omega] \lambda+\mathcal{L}_{\boldsymbol{\alpha}}\left[\frac{g_0}{\omega}\right]$ for $r_\infty=1$.
\end{itemize} 
Note in particular that in all four cases, the contribution is linear in $\lambda$. We may now use \eqref{LQinftyrinftygeq4}, \eqref{LQinftyrinftyequal3}, \eqref{LQinftyrinftyequal2} and \eqref{LQinftyrinftyequal1} to compute the contribution for each case.% We have from entry $(1,2)$ of the compatibility equations and \eqref{GaugeTdA11}:
%\bea \mathcal{L}_{\boldsymbol{\alpha}}[\td{L}_{1,2}(\lambda)]&=&\partial_{\lambda} [\td{A}_{\boldsymbol{\alpha}}(\lambda)]_{1,2}+2\td{L}_{1,1}(\lambda)[\td{A}_{\boldsymbol{\alpha}}(\lambda)]_{1,2}-2\td{L}_{1,2}(\lambda)[\td{A}_{\boldsymbol{\alpha}}(\lambda)]_{1,1}\cr
%&=&\partial_{\lambda} [\td{A}_{\boldsymbol{\alpha}}(\lambda)]_{1,2}-2\td{L}_{1,2}(\lambda)[A_{\boldsymbol{\alpha}}(\lambda)]_{1,1}\cr
%\eea
%For $r_\infty\geq 3$, we have $[\td{A}_{\boldsymbol{\alpha}}(\lambda)]_{1,2}=O(\lambda^{r_\infty-4})$ so that only $-2\td{L}_{1,2}(\lambda)[A_{\boldsymbol{\alpha}}(\lambda)]_{1,1}$ contributes to order $\lambda^{r_\infty-4}$ at infinity and we get
%\beq \mathcal{L}_{\boldsymbol{\alpha}}[Q_{\infty,r_\infty-4}]=-2\omega a_1 -2Q_{\infty,r_\infty-4}c_{\infty,0}\eeq
%where $a_1$ is the order $\lambda^{-1}$ of $[A_{\boldsymbol{\alpha}}]_{1,1}(\lambda)$ at $\lambda\to \infty$.
In order to use these formulas, we need to express $a_1$ which is the coefficient of order $\lambda^{-1}$ of $[A_{\boldsymbol{\alpha}}]_{1,1}(\lambda)$ at $\lambda\to \infty$. From \eqref{RefA12} and \eqref{GaugeTdA11}, we have
\beq A^{(\boldsymbol{\alpha})}_{1,1}(\lambda)=\td{A}^{(\boldsymbol{\alpha})}_{1,1}(\lambda)-\frac{\td{L}^{(\boldsymbol{\alpha})}_{1,1}(\lambda)}{\td{L}^{(\boldsymbol{\alpha})}_{1,2}(\lambda)} \td{A}^{(\boldsymbol{\alpha})}_{1,2}(\lambda)
\eeq
so that 
\beq a_1=-\Res_{\lambda\to \infty}\left(\td{A}^{(\boldsymbol{\alpha})}_{1,1}(\lambda)-\frac{\td{L}^{(\boldsymbol{\alpha})}_{1,1}(\lambda)}{\td{L}^{(\boldsymbol{\alpha})}_{1,2}(\lambda)} \td{A}^{(\boldsymbol{\alpha})}_{1,2}(\lambda)\right)\eeq
Thus we get that the contribution of $-\mathcal{L}_{\boldsymbol{\alpha}}\left[\frac{\td{L}_{1,1}(\lambda)}{\td{L}_{1,2}(\lambda)}\right]$ is
\begin{itemize}
\item $-\frac{t_{\infty,r_\infty-1}}{\omega^2}\mathcal{L}_{\boldsymbol{\alpha}}[\omega] \lambda+\frac{2}{\omega}\,\underset{\lambda\to \infty}{\Res}\left(\td{A}^{(\boldsymbol{\alpha})}_{1,1}(\lambda)-\frac{\td{L}^{(\boldsymbol{\alpha})}_{1,1}(\lambda)}{\td{L}^{(\boldsymbol{\alpha})}_{1,2}(\lambda)} \td{A}^{(\boldsymbol{\alpha})}_{1,2}(\lambda)\right)+\frac{1}{\omega^3}Q_{\infty,r_\infty-4}\mathcal{L}_{\boldsymbol{\alpha}}[\omega] $ for $r_\infty\geq 4$.
\item $-\frac{t_{\infty,2}}{\omega^2}\mathcal{L}_{\boldsymbol{\alpha}}[\omega] \lambda+\frac{2}{\omega}\,\underset{\lambda\to \infty}{\Res}\left(\td{A}^{(\boldsymbol{\alpha})}_{1,1}(\lambda)-\frac{\td{L}^{(\boldsymbol{\alpha})}_{1,1}(\lambda)}{\td{L}^{(\boldsymbol{\alpha})}_{1,2}(\lambda)} \td{A}^{(\boldsymbol{\alpha})}_{1,2}(\lambda)\right)+\frac{1}{\omega^3}\left(\underset{s=1}{\overset{n}{\sum}}Q_{X_s,1}\right)\mathcal{L}_{\boldsymbol{\alpha}}[\omega] $ for $r_\infty=3$.
\item \sloppy{$-\frac{t_{\infty,1}}{\omega^2}\mathcal{L}_{\boldsymbol{\alpha}}[\omega] \lambda+\frac{1}{\omega}\nu^{(\boldsymbol{\alpha})}_{\infty,0}+\frac{2}{\omega}\,\underset{\lambda\to \infty}{\Res}\left(\td{A}^{(\boldsymbol{\alpha})}_{1,1}(\lambda)-\frac{\td{L}^{(\boldsymbol{\alpha})}_{1,1}(\lambda)}{\td{L}^{(\boldsymbol{\alpha})}_{1,2}(\lambda)} \td{A}^{(\boldsymbol{\alpha})}_{1,2}(\lambda)\right)+\frac{1}{\omega^3}\left(\underset{s=1}{\overset{n}{\sum}}X_sQ_{X_s,1}+Q_{X_s,2}\right)\mathcal{L}_{\boldsymbol{\alpha}}[\omega] $ for $r_\infty=2$.}
\item $-\frac{t_{\infty,0}}{\omega^2}\mathcal{L}_{\boldsymbol{\alpha}}[\omega] \lambda+\mathcal{L}_{\boldsymbol{\alpha}}\left[\frac{g_0}{\omega}\right]$ for $r_\infty=1$.
\end{itemize}

\subsubsection{Term $\frac{\partial_\lambda[A_{\boldsymbol{\alpha}}(\lambda)]_{1,1}}{\td{L}_{1,2}(\lambda)}$}

Let us now discuss the term $\frac{\partial_\lambda[A_{\boldsymbol{\alpha}}(\lambda)]_{1,1}}{\td{L}_{1,2}(\lambda)}$ in \eqref{tdA21}. This quantity is obviously regular at $\lambda\to X_s$ for all $s\in \llbracket 1,n\rrbracket$. At infinity, we have $[A_{\boldsymbol{\alpha}}(\lambda)]_{1,1}=O(1)$ so that $\partial_{\lambda}[A_{\boldsymbol{\alpha}}(\lambda)]_{1,1}=O(\lambda^{-2})$. Since $\td{L}_{1,2}(\lambda)=\omega\lambda^{r_\infty-3}+O(\lambda^{r_\infty-4})$, we only get a contribution for $r_\infty=1$ and it is constant in $\lambda$.

\subsubsection{Term $[A_{\boldsymbol{\alpha}}(\lambda)]_{1,2}\frac{L_{2,1}(\lambda)}{\td{L}_{1,2}(\lambda)}$}

Let us now discuss the term $[A_{\boldsymbol{\alpha}}(\lambda)]_{1,2}\frac{L_{2,1}(\lambda)}{\td{L}_{1,2}(\lambda)}$ in \eqref{tdA21}. For $s\in \llbracket 1,n\rrbracket$, the behavior is given by \eqref{L21overtdL12} and \eqref{A12nu}:
\bea [A_{\boldsymbol{\alpha}}(\lambda)]_{1,2}\frac{L_{2,1}(\lambda)}{\td{L}_{1,2}(\lambda)}&\overset{\lambda\to X_s}{=}&\left(\frac{\underset{j=r_s+1}{\overset{2r_s}{\sum}}\left(\underset{m=0}{\overset{2r_s-j}{\sum}} t_{X_s,r_s-1-m}t_{X_s,j+m-r_s-1}\right) (\lambda-X_s)^{-j}}{\omega\, \delta_{r_\infty\geq 3}\lambda^{r_\infty-3}+\underset{k=0}{\overset{r_\infty-4}{\sum}} Q_{\infty,k}\lambda^k+\underset{s=1}{\overset{n}{\sum}}\underset{k=1}{\overset{r_s}{\sum}} \frac{Q_{X_s,k}}{(\lambda-X_s)^k}}+ O(1)\right)\cr
&&\left(\sum_{i=0}^{r_s-1} \nu^{(\boldsymbol{\alpha})}_{X_s,i}(\lambda-X_s)^i +O\left((\lambda-X_s)^{r_s}\right)\right)\cr
&\overset{\lambda\to X_s}{=}&\frac{\underset{i=0}{\overset{r_s-1}{\sum}} \underset{j=r_s+1}{\overset{2r_s}{\sum}}\left(\underset{m=0}{\overset{2r_s-j}{\sum}} t_{X_s,r_s-1-m}t_{X_s,j+m-r_s-1} \nu^{(\boldsymbol{\alpha})}_{X_s,i}\right) (\lambda-X_s)^{i-j}}{\omega\,\delta_{r_\infty\geq 3}\lambda^{r_\infty-3}+\underset{k=0}{\overset{r_\infty-4}{\sum}} Q_{\infty,k}\lambda^k+\underset{s=1}{\overset{n}{\sum}}\underset{k=1}{\overset{r_s}{\sum}} \frac{Q_{X_s,k}}{(\lambda-X_s)^k}} +O(1)\cr
&\overset{k=j-i}{=}&\frac{\underset{k=r_s+1}{\overset{2r_s}{\sum}}\left( \underset{j=k}{\overset{2r_s}{\sum}}\underset{m=0}{\overset{2r_s-j}{\sum}} t_{X_s,r_s-1-m}t_{X_s,j+m-r_s-1} \nu^{(\boldsymbol{\alpha})}_{X_s,j-k}\right) (\lambda-X_s)^{-k}}{\omega\,\delta_{r_\infty\geq 3}\lambda^{r_\infty-3}+\underset{k=0}{\overset{r_\infty-4}{\sum}} Q_{\infty,k}\lambda^k+\underset{s=1}{\overset{n}{\sum}}\underset{k=1}{\overset{r_s}{\sum}} \frac{Q_{X_s,k}}{(\lambda-X_s)^k}} +O(1)\cr
&&
\eea
The situation at infinity depends on the value of $r_\infty$.
\begin{itemize}\item If $r_\infty\geq 3$, we have from \eqref{L21overtdL21rinftygeq3} and \eqref{A12nu}:
\small{\bea
[A_{\boldsymbol{\alpha}}(\lambda)]_{1,2}\frac{L_{2,1}(\lambda)}{\td{L}_{1,2}(\lambda)}&\overset{\lambda\to \infty}{=}&\left(\frac{\underset{j=r_\infty-2}{\overset{2r_\infty-4}{\sum}}\left(\underset{m=0}{\overset{2r_\infty-4-j}{\sum}} t_{\infty,r_\infty-1-m}t_{\infty,j+m-r_\infty+3}\right) \lambda^{j}}{\omega\,\lambda^{r_\infty-3}+\underset{k=0}{\overset{r_\infty-4}{\sum}} Q_{\infty,k}\lambda^k+\underset{s=1}{\overset{n}{\sum}}\underset{k=1}{\overset{r_s}{\sum}} \frac{Q_{X_s,k}}{(\lambda-X_s)^k}} +O(1)\right)\cr
&&\left(\sum_{i=1}^{r_\infty-1} \frac{\nu^{(\boldsymbol{\alpha})}_{\infty,i}}{\lambda^i} +O\left(\lambda^{-r_\infty}\right) \right)\cr
&\overset{\lambda\to \infty}{=}&\frac{\underset{i=1}{\overset{r_\infty-1}{\sum}}\underset{j=r_\infty-2}{\overset{2r_\infty-4}{\sum}}\left(\underset{m=0}{\overset{2r_\infty-4-j}{\sum}} t_{\infty,r_\infty-1-m}t_{\infty,j+m-r_\infty+3}\nu^{(\boldsymbol{\alpha})}_{\infty,i}\right) \lambda^{j-i}}{\omega\,\lambda^{r_\infty-3}+\underset{k=0}{\overset{r_\infty-4}{\sum}} Q_{\infty,k}\lambda^k+\underset{s=1}{\overset{n}{\sum}}\underset{k=1}{\overset{r_s}{\sum}} \frac{Q_{X_s,k}}{(\lambda-X_s)^k}} +O(\lambda^{-1})\cr
&\overset{k=j-i}{=}&\frac{\underset{k=r_\infty-3}{\overset{2r_\infty-5}{\sum}}\left(\underset{j=k+1}{\overset{2r_\infty-4}{\sum}}\underset{m=0}{\overset{2r_\infty-4-j}{\sum}} t_{\infty,r_\infty-1-m}t_{\infty,j+m-r_\infty+3}\nu^{(\boldsymbol{\alpha})}_{\infty,j-k}\right) \lambda^{k}}{\omega\,\lambda^{r_\infty-3}+\underset{k=0}{\overset{r_\infty-4}{\sum}} Q_{\infty,k}\lambda^k+\underset{s=1}{\overset{n}{\sum}}\underset{k=1}{\overset{r_s}{\sum}} \frac{Q_{X_s,k}}{(\lambda-X_s)^k}} +O(\lambda^{-1})\cr
&&
\eea}\normalsize{}
\item If $r_\infty=2$, we have from \eqref{L21overtdL21rinftyequal2} and \eqref{A12nu}:
\footnotesize{\bea [A_{\boldsymbol{\alpha}}(\lambda)]_{1,2}\frac{L_{2,1}(\lambda)}{\td{L}_{1,2}(\lambda)}&\overset{\lambda\to \infty}{=}&
\left( \frac{(t_{\infty,1})^2}{\omega}\lambda-\frac{(t_{\infty,1})^2}{\omega^2}\left(\sum_{s=1}^n(Q_{X_s,2}+X_sQ_{X_s,1})\right)+\frac{2t_{\infty,1}t_{\infty,0}-t_{\infty,1}}{\omega}+O(\lambda^{-1})\right)\cr
&&\left(\nu^{(\boldsymbol{\alpha})}_{\infty,0}+\nu^{(\boldsymbol{\alpha})}_{\infty,1}\lambda^{-1}+ O(\lambda^{-2}) \right)\cr
&\overset{\lambda\to \infty}{=}&\frac{(t_{\infty,1})^2}{\omega}\nu^{(\boldsymbol{\alpha})}_{\infty,0}\lambda+\frac{(t_{\infty,1})^2}{\omega}\nu^{(\boldsymbol{\alpha})}_{\infty,1}\cr
&&+\nu^{(\boldsymbol{\alpha})}_{\infty,0}\left(-\frac{(t_{\infty,1})^2}{\omega^2}\left(\sum_{s=1}^n(Q_{X_s,2}+X_sQ_{X_s,1})\right)+\frac{2t_{\infty,1}t_{\infty,0}-t_{\infty,1}}{\omega}\right)+O(\lambda^{-1})\cr
&&
\eea}
\normalsize{}
\item If $r_\infty=1$, we have from the definition of $L_{2,1}(\lambda)$ :
\beq L_{2,1}(\lambda)=t_{\infty,0}(t_{\infty,0}-1)\lambda^{-2} +O(\lambda^{-3})
%&&+\left(-\sum_{s=1}^n2X_sP^{(2)}_{X_s,2}\delta_{r_s=1}-\sum_{s=1}^nP^{(2)}_{X_s,3}\delta_{r_s=2}+\sum_{s=1}^n X_s^2H_{X_s,1}+2X_sH_{X_s,2}+H_{X_s,3} -\sum_{j=1}^g p_jq_j^2\right)\lambda^{-3}+O(\lambda^{-4})\cr&&
\eeq
and
\beq \td{L}_{1,2}(\lambda)=\omega \lambda^{-2}+\left(\sum_{s=1}^n X_s^2Q_{X_s,1}+2X_sQ_{X_s,2}+Q_{X_s,3}\right)\lambda^{-3}+O(\lambda^{-4})\eeq
so that
\beq \frac{L_{2,1}(\lambda)}{\td{L}_{1,2}(\lambda)}=\frac{t_{\infty,0}(t_{\infty,0}-1)}{\omega}+O(\lambda^{-1})
%+\Big[\left(-\sum_{s=1}^n2X_sP^{(2)}_{X_s,2}\delta_{r_s=1}-\sum_{s=1}^nP^{(2)}_{X_s,3}\delta_{r_s=2}+\sum_{s=1}^n X_s^2H_{X_s,1}+2X_sH_{X_s,2}+H_{X_s,3}-\sum_{j=1}^g p_jq_j^2\right)\cr
%&&-t_{\infty,0}(t_{\infty,0}-1)\left(\sum_{s=1}^n X_s^2Q_{X_s,1}+2X_sQ_{X_s,2}+Q_{X_s,3}\right)\Big]\lambda^{-1}+O(\lambda^{-2})
%\eea
\eeq
%Using \eqref{g0Specialrinftyequal1} and the fact that
%\beq \sum_{j=1}^g q_j -\sum_{s=1}^nr_s X_s=-\sum_{s=1}^n X_s^2Q_{X_s,1}+2X_sQ_{X_s,2}+Q_{X_s,3}\eeq
%that follows from the order $\lambda^{-3}$ in \eqref{DefNewcoor}, we find
%\beq\label{SuperIDD}\frac{L_{2,1}(\lambda)}{\td{L}_{1,2}(\lambda)}=t_{\infty,0}(t_{\infty,0}-1)+ \left(2g_0 t_{\infty,0}+t_{\infty_0}^2\left(\sum_{s=1}^n X_s^2Q_{X_s,1}+2X_sQ_{X_s,2}+Q_{X_s,3}\right) \right)\lambda^{-1}+O(\lambda^{-2})\eeq
Thus we get:
\beq\label{ParT1} [A_{\boldsymbol{\alpha}}(\lambda)]_{1,2}\frac{L_{2,1}(\lambda)}{\td{L}_{1,2}(\lambda)}\overset{\lambda\to \infty}{=}%\left(t_{\infty,0}(t_{\infty,0}-1)+
% \left(2g_0 t_{\infty,0}+t_{\infty_0}^2\left(\sum_{s=1}^n X_s^2Q_{X_s,1}+2X_sQ_{X_s,2}+Q_{X_s,3}\right) \right)\lambda^{-1}+O(\lambda^{-2})\right)\cr
%&&\left(\nu^{(\boldsymbol{\alpha})}_{\infty,-1}\lambda+\nu^{(\boldsymbol{\alpha})}_{\infty,0}+ O(\lambda^{-1})\right)\cr
\overset{\lambda\to \infty}{=}\frac{t_{\infty,0}(t_{\infty,0}-1)}{\omega}\nu^{(\boldsymbol{\alpha})}_{\infty,-1}\lambda +O(1)
%&&+ \left(2g_0 t_{\infty,0}+t_{\infty_0}^2\left(\sum_{s=1}^n X_s^2Q_{X_s,1}+2X_sQ_{X_s,2}+Q_{X_s,3}\right) \right)\nu^{(\boldsymbol{\alpha})}_{\infty,-1}+t_{\infty,0}(t_{\infty,0}-1)\nu^{(\boldsymbol{\alpha})}_{\infty,0}+O(\lambda^{-1})\cr&&
\eeq
\end{itemize}

\subsubsection{Term $\frac{\td{L}_{1,1}(\lambda)}{\td{L}_{1,2}(\lambda)}\left(\frac{\td{L}_{1,1}(\lambda)}{\td{L}_{1,2}(\lambda)}[\td{A}_{\boldsymbol{\alpha}}(\lambda)]_{1,2}-2[\td{A}_{\boldsymbol{\alpha}}(\lambda)]_{1,1}\right)$}
Let us now study the term $\frac{\td{L}_{1,1}(\lambda)}{\td{L}_{1,2}(\lambda)}\left(\frac{\td{L}_{1,1}(\lambda)}{\td{L}_{1,2}(\lambda)}[\td{A}_{\boldsymbol{\alpha}}(\lambda)]_{1,2}-2[\td{A}_{\boldsymbol{\alpha}}(\lambda)]_{1,1}\right)$ in \eqref{tdA21} for $r_\infty=1$. As mentioned in \eqref{RatiotdL11tdL12} we have:
\beq \frac{\td{L}_{1,1}(\lambda)}{\td{L}_{1,2}(\lambda)}=\frac{-t_{\infty,0}\lambda-g_0}{\omega} +O(\lambda^{-1})\eeq
Moreover, from \eqref{A12Infinityrinftyequal1}
\beq [\td{A}_{\boldsymbol{\alpha}}(\lambda)]_{1,2}= \omega \,\nu_{\infty,-1}^{(\boldsymbol{\alpha})} \lambda^{-1}+O(\lambda^{-2})
%\left(\left(\sum_{s=1}^n X_s^2Q_{X_s,1}+2X_sQ_{X_s,2}+Q_{X_s,3}\right)\nu_{\infty,-1}^{(\boldsymbol{\alpha})}+ \nu_{\infty,0}^{(\boldsymbol{\alpha})} \right) \lambda^{-2}+O(\lambda^{-3})
\eeq
and we have
\beq [\td{A}_{\boldsymbol{\alpha}}(\lambda)]_{1,1}=c_{\infty,0}-t_{\infty,0}\nu_{\infty,-1}^{(\boldsymbol{\alpha})}+O(\lambda^{-1})\eeq
so that
\bea\label{ParT2} \frac{\td{L}_{1,1}(\lambda)}{\td{L}_{1,2}(\lambda)}\left(\frac{\td{L}_{1,1}(\lambda)}{\td{L}_{1,2}(\lambda)}[\td{A}_{\boldsymbol{\alpha}}(\lambda)]_{1,2}-2[\td{A}_{\boldsymbol{\alpha}}(\lambda)]_{1,1}\right)&=&
\left(\frac{2t_{\infty,0}c_{\infty,0}}{\omega}-\frac{(t_{\infty,0})^2}{\omega}\nu_{\infty,-1}^{(\boldsymbol{\alpha})}\right)\lambda+O(1)\cr
&=& \left(\frac{t_{\infty,0} \mathcal{L}_{\boldsymbol{\alpha}}[\omega]}{\omega^2} +  \frac{t_{\infty,0}(1-t_{\infty,0})}{\omega}\nu_{\infty,-1}^{(\boldsymbol{\alpha})}\right)\lambda+O(1)\cr
&& 
\eea

\subsubsection{Regrouping results}
Let us regroup all the contributions in each case.
\begin{itemize}
\item For $r_\infty\geq 4$ we have:
\bea [\td{A}_{\boldsymbol{\alpha}}(\lambda)]_{2,1}&=&-\frac{t_{\infty,r_\infty-1}}{\omega^2}\mathcal{L}_{\boldsymbol{\alpha}}[\omega] \lambda-\frac{2}{\omega}\,\underset{\lambda\to \infty}{\Res}\left(\td{A}^{(\boldsymbol{\alpha})}_{1,1}(\lambda)-\frac{\td{L}^{(\boldsymbol{\alpha})}_{1,1}(\lambda)}{\td{L}^{(\boldsymbol{\alpha})}_{1,2}(\lambda)} \td{A}^{(\boldsymbol{\alpha})}_{1,2}(\lambda)\right)+\frac{1}{\omega^3}Q_{\infty,r_\infty-4}\mathcal{L}_{\boldsymbol{\alpha}}[\omega]\cr
&&+ \left[\frac{\td{L}_{1,1}(\lambda)}{\td{L}_{1,2}(\lambda)}\left(\frac{\td{L}_{1,1}(\lambda)}{\td{L}_{1,2}(\lambda)}[\td{A}_{\boldsymbol{\alpha}}(\lambda)]_{1,2}-2[\td{A}_{\boldsymbol{\alpha}}(\lambda)]_{1,1}\right)\right]_{\infty,+}\cr
&&+\sum_{s=1}^n\left[\frac{\underset{k=r_s+1}{\overset{2r_s}{\sum}}\left( \underset{j=k}{\overset{2r_s}{\sum}}\underset{m=0}{\overset{2r_s-j}{\sum}} t_{X_s,r_s-1-m}t_{X_s,j+m-r_s-1} \nu^{(\boldsymbol{\alpha})}_{X_s,j-k}\right) (\lambda-X_s)^{-k}}{\omega \lambda^{r_\infty-3}+\underset{k=0}{\overset{r_\infty-4}{\sum}} Q_{\infty,k}\lambda^k+\underset{s=1}{\overset{n}{\sum}}\underset{k=1}{\overset{r_s}{\sum}} \frac{Q_{X_s,k}}{(\lambda-X_s)^k}}\right]_{X_s,-}\cr
&&+\left[\frac{\underset{k=r_\infty-3}{\overset{2r_\infty-5}{\sum}}\left(\underset{j=k+1}{\overset{2r_\infty-4}{\sum}}\underset{m=0}{\overset{2r_\infty-4-j}{\sum}} t_{\infty,r_\infty-1-m}t_{\infty,j+m-r_\infty+3}\nu^{(\boldsymbol{\alpha})}_{\infty,j-k}\right) \lambda^{k}}{\omega\lambda^{r_\infty-3}+\underset{k=0}{\overset{r_\infty-4}{\sum}} Q_{\infty,k}\lambda^k+\underset{s=1}{\overset{n}{\sum}}\underset{k=1}{\overset{r_s}{\sum}} \frac{Q_{X_s,k}}{(\lambda-X_s)^k}}\right]_{\infty,+}\cr
&&+\sum_{s=1}^n \left[\frac{\td{L}_{1,1}(\lambda)}{\td{L}_{1,2}(\lambda)}\left(\frac{\td{L}_{1,1}(\lambda)}{\td{L}_{1,2}(\lambda)}[\td{A}_{\boldsymbol{\alpha}}(\lambda)]_{1,2}-2[\td{A}_{\boldsymbol{\alpha}}(\lambda)]_{1,1}\right)\right]_{X_s,-}
\eea
\item For $r_\infty=3$ we have:
\bea [\td{A}_{\boldsymbol{\alpha}}(\lambda)]_{2,1}&=&-\frac{t_{\infty,2}}{\omega^2}\mathcal{L}_{\boldsymbol{\alpha}}[\omega] \lambda-\frac{2}{\omega}\,\underset{\lambda\to \infty}{\Res}\left(\td{A}^{(\boldsymbol{\alpha})}_{1,1}(\lambda)-\frac{\td{L}^{(\boldsymbol{\alpha})}_{1,1}(\lambda)}{\td{L}^{(\boldsymbol{\alpha})}_{1,2}(\lambda)} \td{A}^{(\boldsymbol{\alpha})}_{1,2}(\lambda)\right)+\frac{1}{\omega^3}\left(\underset{s=1}{\overset{n}{\sum}}Q_{X_s,1}\right)\mathcal{L}_{\boldsymbol{\alpha}}[\omega]\cr
&&+ \left[\frac{\td{L}_{1,1}(\lambda)}{\td{L}_{1,2}(\lambda)}\left(\frac{\td{L}_{1,1}(\lambda)}{\td{L}_{1,2}(\lambda)}[\td{A}_{\boldsymbol{\alpha}}(\lambda)]_{1,2}-2[\td{A}_{\boldsymbol{\alpha}}(\lambda)]_{1,1}\right)\right]_{\infty,+}\cr
&&+\sum_{s=1}^n\left[\frac{\underset{k=r_s+1}{\overset{2r_s}{\sum}}\left( \underset{j=k}{\overset{2r_s}{\sum}}\underset{m=0}{\overset{2r_s-j}{\sum}} t_{X_s,r_s-1-m}t_{X_s,j+m-r_s-1} \nu^{(\boldsymbol{\alpha})}_{X_s,j-k}\right) (\lambda-X_s)^{-k}}{\omega \lambda^{r_\infty-3}+\underset{k=0}{\overset{r_\infty-4}{\sum}} Q_{\infty,k}\lambda^k+\underset{s=1}{\overset{n}{\sum}}\underset{k=1}{\overset{r_s}{\sum}} \frac{Q_{X_s,k}}{(\lambda-X_s)^k}}\right]_{X_s,-}\cr
&&+\left[\frac{\underset{k=r_\infty-3}{\overset{2r_\infty-5}{\sum}}\left(\underset{j=k+1}{\overset{2r_\infty-4}{\sum}}\underset{m=0}{\overset{2r_\infty-4-j}{\sum}} t_{\infty,r_\infty-1-m}t_{\infty,j+m-r_\infty+3}\nu^{(\boldsymbol{\alpha})}_{\infty,j-k}\right) \lambda^{k}}{\omega\lambda^{r_\infty-3}+\underset{k=0}{\overset{r_\infty-4}{\sum}} Q_{\infty,k}\lambda^k+\underset{s=1}{\overset{n}{\sum}}\underset{k=1}{\overset{r_s}{\sum}} \frac{Q_{X_s,k}}{(\lambda-X_s)^k}}\right]_{\infty,+}\cr
&&+\sum_{s=1}^n \left[\frac{\td{L}_{1,1}(\lambda)}{\td{L}_{1,2}(\lambda)}\left(\frac{\td{L}_{1,1}(\lambda)}{\td{L}_{1,2}(\lambda)}[\td{A}_{\boldsymbol{\alpha}}(\lambda)]_{1,2}-2[\td{A}_{\boldsymbol{\alpha}}(\lambda)]_{1,1}\right)\right]_{X_s,-}
\eea
\item For $r_\infty=2$ we have:
\bea[\td{A}_{\boldsymbol{\alpha}}(\lambda)]_{2,1}&=& -\frac{t_{\infty,1}}{\omega^2}\mathcal{L}_{\boldsymbol{\alpha}}[\omega] \lambda+\frac{1}{\omega}\nu^{(\boldsymbol{\alpha})}_{\infty,0}-\frac{2}{\omega}\,\underset{\lambda\to \infty}{\Res}\left(\td{A}^{(\boldsymbol{\alpha})}_{1,1}(\lambda)-\frac{\td{L}^{(\boldsymbol{\alpha})}_{1,1}(\lambda)}{\td{L}^{(\boldsymbol{\alpha})}_{1,2}(\lambda)} \td{A}^{(\boldsymbol{\alpha})}_{1,2}(\lambda)\right)\cr
&&+\frac{1}{\omega^3}\left(\underset{s=1}{\overset{n}{\sum}}X_sQ_{X_s,1}+Q_{X_s,2}\right)\mathcal{L}_{\boldsymbol{\alpha}}[\omega] \cr 
%&+\nu^{(\boldsymbol{\alpha})}_{\infty,0}\left(-\frac{(t_{\infty,1})^2}{\omega^2}\left(\sum_{s=1}^n(Q_{X_s,2}+X_sQ_{X_s,1})\right)+\frac{2t_{\infty,1}t_{\infty,0}-t_{\infty,1}}{\omega}\right)\cr
&&+\frac{(t_{\infty,1})^2}{\omega}\nu^{(\boldsymbol{\alpha})}_{\infty,0}\lambda+\frac{(t_{\infty,1})^2}{\omega}\nu^{(\boldsymbol{\alpha})}_{\infty,1}\cr
&&+\nu^{(\boldsymbol{\alpha})}_{\infty,0}\left(-\frac{(t_{\infty,1})^2}{\omega^2}\left(\sum_{s=1}^n(Q_{X_s,2}+X_sQ_{X_s,1})\right)+\frac{2t_{\infty,1}t_{\infty,0}-t_{\infty,1}}{\omega}\right)\cr
&&+ \left[\frac{\td{L}_{1,1}(\lambda)}{\td{L}_{1,2}(\lambda)}\left(\frac{\td{L}_{1,1}(\lambda)}{\td{L}_{1,2}(\lambda)}[\td{A}_{\boldsymbol{\alpha}}(\lambda)]_{1,2}-2[\td{A}_{\boldsymbol{\alpha}}(\lambda)]_{1,1}\right)\right]_{\infty,+}\cr
&&+\sum_{s=1}^n \left[\frac{\underset{k=r_s+1}{\overset{2r_s}{\sum}}\left( \underset{j=k}{\overset{2r_s}{\sum}}\underset{m=0}{\overset{2r_s-j}{\sum}} t_{X_s,r_s-1-m}t_{X_s,j+m-r_s-1} \nu^{(\boldsymbol{\alpha})}_{X_s,j-k}\right) (\lambda-X_s)^{-k}}{\underset{s=1}{\overset{n}{\sum}}\underset{k=1}{\overset{r_s}{\sum}} \frac{Q_{X_s,k}}{(\lambda-X_s)^k}}\right]_{X_s,-}\cr
&&+\sum_{s=1}^n \left[\frac{\td{L}_{1,1}(\lambda)}{\td{L}_{1,2}(\lambda)}\left(\frac{\td{L}_{1,1}(\lambda)}{\td{L}_{1,2}(\lambda)}[\td{A}_{\boldsymbol{\alpha}}(\lambda)]_{1,2}-2[\td{A}_{\boldsymbol{\alpha}}(\lambda)]_{1,1}\right)\right]_{X_s,-}
\eea
Using previous results, we may evaluate the contribution at infinity and observe that it is null.

\item For $r_\infty=1$, we get from Section \ref{Sectech}, \eqref{ParT1} and \eqref{ParT2} that $[\td{A}_{\boldsymbol{\alpha}}(\lambda)]_{2,1}$ behaves at infinity like
\small{\bea [\td{A}_{\boldsymbol{\alpha}}(\lambda)]_{2,1}&=&\left[-\frac{t_{\infty,0}}{\omega^2}\mathcal{L}_{\boldsymbol{\alpha}}[\omega]  + \frac{(t_{\infty,0})^2-t_{\infty,0}}{\omega}\nu^{(\boldsymbol{\alpha})}_{\infty,-1}+\frac{t_{\infty,0} \mathcal{L}_{\boldsymbol{\alpha}}[\omega]}{\omega^2} +  \frac{t_{\infty,0}(1-t_{\infty,0})}{\omega}\nu_{\infty,-1}^{(\boldsymbol{\alpha})} \right]\lambda\cr
&&+O(1)\cr
&=&%\frac{2t_{\infty,0}}{\omega^2}\mathcal{L}_{\boldsymbol{\alpha}}[\omega]\lambda+
O(1)
\eea}
\normalsize{We} still need to compute the constant term. In order to obtain it, we write entry $(2,1)$ of the isomonodromic compatibility equation:
\beq\label{Compat21} \mathcal{L}_{\boldsymbol{\alpha}}[\td{L}_{2,1}(\lambda)]=\partial_\lambda [\td{A}_{\boldsymbol{\alpha}}(\lambda)]_{2,1}-2\left(\td{L}_{2,1}(\lambda)[\td{A}_{\boldsymbol{\alpha}}(\lambda)]_{1,1}-\td{L}_{1,1}(\lambda)[\td{A}_{\boldsymbol{\alpha}}(\lambda)]_{2,1}\right)\eeq
The normalization of $\td{L}(\lambda)$ at infinity implies that $\td{L}_{2,1}(\lambda)=O(\lambda^{-2})$. Since it is a derivative, the term $\partial_\lambda [\td{A}_{\boldsymbol{\alpha}}(\lambda)]_{2,1}$ does not provide any term or order $\lambda^{-1}$ at infinity. In the r.h.s. $\td{L}_{2,1}(\lambda)=O(\lambda^{-2})$ while $[\td{A}_{\boldsymbol{\alpha}}(\lambda)]_{1,1}=O(1)$ so that $\td{L}_{2,1}(\lambda)[\td{A}_{\boldsymbol{\alpha}}(\lambda)]_{1,1}=O(\lambda^{-2})$. Let us denote $A_0$ the constant term at infinity of $[\td{A}_{\boldsymbol{\alpha}}(\lambda)]_{2,1}$, i.e.
\beq [\td{A}_{\boldsymbol{\alpha}}(\lambda)]_{2,1}=\frac{2t_{\infty,0}}{\omega^2}\mathcal{L}_{\boldsymbol{\alpha}}[\omega]\lambda+A_0+O(\lambda^{-1})\eeq
We have $\td{L}_{1,1}(\lambda)=-t_{\infty,0}\lambda^{-1}+\beta_{-1}\lambda^{-2}+O(\lambda^{-3})$ so that order $\lambda^{-1}$ at infinity of \eqref{Compat21} provides
\beq 0=\frac{2t_{\infty,0}}{\omega^2}\mathcal{L}_{\boldsymbol{\alpha}}[\omega]\beta_{-1}-t_{\infty,0} A_0\,\,\Rightarrow\,\, A_0=\frac{2}{\omega^2}\mathcal{L}_{\boldsymbol{\alpha}}[\omega]\beta_{-1}\eeq
Finally, following Lemma \ref{LemmaQExpression} we have
\beq \mathring{L}_{1,1}(\lambda)=\td{L}_{1,1}(\lambda)+\frac{(t_{\infty,0}\lambda+g_0)}{\omega}\td{L}_{1,2}(\lambda)= O(\lambda^{-3})\eeq
It implies that
\beq \label{betaMinus1}\beta_{-1}=-g_0-\frac{t_{\infty,0}}{\omega}\left(\sum_{s=1}^n X_s^2Q_{X_s,1}+2X_sQ_{X_s,2}+Q_{X_s,3}\right)\eeq
so that we finally get
\beq A_0=-\frac{2}{\omega^2}\mathcal{L}_{\boldsymbol{\alpha}}[\omega]\left(g_0+\frac{t_{\infty,0}}{\omega}\left(\sum_{s=1}^n X_s^2Q_{X_s,1}+2X_sQ_{X_s,2}+Q_{X_s,3}\right)\right)\eeq
Thus we end up with:
\bea[\td{A}_{\boldsymbol{\alpha}}(\lambda)]_{2,1}&=&\frac{2t_{\infty,0}}{\omega^2}\mathcal{L}_{\boldsymbol{\alpha}}[\omega]\lambda-\frac{2}{\omega^2}\mathcal{L}_{\boldsymbol{\alpha}}[\omega]\left(g_0+\frac{t_{\infty,0}}{\omega}\left(\sum_{s=1}^n X_s^2Q_{X_s,1}+2X_sQ_{X_s,2}+Q_{X_s,3}\right)\right)\cr
&&+\left[\frac{\underset{k=r_s+1}{\overset{2r_s}{\sum}}\left( \underset{j=k}{\overset{2r_s}{\sum}}\underset{m=0}{\overset{2r_s-j}{\sum}} t_{X_s,r_s-1-m}t_{X_s,j+m-r_s-1} \nu^{(\boldsymbol{\alpha})}_{X_s,j-k}\right) (\lambda-X_s)^{-k}}{\underset{s=1}{\overset{n}{\sum}}\underset{k=1}{\overset{r_s}{\sum}} \frac{Q_{X_s,k}}{(\lambda-X_s)^k}}\right]_{X_s,-}\cr
&&+\sum_{s=1}^n \left[\frac{\td{L}_{1,1}(\lambda)}{\td{L}_{1,2}(\lambda)}\left(\frac{\td{L}_{1,1}(\lambda)}{\td{L}_{1,2}(\lambda)}[\td{A}_{\boldsymbol{\alpha}}(\lambda)]_{1,2}-2[\td{A}_{\boldsymbol{\alpha}}(\lambda)]_{1,1}\right)\right]_{X_s,-}
\eea
\end{itemize}

\section{Proof of Lemma \ref{LemmatdL11}}\label{AppendixLemmatdL11}
From Theorem \ref{GeoLaxMatrices}, we introduce
\bea R_{X_s,k}&=&\sum_{m=1}^{r_s+1-k} P_{X_s,m}Q_{X_s,k+m-1}-\frac{(g_0+t_{\infty,r_\infty-1}X_s)}{\omega}Q_{X_s,k}\cr
&&-\frac{t_{\infty,r_\infty-1}}{\omega}Q_{X_s,k+1}\delta_{k\leq r_s-1}\,\,,\,\, \forall \, (s,k)\in \llbracket 1,n\rrbracket\times\llbracket 1,r_s\rrbracket\cr
R_{\infty,r_\infty-4}&=&-\omega\, P_{\infty,0}-\frac{g_0}{\omega}-\frac{t_{\infty,r_\infty-1}}{\omega}Q_{\infty,r_\infty-4} \,\, \text{ if } \, \, r_\infty\geq 4\cr
R_{\infty,k}&=&-\omega\, P_{\infty,r_\infty-4-k}-\sum_{m=0}^{r_\infty-5-k}P_{\infty,m}Q_{\infty,k+1+m}-\frac{t_{\infty,r_\infty-1}}{\omega}Q_{\infty,k-1}\cr
&&-\frac{g_0}{\omega}Q_{\infty,k} \,\,,\,\, \forall \, k\in \llbracket 1, r_\infty-5\rrbracket\cr
R_{\infty,0}&=&-\omega\,P_{\infty,r_\infty-4}-\sum_{m=0}^{r_\infty-5}P_{\infty,m}Q_{\infty,m+1}-\frac{g_0}{\omega}Q_{\infty,0}-\frac{t_{\infty,r_\infty-1}}{\omega}\sum_{s=1}^nQ_{X_s,1}
\eea
so that
\beq \td{L}_{1,1}(\lambda)\overset{\lambda \to X_s}{=}\sum_{k=1}^{r_s}\frac{R_{X_s,k}}{(\lambda-X_s)^k}+ O(1)\,\,,\,\, \forall \, s\in \llbracket 1,n\rrbracket\eeq
and for $r_\infty\geq 3$:
\beq \td{L}_{1,1}(\lambda)\overset{\lambda \to \infty}{=}-t_{\infty,r_\infty-1}\lambda^{r_\infty-2}-t_{\infty,r_\infty-2}\lambda^{r_\infty-3}+  \sum_{k=0}^{r_\infty-4}R_{\infty,k}\lambda^k +O(\lambda^{-1})
\eeq
Since $\td{L}_{1,1}(\lambda)$ is a rational function of $\lambda$ with poles only in $\{\infty,X_1,\dots,X_n\}$, we obtain  Lemma \ref{LemmatdL11}. The additional constraints for $r_\infty\leq 2$ follow from the first two leading orders at infinity of $\td{L}_{1,1}(\lambda)$.

\section{Proof of Theorem \ref{GeoLaxMatricesQR}}\label{AppendixRtdA}
We have from \eqref{GaugeTdA11Reduced}
\beq [\td{A}_{\boldsymbol{\alpha}}(\lambda)]_{1,1}=c_{\infty,0}+\td{L}_{1,1}(\lambda)[A_{\boldsymbol{\alpha}}(\lambda)]_{1,2}\eeq
Contributions at infinity for $r_\infty\leq 2$ are similar to \eqref{tdA11rinftyequal1} using expression of $c_{\infty,0}$ of Section \ref{Appendixcinfty0}. For any $s\in \llbracket 1,n\rrbracket$, we have
\bea \td{L}_{1,1}(\lambda)[A_{\boldsymbol{\alpha}}(\lambda)]_{1,2}&=&\left(\sum_{i=0}^{r_s-1}\nu_{X_s,i}^{(\boldsymbol{\alpha})} (\lambda-X_s)^{i} +O\left((\lambda-X_s)^{r_s}\right)\right)\left(\sum_{k=1}^{r_s}\frac{R_{X_s,k}}{(\lambda-X_s)^k}+O(1)\right)\cr
&=&\sum_{i=0}^{r_s-1}\sum_{k=1}^{r_s}\nu_{X_s,i}^{(\boldsymbol{\alpha})}R_{X_s,k}(\lambda-X_s)^{i-k} +O(1)\cr
&\overset{j=k-i}{=}&\sum_{j=1}^{r_s}\left(\sum_{i=0}^{r_s-j}  \nu_{X_s,i}^{(\boldsymbol{\alpha})}R_{X_s,i+j} \right) (\lambda-X_s)^{-j}+O(1)
\eea
At infinity, for $r_\infty\geq 3$, we have
\bea \td{L}_{1,1}(\lambda)[A_{\boldsymbol{\alpha}}(\lambda)]_{1,2}&=&\left(\sum_{i=1}^{r_\infty-2}\nu_{\infty,i}^{(\boldsymbol{\alpha})} \lambda^{-i} +O\left(\lambda^{-r_\infty+1}\right)\right)\cr
&&\left(-t_{\infty,r_\infty-1}\lambda^{r_\infty-2}-t_{\infty,r_\infty-2}\lambda^{r_\infty-3} + \sum_{k=0}^{r_\infty-4}R_{\infty,k}\lambda^k +O(\lambda^{-1})\right)\cr
&=&-t_{\infty,r_\infty-1}\sum_{i=1}^{r_\infty-2}\nu_{\infty,i}^{(\boldsymbol{\alpha})}\lambda^{r_\infty-2-i} -t_{\infty,r_\infty-2}\sum_{i=1}^{r_\infty-3}\nu_{\infty,i}^{(\boldsymbol{\alpha})}\lambda^{r_\infty-3-i}\cr
&&+\sum_{i=1}^{r_\infty-2}\sum_{k=0}^{r_\infty-4}\nu_{\infty,i}^{(\boldsymbol{\alpha})}R_{\infty,k}\lambda^{k-i} +O(\lambda^{-1})\cr
&\overset{j=k-i}{=}&-t_{\infty,r_\infty-1}\sum_{j=0}^{r_\infty-3}\nu_{\infty,r_\infty-2-j}^{(\boldsymbol{\alpha})}\lambda^{j}-t_{\infty,r_\infty-2}\sum_{j=0}^{r_\infty-4}\nu_{\infty,r_\infty-3-j}^{(\boldsymbol{\alpha})}\lambda^{j}\cr
&&+\sum_{j=0}^{r_\infty-5}\left(\sum_{i=1}^{r_\infty-4-j}\nu_{\infty,i}^{(\boldsymbol{\alpha})}R_{\infty,j+i}\right)\lambda^j+O(\lambda^{-1})
\eea
Thus, we end up with formulas of Theorem \ref{GeoLaxMatricesQR}.

\section{Proof of Theorem \ref{ExplicitDependenceQ}}\label{AppendixExplicitDependenceQ}
\subsection{Study at a finite pole}\label{AppendixSubA}
From Proposition \ref{PropQCondition}, the condition to satisfy is
\beq \label{CondXsQ}\delta^{(\boldsymbol{\alpha})}_{\mathbf{t}}[Q_{X_s,k}]=- (k-1)\sum_{i=k}^{r_s} \nu_{X_s,i+1-k}^{(\boldsymbol{\alpha})}  Q_{X_s,i}\,\,,\,\,\forall (s,k)\in \llbracket 1,n\rrbracket\times\llbracket 2,r_s\rrbracket\eeq
Conditions \eqref{CondXsQ} are equivalent to
\beq-\begin{pmatrix} \frac{\delta^{(\boldsymbol{\alpha})}_{\mathbf{t}}[Q_{X_s,r_s}]}{r_s-1}\\ \vdots \\ \frac{\delta^{(\boldsymbol{\alpha})}_{\mathbf{t}}[Q_{X_s,2}]}{1}\end{pmatrix}= \begin{pmatrix}Q_{X_s,r_s}&0&\dots& &\dots &0\\
Q_{X_s,r_s-1}&Q_{X_s,r_s}& 0& & &\vdots\\
\vdots & \ddots&\ddots &\ddots  & &\vdots\\
\vdots &\ddots&\ddots&\ddots&0&\vdots\\
Q_{X_s,3}&\ddots &\ddots&\ddots& Q_{X_s,r_s}&0\\
Q_{X_s,2}&Q_{X_s,3}& \dots & & Q_{X_s,r_s-1}& Q_{X_s,r_s}
 \end{pmatrix}\begin{pmatrix} \nu^{(\boldsymbol{\alpha})}_{{X_s},1}\\  \vdots \\ \nu^{(\boldsymbol{\alpha})}_{{X_s},r_s-1}\end{pmatrix}
\eeq
i.e. using Proposition \ref{nus}:
\bea\label{EQQ}\begin{pmatrix} \frac{\delta^{(\boldsymbol{\alpha})}_{\mathbf{t}}[Q_{X_s,r_s}]}{r_s-1}\\ \vdots \\ \frac{\delta^{(\boldsymbol{\alpha})}_{\mathbf{t}}[Q_{X_s,2}]}{1}\end{pmatrix}&=& \begin{pmatrix}Q_{X_s,r_s}&0&\dots& &\dots &0\\
Q_{X_s,r_s-1}&Q_{X_s,r_s}& 0& & &\vdots\\
\vdots & \ddots&\ddots &\ddots  & &\vdots\\
\vdots &\ddots&\ddots&\ddots&0&\vdots\\
Q_{X_s,3}&\ddots &\ddots&\ddots& Q_{X_s,r_s}&0\\
Q_{X_s,2}&Q_{X_s,3}& \dots & & Q_{X_s,r_s-1}& Q_{X_s,r_s}
 \end{pmatrix}( M_{X_s})^{-1}\begin{pmatrix} \frac{\alpha_{X_s,r_s-1}}{r_s-1}\\ \vdots \\ \frac{\alpha_{X_s,1}}{1}\end{pmatrix}\cr
&&
\eea
We now observe that the \textbf{lower triangular Toeplitz matrices commute} so that conditions \eqref{CondXsQ} are equivalent to
\beq M_{X_s} \begin{pmatrix} \frac{\delta^{(\boldsymbol{\alpha})}_{\mathbf{t}}[Q_{X_s,r_s}]}{r_s-1}\\ \vdots \\ \frac{\delta^{(\boldsymbol{\alpha})}_{\mathbf{t}}[Q_{X_s,2}]}{1}\end{pmatrix}=\begin{pmatrix}Q_{X_s,r_s}&0&\dots& &\dots &0\\
Q_{X_s,r_s-1}&Q_{X_s,r_s}& 0& & &\vdots\\
\vdots & \ddots&\ddots &\ddots  & &\vdots\\
\vdots &\ddots&\ddots&\ddots&0&\vdots\\
Q_{X_s,3}&\ddots &\ddots&\ddots& Q_{X_s,r_s}&0\\
Q_{X_s,2}&Q_{X_s,3}& \dots & & Q_{X_s,r_s-1}& Q_{X_s,r_s}
 \end{pmatrix}\begin{pmatrix} \frac{\alpha_{X_s,r_s-1}}{r_s-1}\\ \vdots \\ \frac{\alpha_{X_s,1}}{1}\end{pmatrix} %\,\Leftrightarrow \,(T_{X_s})^{-1}\begin{pmatrix} \frac{\delta_{\mathbf{t}}[Q_{X_s,r_s}]}{r_s-1}\\ \vdots \\ \frac{\delta_{\mathbf{t}}[Q_{X_s,2}]}{1}\end{pmatrix}= (M_{X_s})^{-1} \mathcal{L}_{\boldsymbol{\alpha}}\left[ \begin{pmatrix} \frac{t_{X_s,r_s-1}}{r_s-1}\\ \vdots \\ \frac{t_{X_s,1}}{1}\end{pmatrix}\right]
\eeq
We may now specify the last equation for any deformation $t_{X_s,i}$ with $i\in \llbracket 1, r_s-1\rrbracket$. It gives:
\footnotesize{\bea\label{SystemInftyQ} &&\begin{pmatrix}t_{X_s,r_s-1}&0&\dots&0\\
t_{X_s,r_s-2}&t_{X_s,r_s-1}& &0\\
\vdots& \ddots &\ddots &\vdots \\
t_{X_s,1}&\dots&t_{X_s,r_s-2}&t_{X_s,r_s-1}
\end{pmatrix}
\begin{pmatrix}\frac{1}{r_s-1}&0&\dots&0\\
0&\ddots& &0\\
\vdots& & \ddots&0\\
0&\dots&0 &\frac{1}{1} \end{pmatrix} \begin{pmatrix} \delta_{t_{X_s,r_s-1}}[Q_{X_s,r_s}]&\dots& \delta_{t_{X_s,1}}[Q_{X_s,r_s}]\\
% \delta_{t_{X_s,r_s-1}}[Q_{X_s,r_s-1}]&\dots&\dots& \delta_{t_{X_s,1}}[Q_{X_s,r_s-1}]\\
\vdots&\vdots&\vdots\\
\delta_{t_{X_s,r_s-1}}[Q_{X_s,2}]&\dots& \delta_{t_{X_s,1}}[Q_{X_s,2}]
\end{pmatrix}\cr
&&=\begin{pmatrix}Q_{X_s,r_s}&0&\dots& &\dots &0\\
Q_{X_s,r_s-1}&Q_{X_s,r_s}& 0& & &\vdots\\
\vdots & \ddots&\ddots &\ddots  & &\vdots\\
\vdots &\ddots&\ddots&\ddots&0&\vdots\\
Q_{X_s,3}&\ddots &\ddots&\ddots& Q_{X_s,r_s}&0\\
Q_{X_s,2}&Q_{X_s,3}& \dots & & Q_{X_s,r_s-1}& Q_{X_s,r_s}
 \end{pmatrix}\begin{pmatrix}\frac{1}{r_s-1}&0&\dots&0\\
0&\ddots& &0\\
\vdots& & \ddots&0\\
0&\dots&0 &\frac{1}{1} \end{pmatrix}
\cr
&&
\eea}
\normalsize{We first observe} that a solution is trivially given by $Q_{X_s,k}=t_{X_s,k-1}u_{X_s,r_s}$ for all $k\in \llbracket 2,r_s\rrbracket$ where $u_{X_s,r_s-1}$ is independent of the irregular times. It is also obvious from the lower triangular form that for any $k\in \llbracket 2,r_s\rrbracket$, $Q_{X_s,k}$ may only depend on $t_{X_s,r_s-1},\dots,t_{X_s,k-1}$. The first line of \eqref{SystemInftyQ} provides
\beq t_{X_s,r_s-1}\delta_{t_{X_s,k}}\left[\frac{Q_{X_s,r_s}}{r_s-1}\right]=\delta_{k,r_s-1}\frac{Q_{X_s,r_s}}{r_s-1}\eeq
i.e.
\beq Q_{X_s,r_s}=t_{X_s,r_s-1}u_{X_s,r_s}\eeq
with $u_{X_s,r_s}$ independent of the irregular times. The second line of \eqref{SystemInftyQ} provides
\bea t_{X_s,r_s-2}\delta_{t_{X_s,r_s-1}}\left[\frac{Q_{X_s,r_s}}{r_s-1}\right]+t_{X_s,r_s-1}\delta_{t_{X_s,r_s-1}}\left[\frac{Q_{X_s,r_s-1}}{r_s-2}\right]&=&\frac{Q_{X_s,r_s-1}}{r_s-1}\cr
t_{X_s,r_s-1}\delta_{t_{X_s,r_s-2}}\left[\frac{Q_{X_s,r_s-1}}{r_s-2}\right]&=&\frac{Q_{X_s,r_s}}{r_s-2}\cr
\delta_{t_{X_s,r_s-k}}\left[Q_{X_s,r_s-1}\right]&=&0\,\,,\,\forall\, k\in \llbracket 3,r_s-1\rrbracket
\eea
i.e.
\beq Q_{X_s,r_s-1}=t_{X_s,r_s-2}u_{X_s,r_s}+ (t_{X_s,r_s-1})^{\frac{r_s-2}{r_s-1}}u_{X_s,r_s-1}\eeq
with $u_{X_s,r_s-1}$ independent of the irregular times. The third line of \eqref{SystemInftyQ} provides
\small{\bea\frac{Q_{X_s,r_s-2}}{r_s-1}&=& t_{X_s,r_s-3}\delta_{t_{X_s,r_s-1}}\left[\frac{Q_{X_s,r_s}}{r_s-1}\right]+ t_{X_s,r_s-2}\delta_{t_{X_s,r_s-1}}\left[\frac{Q_{X_s,r_s-1}}{r_s-2}\right]+t_{X_s,r_s-1}\delta_{t_{X_s,r_s-1}}\left[\frac{Q_{X_s,r_s-2}}{r_s-3}\right]\cr
 \frac{Q_{X_s,r_s-1}}{r_s-2}&=&t_{X_s,r_s-2}\delta_{t_{X_s,r_s-2}}\left[\frac{Q_{X_s,r_s-1}}{r_s-2}\right]+t_{X_s,r_s-1}\delta_{t_{X_s,r_s-2}}\left[\frac{Q_{X_s,r_s-2}}{r_s-3}\right]\cr
\frac{Q_{X_s,r_s}}{r_s-3}&=&t_{X_s,r_s-1}\delta_{t_{X_s,r_s-3}}\left[\frac{Q_{X_s,r_s-2}}{r_s-3}\right]\cr
0&=&\delta_{t_{X_s,r_s-k}}\left[Q_{X_s,r_s-1}\right]\,\,,\,\,\forall\, k\in \llbracket 4,r_s-1\rrbracket
\eea}
\normalsize{i.e.}
\beq Q_{X_s,r_s-2}=t_{\infty,r_\infty-3}u_{X_s,r_s}+\frac{r_s-3}{r_s-2}(t_{X_s,r_s-1})^{-\frac{1}{r_s-1}}t_{X_s,r_s-2}u_{X_s,r_s-1}+(t_{X_s,r_s-1})^{\frac{r_s-3}{r_s-1}}u_{X_s,r_s-2}\eeq
with $u_{X_s,r_s-2}$ independent of the irregular times. One can then proceed by induction to obtain the next lines. In particular if we take $j_0\in \llbracket 1, r_s-1\rrbracket$ and look for solutions of the form $\mathbf{f}_{j_0}(,t_{X_s,j_0},\dots,t_{X_s,r_s-1})=(\mathbf{0}_{j-1},[\mathbf{f}_{j_0}]_{j_0+1}(t_{X_s,r_s-1}),\dots,[\mathbf{f}_{j_0}]_{j_0+1}(t_{X_s,j_0},\dots,t_{X_s,r_s-1}))^t$. The differential system \eqref{SystemInftyQ} reduces to the lower-left corner of size $(r_s-j_0)\times (r_s-j_0)$ because all other entries are trivial. Entry $(j_0,1)$ provides a simple ODE for $[\mathbf{f}_{X_s,j_0}]_{j_0}(t_{X_s,r_s-1})$ whose solution is provided in Theorem \ref{ExplicitDependenceQ}. Then looking at the line $(j_0+1)$ whose only non-trivial entries are $(j_0+1,1)$ and $(j_0+1,2)$, we obtain the explicit dependence of $[\mathbf{f}_{X_s,j_0}]_{j_0+1}(t_{X_s,r_s-1},t_{X_s,r_s-2})$ relatively to $t_{X_s,r_s-1}$ ($(j_0+1,2)$ entry) and $t_{X_s,r_s-2}$ ($(j_0+1,1)$ entry). One then proceed with line $(j_0+2)$ (whose only first three columns are non-trivial) to determine $[\mathbf{f}_{X_s,j_0}]_{j_0+2}(t_{X_s,r_s-1},t_{X_s,r_s-2},t_{X_s,r_s-3})$ and a simple induction gives all coefficients of the vector $\mathbf{f}_{X_s,j_0}$. Thus, we end up with
\footnotesize{\beq \begin{pmatrix}Q_{X_s,r_s}\\ Q_{X_s,r_s-1}\\ \vdots\\\vdots\\ Q_{X_s,2}\end{pmatrix}=
\begin{pmatrix} f^{(X_s)}_{1,1}(t_{X_s,r_s-1})&0&\dots&\dots& 0\\
 f^{(X_s)}_{2,1}(t_{X_s,r_s-2},t_{X_s,r_s-1})& f^{(X_s)}_{2,2}(t_{X_s,r_s-1})&0&&\vdots\\
\vdots&& \ddots& \ddots &\vdots\\
f^{(X_s)}_{r_s-2,1}(t_{X_s,2},\dots,t_{X_s,r_s-1})&\dots&\dots& f^{(X_s)}_{r_s-2,r_s-1}(t_{X_s,r_s-1})&0\\
f^{(X_s)}_{r_s-1,1}(t_{X_s,1},\dots,t_{X_s,r_s-1})&\dots&&\dots& f^{(X_s)}_{r_s-1,r_s-1}(t_{X_s,r_s-1})
\end{pmatrix}
\begin{pmatrix}u_{X_s,r_s}\\ u_{X_s,r_s-1}\\\vdots\\ \vdots\\ u_{X_s,2}
\end{pmatrix} 
\eeq}
\normalsize{with} all $\left(u_{X_s,k}\right)_{1\leq k\leq r_s}$ independent of the irregular times. Moreover, we have:
\bea f^{(X_s)}_{j,j}(t_{X_s,r_s-1})&=&(t_{X_s,r_s-1})^{\frac{r_s-j}{r_s-1}}\,\,,\,\, \forall\, j\in \llbracket 1,r_s-1\rrbracket\cr
f^{(X_s)}_{j+1,j}(t_{X_s,r_s-1})&=&\frac{r_s-j-1}{r_s-2}(t_{X_s,r_s-1})^{\frac{1-j}{r_s-1}}t_{X_s,r_s-2}\,\,,\,\, \forall\, j\in \llbracket 1,r_s-2\rrbracket\cr
f^{(X_s)}_{j,1}(t_{X_s,r_s-1})&=&t_{X_s,r_s-j}\,\,,\,\, \forall\, j\in \llbracket 1,r_s-1\rrbracket
\eea

\medskip
Let us now discuss the special cases of $r_\infty=2$ and $r_\infty=1$.
\begin{itemize}\item For $r_\infty=2$, we need to impose the extra condition $\underset{s=1}{\overset{n}{\sum}} Q_{X_s,1}=\omega$ from Definition \ref{DefGeometricCoordinates}. This is equivalent to $\underset{s=1}{\overset{n}{\sum}} u_{X_s,1}=\omega$ which is coherent since $\delta^{(\boldsymbol{\alpha})}_{\mathbf{t}}[\omega]=0$ so that $\omega$ does not depend on the trivial times.
\item For $r_\infty=1$, we need to impose the extra conditions $\underset{s=1}{\overset{n}{\sum}} Q_{X_s,1}=0$ and $\underset{s=1}{\overset{n}{\sum}} X_s Q_{X_s,1}+Q_{X_s,2}=\omega$ from Definition \ref{DefGeometricCoordinates}. The first condition is equivalent to $\underset{s=1}{\overset{n}{\sum}} u_{X_s,1}=0$ which is obviously consistent. On the contrary, the second one does not look consistent a priori because $Q_{X_s,2}$ may depend on the irregular times in a very non-trivial way. However, for $r_\infty=1$, the additional constraints \eqref{ExtraConditionsrinftyequal1} provide
\beq\omega\, \nu_{\infty,-1}^{(\boldsymbol{\alpha})}=\sum_{s=1}^n\left(\sum_{k=1}^{r_s} \nu_{X_s,k-1}^{(\boldsymbol{\alpha})}  Q_{X_s,k}\right)\eeq
with $\delta^{(\boldsymbol{\alpha})}_{\mathbf{t}}[\omega]=-\omega\,  \nu_{\infty,-1}^{(\boldsymbol{\alpha})}$.
Inserting this relation into the isospectral condition \eqref{CondXsQ} implies that 
\beq \delta^{(\boldsymbol{\alpha})}_{\mathbf{t}}\left[\sum_{s=1}^nQ_{X_s,2}\right]=-\sum_{s=1}^n\left(\sum_{k=2}^{r_s} \nu_{X_s,k-1}^{(\boldsymbol{\alpha})}  Q_{X_s,k}\right)\overset{\eqref{ExtraConditionsrinftyequal1}}{=}-\omega\, \nu_{\infty,-1}^{(\boldsymbol{\alpha})}+\sum_{s=1}^n\nu_{X_s,0}^{(\boldsymbol{\alpha})}  Q_{X_s,1}
\eeq
so that since $\nu_{X_s,0}^{(\boldsymbol{\alpha})}=-\alpha_{X_s}$ and $\delta^{(\boldsymbol{\alpha})}_{\mathbf{t}}Q_{X_s,1}=0$:
\beq \delta^{(\boldsymbol{\alpha})}_{\mathbf{t}}\left[\sum_{s=1}^nQ_{X_s,2}+X_s Q_{X_s,1}\right]= -\omega\, \nu_{\infty,-1}^{(\boldsymbol{\alpha})} \eeq
Extra conditions \eqref{AddConstrains2} in Definition \ref{DefGeometricCoordinates} imply that the l.h.s. is $\delta^{(\boldsymbol{\alpha})}_{\mathbf{t}}[\omega]$ which is coherent with the fact that $\delta^{(\boldsymbol{\alpha})}_{\mathbf{t}}[\omega]=-\omega\,  \nu_{\infty,-1}^{(\boldsymbol{\alpha})}$.
\end{itemize}

\subsection{Study at infinity}\label{SectionInfinity}
The situation at infinity is a little more complicated due to the normalization of the Lax matrices. Let us take $r_\infty\geq 5$ which is the only non-trivial case. From Proposition \ref{PropQCondition}, the condition to satisfy implies $Q_{\infty,r_\infty-4}=\omega\,u_{\infty,r_\infty-4}$ and $\omega$ independent of the irregular times. Moreover, we have:
\bea\delta^{(\boldsymbol{\alpha})}_{\mathbf{t}}[Q_{\infty,r_\infty-5}]&=&(r_\infty-4)\omega \,\nu_{\infty,1}^{(\boldsymbol{\alpha})}\cr
\delta^{(\boldsymbol{\alpha})}_{\mathbf{t}}[Q_{\infty,m}]&=&(m+1)\left(\omega\, \nu_{\infty,r_\infty-4-m}^{(\boldsymbol{\alpha})}+\sum_{k=m+2}^{r_\infty-4}\nu_{\infty,k-m-1}^{(\boldsymbol{\alpha})}Q_{\infty,k}\right) \,\,,\,\, \forall\, m\in \llbracket 0, r_\infty-6\rrbracket\cr
&&
\eea
This can be rewritten into a $(r_\infty-4)\times(r_\infty-4)$ linear system:
\beq \begin{pmatrix} \delta^{(\boldsymbol{\alpha})}_{\mathbf{t}}\left[\frac{Q_{\infty,r_\infty-5}}{r_\infty-4}\right]\\
\vdots\\
\delta^{(\boldsymbol{\alpha})}_{\mathbf{t}}\left[\frac{Q_{\infty,0}}{1}\right]\end{pmatrix}=\begin{pmatrix}\omega&0&\dots&0\\
Q_{\infty,r_\infty-4}&\omega& &0\\
\vdots& \ddots &\ddots &\vdots \\
Q_{\infty,2}&\dots&Q_{\infty,r_\infty-4}&\omega
\end{pmatrix}\begin{pmatrix}\nu_{\infty,1}^{(\boldsymbol{\alpha})}\\ \vdots\\ \nu_{\infty,r_\infty-4}^{(\boldsymbol{\alpha})}\end{pmatrix}
\eeq
Using the first $(r_\infty-4)$ lines of Proposition \ref{nus}, we get
\footnotesize{\beq \begin{pmatrix} \delta^{(\boldsymbol{\alpha})}_{\mathbf{t}}\left[\frac{Q_{\infty,r_\infty-5}}{r_\infty-4}\right]\\
\vdots\\
\delta^{(\boldsymbol{\alpha})}_{\mathbf{t}}\left[\frac{Q_{\infty,0}}{1}\right]\end{pmatrix}=
\begin{pmatrix}\omega&0&\dots&0\\
Q_{\infty,r_\infty-4}&\omega& &0\\
\vdots& \ddots &\ddots &\vdots \\
Q_{\infty,2}&\dots&Q_{\infty,r_\infty-4}&\omega
\end{pmatrix}\begin{pmatrix}t_{\infty,r_\infty-1}&0&\dots&0\\
t_{\infty,r_\infty-2}&t_{\infty,r_\infty-1}& &0\\
\vdots& \ddots &\ddots &\vdots \\
t_{\infty,4}&\dots&t_{\infty,r_\infty-2}&t_{\infty,r_\infty-1}
\end{pmatrix}^{-1}
 \begin{pmatrix}\frac{\alpha_{\infty,r_\infty-3}}{r_\infty-3} \\ \vdots\\ \frac{\alpha_{\infty,2}}{2}\end{pmatrix}
\eeq}
\normalsize{Note} that the r.h.s. does not involve $\alpha_{\infty,1}$ nor $t_{\infty,1}$ so that all $(Q_{\infty,k})_{0\leq k\leq r_\infty-4}$ are independent of $t_{\infty,1}$. Specializing $\delta^{(\boldsymbol{\alpha})}_{\mathbf{t}}$ to each $\delta_{t_{\infty,i}}$ with $i\in \llbracket 2, r_\infty-3\rrbracket$ and using the commutation of lower-triangular Toeplitz matrices and the fact that $t_{\infty,r_\infty-1}=1$ and $t_{\infty,r_\infty-2}=0$ (because $r_\infty\geq 3$), we end up with :
\footnotesize{\bea\label{SysInfty}&&\begin{pmatrix}1&0&\dots&&0\\
0&1& &&0\\
t_{\infty,r_\infty-3}&0&\ddots &\ddots &\vdots\\
\vdots& \ddots &\ddots &\ddots&\vdots \\
t_{\infty,2}&\dots&t_{\infty,r_\infty-3}&0&1
\end{pmatrix}\begin{pmatrix}\frac{1}{r_\infty-4}&0&\dots&0\\
0&\frac{1}{r_\infty-5}& &0\\
\vdots& \ddots &\ddots &\vdots \\
0&\dots&\dots &\frac{1}{1}
\end{pmatrix} \begin{pmatrix} \delta_{t_{\infty,r_\infty-3}}[Q_{\infty,r_\infty-5}]& \dots& \delta_{t_{\infty,2}}[Q_{\infty,r_\infty-5}]\\
\vdots\\
\delta_{t_{\infty,r_\infty-3}}[Q_{\infty,0}]& \dots &\delta_{t_{\infty,4}}[Q_{\infty,0}]\end{pmatrix}\cr 
&&=\begin{pmatrix}\omega&0&\dots&0\\
Q_{\infty,r_\infty-4}&\omega& &0\\
\vdots& \ddots &\ddots &\vdots \\
Q_{\infty,2}&\dots&Q_{\infty,r_\infty-4}&\omega
\end{pmatrix}\begin{pmatrix}\frac{1}{r_\infty-3}&0&\dots&0\\
0&\frac{1}{r_\infty-4}& &0\\
\vdots& \ddots &\ddots &\vdots \\
0&\dots&\dots &\frac{1}{2}
\end{pmatrix}\cr
&&\eea
}
\normalsize{The} first line of \eqref{SysInfty} provides
\beq \frac{1}{r_\infty-4}\delta_{t_{\infty,k}}[Q_{\infty,r_\infty-5}]=\delta_{k,r_\infty-3}\frac{\omega}{r_\infty-3}\,\,,\,\, \forall\, k\in \llbracket 2, r_\infty-3\rrbracket\eeq
i.e.
\beq Q_{\infty,r_\infty-5}=\frac{r_\infty-4}{r_\infty-3}\omega\,t_{\infty,r_\infty-3}+\omega\, u_{\infty,r_\infty-5}\eeq
with $u_{\infty,r_\infty-5}$ independent of the irregular times. The second line of \eqref{SysInfty} provides
\beq\frac{1}{r_\infty-5}\delta_{t_{\infty,k}}[Q_{\infty,r_\infty-6}]=\delta_{k,r_\infty-3}\frac{Q_{\infty,r_\infty-4}}{r_\infty-3}+\delta_{k,r_\infty-4}\frac{\omega}{r_\infty-4} \,\,,\,\, \forall\, k\in \llbracket 2, r_\infty-3\rrbracket
\eeq
i.e.
\beq Q_{\infty,r_\infty-6}=\frac{r_\infty-5}{r_\infty-3}\omega\,u_{\infty,r_\infty-4}t_{\infty,r_\infty-3}+\frac{r_\infty-5}{r_\infty-4}\omega\,t_{\infty,r_\infty-4}+\omega\,u_{\infty,r_\infty-6}\eeq
with $u_{\infty,r_\infty-6}$ independent of the irregular times. The third line of \eqref{SysInfty} provides
\bea t_{\infty,r_\infty-3}\delta_{t_{\infty,r_\infty-3}}\left[\frac{Q_{\infty,r_\infty-5}}{r_\infty-4}\right]+\delta_{t_{\infty,r_\infty-3}}\left[\frac{Q_{\infty,r_\infty-7}}{r_\infty-6}\right]&=&\frac{Q_{\infty,r_\infty-5}}{r_\infty-3}\cr
\delta_{t_{\infty,r_\infty-4}}\left[\frac{Q_{\infty,r_\infty-7}}{r_\infty-6}\right]&=&\frac{Q_{\infty,r_\infty-4}}{r_\infty-4}\cr
\delta_{t_{\infty,r_\infty-5}}\left[\frac{Q_{\infty,r_\infty-7}}{r_\infty-6}\right]&=&\frac{\omega}{r_\infty-5}
\eea
i.e.
\bea Q_{\infty,r_\infty-7}&=&\frac{(r_\infty-6)}{(r_\infty-5)}\omega\,t_{\infty,r_\infty-5}+\frac{(r_\infty-6)}{(r_\infty-4)}\omega\,t_{\infty,r_\infty-4}u_{\infty,r_\infty-4}+\frac{(r_\infty-4)(r_\infty-6)\omega}{2(r_\infty-5)(r_\infty-3)}(t_{\infty,r_\infty-3})^2\cr
&&+\frac{(r_\infty-6)}{(r_\infty-3)}t_{\infty,r_\infty-3}\omega\,u_{\infty,r_\infty-5}-\frac{(r_\infty-6)(r_\infty-4)\omega}{2(r_\infty-5)}(t_{\infty,r_\infty-3})^2+\omega\,u_{\infty,r_\infty-7}\cr
&&
\eea
with $u_{\infty,r_\infty-6}$ independent of the irregular times. It is then obvious that each line will determine the next $Q_{\infty,r_\infty-k}$ in terms of the previous one so that the system \eqref{SysInfty} has a unique solution and it is of the form
\small{\beq \begin{pmatrix}\omega\\ Q_{\infty,r_\infty-4}\\ Q_{\infty,r_\infty-5}\\\vdots\\ \vdots\\ Q_{\infty,0}\end{pmatrix}=\omega\,
\begin{pmatrix}1&0&0&\dots&\dots&0\\
0&1&0&\ddots&&\vdots\\
f^{(\infty)}_{3,1}(t_{\infty,r_\infty-3})&0&1&0&&0\\
f^{(\infty)}_{4,1}(t_{\infty,r_\infty-4},t_{\infty,r_\infty-3})&f^{(\infty)}_{4,2}(t_{\infty,r_\infty-3})&0&1&\ddots&\vdots\\
\vdots& \ddots&&&\ddots&0\\
f^{(\infty)}_{r_\infty-2,1}(t_{\infty,2},\dots ,t_{\infty,r_\infty-3})&\dots&&f^{(\infty)}_{r_\infty-2,r_\infty-4}(t_{\infty,r_\infty-3})&0&1
\end{pmatrix}
\begin{pmatrix}1\\ u_{\infty,r_\infty-4}\\ u_{\infty,r_\infty-5}\\ \vdots\\ \vdots\\ u_{\infty,0}\end{pmatrix}
\eeq}
\normalsize{with} all $\left(u_{\infty,k}\right)_{0\leq k\leq r_\infty-4}$ independent of the irregular times.
The first extra line appears because of our choice of normalization at infinity ($\td{L}_{1,2}=\omega\,\lambda^{r_\infty-3}+O(\lambda^{r_\infty-4})$) that freezes a possible coordinate $Q_{\infty,r_\infty-3}$ to $\omega$. It is also easy to observe from the differential system \eqref{SysInfty} that
\beq f^{(\infty)}_{k,k-2}(t_{\infty,r_\infty-3})=\frac{r_\infty-1-k}{r_\infty-3}t_{\infty,r_\infty-3} \,,\,\, \forall \, k\in \llbracket 3, r_\infty-2\rrbracket\eeq

\section{Proof of Theorem \ref{TheoExplicitDependenceP}}\label{AppendixExplicitDependenceR}
We recall that from \eqref{GaugeTdA11Reduced}, we have
\beq [\td{A}_{\boldsymbol{\alpha}}(\lambda)]_{1,1}=c_{\infty,0}+ \td{L}_{1,1}(\lambda)[A_{\boldsymbol{\alpha}}(\lambda)]_{1,2} \eeq
Thus, using Lemma \ref{LemmatdL11}, the isospectral condition $\delta^{(\boldsymbol{\alpha})}_{\mathbf{t}}[\td{L}_{1,1}(\lambda)]=\partial_\lambda [\td{A}_{\boldsymbol{\alpha}}(\lambda)]_{1,1}$ is identical at finite poles to the case $\delta^{(\boldsymbol{\alpha})}_{\mathbf{t}}[\td{L}_{1,2}(\lambda)]=\partial_\lambda [\td{A}_{\boldsymbol{\alpha}}(\lambda)]_{1,2}$ given by \eqref{RefA12}. We end up with $\delta^{(\boldsymbol{\alpha})}_{\mathbf{t}}R_{X_s,1}=0$ for all $s\in \llbracket 1,n\rrbracket$ and
\footnotesize{\bea\label{SystemXsR} &&\begin{pmatrix}t_{X_s,r_s-1}&0&\dots&0\\
t_{X_s,r_s-2}&t_{X_s,r_s-1}& &\vdots\\
\vdots& \ddots &\ddots &0 \\
t_{X_s,1}&\dots&t_{X_s,r_s-2}&t_{X_s,r_s-1}
\end{pmatrix}
\begin{pmatrix}\frac{1}{r_s-1}&0&\dots&0\\
0&\ddots& &0\\
\vdots& & \ddots&0\\
0&\dots&0 &\frac{1}{1} \end{pmatrix} \begin{pmatrix} \delta_{t_{X_s,r_s-1}}[R_{X_s,r_s}]&\dots& \delta_{t_{X_s,1}}[R_{X_s,r_s}]\\
% \delta_{t_{X_s,r_s-1}}[R_{X_s,r_s-1}]&\dots&\dots& \delta_{t_{X_s,1}}[RQ_{X_s,r_s-1}]\\
\vdots&\vdots&\vdots\\
\delta_{t_{X_s,r_s-1}}[R_{X_s,2}]&\dots& \delta_{t_{X_s,1}}[R_{X_s,2}]
\end{pmatrix}\cr
&&=\begin{pmatrix}R_{X_s,r_s}&0&\dots& &\dots &0\\
R_{X_s,r_s-1}&R_{X_s,r_s}& 0& & &\vdots\\
\vdots & \ddots&\ddots &\ddots  & &\vdots\\
\vdots &\ddots&\ddots&\ddots&0&\vdots\\
R_{X_s,3}&\ddots &\ddots&\ddots& R_{X_s,r_s}&0\\
R_{X_s,2}&R_{X_s,3}& \dots & & R_{X_s,r_s-1}& R_{X_s,r_s}
 \end{pmatrix}\begin{pmatrix}\frac{1}{r_s-1}&0&\dots&0\\
0&\ddots& &0\\
\vdots& & \ddots&0\\
0&\dots&0 &\frac{1}{1} \end{pmatrix}
\cr
&&
\eea}
\normalsize{For} $r_\infty=2$ we may choose the variables $\left(v_{X_s,1}\right)_{1\leq s\leq n}$ so that $\underset{s=1}{\overset{n}{\sum}} v_{X_s,1}=-t_{\infty,0}$ since $-t_{\infty,0}$ does not depend on the irregular times. For $r_\infty=1$, the same remark applies too. For $r_\infty=1$, the second additional constraint \eqref{AddRRelations} is also compatible and from \eqref{betaMinus1} it provides the explicit dependence of $\beta_{-1}$ in terms of the irregular times in order to satisfy the isospectral condition.

\medskip

The situation is a little more complicated at infinity. Let us first observe that the study at infinity is only needed for $r_\infty\geq 4$ and in this case, we have $c_{\infty,0}= \frac{1}{2\omega}\mathcal{L}_{\boldsymbol{\alpha}}[\omega]=0$ from Section \ref{Appendixcinfty0} and Proposition \ref{PropQCondition}. Thus, the isospectral condition reduces for $r_\infty\geq 4$ to
\beq  [\td{A}_{\boldsymbol{\alpha}}(\lambda)]_{1,1}=\td{L}_{1,1}(\lambda)[A_{\boldsymbol{\alpha}}(\lambda)]_{1,2} \eeq
This relation is analogous to $[\td{A}_{\boldsymbol{\alpha}}(\lambda)]_{1,2}=\td{L}_{1,2}(\lambda)[A_{\boldsymbol{\alpha}}(\lambda)]_{1,2}$ that has been used to derive Proposition \ref{PropQCondition}. The only difference is that $\td{L}_{1,1}(\lambda)=-\lambda^{r_\infty-2}+O(\lambda^{r_\infty-4})$ while $\td{L}_{1,2}(\lambda)=\omega \lambda^{r_\infty-3}+O(\lambda^{r_\infty-4})$. However, it is obvious that the computations made for \eqref{EqEQ} can easily be adapted to this case. Using Lemma \ref{LemmatdL11}, equation \eqref{EqEQ} is adapted into 
\bea \label{eqEQ2}\sum_{k=0}^{r_\infty-4} \delta^{(\boldsymbol{\alpha})}_{\mathbf{t}}[R_{\infty,k}]\lambda^k&=&-(r_\infty-3)\nu_{\infty,1}^{(\boldsymbol{\alpha})}\lambda^{r_\infty-4}\delta_{r_\infty\geq 4}-(r_\infty-4)\nu_{\infty,2}^{(\boldsymbol{\alpha})}\lambda^{r_\infty-5}\delta_{r_\infty\geq 5}\cr
&&+\sum_{j=0}^{r_\infty-6}(m+1)\left(-\nu_{\infty,r_\infty-3-j}^{(\boldsymbol{\alpha})}+\sum_{k=j+2}^{r_\infty-4}\nu_{\infty,k-1-j}^{(\boldsymbol{\alpha})}R_{\infty,k}\right)\lambda^{j}+O(\lambda^{-1})\cr
&&\eea
Indeed, for $r_\infty\geq 3$, we have $t_{\infty,r_\infty-1}=1$ and $t_{\infty,r_\infty-2}=0$ and there is no deformation relatively to $t_{\infty,r_\infty-1}$ nor $t_{\infty,r_\infty-2}$ (i.e. $\alpha_{\infty,r_\infty-1}=\alpha_{\infty,r_\infty-2}=0$). Using Proposition \ref{nus}, this system may be rewritten as a $(r_\infty-3)\times(r_\infty-3)$ system:
\footnotesize{\bea\label{SystemInftyR} &&\begin{pmatrix}1&0&\dots&\dots&0\\
0&1& &&\vdots\\
t_{\infty,r_\infty-3}&0&1& &\vdots\\
\vdots& \ddots &\ddots &\ddots&0 \\
t_{\infty,3}&\dots&t_{\infty,r_\infty-3}&0&1 
\end{pmatrix}
\begin{pmatrix}\frac{1}{r_\infty-3}&0&\dots&0\\
0&\ddots& &0\\
\vdots& & \ddots&0\\
0&\dots&0 &\frac{1}{1} \end{pmatrix} \begin{pmatrix} \delta_{t_{\infty,r_\infty-3}}[R_{\infty,r_\infty-4}]&\dots& \delta_{t_{\infty,2}}[R_{\infty,r_\infty-4}]\\
% \delta_{t_{X_s,r_s-1}}[R_{X_s,r_s-1}]&\dots&\dots& \delta_{t_{X_s,1}}[RQ_{X_s,r_s-1}]\\
\vdots&\vdots&\vdots\\
\delta_{t_{\infty,r_\infty-3}}[R_{\infty,0}]&\dots& \delta_{t_{\infty,2}}[R_{\infty,0}]
\end{pmatrix}\cr
&&=\begin{pmatrix}-1&0&\dots&\dots& \dots&\dots &0\\
0&-1& 0&& & &\vdots\\
R_{\infty,r_\infty-4}&0& -1& && &\vdots\\
\vdots & \ddots&\ddots &\ddots&\ddots  & &\vdots\\
\vdots &\ddots&\ddots&\ddots&-1&0&\vdots\\
R_{\infty,r_\infty,3}&\ddots &\ddots&\ddots&\ddots& -1&0\\
R_{\infty,r_\infty,2}&R_{\infty,r_\infty,3}& \dots &\dots&R_{\infty,r_\infty-4}& 0& -1
 \end{pmatrix}\begin{pmatrix}\frac{1}{r_\infty-3}&0&\dots&0\\
0&\ddots& &0\\
\vdots& & \ddots&0\\
0&\dots&0 &\frac{1}{1} \end{pmatrix}
\cr
&&
\eea}
\normalsize{It is obvious} that $R_{\infty,k}=-t_{\infty,k+1}$ for all $k\in \llbracket 0,r_\infty-4\rrbracket$ provides a solution of the system.  Straightforward computations similar to Section \ref{SectionInfinity} provide the expressions presented in Theorem \ref{TheoExplicitDependenceP}.

\section{Example of the second element of the Painlev\'{e} $2$ hierarchy}\label{AppendixP2}
In this appendix, we propose to apply our results to the case of the second element of the Painlev\'{e} $2$ hierarchy corresponding to $n=0$ (i.e. only one pole at infinity) and $r_\infty=5$ giving rise to a spectral curve of genus $g=2$. In \cite{MarchalOrantinAlameddine2022}, the derivation of the Lax pairs and Hamiltonian system was done and we shall only reproduce the results there. In \cite{MarchalOrantinAlameddine2022}, it is proved that using M\"{o}bius transformations, one may  fix the values of the leading times at infinity without changing the symplectic structure. Thus, we shall take
\beq t_{\infty,4}=1=\,,\,t_{\infty,3}=0%\,,\,\, t_{\infty,2}=\frac{1}{2}\tau_{\infty,2} \,,\,t_{\infty,1}=\frac{1}{2}\tau_{\infty,1}
\eeq
Since we are in the case where $r_\infty=5>1$, we shall also take $\omega=1$ for simplicity following the convention of \cite{MarchalOrantinAlameddine2022} because it does not play any role from Proposition \ref{PropQCondition}. There are two apparent singularities $\mathbf{q}=(q_1,q_2)$ as well as two dual coordinates $\mathbf{p}=(p_1,p_2)$ on the spectral curve. There are also two spectral invariants $(H_{\infty,0},H_{\infty,1})$ that can be expressed using these Darboux coordinates as
\bea H_{\infty,0}&=&\frac{q_1p_2^2-q_2p_1^2}{q_1-q_2}-%\hbar 
\frac{p_1-p_2}{q_1-q_2} +q_1q_2\Big(q_1^4+q_2^4+q_1^3q_2+q_2^3q_1+q_1^2q_2^2\cr
&&+2(q_1^2+q_1q_2+q_2^2)t_{\infty,2}+(q_1+q_2)\tau_{\infty,1}+t_{\infty,2}+2t_{\infty^{(1)},0}-1%\hbar
\Big),\cr
H_{\infty,1}&=&\frac{p_1^2-p_2^2}{q_1-q_2}-(q_1+q_2)\left((q_1^2+q_2^2+t_{\infty,2})^2-q_1^2q_2^2+2t_{\infty,0}-1%\hbar
\right)-2(q_1^2+q_1q_2+q_2^2)t_{\infty,1}\cr
&&
\eea
Solving the compatibility equations provide two Hamiltonians in the two remaining directions $\partial_{t_{\infty,1}}$ and $\partial_{t_{\infty,2}}$ in the tangent space (note that in \cite{MarchalOrantinAlameddine2022}, the Hamiltonians are given relatively to $\frac{1}{2}\tau_{\infty,2}$ and $\frac{1}{2}\tau_{\infty,1}$ so there is an extra factor $\frac{1}{2}$ in the Hamiltonians).
\beq\text{Ham}^{(\boldsymbol{\alpha}_{{\infty,1}})}(q_1,q_2,p_1,p_2)=H_{\infty,1} \,\,,\,\, \text{Ham}^{(\boldsymbol{\alpha}_{{\infty,2}})}(q_1,q_2,p_1,p_2)=\frac{1}{2}H_{\infty,0}
\eeq
The Lax matrix is given by 
\small{\bea \td{L}_{1,1}&=&-\lambda^3+ \left(\frac{p_1-p_2}{q_1-q_2}+q_1^2+q_2^2+q_1q_2\right)\lambda+ \frac{p_2q_1-p_1q_2}{q_1-q_2}-q_1q_2(q_1+q_2),\cr
\td{L}_{1,2}&=&  (\lambda-q_1)(\lambda-q_2),\cr
\td{L}_{2,2}&=&-\td{L}_{1,1},\cr
\td{L}_{2,1}&=&2\left(\frac{p_1-p_2}{q_1-q_2}+q_1^2+q_1q_2+q_2^2+t_{\infty,2}\right)\lambda^2+2\left(\frac{p_1q_1-p_2q_2}{q_1-q_2}+(q_1+q_2)(q_1^2+q_2^2+t_{\infty,2})+t_{\infty,1}\right)\lambda\cr
&&-\frac{(p_1-p_2)^2}{(q_1-q_2)^2}-\frac{2(p_1q_2-p_2q_1)(q_1+q_2)}{q_1-q_2}+q_1^4+q_2^4-q_1^2q_2^2+ (q_1^2+q_1q_2+q_2^2)t_{\infty,2}\cr
&&+ 2(q_1+q_2)t_{\infty,1}+t_{\infty,2}^2+2t_{\infty,0}
\eea}
The auxiliary matrices $(\td{A}_{t_{\infty,1}},\td{A}_{t_{\infty,2}})$ can also be found in Appendix $8.7$ (up to the factor $2$ mentioned above) of \cite{MarchalOrantinAlameddine2022} and are given by
\bea \td{A}_{\boldsymbol{\alpha}_{\infty,1}}&=&\begin{pmatrix}-\lambda-q_1-q_2& 1\\
\frac{2(p_1+p_2)}{q_1-q_2}+2(q_1^2+q_2^2+q_1q_2+t_{\infty,2})&\lambda+q_1+q_2\end{pmatrix}\cr
\left[\td{A}_{\boldsymbol{\alpha}_{\infty,2}}\right]_{1,1}&=& -\frac{1}{2}\lambda^2+ \frac{p_1+p_2}{2(q_1-q_2)}+\frac{1}{2}(q_1+q_2)^2\cr
\left[\td{A}_{\boldsymbol{\alpha}_{\infty,2}}\right]_{1,2}&=&\frac{\lambda-q_1-q_2}{2}\cr
\left[\td{A}_{\boldsymbol{\alpha}_{\infty,2}}\right]_{2,1}&=&\left( \frac{p_1+p_2}{q_1-q_2}+q_1^2+q_1q_2+q_2^2+t_{\infty,2}\right)\lambda+\frac{p_1q_1-p_2q_2}{q_1-q_2}+q_1^3+q_2^3+q_1^2q_2+q_1q_2^2\cr &&+(q_1+q_2)t_{\infty,2}+t_{\infty,1}\cr
\left[\td{A}_{\boldsymbol{\alpha}_{\infty,2}}\right]_{2,2}&=&-\left[\td{A}_{\boldsymbol{\alpha}_{\infty,2}}\right]_{1,1}
\eea

\medskip

Let us now perform the various change of Darboux coordinates presented in this article to obtain the isospectral coordinates. Let us first note that $M_\infty(\mathbf{t})=I_2$ due to the fact that $t_{\infty,4}=1$ and $t_{\infty,3}=0$. In this case geometric Darboux coordinates $(Q_{\infty,0},Q_{\infty,1},P_{\infty,0},P_{\infty_1})$ of Definition \ref{DefGeometricCoordinates} are given by
\beq Q_{\infty,1}=-(q_1+q_2) \,\,,Q_{\infty,0}= q_1q_2\,,\,\, 
P_{\infty,0}=-\frac{p_1-p_2}{(q_1-q_2)} \,\,,\,\, P_{\infty,1}=-\frac{p_1q_1-p_2q_2}{(q_1-q_2)}\eeq
i.e.
\bea
p_1&=& P_{\infty,0} q_2- P_{\infty,1}\,,\, p_2= P_{\infty,0} q_1- P_{\infty,1}\cr
q_1&=&\frac{-Q_{\infty,1}-\sqrt{Q_{\infty,1}^2-4 Q_{\infty,0}}}{2}\,,\,q_2=\frac{-Q_{\infty,1}+\sqrt{Q_{\infty,1}^2-4 Q_{\infty,0}}}{2} 
\eea
The Hamiltonians are given by
\bea &&\text{Ham}^{(\boldsymbol{\alpha}_{{\infty,1}})}(Q_{\infty,0},Q_{\infty,1},P_{\infty,0},P_{\infty,1})=2P_{\infty,0}P_{\infty,1}^2+P_{\infty,0}^2 Q_{\infty,1} + Q_{\infty,1}^5 
-4Q_{\infty,0}Q_{\infty,1}^3\cr&&+2t_{\infty,2}Q_{\infty,1}^3 -2t_{\infty,1}Q_{\infty,1}^2+3Q_{\infty,0}^2 Q_{\infty,1}
-4Q_{\infty,0}Q_{\infty,1}t_{\infty,2} 
+Q_{\infty,1}\left(t_{\infty,2}^2+2t_{\infty,0}-1\right)+2t_{\infty,1}Q_{\infty,0}\cr
&&\text{Ham}^{(\boldsymbol{\alpha}_{{\infty,2}})}(Q_{\infty,0},Q_{\infty,1},P_{\infty,0},P_{\infty,1})=\frac{1}{2}P_{\infty,1}^2+Q_{\infty,1}P_{\infty,0}P_{\infty,1}
-\frac{1}{2}Q_{\infty,0}P_{\infty,0}^2+\frac{1}{2}Q_{\infty,1}^2P_{\infty,0}^2\cr&&+ \frac{1}{2}P_{\infty,0}
+\frac{1}{2}Q_{\infty,0}^3-\frac{3}{2}Q_{\infty,1}^2Q_{\infty,0}^2-t_{\infty,2}Q_{\infty,0}^2+\frac{1}{2}Q_{\infty,0}Q_{\infty,1}^4
+t_{\infty,2}Q_{\infty,0}Q_{\infty,1}^2\cr&& -t_{\infty,1}Q_{\infty,0}Q_{\infty,1}+\frac{1}{2}Q_{\infty,0}\left(t_{\infty,2}^2+2t_{\infty,0}-1\right)
\eea
The corresponding Lax matrix is
\bea\label{LaxExample} \td{L}_{1,1}(\lambda)&=& -\lambda^3+\left(Q_{\infty,1}^2-Q_{\infty,0}-P_{\infty,0}\right)\lambda-P_{\infty,0}Q_{\infty,1}-Q_{\infty,0}Q_{\infty,1}-P_{\infty,1}\cr
\td{L}_{1,2}(\lambda)&=&\lambda^2+Q_{\infty,1}\lambda+Q_{\infty,0}\cr
\td{L}_{2,1}(\lambda)&=&2\left(Q_{\infty,1}^2-P_{\infty,0}-Q_{\infty,0}+t_{\infty,2}\right)\lambda^2+2\left(-P_{\infty,1} -Q_{\infty,1}^3 +2Q_{\infty,0}Q_{\infty,1}-t_{\infty,2}Q_{\infty,1}+t_{\infty,1}\right)\lambda \cr&&
+Q_{\infty,1}^4+2P_{\infty,0}Q_{\infty,1}^2-4Q_{\infty,0}Q_{\infty,1}^2+2t_{\infty,2}Q_{\infty,1}^2-2t_{\infty,1}Q_{\infty,1} +2P_{\infty,1}Q_{\infty,1}+t_{\infty,2}^2\cr&&
-2t_{\infty,2}Q_{\infty,0} -P_{\infty,0}^2+Q_{\infty,0}^2+2t_{\infty,0}\cr
\td{L}_{2,2}(\lambda)&=&-\td{L}_{1,1}(\lambda)
\eea
Note that we have $\text{Ham}^{(\boldsymbol{\alpha}_{{\infty,1}})}(\mathbf{Q},\mathbf{P})=H_{\infty,1}$ and $\text{Ham}^{(\boldsymbol{\alpha}_{{\infty,2}})}(\mathbf{Q},\mathbf{P})=\frac{1}{2}H_{\infty,0}$ which is coherent with Theorem \ref{TheoHamNew}.

\medskip
Let us now perform the change of Darboux coordinates $(\mathbf{Q},\mathbf{P})\to (\mathbf{Q},\mathbf{R})$ using Definition \ref{DefR} and the fact that $g_0=-Q_{\infty,1}$ from Theorem \ref{GeoLaxMatrices}.
\bea R_{\infty,1}&=& -P_{\infty,0}-Q_{\infty,0} +Q_{\infty,1}^2\cr
R_{\infty,0}&=& -P_{\infty,1}-P_{\infty,0}Q_{\infty,1}+ Q_{\infty,0}Q_{\infty,1}
\eea
i.e.
\bea P_{\infty,0}&=&Q_{\infty,1}^2-Q_{\infty,0}-R_{\infty,1}\cr
P_{\infty,1}&=&-Q_{\infty,1}^3+2Q_{\infty,0}Q_{\infty1}+Q_{\infty,1}R_{\infty,1}-R_{\infty,0}
\eea
The Lax matrix in these coordinates reads:
\bea \td{L}_{1,1}(\lambda)&=& -\lambda^3+R_{\infty,1}\lambda+R_{\infty,0}\cr
\td{L}_{1,2}(\lambda)&=&\lambda^2+Q_{\infty,1}\lambda+Q_{\infty,0}\cr
\td{L}_{2,1}(\lambda)&=&2(R_{\infty,1}+t_{\infty,2})\lambda^2+2\left(-R_{\infty,1}Q_{\infty,1}-t_{\infty,2}Q_{\infty,1}+R_{\infty_0}+t_{\infty,1}\right)\lambda\cr&&
-R_{\infty,1}^2+2Q_{\infty,1}^2 R_{\infty,1}-2Q_{\infty,0}R_{\infty,1}-2R_{\infty,0}Q_{\infty,1}+2t_{\infty,2}Q_{\infty,1}^2-2\tau_{\infty,1}Q_{\infty,1}+(t_{\infty,2})^2\cr&&
-2t_{\infty,2}Q_{\infty,0}+2t_{\infty,0}\cr
\td{L}_{2,2}(\lambda)&=&-\td{L}_{1,1}(\lambda)
\eea
in agreement with Theorem \ref{GeoLaxMatricesQR}.

We can compute the quantity $I_{\infty,1}$ and $I_{\infty,2}$ from the asymptotics expansion at $\lambda\to \infty$ of \eqref{LaxExample}. We find
\bea I_{\infty,1}&=&P_{\infty,0}P_{\infty,1}+t_{\infty,1}Q_{\infty,0}+\frac{1}{2}Q_{\infty,1}^5 +t_{\infty,2}Q_{\infty,1}^3 -2Q_{\infty,0}Q_{\infty,1}^3-t_{\infty,1}Q_{\infty,1}^2+\frac{1}{2}P_{\infty,0}^2Q_{\infty,1}\cr&&+\frac{3}{2}Q_{\infty,0}^2Q_{\infty,1}
-2t_{\infty,2}Q_{\infty,0}Q_{\infty,1}+\frac{1}{2}t_{\infty,2}^2Q_{\infty,1}+t_{\infty,0}Q_{\infty,1}-t_{\infty,1}t_{\infty,2}\cr
I_{\infty,2}&=&\frac{1}{4}P_{\infty,1}^2 +\frac{1}{4}P_{\infty,0}^2Q_{\infty,1}^2
+\frac{1}{2}P_{\infty,0}P_{\infty,1}Q_{\infty,1}- P_{\infty,0}^2Q_{\infty,0}
+\frac{1}{4}Q_{\infty,0}^3-
\frac{3}{4}Q_{\infty,1}^2Q_{\infty,0}^2\cr&&-\frac{1}{2}t_{\infty,2}Q_{\infty,0}^2+\frac{1}{4}Q_{\infty,1}^4Q_{\infty,0}+\frac{1}{2}Q_{\infty,1}^2Q_{\infty,0}-\frac{1}{2}t_{\infty,1}Q_{\infty,1}Q_{\infty,0}+\frac{1}{4}t_{\infty,2}^2Q_{\infty,0}\cr&&+\frac{1}{2}t_{\infty,0}Q_{\infty,0}-\frac{1}{4}t_{\infty,1}^2-\frac{1}{2}t_{\infty,2}t_{\infty,0}
\eea
from which one can easily verify using \eqref{LaxExample} Theorem \ref{TheoHamSpectral}.

\medskip

Let us finally obtain the isospectral coordinates $(\mathbf{u},\mathbf{v})$. Theorem \ref{ExplicitDependenceQ} is equivalent in this example to
\beq Q_{\infty,1}=u_{\infty,1} \,\,,\,\, Q_{\infty,0}=f_{3,1}(t_{\infty,2})+u_{\infty,0}= u_{\infty,0}+\frac{1}{2}t_{\infty,2} \eeq
Similarly, Theorem \ref{TheoExplicitDependenceP} is equivalent in this example to
\beq R_{\infty,1}=-t_{\infty,2}+ v_{\infty,1}\,\,,\,\, R_{\infty,0}=-t_{\infty,1}+v_{\infty,0}\eeq
The Lax matrix in these coordinates is given by
\bea
\td{L}_{1,1}(\lambda)&=&-\lambda^3 +(v_{\infty,1}-t_{\infty,2})\lambda +v_{\infty,0}-t_{\infty,1}\cr
\td{L}_{1,2}(\lambda)&=&\lambda^2+t_{\infty,1}\lambda+ u_{\infty,0}+\frac{1}{2}t_{\infty,2}\cr
\td{L}_{2,1}(\lambda)&=&2 v_{\infty,1}\lambda^2+2(v_{\infty,0}-u_{\infty,1}v_{\infty,1})\lambda -v_{\infty,1}^2-2u_{\infty,1}v_{\infty,0}+2u_{\infty,1}^2v_{\infty,1}\cr&&
-2u_{\infty,0}v_{\infty,1}+t_{\infty,2}v_{\infty,1}+2t_{\infty,0}\cr
\td{L}_{2,2}(\lambda)&=&-\td{L}_{1,1}(\lambda)
\eea
and
\bea 
\td{A}_{\boldsymbol{\alpha}_{\infty,1}}&=&\begin{pmatrix}
    -\lambda+u_{\infty,1}&1 \\
2v_{\infty,1}& \lambda- u_{\infty,1}
\end{pmatrix}\cr
\td{A}_{\boldsymbol{\alpha}_{\infty,2}}&=&\begin{pmatrix}
    -\frac{1}{2}\lambda^2+\frac{1}{2}(u_{\infty,0}+v_{\infty,1})-\frac{t_{\infty,2}}{4}& \frac{\lambda+ u_{\infty,1}}{2}\\
    (\lambda-u_{\infty,1})v_{\infty,1}+v_{\infty,0}& \frac{1}{2}\lambda^2-\frac{1}{2}(u_{\infty,0}+v_{\infty,1})+\frac{t_{\infty,2}}{4}
\end{pmatrix}
\eea
from which one can easily check that the isospectral condition $\delta_{t_i}\td{L}(\lambda)=\partial_\lambda \td{A}_{\boldsymbol{\alpha}_{\infty,i}}(\lambda)$ is verified for $i\in \{1,2\}$. Note that the coordinates $(\mathbf{u},\mathbf{v})$ are not canonical in this case. The time-independent change of coordinates to obtain a canonical form is given by
\beq x_2=u_{\infty,0}\,,\, x_3=u_{\infty,1},\,,\, y_2=v_{\infty,1}\,,\, y_3=v_{\infty,0}-u_{\infty,1}v_{\infty,1}\eeq
where the notations are inspired by Appendix $A$ of \cite{BertolaHarnadHurtubise2022} (the present case corresponds to $x_1=-\frac{1}{\sqrt{2}}$ in their setting). With these coordinates, the symplectic one-form is $\omega=y_2dx_2+y_3dx_3$ so that
\beq \partial_{t_{\infty,i}}x_j=\frac{\partial\text{Ham}^{(\boldsymbol{\alpha}_{\infty,i})}(x_2,x_3,y_2,y_3) }{\partial y_j}\,\,,\,\, \partial_{t_{\infty,i}}y_j=-\frac{\partial\text{Ham}^{(\boldsymbol{\alpha}_{\infty,i})}(x_2,x_3,y_2,y_3) }{\partial x_j} \,\,\,,\,\forall\, (i,j)\in \{1,2\}\times\{2,3\}.\eeq
The corresponding Lax matrices in these coordinates are
\small{\bea \td{L}(\lambda)&=&\begin{pmatrix}
    -\lambda^3+(y_2-t_{\infty,2})\lambda+x_2y_2+y_3-t_{\infty,1}&\lambda^2+x_3\lambda+x_2+\frac{t_{\infty,2}}{2}\\
2y_2\lambda^2+2y_3\lambda-y_2^2-(2x_2-t_{\infty,2}y_2-2x_3y_3+2t_{\infty,0}    & \lambda^3-(y_2-t_{\infty,2})\lambda-x_2y_2-y_3+t_{\infty,1}
\end{pmatrix}\cr
\td{A}_{\boldsymbol{\alpha}_{\infty,1}}(\lambda)&=&\begin{pmatrix}
    -\lambda+x_2&1\\ 2y_2&\lambda-x_3
\end{pmatrix}\cr
\td{A}_{\boldsymbol{\alpha}_{\infty,2}}(\lambda)&=&\begin{pmatrix}
-\frac{1}{2}\lambda^2+\frac{1}{2}y_2+\frac{1}{2}x_2-\frac{1}{4}t_{\infty,2}&
\frac{1}{2}(\lambda+x_3)\\
y_2\lambda+y_3& \frac{1}{2}\lambda^2-\frac{1}{2}y_2-\frac{1}{2}x_2+\frac{1}{4}t_{\infty,2}
\end{pmatrix}
\eea}
\normalsize{and} the Hamiltonians identify with the spectral invariants:
\bea \text{Ham}^{(\boldsymbol{\alpha}_{\infty,1})}(x_2,x_3,y_2,y_3)&=&-2I_{\infty,1}\cr
&=&2x_2^2y_3-y_2^2x_3+2x_2x_3y_2 -2y_2y_3-2x_2y_3+t_{\infty,2}x_3y_2+t_{\infty,2}y_3\cr&&
+2t_{\infty,1}y_2-2t_{\infty,0}x_3\cr
\text{Ham}^{(\boldsymbol{\alpha}_{\infty,2})}(x_2,x_3,y_2,y_3)&=&-2I_{\infty,2}\cr
&=&-\frac{1}{2}y_2^3-\frac{1}{2}x_3^2y_2^2+\frac{1}{4}t_{\infty,2}y_2^2+\frac{1}{2}x_2y_2^2
+t_{\infty,1}x_3y_2-\frac{1}{4}t_{\infty,2}^2y_2+x_2^2y_2\cr&&-x_3y_2y_3
+\frac{1}{2}t_{\infty,2}x_3y_3+x_2x_3y_3+t_{\infty,1}y_3-t_{\infty,0}x_2\cr&&
\eea
Part of the computations for this example have been done using Maple software. The corresponding Maple sheet is available at 
\url{http://math.univ-lyon1.fr/~marchal/AdditionalRessources/index.html}.

\addcontentsline{toc}{section}{References}
\bibliographystyle{plain}
\bibliography{Biblio}

\end{document}